\providecommand{\U}[1]{\protect\rule{.1in}{.1in}}
\providecommand{\U}[1]{\protect\rule{.1in}{.1in}}
\newtheorem{assumption}{Assumption}
\newtheorem{theorem}{Theorem}
\newtheorem{corollary}{Corollary}
\newtheorem{lemma}{Lemma}
\newtheorem{proposition}{Proposition}
\newtheorem{remark}{Remark}
\newtheorem{definition}{Definition}
\useunder{\uline}{\ul}{}
\newcommand{\multiline}[1]{  \begin{tabularx}{\dimexpr\linewidth-\ALG@thistlm}[t]{@{}X@{}}
#1
\end{tabularx}
}
\setlist[itemize]{leftmargin=*}
\newcommand{\R}{\mathbb{R}}
\newcommand{\N}{\mathbb{N}}
\newcommand{\T}{\top}
\newcommand{\I}{\mathbf{I}}
\newcommand{\0}{\mathbf{0}}
\newcommand{\E}{\mathcal{E}}
\newcommand{\F}{\mathcal{F}}
\newcommand{\C}{\mathcal{C}}
\newcommand{\diag}{\text{diag}}
\newcommand{\tsup}[1]{\textsuperscript{#1}}
\newcommand{\mb}[1]{\mathbf{#1}}
\renewcommand{\H}{\mathcal{H}}
\newcommand{\tr}[1]{\mbox{tr}\left(#1\right)}
\newcommand{\bm}[1]{\begin{bmatrix}#1\end{bmatrix}}
\title{\LARGE \bf
A Decentralized Analysis and Control Synthesis Approach for Networked Systems with Arbitrary Interconnections
}
\author{Shirantha Welikala, Hai Lin and Panos Antsaklis 
\thanks{The support of the National Science Foundation (Grant No. IIS-1724070, CNS-1830335, IIS-2007949) is gratefully acknowledged.}
\thanks{The authors are with the Department of Electrical Engineering, College of Engineering, University of Notre Dame, IN 46556, \texttt{{\small \{wwelikal,hlin1,pantsakl\}@nd.edu}}.}}
\begin{document}

\maketitle
\thispagestyle{empty}
\pagestyle{empty}


\begin{abstract}
This paper considers the problem of decentralized analysis and control synthesis to verify and ensure properties like stability and dissipativity of a large-scale networked system comprised of linear subsystems interconnected in an arbitrary topology. In particular, we design systematic networked system analysis and control synthesis processes that can be executed in a decentralized manner at the subsystem level with minimal information sharing among the subsystems. Compared to our most recent work on the same topic, we consider a substantially more generalized problem setup in this paper and develop decentralized processes to verify and ensure a broader range of networked system properties. We show that for such decentralized processes: (1) optimizing the used subsystem indexing scheme can substantially reduce the required inter-subsystem information-sharing sessions, and (2) in some network topologies, information sharing among only neighboring subsystems is sufficient (hence, distributed!). Moreover, the proposed networked system analysis and control synthesis processes are compositional/resilient to subsystem removals, which enable them to conveniently and efficiently handle situations where new subsystems are being added/removed to/from an existing network. We also provide significant insights into our decentralized approach so that it can be quickly adopted to verify and ensure properties beyond the stability and dissipativity of networked systems. En route to developing such decentralized techniques, we have also derived new centralized solutions for dissipative observer and dynamic output feedback controller design problems. Subsequently, we also specialize all the derived results for discrete-time networked systems. We conclude this paper by providing several simulation results demonstrating the proposed novel decentralized analysis and control synthesis processes and dissipativity-based results.
\end{abstract}

\section{Introduction}

Analysis and control synthesis of large-scale networked systems comprised of dynamically coupled subsystems has gained a renewed attention due to various emerging applications in infrastructure networks \cite{Agarwal2021,Agarwal2019}. A prime example of this is found in vehicular networks where often a group of autonomous vehicles that co-ordinate with each other to maintain a particular formation can lead to saving energy as well as reducing congestion and improving safety in the transportation infrastructure \cite{Jovanovic2013,Karafyllis2021}. Another example application of large-scale networked systems is the power grid, which needs to be constantly analyzed and controlled in a distributed manner while accounting for the penetration of highly varying renewable energy sources and smart/unknown loads \cite{Welikala2016b,Shahid2017,Samad2017}.

Numerous distributed control solutions have already been proposed in the literature concerning such large-scale networked dynamical systems to enforce stability while also optimizing various performance objectives of interest \cite{Antonelli2013,Xue2013,Davison1990,Siljak1991,Siljak1978}. These solutions synthesize local controllers (i.e., at the subsystem level) that only require the state information of a subset of other subsystems in the network to operate. However, many of such distributed control solutions assume the existence of a central entity with the knowledge of the entire networked system to execute the control synthesis process in a centralized manner. Inherently, such a centralized setup has several disadvantages: (1) feasibility, privacy and security concerns related to collecting all subsystem information at a centralized entity, (2) having to re-evaluate the entire analysis and/or control synthesis when new subsystems are added (or removed) to/from the network, and (3) scalability concerns arising due to the complexity of the problem that needs to be solved at the centralized entity.

Several decentralized control synthesis approaches have also been proposed to address such limitations in the literature. In such methods, controllers are derived locally at subsystems without the explicit knowledge of the dynamics of the other subsystems. As pointed out in \cite{Agarwal2021}, such decentralized control synthesis approaches can be categorized into three groups: 
(i) Approaches that induce and exploit weak coupling between subsystems \cite{Michel1983,Bakule1988,Sezer1986};
(ii) Hierarchical approaches that compute (in a centralized manner) and enforce additional conditions on local control synthesis \cite{Ishizaki2021,Zheng2017}; 
(iii) Approaches that decompose a centralized control synthesis process using numerical techniques such as methods of multipliers and Sylvester's criterion \cite{DAndrea2003,Massioni2009,Riverso2014,Agarwal2019,Agarwal2021}. %
In many networked systems of interest, assuming or enforcing weak coupling among subsystems is not practical \cite{Agarwal2019}. Moreover, existing hierarchical approaches are computationally intensive and still involve a considerable centralized component. Taking these limitations into account, the recent work in \cite{Agarwal2019,Agarwal2021} has developed a decentralized analysis and control synthesis framework for networked dynamical systems inspired by Sylvester's criterion.

In particular, both \cite{Agarwal2019,Agarwal2021} assume each subsystem dynamics to be linear and coupled with a subset of neighboring subsystems (determined by the network topology) through their state values. In \cite{Agarwal2019}, the network topology is assumed to be cascaded bi-directional, and distributed analysis and control synthesis techniques have been developed to verify and ensure the passivity of the networked system. This approach is then further extended in \cite{Agarwal2021} considering arbitrary bi-directional network topologies to verify and ensure a general quadratic dissipativity property (known as $(Q,S,R)$-dissipativity \cite{Willems1972a}) over the networked system in a decentralized manner.   

Compared to \cite{Agarwal2021}, the recent work in \cite{Arcak2022,Arcak2016} assume each subsystem dynamics to be non-linear and coupled with a set of neighboring subsystems through their output (not the state) values. Then, dissipativity properties of the subsystems are exploited to derive centralized stability and performance verification techniques in a compositional manner. Therefore, unlike in \cite{Agarwal2019,Agarwal2021}, the techniques proposed in \cite{Arcak2022,Arcak2016} are only applicable for centralized analysis of networked systems.

On the other hand, compared to \cite{Agarwal2021}, a few application-specific (structure) decentralized control synthesis approaches can also be found in the literature. For example, \cite{Stilwell2002,Wang2020,DAndrea1998,Lu2015} and \cite{Bharadwaj2021} respectively address the decentralized control synthesis problems for vehicular platoons, switched networked systems, distributed parameter systems, micro-grids and unmanned aerial systems (see also the review in \cite{Bakule2012} and references therein). However, these approaches either exploit special structural properties specific to the considered problem setup or belong to one of the earlier mentioned three groups. 

It should also be noted that several works in the literature refer to centrally executed distributed (or fully-local) control synthesis as ``decentralized control synthesis'' (deviating from our nomenclature). For instance, \cite{Lessard2016,Naghnaeian2018} and \cite{Attia2010,Attia2014} respectively propose Youla parametrization and orthogonal functions based approaches for centrally synthesize distributed controllers - where as \cite{Agarwal2021} propose Sylvester's criterion based approach for decentrally synthesize distributed controllers. 

\paragraph{\textbf{Contributions}} 
Taking these concerns into account, in this paper, we further generalize the decentralized analysis and control synthesis approach proposed in \cite{Agarwal2021,Agarwal2022}. Compared to \cite{Agarwal2022}, our main contributions can be outlined as follows: 
\begin{enumerate}
\item We provide a comprehensive collection of linear matrix inequality (LMI) based solutions for several standard control problems associated with continuous-time linear time-invariant systems (CT-LTI). 
\item To the best of the authors' knowledge, this collection includes novel (only known to date) LMI-based solutions for $(Q,S,R)$-dissipative observer synthesis and $(Q,S,R)$-dissipative dynamic output feedback controller synthesis problems for CT-LTI systems.
\item We consider a fully-coupled continuous-time networked system (CTNS) model (in terms of subsystem states, inputs, and disturbances) and only introduce decoupling assumptions critical to addressing each interested CTNS control problem in a decentralized manner.
\item We have relaxed the bi-directionality assumption made regarding the network topology. 
\item The distributed observer synthesis problem has been considered, and a decentralized solution has been proposed. 
\item The distributed dynamic output feedback controller synthesis problem has been considered, and a decentralized solution has been proposed. 
\item We propose decentralized analysis and control synthesis approaches to verify and enforce: (1) exponential stability and (2) optimal $\H_2/\H_\infty$ gains of a CTNS; 
\item We also study the effect of subsystem indexing (i.e., the order in which the proposed decentralized processes are executed) on the required total communications among the subsystems and derive a cost function that can be optimized to avoid some costly inter-subsystem communications;
\item We provide significant insights into our approach so that it may be quickly adopted to address similar LMI-based control problems associated with CTNSs.
\item All the derived results for CTNSs have also been specialized for discrete-time networked systems (DTNSs).
\end{enumerate}


\paragraph{\textbf{Organization}} 
This paper is organized as follows. 
In Section \ref{Sec:BasicsOfCTLTISystems}, we summarize a comprehensive collection of linear matrix inequality (LMI) conditions that arise in the analysis and control synthesis of continuous-time linear time-invariant (CT-LTI) systems. 
The details of the construction of the considered continuous-time networked system (CTNS) and the problem formulation are discussed in Section \ref{Sec:CTNetworkedSystem}. 
Next, in Section \ref{Sec:DecentralizedAnalysis}, we define a class of matrices specific to a given network topology (called ``network matrices'') and discuss several related theoretical results along with a decentralized algorithm to analyze the positive-definiteness of such network matrices. 
Subsequently, in Section \ref{Sec:DistributedTechniquesForCTNS}, we provide the details of the proposed decentralized analysis and local control synthesis processes for the CTNS. 
The discrete-time versions of the concepts and results provided in Sections \ref{Sec:BasicsOfCTLTISystems}, \ref{Sec:CTNetworkedSystem} and \ref{Sec:DistributedTechniquesForCTNS} are summarized in subsequent Sections  \ref{Sec:BasicsOfDTLTISystems}, \ref{Sec:DTNetworkedSystem} and \ref{Sec:DistributedTechniquesForDTNS}, respectively. 
Finally, Section \ref{Sec:SimulationResults} discusses several simulation results before concluding the paper in Section \ref{Sec:Conclusion}.

\paragraph{\textbf{Notation}}
The sets of real and natural numbers are denoted by $\R$ and $\N$, respectively. An $n$-dimensional real vector is denoted by $\R^n$. We define $\N_N\triangleq\{1,2,\ldots,N\}$ where $N\in\N$. 
An $n\times m$ block matrix $A$ can be represented as $A=[A_{ij}]_{i\in\N_n, j\in\N_m}$ (or simply as $[A_{ij}]$) where $A_{ij}$ is the $(i,j)$\tsup{th} block of $A$. 
Similarly, $[A_{ij}]_{j\in \N_m}$ represents a block row matrix, $\diag(A_{ii}:i\in\N_n)$ represents a block diagonal matrix and unless defined otherwise, $A_i \triangleq \{A_{ii}\}\cup\{A_{ij},j\in\N_{i-1}\}\cup\{A_{ji}:j\in\N_i\}$.
The transpose of a matrix $A$ is denoted by $A^\T$ and $(A^\T)^{-1} = A^{-\T}$, The zero matrix is denoted by $\0$ and the standard identity matrix is denoted by $\I$ (dimensions will be clear form the context). A symmetric positive definite (semi-definite) matrix $A\in\R^{n\times n}$ is represented as $A=A^\T>0$ ($A=A^\T \geq 0$). Unless stated otherwise, $A>0 \iff A=A^\T>0$ (i.e., symmetry is implied by the positive definiteness). The symbol $\star$ is used to represent redundant conjugate block matrices (e.g., 
$\bm{A & B \\ \star & C} = \bm{A & B \\ B^\T & C}$ and 
$\bm{A_{ij} & B_{ij} \\ \star & C_{ij}} = \bm{A_{ij} & B_{ij} \\ B_{ji}^\T & C_{ij}}$). 
The symmetric part of a matrix $A$ is denoted by $\H_s(A) \triangleq A+A^\T$ and $\H_s(A_{ij}) \triangleq A_{ij}+A_{ji}^\T$.
$\mathcal{L}_{2e}$ is the extended $\mathcal{L}_2$ space (i.e., the space of signals with finite $\mathcal{L}_2$ norms). 
Given sets $A$ and $B$, $A\backslash B$ indicates the set subtraction operation that results in the set of elements in $A$ that are not in $B$. The notation $\mb{1}_{\{\cdot\}}$ is used to represent the indicator function and $e_{ij} \triangleq \I \cdot \mb{1}_{\{i=j\}}$.

\section{Preliminaries: Continuous-Time Linear Time Invariant (CT-LTI) Systems}
\label{Sec:BasicsOfCTLTISystems}

In this section, we present a comprehensive collection of linear matrix inequality (LMI) conditions that arise when analyzing or synthesizing controllers for continuous-time linear time-invariant (CT-LTI) systems. 
We particularly focus on LMIs because: (1) we will subsequently propose a systematic approach to decentralize such LMIs (in Sec. \ref{Sec:DecentralizedAnalysis}) and (2) LMIs can be efficiently and conveniently solved using standard convex optimization techniques \cite{Boyd1994}. We start with stating two well-known lemmas that will be useful when deriving LMIs. 

\begin{lemma} (Schur's complement \cite{Bernstein2009}) \label{Lm:TwoByTwoBlockMatrixPDF}
Let $W=\scriptsize \begin{bmatrix}\Theta & \Phi \\ \Phi^\T & \Gamma \end{bmatrix}$ be a symmetric $2\times2$ block matrix. Then if: \\
(i) $\Theta$ is invertible, $W>0 \iff \Theta > 0,\ \Gamma-\Phi^\T \Theta^{-1} \Phi > 0$,\\
(ii) $\Gamma$ is invertible, $W>0 \iff \Gamma>0,\ \Theta-\Phi \Gamma^{-1} \Phi^\T>0$. \end{lemma}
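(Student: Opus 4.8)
The plan is to prove both equivalences by reducing $W$ to block-diagonal form via a congruence transformation, and then invoking the fact that congruence by an invertible matrix preserves positive-definiteness (Sylvester's law of inertia), together with the observation that a symmetric block-diagonal matrix is positive definite precisely when each of its diagonal blocks is. Note first that, since $W$ is symmetric, its diagonal blocks satisfy $\Theta = \Theta^\T$ and $\Gamma = \Gamma^\T$; this will be used repeatedly.

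For part (i), assuming $\Theta$ invertible, I would introduce the unit lower-triangular block matrix
\[
T \triangleq \bm{\I & \0 \\ -\Phi^\T \Theta^{-1} & \I},
\]
which is invertible (indeed $\det T = 1$). A direct multiplication then gives
\[
T W T^\T = \bm{\Theta & \0 \\ \0 & \Gamma - \Phi^\T \Theta^{-1}\Phi}.
\]
The only computation to verify is that the off-diagonal blocks vanish: left-multiplication by $T$ clears the lower-left block, since $-\Phi^\T\Theta^{-1}\Theta + \Phi^\T = \0$, and right-multiplication by $T^\T = \bm{\I & -\Theta^{-1}\Phi \\ \0 & \I}$ (using $\Theta^{-\T} = \Theta^{-1}$) clears the upper-right block, since $-\Theta\Theta^{-1}\Phi + \Phi = \0$. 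Because $T$ is invertible, $W > 0 \iff T W T^\T > 0$, and the block-diagonal matrix on the right is positive definite iff $\Theta > 0$ and $\Gamma - \Phi^\T \Theta^{-1}\Phi > 0$. This is exactly the claimed equivalence.

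For part (ii), I would run the symmetric argument with the roles of the two blocks interchanged: using
\[
T' \triangleq \bm{\I & -\Phi \Gamma^{-1} \\ \0 & \I}
\]
one obtains $T' W T'^\T = \diag(\Theta - \Phi\Gamma^{-1}\Phi^\T,\ \Gamma)$, and the same inertia argument yields $W > 0 \iff \Gamma > 0$ and $\Theta - \Phi\Gamma^{-1}\Phi^\T > 0$.

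Since everything reduces to a single block factorization followed by a standard congruence-invariance argument, I do not anticipate a genuine obstacle. The only points requiring care are (a) using $\Theta^{-\T} = \Theta^{-1}$ (and likewise for $\Gamma$), which follows from the symmetry of $W$, and (b) confirming that $T$ (resp.\ $T'$) is invertible, so that the congruence is legitimate and \emph{both} implications hold. An equivalent route, if one prefers to avoid the matrix factorization, is to complete the square directly on the quadratic form $x^\T W x$ with $x = [x_1^\T, x_2^\T]^\T$, rewriting it as $(x_1 + \Theta^{-1}\Phi x_2)^\T \Theta (x_1 + \Theta^{-1}\Phi x_2) + x_2^\T(\Gamma - \Phi^\T\Theta^{-1}\Phi)x_2$ and reading off the two definiteness conditions (taking $x_2 = \0$ for one direction and an appropriate choice of $x_1$ for the other); this yields the same result.
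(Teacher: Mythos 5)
Your proof is correct and complete: the paper states this lemma without proof (citing a standard reference), and your unit-triangular congruence $TWT^\T = \diag(\Theta,\ \Gamma-\Phi^\T\Theta^{-1}\Phi)$ is precisely the textbook argument behind that citation, with the key congruence-invariance step being exactly the paper's own Lm. \ref{Lm:PreAndPostMultiplication}. You also correctly flag the two points that are easy to gloss over — that $\Theta^{-\T}=\Theta^{-1}$ follows from the symmetry of $W$, and that invertibility of $T$ is what makes the equivalence two-sided — so nothing is missing.
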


\begin{lemma} (Congruence principle \cite{Bernstein2009}) \label{Lm:PreAndPostMultiplication}
A matrix $W > 0$ if and only if $P^\T WP > 0$ where $P$ is a full-rank matrix.
\end{lemma}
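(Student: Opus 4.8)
The plan is to prove both implications directly from the quadratic-form characterization of positive definiteness, using only the invertibility guaranteed by the full-rank hypothesis. First I would record the symmetry of the transformed matrix: since $W=W^\T$, we have $(P^\T W P)^\T = P^\T W^\T P = P^\T W P$, so $P^\T W P$ is symmetric and writing $P^\T W P > 0$ is consistent with the paper's convention that positive definiteness already entails symmetry. This lets me work entirely at the level of the scalar inequality $z^\T M z > 0$ for $z \neq \0$.

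For the forward implication, assume $W>0$. The key observation is that a full-rank (hence square and invertible) $P$ maps nonzero vectors to nonzero vectors. Thus for any $y\neq\0$, setting $x=Py\neq\0$ gives $y^\T (P^\T W P) y = (Py)^\T W (Py) = x^\T W x > 0$, which establishes $P^\T W P>0$.

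For the converse I would avoid re-deriving the argument from scratch and instead reuse the forward direction applied to the full-rank matrix $P^{-1}$. Since $P$ is invertible, $W = P^{-\T}(P^\T W P)P^{-1}$, and applying the already-proved implication with $P^\T W P$ playing the role of $W$ and $P^{-1}$ playing the role of $P$ yields $W>0$.

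I do not expect any genuine obstacle here; the content is elementary and the whole argument reduces to the fact that an invertible linear map preserves the set of nonzero vectors. The only subtlety worth flagging is that the stated equivalence requires $P$ to be square and invertible (the natural reading of ``full-rank'' in this context). For a merely full-column-rank rectangular $P$, only the forward implication survives, since $Py\neq\0$ still holds for $y\neq\0$, but $P^{-1}$ no longer exists to recover $W$ in the converse.
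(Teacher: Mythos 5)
Your proof is correct. The paper itself gives no proof of this lemma---it is stated with a citation to Bernstein's matrix-analysis handbook---and your quadratic-form argument (forward direction via $x = Py \neq \0$ for $y \neq \0$, converse by reapplying the forward direction with $P^{-1}$) is exactly the standard argument behind the cited result. Your closing caveat is also well placed: the equivalence genuinely requires $P$ square and invertible, which matches every use of the lemma in the paper (e.g., with $P^{-1}$, $\diag(P^{-1},\I,\I)$, $\Pi_x^\T P^{-1}$, or a permutation matrix, all of which are square and full-rank), whereas a merely full-column-rank rectangular $P$ would only give the forward implication.
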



Consider the CT-LTI system given by 
\begin{equation}\label{Eq:CTLTISystem}
\begin{aligned}
    \dot{x}(t) = Ax(t) + Bu(t),\\
    y(t) = Cx(t) + Du(t),
\end{aligned}
\end{equation}
where $x(t)\in\R^n,u(t)\in\R^p$ and $y(t)\in\R^m$ respectively represents the state, input and output at time $t\in\R_{\geq0}$.

\subsection{Analysis of CT-LTI Systems}

\subsubsection{\textbf{Stability}}
A well-known necessary and sufficient LMI condition for the \emph{stability} of \eqref{Eq:CTLTISystem} (under $u(t)=\0$) is given in the following proposition.

\begin{proposition}\cite{Antsaklis2006} \label{Pr:CTLTIStability}
The CT-LTI system \eqref{Eq:CTLTISystem} under $u(t)=\0$ is globally exponentially stable iff $\exists P > 0$ such that 
\begin{equation}\label{Eq:Pr:CTLTIStability}
    -A^\T P - PA > 0.
\end{equation}
\end{proposition}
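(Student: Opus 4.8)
The plan is to prove both directions through the quadratic Lyapunov function candidate $V(x) = x^\T P x$, which is the natural certificate for \eqref{Eq:CTLTISystem} under $u(t)=\0$, i.e., for the autonomous flow $\dot{x}(t) = Ax(t)$.

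For the sufficiency direction, I would assume $\exists P>0$ satisfying \eqref{Eq:Pr:CTLTIStability} and take $V(x)=x^\T P x$. Since $P>0$, we have the two-sided bound $\lambda_{\min}(P)\|x\|^2 \le V(x) \le \lambda_{\max}(P)\|x\|^2$, so $V$ is positive definite and radially unbounded. Differentiating along trajectories gives $\dot{V}(x) = x^\T(A^\T P + PA)x = -x^\T(-A^\T P - PA)x$. Because \eqref{Eq:Pr:CTLTIStability} is a \emph{strict} matrix inequality, there exists $\epsilon>0$ with $-A^\T P - PA \ge \epsilon \I$, whence $\dot{V}(x) \le -\epsilon\|x\|^2 \le -\tfrac{\epsilon}{\lambda_{\max}(P)}\,V(x)$. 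A standard comparison (Gr\"onwall) argument then yields $V(x(t)) \le e^{-\gamma t}V(x(0))$ with $\gamma = \epsilon/\lambda_{\max}(P)$, and combining this with the two-sided bound on $V$ gives $\|x(t)\| \le \sqrt{\lambda_{\max}(P)/\lambda_{\min}(P)}\, e^{-\gamma t/2}\|x(0)\|$, i.e., global exponential stability.

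For the necessity direction, I would assume global exponential stability, so that $\|e^{At}\| \le c\,e^{-\lambda t}$ for some $c,\lambda>0$, and construct the certificate explicitly as $P \triangleq \int_0^\infty e^{A^\T t} e^{At}\, dt$. The exponential bound guarantees the matrix integral converges, so $P$ is well-defined; $P=P^\T$ is immediate, and $P>0$ follows from $x^\T P x = \int_0^\infty \|e^{At}x\|^2\,dt > 0$ for $x\neq\0$. To verify \eqref{Eq:Pr:CTLTIStability}, I would compute $A^\T P + PA = \int_0^\infty \frac{d}{dt}\big(e^{A^\T t} e^{At}\big)\,dt = \big[e^{A^\T t}e^{At}\big]_0^\infty = \0 - \I = -\I$, where the upper limit vanishes by exponential stability. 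Hence $-A^\T P - PA = \I > 0$, which closes this direction.

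The routine parts are the differentiation of $V$ and the algebra of the two-sided bound. The step I expect to need the most care is the necessity direction: justifying convergence of the matrix integral defining $P$ and interchanging differentiation and integration to obtain the Lyapunov equation, both of which rest on the uniform exponential bound $\|e^{At}\|\le c\,e^{-\lambda t}$ that is equivalent to (and here extracted from) global exponential stability of the linear flow.
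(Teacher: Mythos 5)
Your proof is correct and complete: the paper does not actually prove Proposition \ref{Pr:CTLTIStability} (it is quoted from the cited reference without argument), and your two-directional Lyapunov argument --- strict decay of $V(x)=x^\T P x$ with the $\epsilon\I$ lower bound for sufficiency, and the observability-Gramian-type construction $P=\int_0^\infty e^{A^\T t}e^{At}\,dt$ solving $A^\T P + PA = -\I$ for necessity --- is exactly the standard textbook proof underlying the citation. No gaps; the convergence and fundamental-theorem-of-calculus steps you flag are indeed justified by the uniform bound $\|e^{At}\|\le c\,e^{-\lambda t}$, which for an LTI system is equivalent to global exponential stability.
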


In the interest of brevity, we omit providing standard definitions of uniform and exponential stability, which can be found in \cite{Antsaklis2006,Khalil1996}. In the remainder of this paper, by `stability,' we simply refer to the global exponential stability.

\subsubsection{\textbf{$(Q,S,R)$-Dissipativity}} 
Similar to the stability, the \emph{dissipativity} property introduced in the seminal paper \cite{Willems1972a} is an important property of dynamical systems. In this paper, we particularly consider the quadratic dissipativity property called $(Q,S,R)$-dissipativity \cite{Kottenstette2014} defined below. 

\begin{definition} \cite{Kottenstette2014} \label{Def:CTLTIQSRDissipativity}
The CT-LTI system \eqref{Eq:CTLTISystem} is $(Q,S,R)$-dissipative (from $u(t)$ to $y(t)$) if there exists a positive definite function $V(x):\R^n\rightarrow\R_{\geq0}$ called the storage function such that for all $t_1 \geq t_0 \geq 0, x(t_0) \in \R^n$ and $u(t)\in\R^p$, the inequality
\begin{equation*}
    V(x(t_1))-V(x(t_0)) \leq \int_{t_0}^{t_1} 
    \bm{
    y(t)\\u(t)
    }^\T
    \bm{
    Q & S \\ S^\T & R
    }
    \bm{
    y(t)\\u(t)
    }dt
\end{equation*}
holds, where $Q\in\R^{m \times m}, S\in \R^{m \times p}$, $R\in\R^{p\times p}$ are given.
\end{definition}

Analyzing and enforcing this $(Q,S,R)$-dissipativity property on large-scale networked systems is a central objective of this paper as through appropriate choices of $Q, S$ and $R$ matrices, it can capture a wide range of dynamical properties of interest as summarized in the following remark.  

\begin{remark}\hspace{-2mm}\cite{Kottenstette2014} \label{Rm:QSRDissipativityVariants}
The dynamical system \eqref{Eq:CTLTISystem} satisfying Def.\ref{Def:CTLTIQSRDissipativity}:
\begin{enumerate}
    \item is \emph{passive} iff 
    $Q=0, S=\frac{1}{2}\I, R=0$;
    \item is \emph{strictly input passive} iff 
    $Q=0, S=\frac{1}{2}\I, R=-\nu \I$ where $\nu>0$ ($\nu$ is an input feedforward passivity index);
    \item is \emph{strictly output passive} iff 
    $Q=-\rho \I, S=\frac{1}{2}\I, R=0$ where $\rho>0$ ($\rho$ is an output feedback passivity index);
    \item is \emph{strictly passive} iff 
    $Q=-\rho \I, S=\frac{1}{2}\I, R=-\nu \I$ where $\rho,\nu>0$;
    \item is \emph{$\mathcal{L}_2$-stable} iff 
    $Q=-\frac{1}{\gamma}\I, S=0, R=-\gamma\I$ where $\gamma \geq 0$ ($\gamma$ is an \emph{$\mathcal{L}_2$-gain} of the system);
    \item \emph{conic} iff 
    $Q=-\I, S=c\I, R=(r^2-c^2)\I$ where $c\in\R$ and $r>0$ 
    ($c$ and $r$ are conic parameters); and
    \item is \emph{sector bounded} iff 
    $Q=-\I, S=(a+b)\I, R=-ab\I$ where $a,b\in\R$ ($a,b$ are sector bound parameters).
\end{enumerate}
\end{remark}

A necessary and sufficient LMI condition for the $(Q,S,R)$-dissipativity of \eqref{Eq:CTLTISystem} is established in the following proposition. 
 
\begin{proposition}
\label{Pr:CTLTIQSRDissipativity}
The CT-LTI system \eqref{Eq:CTLTISystem} is $(Q,S,R)$-dissipative (with $-Q>0,R=R^\T$) from $u(t)$ to $y(t)$ iff $\exists P>0$ such that 
\begin{equation}\label{Eq:Pr:CTLTIQSRDissipativity}
    \begin{bmatrix}
    -A^\T P - P A     &  -PB + C^\T S            & C^\T \\
    \star          & D^\T S + S^\T D + R    & D^\T \\ 
    \star & \star & -Q^{-1}
    \end{bmatrix} \geq 0.
\end{equation}
\end{proposition}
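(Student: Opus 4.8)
The plan is to test Definition \ref{Def:CTLTIQSRDissipativity} against the natural quadratic storage-function candidate $V(x) = x^\T P x$ with $P = P^\T > 0$, reduce the integral dissipation inequality to a pointwise matrix inequality, and then convert that inequality into the claimed form \eqref{Eq:Pr:CTLTIQSRDissipativity} by a single application of Schur's complement.

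First I would establish that, along trajectories of \eqref{Eq:CTLTISystem}, the integral inequality in Definition \ref{Def:CTLTIQSRDissipativity} holds for all $t_1 \ge t_0$, all $x(t_0)$ and all inputs if and only if its infinitesimal counterpart $\dot V(x) \le \bm{y \\ u}^\T \bm{Q & S \\ S^\T & R} \bm{y \\ u}$ holds for every $(x,u)$; the forward implication follows by differentiating at $t_1 \to t_0$ and the reverse by integrating. Substituting $\dot V = x^\T \H_s(PA) x + 2 x^\T P B u$ together with $y = Cx + Du$ and collecting terms as a quadratic form in the stacked vector $\bm{x \\ u}$, this differential inequality is equivalent to positive semidefiniteness of the symmetric matrix $M \triangleq M_0 + \bm{C^\T \\ D^\T} Q \bm{C & D}$, where $M_0 \triangleq \bm{-\H_s(PA) & -PB + C^\T S \\ \star & \H_s(S^\T D) + R}$ gathers all terms independent of $Q$ and the remaining rank-structured term collects the contributions of $Q$ that enter through $y$.

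The decisive step is to remove the outer-product term $\bm{C^\T \\ D^\T} Q \bm{C & D}$. Because $-Q > 0$, the block $-Q^{-1} = (-Q)^{-1}$ is positive definite, so I can append it as a pivot and apply Schur's complement (Lemma \ref{Lm:TwoByTwoBlockMatrixPDF}(ii)) in reverse: $M \ge 0$ is equivalent to positive semidefiniteness of the augmented $3 \times 3$ block matrix whose $(3,3)$ block is $-Q^{-1}$ and whose last block row and column are $\bm{C & D}$ and $\bm{C^\T \\ D^\T}$, since the Schur complement of that pivot is exactly $M_0 - \bm{C^\T \\ D^\T}(-Q)\bm{C & D} = M$. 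This augmented matrix is precisely \eqref{Eq:Pr:CTLTIQSRDissipativity}, which closes the equivalence. I would note that although Lemma \ref{Lm:TwoByTwoBlockMatrixPDF} is stated for strict definiteness, the same equivalence holds for the $\ge 0$ case used here because the pivot block $-Q^{-1}$ is strictly positive definite.

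The part I expect to demand the most care is the necessity direction at the level of Definition \ref{Def:CTLTIQSRDissipativity}, where the storage function is an arbitrary positive definite $V$ rather than a quadratic form. The reduction above presumes $V = x^\T P x$, so to justify the ``only if'' in full I would invoke the classical fact that an LTI system with a quadratic supply rate is dissipative if and only if it admits a quadratic storage function (for instance, its available storage is itself a quadratic form), after which the computation applies verbatim. The strict bound $P > 0$, as opposed to $P \ge 0$, is then inherited from the positive-definiteness requirement on $V$ in Definition \ref{Def:CTLTIQSRDissipativity}.
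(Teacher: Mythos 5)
Your proposal is correct, and its decisive manipulation is the same one the paper uses: a Schur-complement pivot on the strictly positive definite block $-Q^{-1}$ relating the $3\times3$ LMI \eqref{Eq:Pr:CTLTIQSRDissipativity} to the $2\times2$ inequality \eqref{Eq:Pr:CTLTIQSRDissipativityStep1} (your $M\geq 0$, with $\hat{Q}=C^\T QC$, $\hat{S}=C^\T S+C^\T QD$, $\hat{R}=D^\T QD+\H_s(D^\T S)+R$ absorbed into $M_0+\bm{C^\T\\D^\T}Q\bm{C & D}$). Where you genuinely diverge is in what you do with that $2\times2$ inequality: the paper stops there and outsources its equivalence with Def.~\ref{Def:CTLTIQSRDissipativity} to a citation of \cite[Lm.~2]{Kottenstette2014}, whereas you prove it from scratch via the quadratic storage candidate $V=x^\T Px$, the integral-to-differential reduction, and the quadratic-form reading of the dissipation inequality. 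This buys self-containedness and, importantly, forces you to surface the one delicate point the citation hides: necessity when Def.~\ref{Def:CTLTIQSRDissipativity} allows an arbitrary positive definite storage, which you handle by invoking the classical Willems-type fact that dissipative LTI systems with quadratic supply rates admit quadratic storage functions (strictly, passing from the merely semidefinite available storage to a $P>0$ needs an extra observability-type argument, but your treatment matches the level of rigor of the cited lemma). You are also more careful than the paper on a second point: Lemma~\ref{Lm:TwoByTwoBlockMatrixPDF} is stated for strict inequalities, yet both proofs apply it to a $\geq 0$ condition; your explicit remark that the equivalence survives in the semidefinite case because the pivot $-Q^{-1}$ is strictly positive definite fills a step the paper applies silently. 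The only cost of your route is length; the paper's proof is two lines precisely because it cites the hard half.
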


\begin{proof}
Using Lm. \ref{Lm:TwoByTwoBlockMatrixPDF} in \eqref{Eq:Pr:CTLTIQSRDissipativity}, we get:
\begin{equation}\label{Eq:Pr:CTLTIQSRDissipativityStep1}
    \begin{bmatrix}
    -A^\T P - P A + \hat{Q} & -PB + \hat{S}\\
    \star         & \hat{R}
    \end{bmatrix} \geq 0,
\end{equation}
with
$ 
\hat{Q} = C^\T Q C,\ \ 
\hat{S} = C^\T S + C^\T Q D \mbox{ and }\ \ 
\hat{R} = D^\T Q D + D^\T S + S^\T D + R
$. Hence the proof is complete \cite[Lm. 2]{Kottenstette2014} (note also that \eqref{Eq:Pr:CTLTIQSRDissipativityStep1} can be used instead of \eqref{Eq:Pr:CTLTIQSRDissipativity} if $Q\geq 0$). 
\end{proof}

\subsubsection{\textbf{$\H_2$-Norm}}
Let $\mathcal{G}:\mathcal{L}_{2e} \rightarrow \mathcal{L}_{2e}$ represent the transfer matrix of the CT-LTI system \eqref{Eq:CTLTISystem} from $u(t)$ to $y(t)$. If $A$ in \eqref{Eq:CTLTISystem} is Hurwitz, $\mathcal{G}(s)=C(s\I-A)^{-1}B+D$ and its $\H_2$-norm is 
\begin{equation}
    \Vert \mathcal{G} \Vert_{\H_2}^2 \triangleq \sup_{h>0}\frac{1}{2\pi}\int_{-\infty}^{\infty}\tr{\mathcal{G}^\T(j\omega+h)\mathcal{G}(j\omega+h)}d\omega.  
\end{equation}
As shown in \cite{Caverly2019}, under $D=\0$, the $\H_2$-norm of $\mathcal{G}$ is 
\begin{equation}
    \Vert \mathcal{G} \Vert_{\H_2}^2 = \tr{B^\T M B^\T}  = \tr{C N C^\T}, 
\end{equation}
where $M,N>0$ with $MA + A^\T M + C^\T C = 0$ and $AN + NA^\T + BB^\T=0$ . Consider the following proposition.

\begin{proposition} \hspace{-2mm} \cite[pp. 58]{Caverly2019}\label{Pr:H2Norm}
The $\H_2$-Norm of the transfer matrix of the CT-LTI system \eqref{Eq:CTLTISystem} (i.e., $\Vert \mathcal{G} \Vert_{\H_2}$) is $\Vert \mathcal{G} \Vert_{\H_2} < \gamma < \infty$  iff  $D=\0$ and $\exists P,Q>0$ and $\gamma\in\R_{>0}$ such that 
\begin{equation}\label{Eq:Pr:H2Norm1}
\bm{-AP-PA^\T & -PC^\T \\ \star & \gamma \I}>0,\ 
\bm{P & B \\ \star & Q}>0,\ 
\tr{Q}<\gamma,
\end{equation}
or
\begin{equation}\label{Eq:Pr:H2Norm2}
\bm{-A^\T P - PA & -PB \\ \star & \gamma \I}>0,\ 
\bm{P & C^\T \\ \star & Q}>0,\ 
\tr{Q}<\gamma.
\end{equation}
\end{proposition}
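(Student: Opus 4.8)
The plan is to prove the characterization via set \eqref{Eq:Pr:H2Norm2} in detail and then obtain set \eqref{Eq:Pr:H2Norm1} by the dual argument. Throughout I would lean on the Gramian identity $\Vert\mathcal{G}\Vert_{\H_2}^2 = \tr{CNC^\T}$ with $N>0$ solving $AN+NA^\T+BB^\T=\0$, which presupposes $A$ Hurwitz; recovering the Hurwitz/finiteness property directly from the LMIs is the first thing to settle.

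For sufficiency ($\Leftarrow$), I would apply Schur's complement (Lemma \ref{Lm:TwoByTwoBlockMatrixPDF}(ii)) to the first block inequality of \eqref{Eq:Pr:H2Norm2}, eliminating the $\gamma\I$ block, to get $-A^\T P - PA - \tfrac{1}{\gamma}PBB^\T P > 0$. Discarding the nonnegative quadratic term leaves $-A^\T P-PA>0$, so $P$ certifies stability and $A$ is Hurwitz by Proposition \ref{Pr:CTLTIStability}. Then I would perform the full-rank congruence by $P^{-1}$ (Lemma \ref{Lm:PreAndPostMultiplication}) to rewrite the inequality as $AP^{-1}+P^{-1}A^\T+\tfrac{1}{\gamma}BB^\T<\0$. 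Subtracting the scaled Gramian equation $A(N/\gamma)+(N/\gamma)A^\T+\tfrac{1}{\gamma}BB^\T=\0$ and invoking Lyapunov-solution monotonicity for Hurwitz $A$ (the difference $\Delta=P^{-1}-N/\gamma$ solves $A\Delta+\Delta A^\T<\0$, hence $\Delta>0$) yields $P^{-1}>N/\gamma$. Finally, Schur's complement applied to the second inequality gives $Q>CP^{-1}C^\T>\tfrac{1}{\gamma}CNC^\T$, so $\gamma>\tr{Q}>\tfrac{1}{\gamma}\tr{CNC^\T}=\tfrac{1}{\gamma}\Vert\mathcal{G}\Vert_{\H_2}^2$, i.e. $\Vert\mathcal{G}\Vert_{\H_2}<\gamma$.

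For necessity ($\Rightarrow$), assuming $D=\0$, $A$ Hurwitz and $\Vert\mathcal{G}\Vert_{\H_2}<\gamma$, I would run the construction backward: take the controllability Gramian $N$, perturb it to an $\tilde N$ near $N/\gamma$ so that the strict inequality $A\tilde N+\tilde N A^\T+\tfrac{1}{\gamma}BB^\T<\0$ holds (a continuity/$\epsilon$ argument passing from the Gramian equality to a strict LMI), set $P=\tilde N^{-1}$, and pick $Q$ just above $CP^{-1}C^\T$. The strict gap $\Vert\mathcal{G}\Vert_{\H_2}<\gamma$ is precisely what creates room to also satisfy $\tr{Q}<\gamma$; reversing the two Schur steps recovers the block LMIs.

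I expect the difficulties to be bookkeeping rather than conceptual: (i) tracking that the single scalar $\gamma$ enters twice — as $\tfrac{1}{\gamma}$ through the Schur complement of the first LMI and as the trace bound $\tr{Q}<\gamma$ — so that the two occurrences multiply to the squared bound $\Vert\mathcal{G}\Vert_{\H_2}^2<\gamma^2$; and (ii) the perturbation needed in the necessity direction to turn the Gramian equalities into the strict inequalities the LMIs demand. The equivalence of \eqref{Eq:Pr:H2Norm1} and \eqref{Eq:Pr:H2Norm2} I would justify by observing that \eqref{Eq:Pr:H2Norm1} is the verbatim dual obtained by replacing the controllability Gramian $N$ with the observability Gramian $M$ (equivalently, the same congruence $P\mapsto P^{-1}$ swaps the two Lyapunov equations), together with the identity $\tr{CNC^\T}=\tr{B^\T M B}$.
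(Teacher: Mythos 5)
The paper never proves this proposition: it is imported verbatim from \cite[pp. 58]{Caverly2019}, so there is no internal proof to compare against. Measured against the standard argument, your proposal is correct and is essentially that argument: Schur complement to eliminate the $\gamma\I$ block, congruence by $P^{-1}$, comparison of $P^{-1}$ against the scaled controllability Gramian via Lyapunov monotonicity (valid because $A$ is Hurwitz, which you correctly extract from the first LMI by discarding the quadratic term $\frac{1}{\gamma}PBB^\T P$), and the two appearances of $\gamma$ — once through the Schur complement, once through $\tr{Q}<\gamma$ — multiplying to $\Vert\mathcal{G}\Vert_{\H_2}^2<\gamma^2$. The duality between \eqref{Eq:Pr:H2Norm1} and \eqref{Eq:Pr:H2Norm2} via the observability Gramian and $\tr{CNC^\T}=\tr{B^\T M B}$ is also right.

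Two loose ends are worth tightening. First, the proposition is an ``iff'' whose right-hand side includes $D=\0$, so in the necessity direction you must \emph{derive} $D=\0$ from $\Vert\mathcal{G}\Vert_{\H_2}<\infty$ rather than assume it; this is one line ($\mathcal{G}(j\omega)\to D$ as $\omega\to\infty$, so the defining integral diverges unless $D=\0$). Relatedly, Hurwitzness of $A$ in the necessity direction is a standing hypothesis of the surrounding text (the Gramian identity is stated ``if $A$ is Hurwitz'') and is not recoverable from finiteness of the transfer-matrix norm for a nonminimal realization — state it as an assumption explicitly. Second, $N$ may be singular when $(A,B)$ is uncontrollable, in which case ``$P=\tilde N^{-1}$ for $\tilde N$ near $N/\gamma$'' needs care; taking $\tilde N=(N+\epsilon N_0)/\gamma$ with $AN_0+N_0A^\T+\I=\0$ gives $A\tilde N+\tilde N A^\T+\tfrac{1}{\gamma}BB^\T=-\tfrac{\epsilon}{\gamma}\I<\0$ and $\tilde N>0$ simultaneously, so make that the explicit construction; with $\epsilon$ small, continuity of $\tr{C\tilde N C^\T}\to\tfrac{1}{\gamma}\Vert\mathcal{G}\Vert_{\H_2}^2<\gamma$ then leaves room for $\tr{Q}<\gamma$ exactly as you say.
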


\subsubsection{\textbf{$H_\infty$-Norm}}

The $\H_\infty$-norm of $\mathcal{G}$ is formally defined as 
\begin{equation}
    \Vert \mathcal{G} \Vert_{\H_\infty}^2 \triangleq \frac{1}{2\pi} \int_{-\infty}^{\infty}\tr{G^\T(j\omega)G(j\omega)}d\omega 
\end{equation}
($\Vert \mathcal{G} \Vert_{\H_\infty}=\Vert \mathcal{G} \Vert_{\mathcal{L}_2}$). As shown in \cite[pp. 210]{Khalil1996} and \cite[pp.49]{Caverly2019},
\begin{equation}
    \Vert \mathcal{G} \Vert_{\H_\infty}^2 
    =\sup_{u\in\mathcal{L}_2,u\neq0} \frac{\Vert y \Vert_{\mathcal{L}_2}^2}{\Vert u \Vert_{\mathcal{L}_2}^2}
    =\sup_{\omega \in \R} \Vert G(j\omega)\Vert_2^2.
\end{equation}
Now, consider the following proposition.

\begin{proposition}\cite[pp.50]{Caverly2019}\label{Pr:HInfNorm}
The $\H_\infty$-Norm of the transfer matrix of the CT-LTI system \eqref{Eq:CTLTISystem} (i.e., $\Vert \mathcal{G} \Vert_{\H_\infty}$) is $\Vert \mathcal{G} \Vert_{\H_\infty} < \gamma$  iff  $\exists P>0$ and $\gamma\in\R_{>0}$ such that 
\begin{equation}\label{Eq:Pr:HInfNorm}
\bm{-A^\T P-PA & -PB & -C^\T \\ \star & \gamma \I & -D^\T \\ \star & \star & \gamma \I}>0.
\end{equation}
\end{proposition}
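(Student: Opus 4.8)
The plan is to prove the proposition by \emph{reducing} the $\H_\infty$-norm bound to the $(Q,S,R)$-dissipativity characterization already established in Proposition \ref{Pr:CTLTIQSRDissipativity}, rather than re-deriving the Bounded Real Lemma from scratch. The starting point is the identity recorded in the text, $\Vert \mathcal{G}\Vert_{\H_\infty} = \Vert \mathcal{G}\Vert_{\mathcal{L}_2}$, so that $\Vert\mathcal{G}\Vert_{\H_\infty} < \gamma$ is the finite-gain $\mathcal{L}_2$-stability condition $\Vert y \Vert_{\mathcal{L}_2}^2 < \gamma^2 \Vert u \Vert_{\mathcal{L}_2}^2$ (for zero initial state). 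This $\mathcal{L}_2$-gain property is, by definition, $(Q,S,R)$-dissipativity with the supply rate $\gamma\Vert u\Vert^2 - \tfrac{1}{\gamma}\Vert y\Vert^2$ — indeed, integrating the dissipation inequality of Def. \ref{Def:CTLTIQSRDissipativity} from a zero initial state and using $V\geq 0$ returns exactly $\Vert y\Vert_{\mathcal{L}_2}^2 \leq \gamma^2\Vert u\Vert_{\mathcal{L}_2}^2$. This corresponds to the $\mathcal{L}_2$-gain entry of Remark \ref{Rm:QSRDissipativityVariants} with $Q = -\tfrac{1}{\gamma}\I$, $S = \0$, $R = \gamma\I$, and these satisfy the standing hypotheses $-Q = \tfrac{1}{\gamma}\I > 0$ and $R = R^\T$ of Proposition \ref{Pr:CTLTIQSRDissipativity} since $\gamma > 0$.

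First I would substitute this triple into the dissipativity LMI \eqref{Eq:Pr:CTLTIQSRDissipativity}. With $S = \0$ the $(1,2)$ and $(2,2)$ blocks collapse to $-PB$ and $\gamma\I$, while $-Q^{-1} = \gamma\I$ occupies the $(3,3)$ block, so \eqref{Eq:Pr:CTLTIQSRDissipativity} becomes the matrix with blocks $(-A^\T P - PA,\, -PB,\, C^\T;\ \star,\, \gamma\I,\, D^\T;\ \star,\,\star,\,\gamma\I) \geq 0$. This differs from the target LMI \eqref{Eq:Pr:HInfNorm} only in the signs of the $C^\T$ and $D^\T$ off-diagonal blocks. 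I would erase that discrepancy with a single congruence: pre- and post-multiplying by $\diag(\I,\I,-\I)$ flips the sign of the third block row and column, sending $C^\T \mapsto -C^\T$ and $D^\T \mapsto -D^\T$ while leaving the diagonal blocks fixed (the $(3,3)$ block is unchanged because $(-1)(-1)=1$), and by Lemma \ref{Lm:PreAndPostMultiplication} this preserves definiteness. After this step the two matrices agree in form.

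The one genuinely delicate point — and the step I expect to be the main obstacle — is reconciling the strict ``$>0$'' demanded in \eqref{Eq:Pr:HInfNorm} against the non-strict ``$\geq 0$'' produced by Proposition \ref{Pr:CTLTIQSRDissipativity}, mirroring the gap between the strict bound $\Vert\mathcal{G}\Vert_{\H_\infty} < \gamma$ and the non-strict $\leq\gamma$. The plan here is a standard perturbation/continuity argument: $\Vert\mathcal{G}\Vert_{\H_\infty} < \gamma$ holds iff $\Vert\mathcal{G}\Vert_{\H_\infty} \leq \gamma'$ for some $\gamma' < \gamma$, i.e. iff the non-strict LMI is feasible at $\gamma'$; since the left-hand side of \eqref{Eq:Pr:HInfNorm} is continuous and monotone in $\gamma$, a strict feasible point at $\gamma$ can be traded for a non-strict feasible point at some $\gamma' < \gamma$ and conversely. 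Finally I would record that feasibility forces $-A^\T P - PA > 0$ (read off the $(1,1)$ block), hence $A$ is Hurwitz by Proposition \ref{Pr:CTLTIStability}; this is precisely the well-posedness ensuring $\mathcal{G}(s)$ is analytic in the closed right half-plane, so that $\Vert\mathcal{G}\Vert_{\H_\infty}$ indeed equals the $\mathcal{L}_2$-gain invoked at the outset, closing the equivalence. Note that the harder ``necessity'' content of the Bounded Real Lemma (existence of a certifying $P$) is not re-proved here — it is inherited wholesale from the ``iff'' of Proposition \ref{Pr:CTLTIQSRDissipativity}, which is what makes this reduction attractive.
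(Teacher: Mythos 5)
The paper itself offers no proof of this proposition --- it is imported verbatim from \cite[pp.50]{Caverly2019} as the (strict) Bounded Real Lemma --- so your reduction to Prop.~\ref{Pr:CTLTIQSRDissipativity} is a genuinely different route, and its algebraic core is sound: substituting $Q=-\frac{1}{\gamma}\I$, $S=\0$, $R=\gamma\I$ into \eqref{Eq:Pr:CTLTIQSRDissipativity} and applying the congruence $\diag(\I,\I,-\I)$ via Lemma~\ref{Lm:PreAndPostMultiplication} does reproduce \eqref{Eq:Pr:HInfNorm} up to strictness. Two caveats, though. First, Remark~\ref{Rm:QSRDissipativityVariants} item 5 as printed reads $R=-\gamma\I$, which would make the supply rate everywhere nonpositive; your $R=+\gamma\I$ is the correct triple, but since you lean on that remark to identify the $\mathcal{L}_2$-gain bound with $(Q,S,R)$-dissipativity, you should state explicitly that you are correcting the printed sign. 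Second, your claim that the gain bound ``is, by definition,'' dissipativity is not accurate: the dissipation inequality implies the gain bound, but the converse --- that a gain bound forces the \emph{existence} of a positive definite storage function as required by Def.~\ref{Def:CTLTIQSRDissipativity} --- is the hard KYP/available-storage half of the lemma. In your scheme it is inherited from the ``iff'' statements of Remark~\ref{Rm:QSRDissipativityVariants} and Prop.~\ref{Pr:CTLTIQSRDissipativity} (both cited to Kottenstette), so your proof is only as self-contained as those citations; this trade is legitimate but should be acknowledged as such rather than presented as definitional.

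The one genuine gap is the strict-versus-nonstrict bridge in the necessity direction. From $\Vert\mathcal{G}\Vert_{\H_\infty}<\gamma$ you obtain the nonstrict LMI at some $\gamma'<\gamma$ with some $P>0$, and you then invoke ``continuity and monotonicity in $\gamma$'' to pass to the strict LMI at $\gamma$. This fails as stated: the slack $M(\gamma,P)-M(\gamma',P)=(\gamma-\gamma')\,\diag(\0,\I,\I)$ never touches the $(1,1)$ block $-A^\T P-PA$, which may remain merely positive semidefinite, so no amount of $\gamma$-slack alone strictifies the matrix. The repair requires perturbing $P$ as well: since $A$ is Hurwitz, take $P_1>0$ solving $-A^\T P_1-P_1A=\I$ and set $P_\delta=P+\delta P_1$; the increment of the LMI matrix is then $N(\delta)=\bm{\delta\I & -\delta P_1B & \0 \\ \star & (\gamma-\gamma')\I & \0 \\ \star & \star & (\gamma-\gamma')\I}$, which by Lemma~\ref{Lm:TwoByTwoBlockMatrixPDF} is positive definite whenever $\delta\I-\delta^2(\gamma-\gamma')^{-1}P_1BB^\T P_1>0$, i.e., for $\delta$ sufficiently small, whence $M(\gamma,P_\delta)\geq M(\gamma',P)+N(\delta)>0$ with $P_\delta>0$. (Your sufficiency direction --- strict at $\gamma$ implies nonstrict at a nearby $\gamma'<\gamma$, hence gain at most $\gamma'<\gamma$, with the $(1,1)$ block certifying $A$ Hurwitz via Prop.~\ref{Pr:CTLTIStability} --- is fine as sketched.) With this Lyapunov-perturbation step inserted, your reduction closes.
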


\subsubsection{\textbf{Controllability}}
Regarding the \emph{controllability} \cite{Antsaklis2006} of the CT-LTI system \eqref{Eq:CTLTISystem}, consider the following proposition. 
\begin{proposition}\cite[pp.76]{Turner2007} \cite[pp. 86]{Caverly2019} \label{Pr:Stabilizability}
A CT-LTI system \eqref{Eq:CTLTISystem} is: (1) stable and controllable iff $\exists P>0$ such that 
\begin{equation}
    -AP-PA^\T-BB^\T = 0,
\end{equation}
and (2) stabilizable iff $\exists P>$ such that
\begin{equation}\label{Eq:Pr:Stabilizability}
    -AP-PA^\T+BB^\T > 0 
\end{equation}
(here, under $K=-\frac{1}{2} B^\T P^{-1}$, $A+BK$ is Hurwitz).
\end{proposition}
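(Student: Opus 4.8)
The statement has two parts, which I treat in turn. The main engines will be Proposition~\ref{Pr:CTLTIStability} (the Lyapunov characterization of stability) and Lemma~\ref{Lm:PreAndPostMultiplication} (congruence), supplemented by the controllability Gramian for part (1) and an algebraic-Riccati/LQR construction for part (2).

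For part (1), I would first observe that $-AP-PA^\T-BB^\T=\0$ is exactly the controllability-Gramian Lyapunov equation $AP+PA^\T=-BB^\T$. In the forward direction, assuming stability (so $A$ is Hurwitz) and controllability, I would exhibit $P$ explicitly as the controllability Gramian $W_c\triangleq\int_0^\infty e^{At}BB^\T e^{A^\T t}\,dt$: Hurwitzness makes the integral converge, differentiating $e^{At}BB^\T e^{A^\T t}$ and integrating gives $AW_c+W_cA^\T+BB^\T=\0$, and the standard reachability characterization yields $W_c>0$ precisely when $(A,B)$ is controllable. In the reverse direction I would start from a solution $P>0$ and read off stability from an eigenvalue test: for a (right) eigenvector $v$ of $A^\T$ with eigenvalue $\lambda$, pre/post-multiplying the equation by $v^*$ and $v$ gives $2\,\mathrm{Re}(\lambda)\,v^*Pv=-\Vert B^\T v\Vert^2\le0$, hence $\mathrm{Re}(\lambda)\le0$; a PBH argument then upgrades this to strict stability and controllability, since a purely imaginary mode would force $B^\T v=\0$, i.e.\ an uncontrollable eigenvalue on the imaginary axis.

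For part (2), the reverse direction is the clean one and matches the explicit gain in the statement. Given $P>0$ with $-AP-PA^\T+BB^\T>0$, I would use the algebraic identity
\[
AP+PA^\T-BB^\T=(A+BK)P+P(A+BK)^\T,\qquad K=-\tfrac12 B^\T P^{-1},
\]
so that $-AP-PA^\T+BB^\T>0$ reads $(A+BK)P+P(A+BK)^\T<0$; applying Lemma~\ref{Lm:PreAndPostMultiplication} with the full-rank factor $P^{-1}$ converts this into $(A+BK)^\T P^{-1}+P^{-1}(A+BK)<0$, whence $A+BK$ is Hurwitz by Proposition~\ref{Pr:CTLTIStability} and the system is stabilizable. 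For the forward direction I would assume stabilizability and produce a feasible $P$ via the control Riccati equation $A^\T\Pi+\Pi A-\Pi BB^\T\Pi+\I=\0$: stabilizability guarantees a stabilizing solution $\Pi\ge0$, and since the pair $(A,\I)$ is trivially observable this solution is in fact $\Pi>0$; setting $P=\Pi^{-1}$ and invoking congruence (Lemma~\ref{Lm:PreAndPostMultiplication}) yields $AP+PA^\T-BB^\T=-P^2<0$, i.e.\ the desired strict inequality.

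The main obstacle is the reverse direction of part (1): existence of a positive-definite solution of the Lyapunov equation only delivers marginal stability through $\mathrm{Re}(\lambda)\le0$, and ruling out imaginary-axis eigenvalues (thereby obtaining \emph{both} strict stability and controllability) is the delicate step — it genuinely requires the PBH/uncontrollable-mode argument and not the Lyapunov inequality alone, since otherwise a skew-symmetric $A$ with $B=\0$ would furnish a positive-definite solution of a system that is neither stable nor controllable. The remaining pieces (the Gramian computation in part (1) and the Riccati solvability in part (2)) are standard, the only additional care in part (2) being the appeal to observability of $(A,\I)$ to promote the Riccati solution from semidefinite to definite.
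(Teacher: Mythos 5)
The paper itself supplies no proof of this proposition --- it is quoted verbatim from Turner (2007, p.~76) and Caverly--Forbes (2019, p.~86) --- so your attempt must stand on its own merits. Most of it does. Your part (2) is correct and complete: the identity $(A+BK)P+P(A+BK)^\T = AP+PA^\T-BB^\T$ under $K=-\tfrac{1}{2}B^\T P^{-1}$ is exactly the right observation (and explains the parenthetical gain in the statement), the congruence with $P^{-1}$ via Lm.~\ref{Lm:PreAndPostMultiplication} followed by Prop.~\ref{Pr:CTLTIStability} gives Hurwitzness of $A+BK$, and the forward direction through the Riccati equation $A^\T\Pi+\Pi A-\Pi BB^\T\Pi+\I=\0$ (stabilizing solution exists under stabilizability since $(A,\I)$ is trivially detectable, $\Pi>0$ by observability, and $P=\Pi^{-1}$ gives $AP+PA^\T-BB^\T=-P^2<0$) is a standard, valid route. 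The forward direction of part (1) via the controllability Gramian is likewise fine.

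The genuine gap is the reverse direction of part (1), and it is worse than the ``delicate step'' you describe: it is not closable, because that implication is false as stated, and the skew-symmetric example you mention in passing is an actual counterexample rather than a cautionary remark. Take $A=\bm{0 & 1\\ -1 & 0}$, $B=\0$, $P=\I$ --- or, to avoid a degenerate $B$, take $A=\diag\left(\bm{0 & 1\\ -1 & 0},-1\right)$, $B=\bm{0 & 0 & 1}^\T$, $P=\diag(1,1,\tfrac{1}{2})$. In both cases $P>0$ solves $AP+PA^\T+BB^\T=\0$, yet $A$ is not Hurwitz and $(A,B)$ is not controllable. Your PBH step only shows that a purely imaginary eigenvalue forces $B^\T v=\0$, i.e.\ \emph{if} such an eigenvalue exists the pair is uncontrollable; it cannot exclude such eigenvalues, and it certainly cannot produce controllability, since controllability is part of the conclusion here, not a hypothesis. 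So the phrase ``upgrades this to strict stability and controllability'' does not follow from anything you have written --- and cannot, by the counterexample. What your computations legitimately establish is the pair of one-sided statements that make the result true: (i) if $A$ is Hurwitz, the Lyapunov equation has a unique solution, namely the Gramian $W_c$, and $W_c>0$ iff $(A,B)$ is controllable; and (ii) if $(A,B)$ is controllable and some $P>0$ solves the equation, then your eigenvalue-plus-PBH argument does force $\mathrm{Re}(\lambda)<0$, so $A$ is Hurwitz. The honest repair is to prove the reverse direction under one of these added hypotheses (controllability or Hurwitzness); the unconditional ``iff'' that the paper inherits from its references cannot be rescued by any proof, and your write-up should say so rather than present the PBH remark as closing the gap.
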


\subsubsection{\textbf{Observability}}
Regarding the \emph{observability} \cite{Antsaklis2006} of the CT-LTI system \eqref{Eq:CTLTISystem}, consider the following proposition. 
\begin{proposition}\cite[pp.76]{Turner2007}\cite[pp. 87]{Caverly2019}\label{Pr:Detectability}
A CT-LTI system \eqref{Eq:CTLTISystem} is: (1) stable and observable iff $\exists P>0$ such that 
\begin{equation}
    -A^\T P -PA - C^\T C = 0,
\end{equation}
and (2) detectable iff $\exists P>$ such that
\begin{equation}\label{Eq:Pr:Detectability}
    -A^\T P-PA + C^\T C > 0 
\end{equation}
(here, under $L=\frac{1}{2} P^{-1} C^\T$, $A-LC$ is Hurwitz).
\end{proposition}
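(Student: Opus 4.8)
The plan is to prove this proposition by \emph{duality}, leveraging the already-established Proposition \ref{Pr:Stabilizability} on controllability and stabilizability. The key observation is that observability and detectability of the pair $(A,C)$ are exactly the controllability and stabilizability of the dual pair $(A^\T,C^\T)$, and that $A$ is Hurwitz if and only if $A^\T$ is Hurwitz, since transposition preserves the spectrum. Thus I would substitute $A \mapsto A^\T$ and $B \mapsto C^\T$ throughout Proposition \ref{Pr:Stabilizability} and simply read off the claimed conditions.

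For part (1), this substitution turns the equation $-AP-PA^\T-BB^\T=0$ into
\begin{equation*}
    -A^\T P - P A - C^\T C = 0,
\end{equation*}
which is precisely the first displayed equation of the proposition. Since $(A^\T,C^\T)$ is controllable iff $(A,C)$ is observable, and $A^\T$ is stable iff $A$ is stable, Proposition \ref{Pr:Stabilizability}(1) immediately yields the asserted equivalence. For part (2), the same substitution in $-AP-PA^\T+BB^\T>0$ gives
\begin{equation*}
    -A^\T P - P A + C^\T C > 0,
\end{equation*}
matching \eqref{Eq:Pr:Detectability}, and the fact that $(A^\T,C^\T)$ is stabilizable iff $(A,C)$ is detectable establishes the equivalence.

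The main subtlety — and the one step worth checking carefully — is the observer gain. Proposition \ref{Pr:Stabilizability}(2) asserts that $K=-\frac{1}{2}B^\T P^{-1}$ renders $A+BK$ Hurwitz; under the substitution this reads that $\bar{K}=-\frac{1}{2}C P^{-1}$ renders $A^\T + C^\T \bar{K} = A^\T - \frac{1}{2}C^\T C P^{-1}$ Hurwitz. Transposing, and using that a matrix and its transpose share the same eigenvalues, $A - \frac{1}{2}P^{-1}C^\T C$ is Hurwitz as well; writing $L=\frac{1}{2}P^{-1}C^\T$, this is exactly $A-LC$, as claimed. I expect no genuine obstacle beyond this bookkeeping of transposes, since the duality argument reduces the entire statement to the previously proven Proposition \ref{Pr:Stabilizability}.

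As an independent cross-check, part (1) also follows directly from Lyapunov theory: if $A$ is Hurwitz then $P=\int_0^\infty e^{A^\T t} C^\T C\, e^{At}\,dt$ is the observability Gramian, which solves the equation and is positive definite precisely when $(A,C)$ is observable; conversely, taking $V(x)=x^\T P x$ gives $\dot V = -\Vert Cx \Vert^2 \le 0$, and combining this with observability forces exponential stability via LaSalle's invariance principle. Either route completes the argument, but I would present the duality version as the primary proof for brevity and consistency with the dual Proposition \ref{Pr:Stabilizability}.
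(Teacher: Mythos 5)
The paper offers no proof of this proposition at all: it is imported verbatim from the cited references (Turner; Caverly--Forbes), exactly as its dual counterpart, Prop.~\ref{Pr:Stabilizability}, is. So there is no in-paper argument to compare against; what you have done is supply the standard derivation the paper omits, and your derivation is correct. The duality reduction $(A,C)\mapsto(A^\T,C^\T)$ is exactly the right mechanism: observability/detectability of $(A,C)$ coincide with controllability/stabilizability of $(A^\T,C^\T)$, the spectrum is invariant under transposition, and your substitution reproduces both displayed conditions verbatim. Your handling of the gain is also the one step that genuinely needs checking, and you did it correctly: $K=-\frac{1}{2}B^\T P^{-1}$ dualizes to $A^\T-\frac{1}{2}C^\T C P^{-1}$ Hurwitz, whose transpose $A-\frac{1}{2}P^{-1}C^\T C = A-LC$ with $L=\frac{1}{2}P^{-1}C^\T$ matches the proposition. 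This route buys consistency: any content (or looseness) in the statement is inherited wholesale from Prop.~\ref{Pr:Stabilizability}, which the paper already accepts, so you introduce no new proof obligations. One caveat on your Lyapunov cross-check for part~(1): the converse direction you sketch ($\exists P>0$ solving the equation $\Rightarrow$ stability via $\dot V=-\Vert Cx\Vert^2$ and LaSalle) quietly \emph{assumes} observability rather than deriving it, so as written it does not establish the full biconditional (e.g., with $C=\0$ and $A$ skew-symmetric, $P=\I$ solves the equation yet the system is neither stable nor observable); this looseness is present in the cited formulation itself, which is another reason to prefer your duality argument as the primary proof, since it neither fixes nor worsens that issue.
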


\subsection{Full-State Feedback (FSF) Controller Synthesis for CT-LTI Systems}\label{SubSec:FSFControllerSynthesis}

Consider the CT-LTI system  \eqref{Eq:CTLTISystem} with noise $w(t)\in\R^q$:
\begin{equation}\label{Eq:NoisyCTLTISystem}
\begin{aligned}
    \dot{x}(t) = Ax(t) + Bu(t) + Ew(t),\\
    y(t) = Cx(t) + Du(t) + Fw(t).
\end{aligned}
\end{equation}
Under full-state feedback (FSF) control $u(t)=Kx(t)$, the closed-loop CT-LTI system takes the form
\begin{equation}\label{Eq:CTLTIUnderFSF}
    \begin{aligned}
        \dot{x}(t) = (A+BK)x(t)+Ew(t),\\
        y(t) = (C+DK)x(t)+Fw(t).
    \end{aligned}
\end{equation}


\subsubsection{\textbf{Stabilization}}
The following proposition gives an LMI condition that leads to synthesizing a FSF controller $K$ such that the closed-loop system  \eqref{Eq:CTLTIUnderFSF} is stabilized.

\begin{proposition}\label{Pr:StabilizationUnderFSF}
Under $D=w(t)=\0$, the closed-loop CT-LTI system \eqref{Eq:CTLTIUnderFSF} is stable iff $\exists M>0$ and $L$ such that 
\begin{equation}\label{Eq:Pr:StabilizationUnderFSF}
    -MA^\T-AM-L^\T B^\T-BL>0
\end{equation}
and $K=LM^{-1}$.
\end{proposition}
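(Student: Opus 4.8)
The plan is to reduce the closed-loop stabilization condition to the Lyapunov inequality of Proposition \ref{Pr:CTLTIStability} and then remove the resulting bilinearity in the decision variables by a congruence transformation. First I would substitute $D=\0$ and $w(t)=\0$ into \eqref{Eq:CTLTIUnderFSF}, which collapses the closed loop to the autonomous system $\dot{x}(t)=(A+BK)x(t)$. Applying Proposition \ref{Pr:CTLTIStability} to this system, stability is equivalent to the existence of $P>0$ satisfying $-(A+BK)^\T P - P(A+BK) > 0$.

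The main obstacle is that, once expanded, this inequality contains the products $PBK$ and $K^\T B^\T P$ of the two unknowns $P$ and $K$, so it is a bilinear matrix inequality rather than an LMI and cannot be handed directly to a convex solver. The key step is to eliminate this coupling: setting $M \triangleq P^{-1} > 0$ (which is full rank), I would invoke the congruence principle (Lemma \ref{Lm:PreAndPostMultiplication}) to pre- and post-multiply the inequality by $M$. Using $MP = PM = \I$, the $P$ factors cancel and the inequality becomes $-MA^\T - AM - MK^\T B^\T - BKM > 0$.

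Finally I would introduce the change of variable $L \triangleq KM$ (equivalently $K = LM^{-1}$, and using that $M$ is symmetric so $L^\T = MK^\T$), which turns $MK^\T B^\T$ into $L^\T B^\T$ and $BKM$ into $BL$, yielding exactly the stated LMI $-MA^\T - AM - L^\T B^\T - BL > 0$ in the variables $(M,L)$. Since every step is an equivalence — Proposition \ref{Pr:CTLTIStability} is an iff, the congruence transformation preserves definiteness in both directions, and the substitution $L=KM$ is an invertible bijection between $(P,K)$ and $(M,L)$ — the chain establishes the claimed necessary-and-sufficient condition and recovers the controller via $K=LM^{-1}$. I expect the only subtlety to be keeping track of the transpose when defining $L$, which is harmless because $M=M^\T$.
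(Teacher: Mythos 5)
Your proposal is correct and follows exactly the paper's own route: apply Proposition \ref{Pr:CTLTIStability} to the closed loop, perform the congruence transformation with $P^{-1}$ via Lemma \ref{Lm:PreAndPostMultiplication}, and change variables $M\triangleq P^{-1}$, $L\triangleq KP^{-1}$. Your explicit attention to the equivalence of each step and to $L^\T = MK^\T$ (using $M=M^\T$) fills in precisely the details the paper omits for brevity.
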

\begin{proof}
The proof is complete by following the steps: (1) apply Prop. \ref{Pr:CTLTIStability} for \eqref{Eq:CTLTIUnderFSF} to get an LMI in $P$ and $K$, (2) transform the obtained LMI using Lm. \ref{Lm:PreAndPostMultiplication} with $P^{-1}$ (pre- and post-multiply by $P^{-1}$), and (3) change the LMI variables using $M \triangleq P^{-1}$ and $L \triangleq KP^{-1}$. Details are omitted for brevity.
\end{proof}

\subsubsection{\textbf{$(Q,S,R)$-Dissipativation}}
The following proposition provides an LMI condition that leads to synthesize a FSF controller $K$ such that the closed-loop system \eqref{Eq:CTLTIUnderFSF} is $(Q,S,R)$-dissipative from $w(t)$ to $y(t)$. 

\begin{proposition}\label{Pr:DissipativationUnderFSF}
Under $D=\0$, the closed-loop CT-LTI system \eqref{Eq:CTLTIUnderFSF} is $(Q,S,R)$-dissipative (with $-Q>0$, $R=R^\T$) from $w(t)$ to $y(t)$ iff $\exists M>0$ and $L$ such that 
\begin{equation}\label{Eq:Pr:DissipativationUnderFSF}
    \bm{
    -\H_s(AM+BL) & -E+MC^\T S & MC^\T \\ 
    \star & \H_s(F^\T S)+R & F^\T \\
    \star & \star & -Q^{-1} 
    }>0 
\end{equation}
and $K=LM^{-1}$.
\end{proposition}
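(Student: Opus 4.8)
The plan is to mirror the three-step recipe used for Proposition~\ref{Pr:StabilizationUnderFSF}, now built on the dissipativity characterization of Proposition~\ref{Pr:CTLTIQSRDissipativity} instead of the stability one. First I would read off the state-space data of the closed loop~\eqref{Eq:CTLTIUnderFSF} under $D=\0$: the dynamics matrix is $A+BK$, the disturbance-input matrix is $E$, the output matrix is $C$ (since $D=\0$ collapses $C+DK$ to $C$), and the feedthrough from $w(t)$ to $y(t)$ is $F$. Substituting these into~\eqref{Eq:Pr:CTLTIQSRDissipativity} shows that the closed loop is $(Q,S,R)$-dissipative from $w(t)$ to $y(t)$ iff there exists $P>0$ with
\begin{equation*}
\bm{
-\H_s(P(A+BK)) & -PE+C^\T S & C^\T \\
\star & \H_s(F^\T S)+R & F^\T \\
\star & \star & -Q^{-1}
} \geq 0.
\end{equation*}
This is an LMI in $P$ once $K$ is fixed, but it is bilinear in the pair $(P,K)$ through the products $P(A+BK)$ and $PE$ in the top block row, so it cannot be solved jointly for the controller as stated.

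The linearizing step is a congruence transformation (Lemma~\ref{Lm:PreAndPostMultiplication}) by the full-rank block-diagonal matrix $\diag(P^{-1},\I,\I)$. Pre- and post-multiplying, the $(1,1)$ block becomes $-P^{-1}\H_s(P(A+BK))P^{-1}=-\H_s((A+BK)P^{-1})$, the $(1,2)$ block becomes $-E+P^{-1}C^\T S$, and the $(1,3)$ block becomes $P^{-1}C^\T$; the second and third block rows and columns are untouched. Introducing $M\triangleq P^{-1}>0$ and $L\triangleq KP^{-1}$ (so that $BKP^{-1}=BL$ and $K=LM^{-1}$) turns the transformed matrix into exactly~\eqref{Eq:Pr:DissipativationUnderFSF}. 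Because $P\mapsto M=P^{-1}$ is a bijection of the positive-definite cone onto itself and $L=KM$ recovers $K=LM^{-1}$, this change of variables is invertible, which is precisely what makes the existence statements equivalent in both directions: a dissipating $K$ yields a feasible $(M,L)$, and conversely any feasible $(M,L)$ produces $P=M^{-1}>0$ and $K=LM^{-1}$ satisfying the original inequality after the inverse congruence, whence Proposition~\ref{Pr:CTLTIQSRDissipativity} certifies dissipativity of the closed loop.

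I expect the only genuine subtlety to be the bookkeeping in the top block row: one must verify that multiplying by $P^{-1}$ simultaneously clears $P$ from the $(1,1)$ and $(1,2)$ entries while producing the factors $MC^\T$ that appear in~\eqref{Eq:Pr:DissipativationUnderFSF}, and that the substitution $L=KM$ is applied consistently to the $BKM=BL$ term inside $\H_s(\cdot)$. Everything else is routine, since the $(2,2)$, $(2,3)$ and $(3,3)$ blocks are invariant under the chosen congruence, and a full-rank congruence preserves (semi)definiteness, so the sense of the inequality inherited from Proposition~\ref{Pr:CTLTIQSRDissipativity} carries over unchanged.
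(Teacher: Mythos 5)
Your proposal is correct and follows essentially the same route as the paper's own proof, which is exactly the three-step recipe you describe: apply Proposition~\ref{Pr:CTLTIQSRDissipativity} to the closed loop \eqref{Eq:CTLTIUnderFSF}, congruence-transform via Lemma~\ref{Lm:PreAndPostMultiplication} with $\diag(P^{-1},\I,\I)$, and change variables $M\triangleq P^{-1}$, $L\triangleq KP^{-1}$ (the paper omits the block-by-block bookkeeping ``for brevity,'' which you supply correctly, including the invertibility of the change of variables that gives the ``iff''). The only blemish --- inherited from the paper itself, where Proposition~\ref{Pr:CTLTIQSRDissipativity} is stated with $\geq 0$ while \eqref{Eq:Pr:DissipativationUnderFSF} is stated with $>0$ --- is that your intermediate display carries $\geq 0$ into a target stated strictly; since a full-rank congruence preserves both strict and non-strict definiteness, this does not affect your argument's validity relative to the paper's.
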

\begin{proof}
The proof is complete by following the steps: 
(1) apply Prop. \ref{Pr:CTLTIQSRDissipativity} for \eqref{Eq:CTLTIUnderFSF} to get an LMI in $P$ and $K$, 
(2) transform the LMI using Lm. \ref{Lm:PreAndPostMultiplication} with $\diag(P^{-1},\I,\I)$, and (3) change the LMI variables using $M \triangleq P^{-1}$ and $L \triangleq KP^{-1}$.  
\end{proof}

\subsubsection{\textbf{$\H_2$-Optimal Control}}
Under FSF control $u(t)=Kx(t)$, the goal of $\H_2$-optimal control is to synthesize a controller $K$ that minimizes the $\H_2$-norm of the closed-loop system \eqref{Eq:CTLTIUnderFSF} (from $w(t)$ to $y(t)$). For this purpose, the following proposition provides an LMI based approach.

\begin{proposition}\label{Pr:H2ControlUnderFSF}
Under $F=\0$, the $\H_2$-optimal FSF controller $K$ for the closed-loop system \eqref{Eq:CTLTIUnderFSF} is found by solving the LMI:
\begin{equation}\label{Eq:Pr:H2ControlUnderFSF}
\begin{aligned}
    \min_{M,Q,L,\gamma}&\ \gamma\\
    \mbox{sub. to:}&\ M>0,\ Q>0,\ \gamma>0\\
    &\bm{-\H_s(AM+BL) & -MC^\T - L^\T D^\T \\ \star & \gamma \I }>0,\\
    &\bm{M & E \\ \star & Q}>0,\ \tr{Q}<\gamma, 
\end{aligned}
\end{equation}
and $K=LM^{-1}$.
\end{proposition}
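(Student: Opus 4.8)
The plan is to apply the $\H_2$-norm characterization of Prop.~\ref{Pr:H2Norm} to the closed-loop system \eqref{Eq:CTLTIUnderFSF}, viewed as a map from the disturbance $w(t)$ to the output $y(t)$, and then to remove the resulting bilinearity in the controller gain through a change of variables. Under $F=\0$, this closed-loop system has state matrix $A+BK$, input matrix $E$, output matrix $C+DK$, and zero feedthrough, so the hypothesis $D=\0$ demanded by Prop.~\ref{Pr:H2Norm} is satisfied automatically.

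First I would invoke the \emph{first} LMI form \eqref{Eq:Pr:H2Norm1} of Prop.~\ref{Pr:H2Norm}, because that form places the Lyapunov variable $P$ to the right of the state and output matrices (as in $AP$ and $PC^\T$), which is exactly the arrangement needed for the intended substitution to linearize. Inserting the closed-loop matrices produces the conditions
\begin{equation*}
\bm{-(A+BK)P-P(A+BK)^\T & -P(C+DK)^\T \\ \star & \gamma\I}>0,\quad
\bm{P & E \\ \star & Q}>0,\quad \tr{Q}<\gamma,
\end{equation*}
in the unknowns $P,Q>0$, $\gamma>0$ and the gain $K$. These are not jointly LMIs, since the products $BKP$ and $P(DK)^\T$ are bilinear in $(P,K)$.

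The key step is the change of variables $M\triangleq P$ and $L\triangleq KP$ (equivalently $K=LM^{-1}$). Expanding the top-left block gives $-(A+BK)P-P(A+BK)^\T=-\H_s(AM+BL)$, the top-right block becomes $-P(C+DK)^\T=-MC^\T-L^\T D^\T$ (using $PK^\T=L^\T$ since $P$ is symmetric), the second inequality becomes $\bm{M & E \\ \star & Q}>0$, and the trace constraint is unchanged. Collecting these reproduces exactly the feasibility set of \eqref{Eq:Pr:H2ControlUnderFSF}. Because $M=P>0$ is invertible, the map $(P,K)\mapsto(M,L)$ is a bijection on the feasible domain, so minimizing $\gamma$ over $(M,Q,L,\gamma)$ minimizes the achievable $\H_2$-norm bound over all stabilizing gains, with the optimizer yielding $K=LM^{-1}$.

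The only delicate point---and the main ``obstacle''---is the choice of form in Prop.~\ref{Pr:H2Norm}. In contrast to the stabilization result (Prop.~\ref{Pr:StabilizationUnderFSF}), no congruence transformation by $P^{-1}$ is required here, precisely because form \eqref{Eq:Pr:H2Norm1} already carries $P$ on the side where the product $KP$ collapses into the single new variable $L$. Starting instead from the second form \eqref{Eq:Pr:H2Norm2} would produce $PBK$-type products that $L=KP$ fails to linearize, thereby forcing an additional congruence step and complicating the derivation.
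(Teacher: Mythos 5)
Your proposal is correct and matches the paper's proof, which likewise applies Prop.~\ref{Pr:H2Norm} directly to the closed-loop system \eqref{Eq:CTLTIUnderFSF} and performs the change of variables $M=P$, $L=KP$ with no congruence step. Your added observation about why the form \eqref{Eq:Pr:H2Norm1} (rather than \eqref{Eq:Pr:H2Norm2}) is the one that linearizes under $L=KP$ is accurate and simply makes explicit a choice the paper leaves implicit.
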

\begin{proof}
The proof is complete by applying Prop. \ref{Pr:H2Norm} to \eqref{Eq:CTLTIUnderFSF} and executing the change of variables: $M=P$ and $L=KP$.
\end{proof}

\subsubsection{\textbf{$\H_\infty$-Optimal Control}}
Similarly, the goal of $\H_\infty$-optimal control is to synthesize a controller $K$ that minimizes the $\H_\infty$-norm of the closed-loop system \eqref{Eq:CTLTIUnderFSF} (from $w(t)$ to $y(t)$). 

\begin{proposition}\label{Pr:HInfControlUnderFSF}
The $\H_\infty$-optimal FSF controller $K$ for the closed-loop system \eqref{Eq:CTLTIUnderFSF} is found by solving the LMI:
\begin{equation}\label{Eq:Pr:HInfControlUnderFSF}
\begin{aligned}
    \min_{M,L,\gamma}&\ \gamma\\
    \mbox{sub. to:}&\ M>0, \gamma>0,\\
    &\bm{-\H_s(AM+BL) & -E & - MC^\T - L^\T D^\T \\ 
    \star & \gamma\I & -F^\T \\ 
    \star & \star & \gamma \I }>0,
\end{aligned}
\end{equation}
and $K=LM^{-1}$.
\end{proposition}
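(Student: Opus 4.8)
The plan is to mirror the strategy used for Propositions \ref{Pr:StabilizationUnderFSF}--\ref{Pr:H2ControlUnderFSF}: start from the analysis-level $\H_\infty$ characterization in Prop. \ref{Pr:HInfNorm}, substitute the closed-loop data of \eqref{Eq:CTLTIUnderFSF}, and then apply a congruence transformation together with a linearizing change of variables to convert the resulting bilinear matrix inequality into the LMI \eqref{Eq:Pr:HInfControlUnderFSF}.

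First, I would read off the closed-loop realization of \eqref{Eq:CTLTIUnderFSF} from $w(t)$ to $y(t)$ as $\tilde{A}=A+BK$, $\tilde{B}=E$, $\tilde{C}=C+DK$, $\tilde{D}=F$, and apply Prop. \ref{Pr:HInfNorm} to it. Requiring $\Vert\mathcal{G}\Vert_{\H_\infty}<\gamma$ then yields the condition $\exists P>0,\gamma>0$ with
\begin{equation}\label{Eq:HInfFSFProposalBilinear}
\bm{-(A+BK)^\T P-P(A+BK) & -PE & -(C+DK)^\T \\ \star & \gamma \I & -F^\T \\ \star & \star & \gamma \I}>0.
\end{equation}
The obstacle here is exactly the one encountered in the earlier FSF propositions: the $(1,1)$ block carries the products $PBK$ and $K^\T B^\T P$, and the $(1,3)$ block carries $K^\T D^\T$, so \eqref{Eq:HInfFSFProposalBilinear} is bilinear in the pair $(P,K)$ rather than an LMI.

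Next, I would invoke Lm. \ref{Lm:PreAndPostMultiplication} with the full-rank block-diagonal matrix $\diag(P^{-1},\I,\I)$ to pre- and post-multiply \eqref{Eq:HInfFSFProposalBilinear}, preserving definiteness. Setting $M\triangleq P^{-1}>0$ and $L\triangleq KP^{-1}$ (so that $MK^\T=L^\T$ by symmetry of $M$), the $(1,1)$ block collapses to $-MA^\T-AM-L^\T B^\T-BL=-\H_s(AM+BL)$, the $(1,3)$ block becomes $-MC^\T-L^\T D^\T$, and the $-PE$ term becomes $-E$; the blocks $\gamma\I$ and $-F^\T$ are untouched because they are multiplied only by identity factors. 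This reproduces precisely the matrix in \eqref{Eq:Pr:HInfControlUnderFSF}. I expect this linearization step to be the crux, and the main thing to verify carefully is that the congruence simultaneously clears \emph{every} bilinear term.

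Finally, since the map $(P,K)\mapsto(M,L)=(P^{-1},KP^{-1})$ is a bijection on $\{P>0\}\times\R^{p\times n}$ with inverse $K=LM^{-1}$, feasibility of \eqref{Eq:Pr:HInfControlUnderFSF} for a given $\gamma$ is equivalent to the existence of a $K$ rendering the closed-loop $\H_\infty$-norm below $\gamma$; hence minimizing $\gamma$ over $(M,L,\gamma)$ subject to \eqref{Eq:Pr:HInfControlUnderFSF} minimizes the closed-loop $\H_\infty$-norm, and the optimal controller is recovered as $K=LM^{-1}$. I would note, as a contrast with Prop. \ref{Pr:H2ControlUnderFSF}, that no restriction $F=\0$ (equivalently $\tilde{D}=\0$) is required here, because Prop. \ref{Pr:HInfNorm} already accommodates a nonzero direct-feedthrough term.
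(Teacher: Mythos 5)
Your proposal is correct and follows exactly the paper's own proof: apply Prop.~\ref{Pr:HInfNorm} to the closed loop \eqref{Eq:CTLTIUnderFSF}, take a congruence (Lm.~\ref{Lm:PreAndPostMultiplication}) with $\diag(P^{-1},\I,\I)$, and linearize via $M\triangleq P^{-1}$, $L\triangleq KP^{-1}$, recovering $K=LM^{-1}$. You merely spell out the block computations and the bijectivity of the change of variables that the paper leaves implicit, and your closing remark about $F$ not needing to vanish is consistent with the statement.
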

\begin{proof}
The proof is complete by following the steps: (1) apply Prop. \ref{Pr:HInfNorm} to \eqref{Eq:CTLTIUnderFSF}, (2) transform the main LMI using Lm. \ref{Lm:PreAndPostMultiplication} with $\diag(P^{-1},\I,\I)$, and (3) change the LMI variables using $M\triangleq P^{-1}$ and $L=KP^{-1}$. 
\end{proof}

\subsection{Observer Design for CT-LTI Systems}

For state feedback control $u(t)=Kx(t)$ (unlike output feedback control $u(t)=Ky(t)$), the controller requires the state information $x(t)$ of the CT-LTI system \eqref{Eq:NoisyCTLTISystem}. However, typically, the state $x(t)$ (unlike the output $y(t)$) is not available to the controller. Therefore, an \emph{observer} is required to keep an \emph{estimate} of the state $x(t)$ as $\hat{x}(t)$.

\subsubsection{\textbf{Luenberger Observer}}
For the CT-LTI system \eqref{Eq:NoisyCTLTISystem}, consider a \emph{Luenberger observer} implemented at the controller:
\begin{equation}\label{Eq:LuenbergerObserver}
    \dot{\hat{x}} = \hat{A}\hat{x}(t) + \hat{B}u(t) + Ly(t),
\end{equation}
that has the estimation error ($e(t)\triangleq x(t)-\hat{x}(t)$) dynamics:
\begin{equation}\label{Eq:LuenbergerObserverErrorDynamics}
\begin{aligned}
    \dot{e}(t) = &\hat{A}e(t) + (A-\hat{A}-LC)x(t) \
    \\ &+ (B-\hat{B}-LD)u(t) + (E-LF)w(t).
\end{aligned}
\end{equation}
The Luenberger observer parameters: $\hat{A}, \hat{B}$ and $L$ can be selected according to the following proposition. 

\begin{proposition}\label{Pr:Observer}
Under $w(t)=0$ and Luenberger observer \eqref{Eq:LuenbergerObserver} parameters: $\hat{A} \triangleq A-LC$ and $\hat{B} \triangleq B-LD$, the estimation error dynamics \eqref{Eq:LuenbergerObserverErrorDynamics} are stable iff $\exists P > 0$ and $K$ such that 
\begin{equation}\label{Eq:Pr:Observer}
    -A^\T P - PA + C^\T K^\T + KC > 0 
\end{equation}
and $L=P^{-1}K$.
\end{proposition}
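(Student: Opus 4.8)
The plan is to mirror the structure of the full-state-feedback synthesis results (e.g.\ Prop.~\ref{Pr:StabilizationUnderFSF}) but in the dual, observer setting. First I would simplify the error dynamics \eqref{Eq:LuenbergerObserverErrorDynamics} using the prescribed parameter choices $\hat{A} \triangleq A - LC$ and $\hat{B} \triangleq B - LD$. These choices annihilate the two state- and input-dependent forcing terms, since $A - \hat{A} - LC = 0$ and $B - \hat{B} - LD = 0$, and the disturbance term drops out because $w(t) = 0$. Hence the error dynamics collapse to the autonomous system $\dot{e}(t) = (A - LC)\, e(t)$, so that stability of the estimation error is exactly (global exponential) stability of the matrix $A - LC$.

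Next I would invoke Prop.~\ref{Pr:CTLTIStability} applied to $\dot{e} = (A-LC)e$: this system is stable iff there exists $P > 0$ with $-(A-LC)^\T P - P(A-LC) > 0$. Expanding, this reads $-A^\T P - PA + C^\T L^\T P + PLC > 0$, which is a genuine Lyapunov inequality but is \emph{bilinear} in the unknowns $P$ and $L$ through the products $C^\T L^\T P$ and $PLC$. This bilinearity is the one real obstacle. It is handled by the same linearizing change of variables used throughout Sec.~\ref{SubSec:FSFControllerSynthesis}, except that here (because the Lyapunov variable $P$ multiplies the gain $L$ directly, rather than $K$ multiplying $P^{-1}$ through a congruence) \emph{no} pre- and post-multiplication step is needed.

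Concretely, I would substitute $K \triangleq PL$, equivalently $L = P^{-1}K$, which is well-defined since $P>0$ is invertible. Because $P$ is symmetric, one has $P^{-\T} = P^{-1}$, so $C^\T L^\T P = C^\T K^\T P^{-1} P = C^\T K^\T$ and $PLC = P P^{-1} KC = KC$; the inequality then becomes exactly \eqref{Eq:Pr:Observer}, namely $-A^\T P - PA + C^\T K^\T + KC > 0$. For the ``iff'' to be valid I would observe that the map $(P,L) \mapsto (P,\,PL)$ is a bijection on $\{P>0\}\times\{L\}$ with inverse $(P,K) \mapsto (P,\,P^{-1}K)$, and that it leaves $P$ (hence the constraint $P>0$) untouched. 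Thus feasibility of \eqref{Eq:Pr:Observer} in $(P,K)$ is equivalent to feasibility of the bilinear Lyapunov inequality in $(P,L)$, the recovered observer gain is $L = P^{-1}K$, and both directions of the equivalence follow.
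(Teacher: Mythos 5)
Your proposal is correct and follows essentially the same route as the paper's proof: reduce \eqref{Eq:LuenbergerObserverErrorDynamics} to $\dot{e}(t) = (A-LC)e(t)$ under the prescribed parameters and $w(t)=0$, apply Prop.~\ref{Pr:CTLTIStability}, and linearize via the change of variables $K = PL$. Your added remarks on the bijectivity of $(P,L)\mapsto(P,PL)$ and the absence of any congruence step simply make explicit what the paper leaves implicit.
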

\begin{proof}
Under $w(t)=0$ and the given observer parameter choices, the estimation error dynamics \eqref{Eq:LuenbergerObserverErrorDynamics} reduces to: 
\begin{equation}\label{Eq:Pr:ObserverStep1}
\dot{e}(t) = (A-LC)e(t).    
\end{equation}
The proof is complete by applying Prop. \ref{Pr:CTLTIStability} to \eqref{Eq:Pr:ObserverStep1} and then executing a change of variables using $K=PL$. 
\end{proof}

The Luenberger observer design proposed above assumes the noise-less case of \eqref{Eq:NoisyCTLTISystem} (i.e., \eqref{Eq:CTLTISystem}). This assumption is relaxed in the $(Q,S,R)$-dissipative and $\H_2/\H_\infty$-optimal observer designs described in subsequent subsections. However, before getting into those details, first, consider the Luenberger observer \eqref{Eq:LuenbergerObserver} parameters: 
\begin{equation}
    \label{Eq:LuenbergerObserverParameters}
    \hat{A} \triangleq A-LC \ \ \ \ \mbox{ and } \ \ \ \ 
    \hat{B} \triangleq B-LD,
\end{equation}
under which the estimation error dynamics \eqref{Eq:LuenbergerObserverErrorDynamics} take the form 
\begin{equation}\label{Eq:LuenbergerObserverWithPerformance}
    \begin{aligned}
        \dot{e}(t) =&\ (A-LC)e(t)+(E-LF)w(t),\\
        z(t) =&\ Ge(t)+Jw(t).
    \end{aligned}
\end{equation}
where $z(t)$ is a pre-defined performance metric.  

\subsubsection{\textbf{$(Q,S,R)$-Dissipative Observer}}
The $(Q,S,R)$-dissipative observer synthesizes the Luenberger observer gain $L$ such that \eqref{Eq:LuenbergerObserverWithPerformance} is $(Q,S,R)$-dissipative from $w(t)$ to $z(t)$. For this purpose, the following proposition can be used.

\begin{proposition}\label{Pr:DissipativeObserver}
The estimation error dynamics \eqref{Eq:LuenbergerObserverWithPerformance} is $(Q,S,R)$-dissipative (with $-Q>0$, $R=R^\T$) from $w(t)$ to $z(t)$ iff $\exists P>0$ and $K$ such that 
\begin{equation}\label{Eq:Pr:DissipativeObserver}
    \bm{
    -\H_s(PA-KC) & -PE+KF+G^\T S & G^\T \\ \star & \H_s(J^\T S)+R & J^\T \\ \star & \star & -Q^{-1}
    }>0
\end{equation}
and $L=KP^{-1}$.
\end{proposition}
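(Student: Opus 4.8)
The plan is to recognize the estimation error dynamics \eqref{Eq:LuenbergerObserverWithPerformance} as an instance of the generic CT-LTI system \eqref{Eq:CTLTISystem}, with input $w(t)$ and output $z(t)$, under the matrix identification $(A,B,C,D)\mapsto(A-LC,\,E-LF,\,G,\,J)$, and then to invoke the dissipativity characterization of Proposition \ref{Pr:CTLTIQSRDissipativity} (in its strict form, as required here). This immediately returns the necessary and sufficient condition that $\exists P>0$ with
\begin{equation*}
\bm{
-\H_s(P(A-LC)) & -P(E-LF)+G^\T S & G^\T \\
\star & \H_s(J^\T S)+R & J^\T \\
\star & \star & -Q^{-1}
}>0,
\end{equation*}
where I have already rewritten the $(1,1)$ block via $\H_s(\cdot)$ and collapsed $J^\T S+S^\T J+R$ into $\H_s(J^\T S)+R$ in the $(2,2)$ block.

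The step I expect to be the main obstacle is that this inequality is \emph{bilinear}, not linear, in the decision variables $P$ and $L$: expanding the first block-column produces the cross terms $PLC$ (inside $\H_s(P(A-LC))$) and $PLF$ (inside $-P(E-LF)$), each a product of the unknown $P$ with the unknown gain $L$. The resolution — exactly the linearization used in Proposition \ref{Pr:Observer} — is the change of variables $K\triangleq PL$. The delicate point to confirm is twofold: first, that the $P$–$L$ products occur \emph{only} in the factored forms $PLC$ and $PLF$, so that this single substitution linearizes both blocks simultaneously; and second, that since $P>0$ is invertible the correspondence $(P,L)\leftrightarrow(P,K)$ is a bijection, so the substitution introduces no conservatism and the biconditional is preserved. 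The observer gain is then recovered as $L=P^{-1}K$.

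It remains only to verify block by block that the substitution reproduces \eqref{Eq:Pr:DissipativeObserver}. For the $(1,1)$ block, $-\H_s(P(A-LC))=-PA-A^\T P+PLC+C^\T L^\T P=-PA-A^\T P+KC+C^\T K^\T=-\H_s(PA-KC)$, using $K^\T=(PL)^\T=L^\T P$. For the $(1,2)$ block, $-P(E-LF)+G^\T S=-PE+PLF+G^\T S=-PE+KF+G^\T S$. The remaining entries $G^\T$, $J^\T$, $\H_s(J^\T S)+R$ and $-Q^{-1}$ are independent of $L$ and carry over unchanged. This matches \eqref{Eq:Pr:DissipativeObserver} exactly, completing the argument.
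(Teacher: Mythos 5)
Your proposal is correct and follows essentially the same route as the paper's own proof, which applies Prop.~\ref{Pr:CTLTIQSRDissipativity} to the error dynamics \eqref{Eq:LuenbergerObserverWithPerformance} and then executes the change of variables $K=PL$. Your additional block-by-block verification that the substitution linearizes both bilinear terms $PLC$ and $PLF$, and your remark that invertibility of $P$ makes $(P,L)\leftrightarrow(P,K)$ a bijection preserving the biconditional, simply spell out the details the paper omits for brevity.
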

\begin{proof}
The proof is complete by applying Prop. \ref{Pr:CTLTIQSRDissipativity} to \eqref{Eq:LuenbergerObserverWithPerformance} and executing a change of variables using $K=PL$. 
\end{proof}

\subsubsection{\textbf{$\H_2$-Optimal Observer}}
The goal of the $\H_2$-optimal observer is to synthesize the Luenberger observer gain $L$ that minimizes the $\H_2$-norm of \eqref{Eq:LuenbergerObserverWithPerformance} (from $e(t)$ to $z(t)$). 

\begin{proposition}\label{Pr:H2Observer}
Under $J=0$, the $\H_2$-optimal observer gain $L$ for \eqref{Eq:LuenbergerObserverWithPerformance} is found by solving the LMI:
\begin{equation}\label{Eq:Pr:H2Observer}
\begin{aligned}
    \min_{P,Q,K,\gamma}&\ \gamma\\
    \mbox{sub. to:}&\ P>0,\ Q>0,\ \gamma>0,\\
    &\bm{-\H_s(PA-KC) & -PE + KF \\ \star & \gamma \I }>0,\\
    &\bm{P & G^\T \\ \star & Q}>0,\ \tr{Q}<\gamma,
\end{aligned}
\end{equation}
and $L=P^{-1}K$.
\end{proposition}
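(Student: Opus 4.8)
The plan is to treat the estimation error dynamics \eqref{Eq:LuenbergerObserverWithPerformance} as a particular instance of the generic CT-LTI system \eqref{Eq:CTLTISystem} and then invoke the $\H_2$-norm characterization of Prop.~\ref{Pr:H2Norm}, followed by a linearizing change of variables. Concretely, \eqref{Eq:LuenbergerObserverWithPerformance} is \eqref{Eq:CTLTISystem} under the identifications $A\mapsto A-LC$ (state matrix), $B\mapsto E-LF$ (input matrix, from $w(t)$), $C\mapsto G$ (output matrix) and $D\mapsto J$ (feedthrough). Under the standing assumption $J=\0$ the feedthrough vanishes, so Prop.~\ref{Pr:H2Norm} is applicable and certifies $\Vert\mathcal{G}\Vert_{\H_2}<\gamma$ for the transfer matrix from $w(t)$ to $z(t)$.

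First I would decide which of the two equivalent forms in Prop.~\ref{Pr:H2Norm} to use. Because the observer gain $L$ enters the closed-loop matrices on the \emph{left} of $C$ and $F$ (i.e.\ as $LC$ and $LF$), I would select the second form \eqref{Eq:Pr:H2Norm2}, whose Lyapunov block $-A^\T P-PA$ places $P$ to the left of the state matrix. Substituting $A-LC$ into this block gives $-\H_s\big(P(A-LC)\big)=-\H_s(PA)+PLC+C^\T L^\T P$, and substituting $E-LF$ into the off-diagonal block $-PB$ gives $-P(E-LF)=-PE+PLF$; the second LMI of \eqref{Eq:Pr:H2Norm2} becomes $\bm{P & G^\T \\ \star & Q}>0$ directly, and the trace constraint is unchanged.

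The key step is then the change of variables $K\triangleq PL$ (equivalently $L=P^{-1}K$, which is well defined since $P>0$). Under this substitution the bilinear products become $PLC=KC$ and $PLF=KF$, so the Lyapunov block reads $-\H_s(PA-KC)$ and the off-diagonal block reads $-PE+KF$, which are exactly the entries in \eqref{Eq:Pr:H2Observer}. Since for each fixed $P>0$ the map $L\mapsto K=PL$ is a bijection, the original bilinear feasibility problem in $(P,L,Q,\gamma)$ and the LMI in $(P,K,Q,\gamma)$ attain the same optimal value of $\gamma$; minimizing $\gamma$ therefore yields the $\H_2$-optimal gain, recovered as $L=P^{-1}K$.

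The only genuine obstacle is this form-selection step: had I used the first form \eqref{Eq:Pr:H2Norm1} (with $-AP-PA^\T$), the substitution would create products $LCP$ and $PC^\T L^\T$ in which $L$ and $P$ cannot be merged into a single matrix variable by any one-sided substitution, and the problem would remain genuinely bilinear. Recognizing that the observer is dual to full-state feedback---where the first form together with $L=KP$ are the correct choices (cf.\ Prop.~\ref{Pr:H2ControlUnderFSF})---is precisely what dictates using \eqref{Eq:Pr:H2Norm2} and $K=PL$ here. Everything after that is routine block-matrix bookkeeping.
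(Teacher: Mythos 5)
Your proposal is correct and follows exactly the paper's own proof: apply Prop.~\ref{Pr:H2Norm} in its second form \eqref{Eq:Pr:H2Norm2} to the error dynamics \eqref{Eq:LuenbergerObserverWithPerformance} and linearize via the change of variables $K=PL$. Your additional remarks on why \eqref{Eq:Pr:H2Norm1} would leave the problem bilinear, and on the bijectivity of $L\mapsto PL$ for fixed $P>0$, are correct elaborations of steps the paper leaves implicit.
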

\begin{proof}
The proof is complete by applying Prop. \ref{Pr:H2Norm} \eqref{Eq:Pr:H2Norm2} to \eqref{Eq:LuenbergerObserverWithPerformance} and executing a change of variables using $K=PL$. 
\end{proof}

\subsubsection{\textbf{$\H_\infty$-Optimal Observer}}
Similarly, the goal of the $\H_\infty$-optimal observer is to synthesize the Luenberger observer gain $L$ that minimizes the $\H_\infty$-norm of \eqref{Eq:LuenbergerObserverWithPerformance} (from $e(t)$ to $z(t)$).

\begin{proposition}\label{Pr:HInfObserver}
The $\H_\infty$-optimal observer gain $L$ for \eqref{Eq:LuenbergerObserverWithPerformance} is found by solving the LMI:
\begin{equation}\label{Eq:Pr:HInfObserver}
\begin{aligned}
    \min_{P,K,\gamma}&\ \gamma\\
    \mbox{sub. to:}&\ P>0,\ \gamma>0,\\
    &\bm{-\H_s(PA-KC) & -PE + KF & - G^\T \\ 
    \star & \gamma\I & -J^\T \\ 
    \star & \star & \gamma \I }>0,
\end{aligned}
\end{equation}
and $L=P^{-1}K$.
\end{proposition}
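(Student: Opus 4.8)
The plan is to mirror the arguments already used for the $\H_2$-optimal observer (Prop. \ref{Pr:H2Observer}) and the dissipative observer (Prop. \ref{Pr:DissipativeObserver}), since the error dynamics \eqref{Eq:LuenbergerObserverWithPerformance} is nothing but a CT-LTI system of the form \eqref{Eq:CTLTISystem} under the identifications $A \leftarrow A-LC$, $B \leftarrow E-LF$, $C \leftarrow G$, $D \leftarrow J$, with $w(t)$ playing the role of the input and $z(t)$ that of the output. First I would apply Prop. \ref{Pr:HInfNorm} verbatim to \eqref{Eq:LuenbergerObserverWithPerformance} under these substitutions, which yields the condition that $\Vert\mathcal{G}\Vert_{\H_\infty}<\gamma$ iff there exist $P>0$ and $\gamma>0$ with
\begin{equation*}
\bm{-\H_s(P(A-LC)) & -P(E-LF) & -G^\T \\ \star & \gamma\I & -J^\T \\ \star & \star & \gamma\I} > 0.
\end{equation*}
Here the $(1,1)$ block expands to $-A^\T P - PA + C^\T L^\T P + PLC$ and the $(1,2)$ block to $-PE + PLF$, so the inequality is bilinear in the unknowns $P$ and $L$.

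The second step is the linearizing change of variables. The key observation is that in every position where the unknown gain $L$ appears, the Lyapunov certificate $P$ multiplies it from the left — only the products $PL$ and their transposes $L^\T P = (PL)^\T$ occur. Hence setting $K \triangleq PL$ sends $PLC \mapsto KC$, $C^\T L^\T P \mapsto C^\T K^\T$, and $PLF \mapsto KF$, making the LMI jointly affine in the decision variables $(P,K,\gamma)$. Substituting and collapsing the symmetric block via $\H_s(PA-KC)=PA+A^\T P-KC-C^\T K^\T$ recovers exactly \eqref{Eq:Pr:HInfObserver}, and inverting the substitution returns the gain as $L=P^{-1}K$.

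The reason this proceeds more cleanly than the full-state-feedback syntheses (Props. \ref{Pr:StabilizationUnderFSF}--\ref{Pr:HInfControlUnderFSF}), which require a congruence transformation by $\diag(P^{-1},\I,\I)$ through Lm. \ref{Lm:PreAndPostMultiplication} before any substitution is admissible, is precisely this left-multiplication structure: the observer gain enters through $P(A-LC)$ rather than through a right-handed $(A+BK)$, so no pre-/post-multiplication is needed and $K=PL$ is immediately linearizing. Consequently, I do not anticipate a genuine analytic obstacle once Prop. \ref{Pr:HInfNorm} is invoked; the only point demanding care is bookkeeping the sign and transpose conventions in the symmetric $(1,1)$ block so that the two bilinear terms fuse correctly into $-\H_s(PA-KC)$, and verifying that $-Q^{-1}$-type structural assumptions are not needed here (unlike in the dissipative observer) because the $\H_\infty$ characterization of Prop. \ref{Pr:HInfNorm} is already a clean strict LMI.
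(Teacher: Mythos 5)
Your proposal is correct and follows essentially the same route as the paper's proof: apply Prop.~\ref{Pr:HInfNorm} to the error dynamics \eqref{Eq:LuenbergerObserverWithPerformance} (with $A\leftarrow A-LC$, $B\leftarrow E-LF$, $C\leftarrow G$, $D\leftarrow J$) and linearize via $K=PL$. Your added observation — that the gain enters only through left-multiplied products $PL$, so no congruence transformation by $\diag(P^{-1},\I,\I)$ is needed (unlike in Props.~\ref{Pr:StabilizationUnderFSF}--\ref{Pr:HInfControlUnderFSF}) — is accurate and is precisely why the paper's one-line proof suffices.
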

\begin{proof}
The proof is complete by applying Prop. \ref{Pr:HInfNorm} to \eqref{Eq:LuenbergerObserverWithPerformance} and executing a change of variables using $K=PL$. 
\end{proof}

\subsection{Dynamic Output Feedback (DOF) Controller Synthesis for CT-LTI Systems}\label{SubSec:DOFControllerSynthesis}

Consider the CT-LTI system \eqref{Eq:CTLTISystem} with noise $w(t)\in\R^{q}$ and performance $z(t)\in\R^{l}$:
\begin{equation}\label{Eq:NoisyCTLTISystemWithPerf}
    \begin{aligned}
        \dot{x}(t) = Ax(t) + Bu(t) + Ew(t),\\
        y(t) = Cx(t) + Du(t) + Fw(t),\\
        z(t) = Gx(t) + Hu(t) + Jw(t).
    \end{aligned}
\end{equation}
Under $D=\0$ and dynamic output feedback (DOF) control (from $y(t)$ to $u(t)$):
\begin{equation}\label{Eq:CTDOFController}
    \begin{aligned}
        \dot{\zeta}(t) = A_c \zeta(t) + B_c y(t),\\
        u(t) = C_c \zeta(t) + D_c y(t),
    \end{aligned}
\end{equation}
where $\zeta(t)\in\R^{r}$, the closed-loop CT-LTI system \eqref{Eq:NoisyCTLTISystemWithPerf} takes the form:
\begin{equation}\label{Eq:CTLTIUnderDOF}
    \begin{aligned}
        \dot{\theta}(t) = \bar{A}\theta(t) + \bar{B}w(t),\\
        z(t) = \bar{C}\theta(t) + \bar{D}w(t),
    \end{aligned}
\end{equation}
with $\theta(t)=\bm{x^\T(t) & \zeta^\T(t)}^\T$ and
\begin{equation*}
\begin{aligned}
    \bar{A} \triangleq \bm{A+BD_cC & BC_c\\B_cC & A_c},\ 
    \bar{B} \triangleq \bm{E+BD_cF\\B_cF},\\ 
    \bar{C} \triangleq \bm{G+HD_cC & HC_c},\ 
    \bar{D} \triangleq \bm{J+HD_cF}.
\end{aligned}
\end{equation*}

In parallel to Sec. \ref{SubSec:FSFControllerSynthesis}, in the subsequent subsections, we provide LMI conditions for DOF controller synthesis (i.e., to design $A_c,B_c,C_c,D_c$ in \eqref{Eq:CTDOFController}) so as to stabilize, $(Q,S,R)$-dissipativate or optimize $\H_2$/$\H_\infty$-norm of the closed-loop system \eqref{Eq:CTLTIUnderDOF}. Before getting into those details, first, consider the unique change of variables (CoVs) process given below.

\subsubsection{\textbf{Change of Variables (CoVs)}}\label{SubSubSec:ChangeOfVariables}

Inspired by \cite{Scherer1997,Caverly2019}, let 
\begin{equation}\label{Eq:CoVPAndPiDef}
\begin{aligned}
    &P \triangleq \bm{X & M\\ M^\T & \ast},\ 
    &P^{-1} \triangleq \bm{Y & N\\ N^\T & \ast},\\
    &\Pi_x \triangleq \bm{\I & X \\ \0 & M^\T},\  
    &\Pi_y \triangleq \bm{Y & \I \\ N^\T & \0},
\end{aligned}
\end{equation}
where $X,Y,M,N$ are some matrices with appropriate dimensions ($\ast$ represents irrelevant matrices). Based on this definition, it is easy to establish the following three properties: 
\begin{enumerate}
    \item Matrices $X,Y,M,N$ satisfy:
    \begin{equation}\label{Eq:CoVsXYToMN}
        XY+MN^\T =\I;
    \end{equation}
    \item Matrices $P,P^{-1},\Pi_x,\Pi_y$ satisfy: 
    \begin{equation}\label{Eq:CoVPixPiy}
        P \Pi_y = \Pi_x, \ \ \Pi_y = P^{-1}\Pi_x, \ \ \Pi_y^\T = \Pi_x^\T P^{-1};
    \end{equation}
    \item If $M$ is full-rank, 
    \begin{equation}\label{Eq:CoVPisPositiveDefinite}
        P>0 \iff \exists X,Y>0 \mbox{ such that }\bm{Y & \I \\ \I & X}>0.
\end{equation}
\end{enumerate}
We point out that the last property given above can be proven by subsequently using Lm. \ref{Lm:PreAndPostMultiplication} with $\Pi_x^\T P^{-1}$, \eqref{Eq:CoVPixPiy} and \eqref{Eq:CoVPAndPiDef} as:  
$$P>0 \iff \Pi_x^\T P^{-1} \Pi_x= \Pi_y^\T\Pi_x = \bm{Y & \I \\ \I & X}>0.$$

Note also that, given any two matrices $X,Y$ such that $\I-XY$ is non-singular, two unique full-rank matrices $M,N$ can always be found such that \eqref{Eq:CoVsXYToMN} using LU-decomposition \cite{Bernstein2009}. 

Further, under such given $X,Y,M,N$ matrices, a set of matrices $\{A_n,B_n,C_n,D_n\}$ can be uniquely transformed respectively to and from another set of matrices $\{A_c,B_c,C_c,D_c\}$ using   
\begin{equation}\label{Eq:COVsAuxToDOF}
    \begin{aligned}
        D_c \triangleq&\ D_n,\\
        C_c \triangleq&\ (C_n-D_nCY)N^{-\T},\\
        B_c \triangleq&\ M^{-1}(B_n-XBD_n),\\
        A_c \triangleq&\ M^{-1}(A_n-B_nCY-XBC_n-X(A-BD_nC)Y)N^{-\T}
    \end{aligned}
\end{equation}
and  
\begin{equation}\label{Eq:COVsDOFToAux}
    \begin{aligned}
        D_n \triangleq&\ D_c,\\
        C_n \triangleq&\ D_cCY + C_cN^\T,\\
        B_n \triangleq&\ XBD_c + MB_c,\\
        A_n \triangleq&\ MA_cN^\T + MB_cCY + XBC_cN^\T + X(A+BD_cC)Y.
    \end{aligned}
\end{equation}

As we will see in the sequel, we formulate each DOF controller synthesis problem as an LMI problem in matrix parameters $X,Y,A_n,B_n,C_n,D_n$ using: (1) intermediate variables in \eqref{Eq:CoVPAndPiDef}, (2) Lm. \ref{Lm:PreAndPostMultiplication}, and (3) relationships in \eqref{Eq:CTLTIUnderDOF},\eqref{Eq:CoVPixPiy}, \eqref{Eq:CoVPisPositiveDefinite} and \eqref{Eq:COVsDOFToAux}. More explicitly, the following relationships (that can be proven by direct substitution) will be pivotal in this task:
\begin{equation}\label{Eq:COVsKeyResults}
\begin{aligned}
    \Pi_x^\T \bar{A} \Pi_y =&\ \bm{AY+BC_n & A+BD_nC \\ A_n & XA+B_nC},\\
    \Pi_x^\T \bar{B} =&\ \bm{E+BD_nF \\ XE+B_nF}, \\
    \bar{C}\Pi_y =&\ \bm{ GY+HC_n & G+HD_nC}.
\end{aligned}
\end{equation}
Upon solving each such formulated LMI problem (in $X,Y,A_n,B_n,C_n,D_n$), the relationships in \eqref{Eq:CoVsXYToMN} and \eqref{Eq:COVsAuxToDOF} can be used to obtain the matrices $M,N,A_c,B_c,C_c,D_c$ (i.e., the DOF controller \eqref{Eq:CTDOFController}).

\subsubsection{\textbf{Stabilization}}
The following proposition gives an LMI condition that leads to synthesizing a DOF controller \eqref{Eq:CTDOFController} such that the closed-loop system  \eqref{Eq:CTLTIUnderDOF} is stabilized.

\begin{proposition}\label{Pr:StabilizationUnderDOF}
Under $D=w(t)=\0$, the closed-loop CT-LTI system \eqref{Eq:CTLTIUnderDOF} is stable iff $\exists X,Y>0$ and $A_n,B_n,C_n,D_n$ such that 
\begin{align}
\label{Eq:Pr:StabilizationUnderDOF1}
\bm{Y & \I \\ \I & X}>0,\\ 
\label{Eq:Pr:StabilizationUnderDOF2}
\bm{-\H_s(AY+BC_n) & -A - BD_nC - A_n^\T \\ 
\star & -\H_s(XA + B_nC)}>0,
\end{align}
and $A_c,B_c,C_c,D_c$ \eqref{Eq:CTDOFController} are found by CoVs \eqref{Eq:CoVsXYToMN} and \eqref{Eq:COVsAuxToDOF}. 
\end{proposition}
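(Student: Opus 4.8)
The plan is to reduce this DOF-synthesis statement to the plain stability LMI of Prop.~\ref{Pr:CTLTIStability} applied to the closed-loop system \eqref{Eq:CTLTIUnderDOF}, and then to ``linearize'' the resulting (bilinear in $P$ and the controller) inequality by a single congruence transformation with $\Pi_y$ from \eqref{Eq:CoVPAndPiDef}. The whole point of the change-of-variables machinery in Sec.~\ref{SubSubSec:ChangeOfVariables} is that this congruence absorbs the unknown controller blocks $A_c,B_c,C_c,D_c$ into the linearizing variables $A_n,B_n,C_n,D_n$, so the two goals are (a) to rewrite the constraint $P>0$ in terms of $X,Y$ only, and (b) to rewrite the Lyapunov inequality in terms of $X,Y,A_n,B_n,C_n,D_n$ only.

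First I would invoke Prop.~\ref{Pr:CTLTIStability} on \eqref{Eq:CTLTIUnderDOF} with $w(t)=\0$: the closed-loop system is (globally exponentially) stable iff there exists $P>0$ with $-\H_s(P\bar{A})>0$. For the positivity constraint I would use the equivalence \eqref{Eq:CoVPisPositiveDefinite}, which (for full-rank $M$) states exactly that $P>0 \iff \bm{Y & \I \\ \I & X}>0$, i.e.\ condition \eqref{Eq:Pr:StabilizationUnderDOF1}. This already disposes of task (a).

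For task (b) the key move is the congruence principle (Lm.~\ref{Lm:PreAndPostMultiplication}) applied with the full-rank matrix $\Pi_y$. Since $\Pi_y^\T P = \Pi_x^\T$ by \eqref{Eq:CoVPixPiy}, I obtain $\Pi_y^\T(P\bar{A})\Pi_y = \Pi_x^\T \bar{A}\Pi_y$, whence $\Pi_y^\T\,\H_s(P\bar{A})\,\Pi_y = \H_s(\Pi_x^\T\bar{A}\Pi_y)$. Substituting the explicit block form of $\Pi_x^\T\bar{A}\Pi_y$ from \eqref{Eq:COVsKeyResults} and computing $\H_s(\cdot)$ block-by-block (the $(1,1)$ and $(2,2)$ blocks give $\H_s(AY+BC_n)$ and $\H_s(XA+B_nC)$, while the off-diagonal gives $A+BD_nC+A_n^\T$) reproduces \eqref{Eq:Pr:StabilizationUnderDOF2} after negation. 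Because $\Pi_y$ is full rank, this congruence is an equivalence, so \eqref{Eq:Pr:StabilizationUnderDOF2} holds iff $-\H_s(P\bar{A})>0$. Controller recovery then follows by constructing $M,N$ from \eqref{Eq:CoVsXYToMN} and $A_c,B_c,C_c,D_c$ from \eqref{Eq:COVsAuxToDOF}.

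I expect the main obstacle to be making both directions of the ``iff'' rigorous, which amounts to verifying that the change of variables is genuinely a bijection. The forward (analysis) direction is clean: a stabilizing controller furnishes a $P>0$, from which $X,Y,M,N$ are read off via \eqref{Eq:CoVPAndPiDef} and $A_n,B_n,C_n,D_n$ are defined by \eqref{Eq:COVsDOFToAux}, and all the above identities apply. The delicate part is the converse (synthesis) direction: from an arbitrary LMI solution $(X,Y,A_n,B_n,C_n,D_n)$ one must manufacture \emph{full-rank} $M,N$ (hence a legitimate $P>0$ and controller). By the factorization $MN^\T=\I-XY$ this requires $\I-XY$ to be non-singular, which is precisely the technical point flagged after \eqref{Eq:COVsDOFToAux}; should $\I-XY$ be singular, one perturbs $X$ (or $Y$) infinitesimally, which preserves both strict LMIs \eqref{Eq:Pr:StabilizationUnderDOF1}--\eqref{Eq:Pr:StabilizationUnderDOF2}. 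Handling this full-rank/invertibility caveat carefully is the only nonroutine aspect of the argument.
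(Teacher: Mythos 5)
Your proposal is correct and follows essentially the same route as the paper's proof: applying Prop.~\ref{Pr:CTLTIStability} to \eqref{Eq:CTLTIUnderDOF}, converting $P>0$ via \eqref{Eq:CoVPisPositiveDefinite}, and performing a congruence that, by the identity $\Pi_y^\T = \Pi_x^\T P^{-1}$ in \eqref{Eq:CoVPixPiy}, is literally the same transformation the paper applies (Lm.~\ref{Lm:PreAndPostMultiplication} with $\Pi_x^\T P^{-1}$), followed by substitution from \eqref{Eq:COVsKeyResults}. Your additional attention to the converse direction --- recovering full-rank $M,N$ from $MN^\T=\I-XY$ and perturbing $X$ or $Y$ if $\I-XY$ is singular --- is a point the paper's proof leaves implicit, and it is handled correctly.
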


\begin{proof}
Applying Prop. \ref{Pr:CTLTIStability} to \eqref{Eq:CTLTIUnderDOF} give the LMI conditions necessary and sufficient for the stabilization of \eqref{Eq:CTLTIUnderDOF} as: $\exists P>0$ such that 
\begin{equation}\label{Eq:Pr:StabilizationUnderDOFStep1}
    -\bar{A}^\T P -P\bar{A} >0.  
\end{equation}
Using the CoVs in \eqref{Eq:CoVPAndPiDef} and \eqref{Eq:CoVPisPositiveDefinite}, the LMI $P>0$ can be transformed to \eqref{Eq:Pr:StabilizationUnderDOF1}. Finally, \eqref{Eq:Pr:StabilizationUnderDOFStep1} can be transformed to \eqref{Eq:Pr:StabilizationUnderDOF2} by applying Lm. \ref{Lm:PreAndPostMultiplication} with $\Pi_x^\T P^{-1}$ and substituting from \eqref{Eq:CoVPixPiy}:
\begin{align}
    -\bar{A}^\T P -P\bar{A} > 0 
    \iff -\Pi_x^\T P^{-1}\bar{A}^\T \Pi_x -  \Pi_x^\T\bar{A}P^{-1}\Pi_x >0, \nonumber \\
    \iff -\Pi_y^\T \bar{A}^\T \Pi_x - \Pi_x^\T \bar{A} \Pi_y > 0 \iff \mbox{\eqref{Eq:Pr:StabilizationUnderDOF2}}.\nonumber
\end{align} 
Note that the last step above results from \eqref{Eq:COVsKeyResults}.
\end{proof}

\begin{figure*}[!b]
    \centering
    \hrulefill
    \begin{equation}\label{Eq:Pr:DissipativationUsingDOF2}
    \bm{
    -\H_s(AY+BC_n) & -A-BD_nC-A_n^\T & -E-BD_nF+(YG^\T+C_n^\T H^\T)S & YG^\T+C_n^\T H^\T\\
    \star & -\H_s(XA+B_nC) & -XE-B_nF+(G^\T+C^\T D_n^\T H^\T)S & G^\T+C^\T D_n^\T H^\T\\
    \star & \star & \H_s((J^\T+F^\T D_n^\T H^\T)S)+R & J^\T+F^\T D_n^\T H^\T\\
    \star & \star & \star & -Q^{-1}
    }>0
    \end{equation}
    \begin{equation}\label{Eq:Pr:H2ControlUnderDOF2}
    \bm{
    -\H_s(AY+BC_n) & -A-BD_nC-A_n^\T & -E-BD_nF \\
    \star & -\H_s(XA+B_nC) & -XE-B_nF\\
    \star & \star & \I
    }>0 \ \ \mbox{ and } \ \ 
    \bm{
    Y & \I & YG^\T+C_n^\T H^\T\\ 
    \star & X & G^\T+C^\T D_n^\T H^\T\\
    \star & \star & Q
    }>0
    \end{equation}
    \begin{equation}\label{Eq:Pr:HInfControlUnderDOF2}
        \bm{
        -\H_s(AY+BC_n) & -A-BD_nC-A_n^\T & -E-BD_nF & -YG^\T-C_n^\T H^\T\\
        \star & -\H_s(XA+B_nC) & -XE-B_nF & -G^\T-C^\T D_n^\T H^\T\\
        \star & \star & \gamma \I & -J^\T-F^\T D_n^\T H^\T\\
        \star & \star & \star & \gamma I
        }>0
    \end{equation}
\end{figure*}

\subsubsection{\textbf{$(Q,S,R)$-Dissipativation}}
The following proposition provides an LMI condition that leads to synthesize a DOF controller \eqref{Eq:CTDOFController} such that the closed-loop system \eqref{Eq:CTLTIUnderDOF} is $(Q,S,R)$-dissipative from $w(t)$ to $z(t)$. 

\begin{proposition}\label{Pr:DissipativationUsingDOF}
Under $D=\0$, the closed-loop CT-LTI system \eqref{Eq:CTLTIUnderDOF} is $(Q,S,R)$-dissipative (with $-Q>0$, $R=R^\T$) from $w(t)$ to $z(t)$ iff $\exists X,Y>0$ and $A_n,B_n,C_n,D_n$ such that 
\begin{equation}\label{Eq:Pr:DissipativationUsingDOF1}
    \bm{Y & \I \\ \I & X}>0 \mbox{ and  \eqref{Eq:Pr:DissipativationUsingDOF2}},
\end{equation}
and $A_c,B_c,C_c,D_c$ \eqref{Eq:CTDOFController} are found by CoVs \eqref{Eq:CoVsXYToMN} and \eqref{Eq:COVsAuxToDOF}. 
\end{proposition}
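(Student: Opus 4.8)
The plan is to follow exactly the template of the proof of Prop.~\ref{Pr:StabilizationUnderDOF}, replacing the stability analysis condition by the $(Q,S,R)$-dissipativity condition of Prop.~\ref{Pr:CTLTIQSRDissipativity}. First I would apply Prop.~\ref{Pr:CTLTIQSRDissipativity} directly to the closed-loop system \eqref{Eq:CTLTIUnderDOF}, whose realization is $(\bar{A},\bar{B},\bar{C},\bar{D})$ with $\bar{D}=J+HD_nF$ well-defined under $D=\0$. This yields the necessary-and-sufficient condition: there exists $P>0$ (together with controller parameters $A_c,B_c,C_c,D_c$) such that the $3\times 3$ block LMI
$$\bm{-\H_s(P\bar{A}) & -P\bar{B}+\bar{C}^\T S & \bar{C}^\T \\ \star & \H_s(\bar{D}^\T S)+R & \bar{D}^\T \\ \star & \star & -Q^{-1}}>0$$
holds, where I have used $P=P^\T$ to write $-\bar{A}^\T P-P\bar{A}=-\H_s(P\bar{A})$ and $\bar{D}^\T S+S^\T\bar{D}=\H_s(\bar{D}^\T S)$.

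Next, the constraint $P>0$ is converted, via property \eqref{Eq:CoVPisPositiveDefinite}, into the first LMI $\bm{Y & \I \\ \I & X}>0$. For the main LMI I would invoke the congruence principle (Lm.~\ref{Lm:PreAndPostMultiplication}) with the full-rank matrix $\diag(\Pi_y,\I,\I)$ (full-rank because $M,N$ are chosen full-rank, making $\Pi_x,\Pi_y$ invertible). Using $P\Pi_y=\Pi_x$ and $\Pi_y^\T P=\Pi_x^\T$ from \eqref{Eq:CoVPixPiy}, the $(1,1)$ state block $-\H_s(P\bar{A})$ becomes $-\H_s(\Pi_x^\T\bar{A}\Pi_y)$; the state--disturbance block becomes $-\Pi_x^\T\bar{B}+(\bar{C}\Pi_y)^\T S$; and the state--performance block becomes $(\bar{C}\Pi_y)^\T$, while the remaining disturbance/performance blocks are unchanged. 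Substituting the three identities in \eqref{Eq:COVsKeyResults} then expands each state block into its two constituent rows, producing precisely \eqref{Eq:Pr:DissipativationUsingDOF2}.

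Since congruence with an invertible matrix and the equivalence \eqref{Eq:CoVPisPositiveDefinite} are both biconditional, the chain of transformations preserves the ``iff'', so \eqref{Eq:Pr:DissipativationUsingDOF1} is necessary and sufficient. Finally, once \eqref{Eq:Pr:DissipativationUsingDOF1} is solved for $X,Y,A_n,B_n,C_n,D_n$, the DOF controller $A_c,B_c,C_c,D_c$ is recovered through \eqref{Eq:CoVsXYToMN} and \eqref{Eq:COVsAuxToDOF}, exactly as in Prop.~\ref{Pr:StabilizationUnderDOF}.

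I expect the only genuine work to lie in the congruence bookkeeping of the second paragraph, namely verifying block-by-block that $-\H_s(\Pi_x^\T\bar{A}\Pi_y)$ reproduces the top-left $2\times 2$ entries and that the $S$-weighted term $(\bar{C}\Pi_y)^\T S$ combines with $-\Pi_x^\T\bar{B}$ to give the third-column entries of \eqref{Eq:Pr:DissipativationUsingDOF2}. This is entirely mechanical given the precomputed identities \eqref{Eq:COVsKeyResults}, so I do not anticipate any real obstacle; the argument is a direct dissipativity analogue of the stabilization result.
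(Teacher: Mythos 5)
Your proposal is correct and follows essentially the same route as the paper: apply Prop.~\ref{Pr:CTLTIQSRDissipativity} to \eqref{Eq:CTLTIUnderDOF}, convert $P>0$ via \eqref{Eq:CoVPisPositiveDefinite}, perform the congruence, and substitute \eqref{Eq:COVsKeyResults}. Your congruence matrix $\diag(\Pi_y,\I,\I)$ is the same transformation the paper writes as $\diag(\Pi_x^\T P^{-1},\I,\I)$, since $\Pi_y^\T=\Pi_x^\T P^{-1}$ by \eqref{Eq:CoVPixPiy}.
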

\begin{proof}
The proof starts with applying Prop. \ref{Pr:CTLTIQSRDissipativity} to \eqref{Eq:CTLTIUnderDOF} to obtain the  LMI conditions necessary and sufficient for the $(Q,S,R)$-dissipativation of \eqref{Eq:CTLTIUnderDOF} as: 
$\exists P>0$ such that 
\begin{equation}\label{Eq:Pr:DissipativationUsingDOFStep1}
    \bm{
    -\bar{A}^\T P-P\bar{A} & -P\bar{B}+\bar{C}^\T S & \bar{C}^\T \\
    \star & \bar{D}^\T S+S^\T\bar{D}+R & \bar{D}^\T \\
    \star & \star & -Q^{-1}
    }>0.
\end{equation}
Similar to the proof of Prop. \ref{Pr:StabilizationUnderDOF}, using the CoVs in \eqref{Eq:CoVPAndPiDef} and \eqref{Eq:CoVPisPositiveDefinite}, the LMI $P>0$ can be transformed to \eqref{Eq:Pr:DissipativationUsingDOF1}. To obtain \eqref{Eq:Pr:DissipativationUsingDOF2} from \eqref{Eq:Pr:DissipativationUsingDOFStep1}, first, Lm. \ref{Lm:PreAndPostMultiplication} is applied with $\diag(\Pi_x^\T P^{-1},\I,\I\}$ and then the results are substituted using \eqref{Eq:CoVPixPiy} to obtain:
\begin{equation}
    \bm{
    -\H_s(\Pi_x^\T\bar{A}\Pi_y) & -\Pi_x^\T\bar{B}+\Pi_y^\T\bar{C}^\T S & \Pi_y^\T \bar{C}^\T\\
    \star & \H_s(\bar{D}^\T S)+R & \bar{D}^\T \\
    \star & \star & -Q^{-1}
    }>0.
\end{equation}
Finally, the above matrix inequality can be transformed to get the LMI in \eqref{Eq:Pr:DissipativationUsingDOF2} using \eqref{Eq:COVsKeyResults}.
\end{proof}

\subsubsection{\textbf{$\H_2$-Optimal Control}}
The goal of $\H_2$-optimal control here is to synthesize a DOF controller \eqref{Eq:CTDOFController} that minimizes the $\H_2$-norm of the closed-loop system \eqref{Eq:CTLTIUnderDOF} from $w(t)$ to $z(t)$. 

\begin{proposition}\label{Pr:H2ControlUnderDOF}
Under $D=\0$, the $\H_2$-optimal DOF controller \eqref{Eq:CTDOFController} for the closed-loop system \eqref{Eq:CTLTIUnderDOF} is found by solving the LMI:
\begin{equation}\label{Eq:Pr:H2ControlUnderDOF1}
\begin{aligned}
    \min_{\substack{X,Y,Q,\gamma\\A_n,B_n,C_n,D_n}}\ \ &\gamma\\
    \mbox{sub. to:}\ \ &X>0,\ Y>0,\ Q>0,\ \gamma>0,\ \mbox{\eqref{Eq:Pr:H2ControlUnderDOF2}},\\
    &J+HD_nF = 0,\ \tr{Q}<\gamma,
\end{aligned}
\end{equation}
and $A_c,B_c,C_c,D_c$ \eqref{Eq:CTDOFController} are found by CoVs \eqref{Eq:CoVsXYToMN} and \eqref{Eq:COVsAuxToDOF}.
\end{proposition}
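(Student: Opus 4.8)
The plan is to follow the same three-step template used in the proofs of Propositions~\ref{Pr:StabilizationUnderDOF} and~\ref{Pr:DissipativationUsingDOF}, but to replace the stability/dissipativity certificate by the $\H_2$-norm certificate of Prop.~\ref{Pr:H2Norm}. First I would apply Prop.~\ref{Pr:H2Norm} in its form~\eqref{Eq:Pr:H2Norm2} to the closed-loop system~\eqref{Eq:CTLTIUnderDOF}, treating $(\bar{A},\bar{B},\bar{C},\bar{D})$ as the system data. For each admissible controller this gives: the finiteness/feedthrough requirement $\bar{D}=\0$; the Lyapunov-type inequality $\bm{-\bar{A}^\T P-P\bar{A} & -P\bar{B} \\ \star & \gamma\I}>0$; the trace-bounding inequality $\bm{P & \bar{C}^\T \\ \star & Q}>0$; and the scalar bound $\tr{Q}<\gamma$, with unknowns $P>0,Q>0,\gamma>0$ and the controller matrices hidden inside the bars. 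Since every step of Prop.~\ref{Pr:H2Norm} is a necessary and sufficient condition, minimizing $\gamma$ over this feasible set is equivalent to minimizing the closed-loop $\H_2$-norm over admissible DOF controllers.

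Next I would linearize the bilinear dependence on $P$ and $\{A_c,B_c,C_c,D_c\}$ through the change-of-variables machinery of Sec.~\ref{SubSubSec:ChangeOfVariables}. Because $D=\0$ and $D_c=D_n$, the feedthrough is $\bar{D}=J+HD_nF$, so the requirement $\bar{D}=\0$ becomes the \emph{linear} equality $J+HD_nF=0$ appearing in~\eqref{Eq:Pr:H2ControlUnderDOF1}. For the Lyapunov-type inequality I would apply the congruence of Lm.~\ref{Lm:PreAndPostMultiplication} with $\diag(\Pi_x^\T P^{-1},\I)$ and substitute~\eqref{Eq:CoVPixPiy}, turning the $(1,1)$ block into $-\H_s(\Pi_x^\T\bar{A}\Pi_y)$ and the off-diagonal into $-\Pi_x^\T\bar{B}$; expanding both with the identities~\eqref{Eq:COVsKeyResults} reproduces the first matrix of~\eqref{Eq:Pr:H2ControlUnderDOF2}. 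For the trace-bounding inequality I would apply the congruence $\diag(\Pi_y,\I)$, using $\Pi_y^\T P\Pi_y=\Pi_y^\T\Pi_x=\bm{Y & \I \\ \I & X}$ (from~\eqref{Eq:CoVPixPiy}, cf.~\eqref{Eq:CoVPisPositiveDefinite}) and $\Pi_y^\T\bar{C}^\T=(\bar{C}\Pi_y)^\T$ from~\eqref{Eq:COVsKeyResults} to obtain the second matrix of~\eqref{Eq:Pr:H2ControlUnderDOF2}. Both $Q$ and $\gamma$ are untouched by these congruences, so $Q>0$ and $\tr{Q}<\gamma$ carry over verbatim; moreover positive-definiteness of the transformed trace-bounding inequality already forces its leading $\bm{Y & \I \\ \I & X}$ block to be positive definite, which by~\eqref{Eq:CoVPisPositiveDefinite} is equivalent to $P>0$, so $P>0$ need not be imposed separately.

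Finally I would observe that the resulting conditions---the two matrix inequalities, the bounds $X,Y,Q,\gamma>0$, the linear equality $J+HD_nF=0$, and $\tr{Q}<\gamma$---are jointly convex in $X,Y,A_n,B_n,C_n,D_n,Q,\gamma$, and that after solving, $M,N$ are recovered from~\eqref{Eq:CoVsXYToMN} and the physical controller $(A_c,B_c,C_c,D_c)$ from~\eqref{Eq:COVsAuxToDOF}, exactly as in the preceding DOF propositions. The genuinely new ingredient relative to Propositions~\ref{Pr:StabilizationUnderDOF} and~\ref{Pr:DissipativationUsingDOF} is the treatment of the $\H_2$ finiteness requirement, and I expect the main point needing care to be verifying that $\bar{D}=\0$ is equivalent to the \emph{linear} equality $J+HD_nF=0$ (so that it remains admissible inside the convex program) and that it is the trace-bounding inequality, rather than a separate $P>0$ constraint, that both encodes the $\H_2$ cost and certifies $P>0$. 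The remaining manipulations are the same routine block-matrix substitutions already exercised above.
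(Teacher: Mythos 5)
Your proposal is correct and takes essentially the same route as the paper, whose proof is stated only as ``follows similar steps as that of Prop.~\ref{Pr:DissipativationUsingDOF}'': your steps---applying Prop.~\ref{Pr:H2Norm} in the form \eqref{Eq:Pr:H2Norm2} to the closed loop \eqref{Eq:CTLTIUnderDOF}, reading $\bar{D}=\0$ as the linear equality $J+HD_nF=0$ via $D_c=D_n$, performing the congruences $\diag(\Pi_x^\T P^{-1},\I)$ and $\diag(\Pi_y,\I)$ with the substitutions \eqref{Eq:CoVPixPiy} and \eqref{Eq:COVsKeyResults}, noting that the transformed trace LMI subsumes $P>0$ through \eqref{Eq:CoVPisPositiveDefinite}, and recovering the controller via \eqref{Eq:CoVsXYToMN} and \eqref{Eq:COVsAuxToDOF}---are exactly the intended ones. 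The only caveat is cosmetic and lies in the paper, not your argument: following \eqref{Eq:Pr:H2Norm2} verbatim the $(3,3)$ block of the first matrix in \eqref{Eq:Pr:H2ControlUnderDOF2} should read $\gamma\I$ rather than $\I$ (given the constraint $\tr{Q}<\gamma$), an internal scaling inconsistency in the paper's display.
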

\begin{proof}
The proof follows similar steps as that of Prop. \ref{Pr:DissipativationUsingDOF}.  
\end{proof}

\subsubsection{\textbf{$\H_\infty$-Optimal Control}}
Similarly, the goal of $\H_\infty$-optimal control is to synthesize a DOF controller \eqref{Eq:CTDOFController} that minimizes the $\H_\infty$-norm of the closed-loop system \eqref{Eq:CTLTIUnderDOF} (from $w(t)$ to $z(t)$). 

\begin{proposition}\label{Pr:HInfControlUnderDOF}
Under $D=0$, the $\H_\infty$-optimal DOF controller \eqref{Eq:CTDOFController} for the closed-loop system \eqref{Eq:CTLTIUnderDOF} is found by solving the LMI:
\begin{equation}\label{Eq:Pr:HInfControlUnderDOF1}
\begin{aligned}
    \min_{\substack{X,Y,\gamma\\A_n,B_n,C_n,D_n}} &\ \gamma\\
    \mbox{sub. to:}&\ X>0, Y>0, \gamma>0, \bm{Y & \I\\ \I & X}>0, \eqref{Eq:Pr:HInfControlUnderDOF2}.
\end{aligned}
\end{equation}
and $A_c,B_c,C_c,D_c$ \eqref{Eq:CTDOFController} are found by CoVs \eqref{Eq:CoVsXYToMN} and \eqref{Eq:COVsAuxToDOF}.
\end{proposition}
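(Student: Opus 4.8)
The plan is to follow verbatim the template established in the proofs of Prop.~\ref{Pr:StabilizationUnderDOF} and Prop.~\ref{Pr:DissipativationUsingDOF}, only replacing the underlying analysis LMI with the $\H_\infty$ characterization of Prop.~\ref{Pr:HInfNorm}. First I would apply Prop.~\ref{Pr:HInfNorm} to the closed-loop system \eqref{Eq:CTLTIUnderDOF}, which (being necessary and sufficient) yields that $\Vert\mathcal{G}\Vert_{\H_\infty}<\gamma$ holds iff there exist $P>0$ and $\gamma>0$ with
\begin{equation*}
\bm{-\bar A^\T P-P\bar A & -P\bar B & -\bar C^\T\\ \star & \gamma\I & -\bar D^\T\\ \star & \star & \gamma\I}>0.
\end{equation*}
Here $\bar A,\bar B,\bar C,\bar D$ are the closed-loop matrices in \eqref{Eq:CTLTIUnderDOF}, which carry the unknown controller parameters $A_c,B_c,C_c,D_c$; the assumption $D=\0$ is exactly what gives these matrices their convenient affine form.

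The inequality above is bilinear in $(P,A_c,B_c,C_c,D_c)$, so the core of the argument is the linearizing change of variables of Sec.~\ref{SubSubSec:ChangeOfVariables}. I would first dispatch the constraint $P>0$: by \eqref{Eq:CoVPAndPiDef} and \eqref{Eq:CoVPisPositiveDefinite}, $P>0$ is equivalent to the existence of $X,Y>0$ with $\bm{Y & \I\\ \I & X}>0$, giving the first LMI of \eqref{Eq:Pr:HInfControlUnderDOF1}. For the main block inequality I would invoke Lm.~\ref{Lm:PreAndPostMultiplication} with the full-rank congruence factor $\diag(\Pi_x^\T P^{-1},\I,\I)$; using the identities $P^{-1}\Pi_x=\Pi_y$ and $\Pi_x^\T P^{-1}=\Pi_y^\T$ from \eqref{Eq:CoVPixPiy}, the $(1,1)$ block collapses to $-\H_s(\Pi_x^\T\bar A\Pi_y)$, the coupling blocks become $-\Pi_x^\T\bar B$ and $-\Pi_y^\T\bar C^\T$, and the $(2,2),(2,3),(3,3)$ blocks $\gamma\I,-\bar D^\T,\gamma\I$ are untouched.

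The final step is to substitute the precomputed affine expressions \eqref{Eq:COVsKeyResults} for $\Pi_x^\T\bar A\Pi_y$, $\Pi_x^\T\bar B$ and $\bar C\Pi_y$ (the latter transposed to give $\Pi_y^\T\bar C^\T$), together with $\bar D=J+HD_nF$ since $D_c=D_n$. This expands the transformed inequality block-by-block into exactly \eqref{Eq:Pr:HInfControlUnderDOF2}, which is now jointly affine in the lifted decision variables $X,Y,A_n,B_n,C_n,D_n,\gamma$ and hence a genuine LMI. Minimizing $\gamma$ subject to the two LMIs therefore returns the least upper bound on the closed-loop $\H_\infty$-norm, and the controller \eqref{Eq:CTDOFController} is recovered from a feasible point via the CoVs \eqref{Eq:CoVsXYToMN} and \eqref{Eq:COVsAuxToDOF}.

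I expect the only genuine obstacle to be the verification that the congruence transform linearizes the problem, i.e., that the products $\Pi_x^\T\bar A\Pi_y$ and its companions are affine in the new variables; but this is already packaged in \eqref{Eq:COVsKeyResults}, so what remains is bookkeeping of the block structure and signs. A minor point worth checking is that the equivalence is lossless in both directions: since every transformation used (Prop.~\ref{Pr:HInfNorm}, the congruence of Lm.~\ref{Lm:PreAndPostMultiplication}, and the bijective CoVs, which require $M$ full-rank / $\I-XY$ nonsingular) is an iff, feasibility of \eqref{Eq:Pr:HInfControlUnderDOF1} is equivalent to the existence of a DOF controller achieving $\Vert\mathcal{G}\Vert_{\H_\infty}<\gamma$.
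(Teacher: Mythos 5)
Your proposal is correct and takes essentially the same route as the paper: the paper's proof of this proposition simply states that it "follows similar steps as that of Prop.~\ref{Pr:DissipativationUsingDOF}," and your argument carries out exactly those steps — apply Prop.~\ref{Pr:HInfNorm} to \eqref{Eq:CTLTIUnderDOF}, convert $P>0$ via \eqref{Eq:CoVPisPositiveDefinite}, perform the congruence of Lm.~\ref{Lm:PreAndPostMultiplication} with $\diag(\Pi_x^\T P^{-1},\I,\I)$, and substitute via \eqref{Eq:CoVPixPiy} and \eqref{Eq:COVsKeyResults} (with $\bar{D}=J+HD_nF$) to reach \eqref{Eq:Pr:HInfControlUnderDOF2}. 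Your explicit check of losslessness in both directions is a detail the paper leaves implicit, but it introduces no deviation in approach.
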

\begin{proof}
The proof follows similar steps as that of Prop. \ref{Pr:DissipativationUsingDOF}.  
\end{proof}

We conclude this section by summarizing all the established theoretical results in Tab. \ref{Tab:LTISystemsLMIResults}. 

\begin{table}[!ht]
\caption{Summary of Theoretical Results (CT-LTI Systems).}
\label{Tab:LTISystemsLMIResults}
\resizebox{\columnwidth}{!}{
\begin{tabular}{|ll|llll|}
\hline
\multicolumn{2}{|c|}{\multirow{3}{*}{\begin{tabular}[c]{@{}c@{}}Proposition \#\\ (For CT-LTI Systems) \end{tabular} }} &
  \multicolumn{4}{c|}{Concept} \\ \cline{3-6} 
\multicolumn{2}{|c|}{} &
  \multicolumn{1}{c|}{Stability} &
  \multicolumn{1}{c|}{\begin{tabular}[c]{@{}c@{}}$(Q,S,R)$-\\ Dissipativity\end{tabular}} &
  \multicolumn{1}{c|}{\begin{tabular}[c]{@{}c@{}}$\H_2$-\\ Norm\end{tabular}} &
  \multicolumn{1}{c|}{\begin{tabular}[c]{@{}c@{}}$\H_\infty$-\\ Norm\end{tabular}} \\ \hline
\multicolumn{1}{|l|}{\multirow{8}{*}{Task}} &
  \begin{tabular}[c]{@{}l@{}}CT-LTI System\\ Analysis\end{tabular} &
  \multicolumn{1}{c|}{\ref{Pr:CTLTIStability}} & 
  \multicolumn{1}{c|}{\ref{Pr:CTLTIQSRDissipativity}} &
  \multicolumn{1}{c|}{\ref{Pr:H2Norm}} & 
  \multicolumn{1}{c|}{\ref{Pr:HInfNorm}} 
   \\ \cline{2-6} 
\multicolumn{1}{|l|}{} &
  \begin{tabular}[c]{@{}l@{}}FSF Controller\\ Synthesis\end{tabular} &
  \multicolumn{1}{c|}{\ref{Pr:StabilizationUnderFSF},\ref{Pr:Stabilizability}} &
  \multicolumn{1}{c|}{\ref{Pr:DissipativationUnderFSF}} &
  \multicolumn{1}{c|}{\ref{Pr:H2ControlUnderFSF}} & 
  \multicolumn{1}{c|}{\ref{Pr:HInfControlUnderFSF}}  
   \\ \cline{2-6} 
\multicolumn{1}{|l|}{} &
  \begin{tabular}[c]{@{}l@{}}Observer\\ Design\end{tabular} &
  \multicolumn{1}{c|}{\ref{Pr:Observer},\ref{Pr:Detectability}} &
  \multicolumn{1}{c|}{\ref{Pr:DissipativeObserver}} &
  \multicolumn{1}{c|}{\ref{Pr:H2Observer}} &
  \multicolumn{1}{c|}{\ref{Pr:HInfObserver}} 
   \\ \cline{2-6} 
\multicolumn{1}{|l|}{} &
  \begin{tabular}[c]{@{}l@{}}DOF Controller\\ Synthesis\end{tabular} &
  \multicolumn{1}{c|}{\ref{Pr:StabilizationUnderDOF}} &
  \multicolumn{1}{c|}{\ref{Pr:DissipativationUsingDOF}} &
  \multicolumn{1}{c|}{\ref{Pr:H2ControlUnderDOF}} & 
  \multicolumn{1}{c|}{\ref{Pr:HInfControlUnderDOF}} 
   \\ \hline
\end{tabular}}
\end{table}

\section{The Continuous-Time Networked System (CTNS)}
\label{Sec:CTNetworkedSystem}

In this section, we provide the details of the considered continuous-time networked system (CTNS) and outline the interested research problem. 

\subsection{Subsystems of the CTNS}
\subsubsection{\textbf{Dynamics}} 
We consider a CTNS $\mathcal{G}_N$ comprised of $N$ interconnected subsystems $\{\Sigma_i:i\in\N_N\}$ (e.g., see Fig. \ref{Fig:ExampleNetworkedSystem}). The dynamics of the $i$\tsup{th} subsystem $\Sigma_i, i\in\N_N$ are given by
\begin{equation}\label{Eq:CTSubsystemDynamics}
\begin{aligned}
    \dot{x}_i(t) =& \sum_{j\in\bar{\E}_i}A_{ij}x_j(t) + \sum_{j\in\bar{\E}_i}B_{ij}u_j(t)+\sum_{j\in\bar{\E}_i}E_{ij}w_{j}(t),\\ 
    y_i(t) =& \sum_{j\in\bar{\E}_i}C_{ij}x_j(t) + \sum_{j\in\bar{\E}_i}D_{ij}u_j(t) + \sum_{j\in\bar{\E}_i}F_{ij}w_j(t),
\end{aligned}
\end{equation}
where $x_i(t) \in \R^{n_i},\ u_i(t)\in\R^{p_i},\ w_i(t)\in\R^{q_i}$ and $y_i(t)\in\R^{m_i}$ respectively represent the state, input, disturbance and output specific to the subsystems $\Sigma_i$ at time $t\in\R_{\geq 0}$. 

\begin{figure}[!tb]
    \centering
    \includegraphics[width=3in]{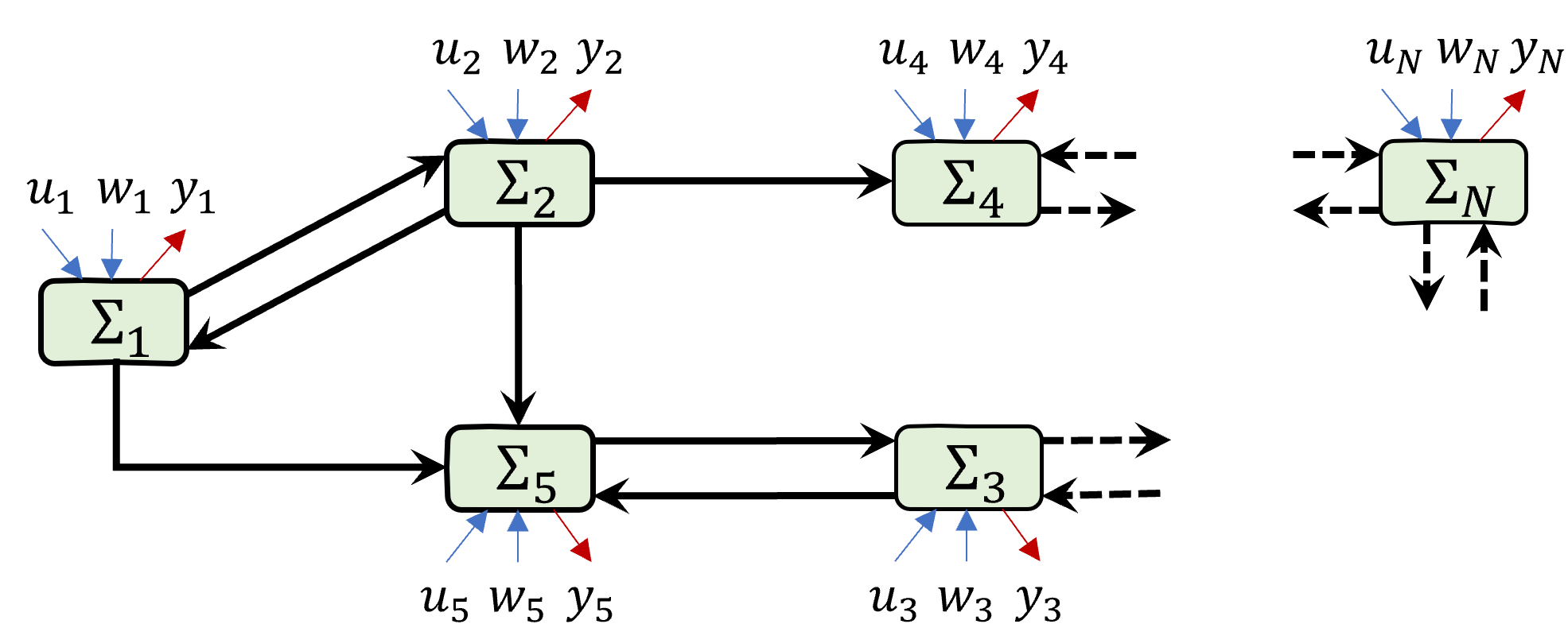}
    \caption{An example networked dynamical system $\mathcal{G}_N$.}
    \label{Fig:ExampleNetworkedSystem}
\end{figure}

\subsubsection{\textbf{Neighbors}} 
In \eqref{Eq:CTSubsystemDynamics}, we denote $\bar{\E}_i \triangleq \E_i\cup\{i\}$ where $\E_i \subset \N_N$ is the set of ``in-neighbors'' of the subsystem $\Sigma_i$. Formally, any subsystem $\Sigma_j$ is an ``in-neighbor'' of subsystem $\Sigma_i$ (i.e., $j\in\E_i$) iff $j \neq i$ and matrices $A_{ij},B_{ij},C_{ij},D_{ij},E_{ij},F_{ij}$ in \eqref{Eq:CTSubsystemDynamics} are not \textbf{all} zero matrices.

On the other hand, we also define a set of ``out-neighbors'' for the subsystem $\Sigma_i$ as $\F_i \triangleq \{j: j\in\N_N, \E_j \ni i \}$ with $\bar{\F}_i \triangleq \F_i \cup \{i\}$. Formally, any subsystem $\Sigma_j$ is an ``out-neighbor'' of subsystem $\Sigma_i$ (i.e., $j\in\F_i$) iff $j \neq i$ and matrices $A_{ji},B_{ji},C_{ji},D_{ji},E_{ji},F_{ji}$ (in \eqref{Eq:CTSubsystemDynamics} written for subsystem $\Sigma_j$) are not \textbf{all} zero matrices. Finally, for notational convenience, let us denote $\C_i \triangleq \E_i \cup \F_i$ with $\bar{\C}_i \triangleq \C \cup \{i\}$.

It is worth noting that in this paper (similar to our preliminary work \cite{WelikalaP32022}) we do not constrain ourselves to cascaded network topologies where $\E_i \equiv \F_i = \{i-1,i+1\}\cap\N_N, \forall i\in \N_N$ as in \cite{Agarwal2019} or to bi-directional network topologies where $\E_i \equiv \F_i, \forall  i\in\N_N$ as in \cite{Agarwal2021}. Moreover, as can be seen in \eqref{Eq:CTSubsystemDynamics}, here we consider a general form of a networked dynamical system as opposed to \cite{Agarwal2019} or \cite{Agarwal2021} where coupling between any two subsystems $\Sigma_i$ and $\Sigma_j$ was possible only through the state variables $x_i, x_j$ (i.e., via matrices $A_{ij}, A_{ji}$ in \eqref{Eq:CTSubsystemDynamics}).

\subsubsection{\textbf{Local Controllers and Observers}}
One of the main objectives of this paper is to design distributed (i.e., local) controllers (e.g., FSF) and observers (e.g., Luenberger) at subsystems of the considered CTNS in a decentralized manner. 

In particular, at a subsystem $\Sigma_i$: 
\begin{enumerate}
    \item a local FSF controller may take the form
        \begin{equation}\label{Eq:CTLocalFSFController}
            u_i(t) = \sum_{j\in \bar{\E}_i} K_{ij} x_j(t),
        \end{equation}
        where $\{K_{ij}:j\in\bar{\E}_i\}$ is the set of local FSF controller parameters that subsystem $\Sigma_i$ has to design;
    \item a local Luenberger observer may take the form 
        \begin{equation}\label{Eq:CTLocalObserver}
            \dot{\hat{x}}_i(t) = \sum_{j\in \bar{\E}_i} \hat{A}_{ij} \hat{x}_j(t) + \sum_{j\in \bar{\E}_i} \hat{B}_{ij} u_j + \sum_{j\in \bar{\E}_i} L_{ij}y_j(t),
        \end{equation}
        where $\{(\hat{A}_{ij},\hat{B}_{ij},L_{ij}):j\in\bar{\E}_i\}$ is the set of local Luenberger observer parameters that subsystem $\Sigma_i$ has to design;
    \item a local DOF controller may take the form
    \begin{equation}\label{Eq:CTLocalDOFController}
        \begin{aligned}
            \dot{\zeta}_i(t) = \sum_{j\in\bar{\E}_i} A_{c,ij} \zeta_j(t) + \sum_{j\in\bar{\E}_i} B_{c,ij} y_j(t),\\
            u_i(t) = \sum_{j\in\bar{\E}_i} C_{c,ij} \zeta_j(t) + \sum_{j\in\bar{\E}_i} D_{c,ij} y_j(t),
        \end{aligned}    
    \end{equation}
    where $\{(A_{c,ij},B_{c,ij},C_{c,ij},D_{c,ij}):j\in\bar{\E}_i\}$ is the set of local DOF controller parameters that subsystem $\Sigma_i$ has to design.
\end{enumerate}

\subsubsection{\textbf{Local Performance Metrics}} 
As we saw in Sec. \ref{Sec:BasicsOfCTLTISystems}, a pre-defined performance metric (often denoted as $z(t)$, e.g., see \eqref{Eq:LuenbergerObserverWithPerformance},\eqref{Eq:NoisyCTLTISystemWithPerf}) is required when designing controllers or observers in a $(Q,S,R)$-dissipative or $\H_2/\H_\infty$-optimal sense. Therefore, a pre-defined local performance metric is required at each subsystem $\Sigma_i$ when designing local controllers and observers in a $(Q,S,R)$-dissipative or $\H_2/\H_\infty$-optimal sense.   

In particular, at a subsystem $\Sigma_i$, we use the following pre-defined local performance metrics:
\begin{enumerate}
    \item For local FSF controller design: 
    \begin{equation}\label{Eq:CTLocalFSFControllerPerf}
        y_i(t) = \sum_{j\in\bar{\E}_i}C_{ij}x_j(t) + \sum_{j\in\bar{\E}_i}D_{ij}u_j(t) + \sum_{j\in\bar{\E}_i}F_{ij}w_j(t);
    \end{equation} 
    \item For local Luenberger observer design: 
    \begin{equation}\label{Eq:CTLocalObserverPerf}
        z_i(t) = \sum_{j\in\bar{\E}_i} G_{ij}(x_j(t)-\hat{x}_j(t)) + \sum_{j\in\bar{\E}_i}J_{ij}w_j(t);
    \end{equation}
    \item For local DOF controller design:
    \begin{equation}\label{Eq:CTLocalDOFControllerPerf}
        z_i(t) = \sum_{j\in\bar{\E}_i} G_{ij}x_j(t) + \sum_{j\in\bar{\E}_i} H_{ij}u_j(t) + \sum_{j\in\bar{\E}_i}J_{ij}w_j(t).
    \end{equation}
\end{enumerate}

\subsection{The CTNS}

\subsubsection{\textbf{Dynamics}}
By writing \eqref{Eq:CTSubsystemDynamics} for all $i\in\N_N$, the dynamics of the networked system $\mathcal{G}_N$ can be obtained as 
\begin{equation}\label{Eq:CTNSDynamics}
\begin{aligned}
\dot{x}(t) =&\ Ax(t) + Bu(t) + Ew(t),\\
y(t) =&\ Cx(t) + Du(t) + Fw(t),
\end{aligned}
\end{equation}
where $A=[A_{ij}]_{i,j\in\N_N}$, $B=[B_{ij}]_{i,j\in\N_N}$, $E=[E_{ij}]_{i,j\in\N_N}$, $C=[C_{ij}]_{i,j\in\N_N}$, $D=[D_{ij}]_{i,j\in\N_N}$ and $F=[F_{ij}]_{i,j\in\N_N}$ are all $N \times N$ block matrices, and $x(t)=[x_i^\T(t)]_{i\in\N_N}^\T \in \R^n$, $u(t)=[u_i^\T(t)]_{i\in\N_N}^\T \in\R^p$, $w(t)=[w_i^\T(t)]_{i\in\N_N}^\T \in\R^q$ and $y(t)=[y_i^\T(t)]_{i\in\N_N}^\T \in\R^m$ (with $n = \sum_{i\in\N_N} n_i$, $p = \sum_{i\in\N_N}p_i$, $q = \sum_{i\in\N_N} q_i$ and $m = \sum_{i\in\N_N} m_i$) are all $N\times 1$ block matrices respectively representing the networked system's state, input, disturbance and output at time $t\in\R_{\geq 0}$.  

\subsubsection{\textbf{Network Topology}} 
Note that the block structure of the matrices $A, B, C, D, F, F$ in \eqref{Eq:CTNSDynamics} determines the topology of the CTNS (i.e., how different subsystems are coupled) and vice versa. In the following Sec.  \ref{Sec:DecentralizedAnalysis}, we will revisit this topic and extensively study the properties of such ``network'' matrices. 

\subsubsection{\textbf{Controllers and Observers}}
By composing each local controller/observer forms in \eqref{Eq:CTLocalFSFController},\eqref{Eq:CTLocalObserver}, \eqref{Eq:CTLocalDOFController} for all $i\in \N_N$, we can respectively obtain the network level (i.e., global): 
\begin{enumerate}
    \item FSF controller 
        \begin{equation}\label{Eq:CTGlobalFSFController}
            u(t) = Kx(t);
        \end{equation}
    \item Luenberger observer 
        \begin{equation}\label{Eq:CTGlobalObserver}
            \dot{\hat{x}}(t) = \hat{A}x(t) + \hat{B}u(t) + Ly(t);  
        \end{equation}
    \item DOF controller 
        \begin{equation}\label{Eq:CTGlobalDOFController}
        \begin{aligned}
            \dot{\zeta}(t) = A_c \zeta(t) + B_c y(y),\\
            u(t) = C_c \zeta(t) + D_c y(y),
        \end{aligned}
        \end{equation}
\end{enumerate}
where matrices $K,\hat{A},\hat{B},L,A_c,B_c,C_c,D_c$ are all $N\times N$ block matrices comprised of the corresponding local design parameters (e.g., $K=[K_{ij}]_{i,j\in\N_N}$).

\subsubsection{\textbf{Performance Metrics}}
Similarly, by composing each pre-defined local controller/observer performance metric forms in \eqref{Eq:CTLocalFSFControllerPerf},\eqref{Eq:CTLocalObserverPerf},\eqref{Eq:CTLocalDOFControllerPerf} for all $i\in\N_N$, we can respectively obtain the global performance metrics considered for:
\begin{enumerate}
    \item FSF controller design as: 
    \begin{equation}\label{Eq:CTGlobalFSFControllerPerf}
        y(t) = Cx(t) + Dy(t) + Fw(t);
    \end{equation}
    \item Luenberger observer design as: 
    \begin{equation}\label{Eq:CTGlobalObserverPerf}
        z(t) = G(x(t)-\hat{x}(t))+Jw(t);
    \end{equation}
    \item DOF controller design as:
    \begin{equation}\label{Eq:CTGlobalDOFControllerPerf}
        z(t) = Gx(t)+Hu(t)+Jw(t).
    \end{equation}
\end{enumerate}
Here also matrices $C,D,F,G,H,J$ are all $N\times N$ block matrices comprised of the corresponding pre-defined local performance metric parameters (e.g., $C=[C_{ij}]_{i,j\in\N_N}$).

\subsection{The Research Problem}
Note that the forms of the CTNS \eqref{Eq:CTNSDynamics}, global controllers/observers \eqref{Eq:CTGlobalFSFController}-\eqref{Eq:CTGlobalDOFController} and global performance metrics \eqref{Eq:CTGlobalFSFControllerPerf}-\eqref{Eq:CTGlobalDOFControllerPerf} are respectively identical to the general CT-LTI system (e.g., \eqref{Eq:NoisyCTLTISystem}), controllers/observers (e.g., \eqref{Eq:LuenbergerObserver},\eqref{Eq:CTDOFController}) and performance metrics (e.g., \eqref{Eq:NoisyCTLTISystemWithPerf},\eqref{Eq:LuenbergerObserverWithPerformance}) considered in Sec. \ref{Sec:BasicsOfCTLTISystems}. 
Therefore, all the LMI-based control solutions (Prop. \ref{Pr:CTLTIStability}-\ref{Pr:HInfControlUnderDOF}) discussed in Sec. \ref{Sec:BasicsOfCTLTISystems} are directly applicable for the CTNS \eqref{Eq:CTNSDynamics}. 

For example, based on Prop. \ref{Pr:CTLTIStability}, the CTNS \eqref{Eq:CTNSDynamics} is globally exponentially stable if there exists a matrix $P=P^\T>0$ such that $-A^\T P - P A > 0$, where now $A = [A_{ij}]_{i,j\in\N_N}$ is a block matrix with a particular structure determined by the network topology (e.g., $A_{ij}=\0$ if $j\not\in \E_i$).   

Intuitively, verifying/enforcing such LMI conditions requires the knowledge of the entire networked system and thus calls for a centralized entity. Moreover, the complete verification/enforcement process may have to be repeated whenever new subsystems are introduced into the networked system. To address these challenges, we make the objective of this paper to design a systematic decentralized and compositional approach to verify/enforce different LMI conditions of interest (corresponding to different properties of interest, e.g., see Prop. \ref{Pr:CTLTIStability}-\ref{Pr:HInfControlUnderDOF})  regarding the networked system.

Towards this goal, a critical feature that we will exploit is that LMI conditions of interest now involve matrices of a particular structure determined by the network topology (as also pointed out earlier). The next section of the paper focuses on such network-related matrices (which we will define as ``network matrices'') and derives a decentralized and compositional test criterion to evaluate their positive definiteness. As we will see in the subsequent section (Sec. \ref{Sec:DecentralizedAnalysis}), such a test criterion can effortlessly be adopted to verify/enforce LMI conditions for networked systems in a decentralized and compositional manner.

\section{Decentralized Analysis of Networked Systems}
\label{Sec:DecentralizedAnalysis}

As mentioned above, this section is dedicated to establishing several theoretical and algorithmic results regarding evaluating the positive definiteness of a particular class of matrices.

\subsection{Preliminary Concepts}

\subsubsection{\textbf{Network Matrices}} 
We start by defining a class of matrices we named ``network matrices'' \cite{WelikalaP32022}, that corresponds to a given networked system topology (e.g., $\mathcal{G}_N$ in Fig. \ref{Fig:ExampleNetworkedSystem}).

\begin{definition}\label{Def:NetworkMatrices}
Given a networked dynamical system $\mathcal{G}_n, n\in\N$, any $n\times n$ block matrix $\Theta = [\Theta_{ij}]_{i,j\in\N_n}$ is a \emph{network matrix} if: 
(1) any information specific to the subsystem $i$ is embedded only in its $i$\tsup{th} block row or block column, and 
(2) $j\not\in \C_i (\triangleq \E_i \cup \F_i) \implies \Theta_{ij}=\Theta_{ji}=\0$ for all $i,j\in\N_n$.
\end{definition}

Based on this definition, note that all the $N \times N$ block matrices: (1) $A,B,C,D,E,F$ in \eqref{Eq:CTNSDynamics}, $K,\hat{A},\hat{B},L,A_c,B_c,C_c,D_c$ in \eqref{Eq:CTGlobalFSFController}-\eqref{Eq:CTGlobalDOFController} and (3) $C,D,F,G,H,J$ in \eqref{Eq:CTGlobalFSFControllerPerf}-\eqref{Eq:CTLocalDOFControllerPerf} are network matrices of the considered networked dynamical system $\mathcal{G}_N$. Note also that any $n \times n$ block diagonal matrix $\Theta=\diag(\Theta_{ii}:i\in \N_n)$ will be a network matrix of any arbitrary network with $n\in\N$ subsystems if $\Theta_{ii}$ is specific only to the subsystem $i$. The following lemma summarizes several interesting and useful properties of such network matrices.

\begin{lemma}\label{Lm:NetworkMatrixProperties}
Given a networked dynamical system $\mathcal{G}_n, n\in\N$, a few corresponding block network matrices $\Theta,\Phi,\{\Psi^{kl}:k,l\in\N_m\}$ and some arbitrary block matrix 
\footnote{Note that $\Psi$ is a block matrix of block network matrices.} $\Psi=[\Psi^{kl}]_{k,l \in \N_m}$ (with appropriate block structures):
\begin{enumerate}
    \item $\Theta^\T$, \ $\alpha \Theta + \beta \Phi$ are network matrices for any $\alpha,\beta \in \R$.
    \item $\Phi \Theta$, $\Theta\Phi$ are network matrices whenever $\Phi$ is a block diagonal network matrix.
    \item $[[\Psi^{kl}_{ij}]_{k,l\in\N_m}]_{i,j\in\N_n}$ is a network matrix.
    \item $P \Theta P^\T$ is a network matrix that corresponds to a re-indexed version of the original networked system $\mathcal{G}_n$ whenever $P$ is a block symmetric permutation matrix.
\end{enumerate}
\end{lemma}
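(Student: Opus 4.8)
The plan is to first extract from Definition \ref{Def:NetworkMatrices} the single structural fact that drives all four claims, and then verify each claim by a direct block-entry computation. The key preliminary observation is that the neighbor relation is \emph{symmetric}: since $j\in\C_i=\E_i\cup\F_i$ holds iff ($j\in\E_i$ or $i\in\E_j$), and this disjunction is invariant under swapping $i$ and $j$, we get $j\in\C_i \iff i\in\C_j$ for all $i,j\in\N_n$. Consequently, writing the defining sparsity pattern of a network matrix as ``$\Theta_{ij}=\0$ whenever $j\notin\bar{\C}_i$'', this pattern is itself symmetric, and in each case it suffices to check that (i) the block produced at position $(i,j)$ vanishes whenever $j\notin\bar{\C}_i$, and (ii) every block of the result stays confined to the block row/column of the subsystem whose data it carries (condition (1) of Definition \ref{Def:NetworkMatrices}).

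For the first claim, the $(i,j)$ block of $\Theta^\T$ is $\Theta_{ji}^\T$; if $j\notin\bar{\C}_i$ then $i\notin\bar{\C}_j$ by symmetry of $\C$, so $\Theta_{ji}=\0$ and hence $(\Theta^\T)_{ij}=\0$, while transposition merely interchanges the $i$th block row and column and so preserves condition (1). For $\alpha\Theta+\beta\Phi$, the $(i,j)$ block is $\alpha\Theta_{ij}+\beta\Phi_{ij}$, which vanishes whenever $j\notin\bar{\C}_i$ because both summands do (both $\Theta$ and $\Phi$ carry the topology of $\mathcal{G}_n$), and condition (1) is inherited blockwise. For the second claim I would exploit that a block-diagonal $\Phi=\diag(\Phi_{ii}:i\in\N_n)$ collapses the matrix product: $(\Phi\Theta)_{ij}=\Phi_{ii}\Theta_{ij}$ and $(\Theta\Phi)_{ij}=\Theta_{ij}\Phi_{jj}$, so each product inherits exactly the sparsity pattern of $\Theta$ and keeps subsystem-$i$ data (entering through $\Phi_{ii}$) within block row $i$ or column $j$ as appropriate.

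For the third claim, the reshuffling $[[\Psi^{kl}_{ij}]_{k,l\in\N_m}]_{i,j\in\N_n}$ places at outer position $(i,j)$ the array $[\Psi^{kl}_{ij}]_{k,l\in\N_m}$; since every $\Psi^{kl}$ is a network matrix of $\mathcal{G}_n$, each entry $\Psi^{kl}_{ij}$, and hence the whole outer block, vanishes when $j\notin\bar{\C}_i$, and all entries assembled into this block originate from the $i$th block row/column of the respective $\Psi^{kl}$, so condition (1) holds. For the fourth claim, a block symmetric permutation matrix $P$ associated with a permutation $\pi$ of $\N_n$ acts by $(P\Theta P^\T)_{ij}=\Theta_{\pi(i)\pi(j)}$; interpreting $\pi$ as a relabeling of subsystems defines the re-indexed system $\mathcal{G}_n'$ whose neighbor sets satisfy $j\in\C_i' \iff \pi(j)\in\C_{\pi(i)}$, and under this correspondence $\Theta_{\pi(i)\pi(j)}$ vanishes exactly when $j\notin\bar{\C}_i'$, so $P\Theta P^\T$ is a network matrix of $\mathcal{G}_n'$.

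I expect the main obstacle to lie in the fourth claim: unlike the others it is not pure sparsity bookkeeping but requires formalizing what ``a re-indexed version of $\mathcal{G}_n$'' means, establishing that the induced neighbor sets $\C_i'$ are the correct push-forward of $\C_{\pi(i)}$ under $\pi$, and confirming that the permuted data still satisfies condition (1) relative to the \emph{new} indexing. The remaining subtlety throughout is that condition (1) of Definition \ref{Def:NetworkMatrices} is stated informally, so in each case I would discharge it by the uniform remark that ``the block produced at position $(i,j)$ is assembled only from data already attached to subsystems $i$ and $j$,'' which is immediate once the block-entry formula is written down.
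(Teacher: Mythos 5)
Your proposal is correct and follows essentially the same route as the paper's own proof: a direct block-entry computation showing $(\Theta^\T)_{ij}=\Theta_{ji}^\T$, $(\alpha\Theta+\beta\Phi)_{ij}=\alpha\Theta_{ij}+\beta\Phi_{ij}$, $(\Phi\Theta)_{ij}=\Phi_{ii}\Theta_{ij}$, $(\Theta\Phi)_{ij}=\Theta_{ij}\Phi_{jj}$, the reshuffled block $[\Psi^{kl}_{ij}]_{k,l\in\N_m}$ for Case 3, and the permutation action for Case 4. You are in fact somewhat more careful than the paper, which argues informally that each resulting block ``is specific only to subsystems $i$ and $j$'' and dispatches Case 4 by noting $P\Theta P^\T=P\Theta P$ performs matched row/column swaps, whereas you explicitly verify the sparsity condition via the symmetry $j\in\C_i\iff i\in\C_j$ and spell out the push-forward of the neighbor sets under $\pi$.
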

\begin{proof}
According to Def. \ref{Def:NetworkMatrices}, any block matrix $W=[W_{ij}]_{i,j\in\N_N}$ is a network matrix if $W_{ij}$ is specific only to the subsystems $i$ and $j$. This view of Def. \ref{Def:NetworkMatrices} can be used to prove the Cases 1-3 respectively as: 
\begin{enumerate}
    \item $W=\Theta^\T \implies W_{ij}=\Theta_{ji}^\T$ and \\ 
    $W=\alpha \Theta + \beta \Phi \implies W_{ij}=\alpha \Theta_{ij} + \beta \Phi_{ij}$;
    \item $W=\Phi\Theta \implies W_{ij}=\Phi_{ii}\Theta_{ij}$ and \\ 
    $W=\Theta \Phi \implies W_{ij}=\Theta_{ij}\Phi_{jj}$; 
    \item $W=[[\Psi^{kl}_{ij}]_{k,l\in\N_m}]_{i,j\in\N_n} \implies W_{ij}=[\Psi^{kl}_{ij}]_{k,l\in\N_m}$;
\end{enumerate}
(where in each case, $W_{ij}$ is specific only to the subsystems $i$ and $j$). The proof of Case 4 is complete by noting that $P \Theta P$ executes identical block row and block column operations on $\Theta$ and $P \Theta P^\T = P \Theta P$. 
\end{proof}

The above lemma allows us to analyze custom block matrices by claiming them to be ``network matrices'' under some additional conditions. For example, if $A,P$ are block network matrices and $P$ is block diagonal, then: (1) $W = -A^\T P - P A$ (see Prop. \ref{Pr:CTLTIStability}) is a network matrix, and (2) if $\scriptsize \Psi=\bm{P & A^\T P\\PA & P}$ is some block matrix (see Prop. \ref{Pr:DTLTIStability}), its \emph{block element-wise} (BEW) form 
$$ W \triangleq \text{BEW}(\Psi) \triangleq \bm{\bm{P_{ii}e_{ij} & A_{ji}^\T P_{jj} \\ P_{ii}A_{ij} & P_{ii}e_{ij}}}_{i,j\in\N_N}$$ is a network matrix\footnote{Recall the notation: $e_{ij}\triangleq \I \cdot \mb{1}_{\{i=j\}}$.}.

\subsubsection{\textbf{Positive Definiteness}}
We next provide several useful lemmas on the positive definiteness property of matrices (in addition to Lm. \ref{Lm:TwoByTwoBlockMatrixPDF} and Lm. \ref{Lm:PreAndPostMultiplication}).

\begin{lemma}(Sylvester's criterion \cite{Antsaklis2006}) \label{Lm:SylvestersCriterion}
A symmetric matrix $W$ is positive definite if and only if its determinants of the leading principal minors are positive. 
\end{lemma}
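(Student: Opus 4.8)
The plan is to prove the two implications separately, treating the forward (necessity) direction as routine and reserving the real work for the reverse (sufficiency) direction, which I would establish by induction on the dimension $n$ of the symmetric matrix $W\in\R^{n\times n}$. For necessity, I would first note that if $W>0$ then each leading principal submatrix $W_k$ (the top-left $k\times k$ block) is itself positive definite: for any nonzero $v\in\R^k$, padding it with zeros to form $x=\bm{v^\T & \0}^\T\in\R^n$ gives $v^\T W_k v = x^\T W x > 0$. Since a symmetric positive definite matrix has strictly positive eigenvalues, $\det(W_k)$ is their product and hence strictly positive for every $k\in\N_n$. This settles the forward direction.

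For sufficiency I would induct on $n$. The base case $n=1$ is immediate, since then $W=[w_{11}]$ with $\det W = w_{11}>0$ forces $W>0$. For the inductive step, partition $W = \bm{W_{n-1} & b \\ b^\T & c}$, where $W_{n-1}$ is the $(n-1)\times(n-1)$ leading principal submatrix. The leading principal minors of $W_{n-1}$ coincide with the first $n-1$ leading principal minors of $W$, all of which are positive by hypothesis, so the induction hypothesis yields $W_{n-1}>0$; in particular $W_{n-1}$ is invertible. I would then invoke the Schur-complement characterization (Lm. \ref{Lm:TwoByTwoBlockMatrixPDF}(i)) with $\Theta=W_{n-1}$, $\Phi=b$, $\Gamma=c$, which reduces the claim $W>0$ to verifying the single scalar inequality $c-b^\T W_{n-1}^{-1}b>0$.

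The crux of the argument, and the step I expect to be the main obstacle, is establishing this last inequality. The plan is to use the block factorization
\begin{equation*}
W = \bm{\I & \0 \\ b^\T W_{n-1}^{-1} & 1}\bm{W_{n-1} & \0 \\ \0 & c - b^\T W_{n-1}^{-1} b}\bm{\I & W_{n-1}^{-1}b \\ \0 & 1},
\end{equation*}
whose two triangular factors are unit-triangular and therefore have determinant $1$. Taking determinants gives the identity $\det W = \det(W_{n-1})\cdot(c - b^\T W_{n-1}^{-1} b)$. Since $\det W>0$ by hypothesis and $\det(W_{n-1})>0$ (as a leading principal minor of $W$, and because $W_{n-1}>0$), the scalar $c-b^\T W_{n-1}^{-1}b$ must be strictly positive. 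Together with the Schur-complement characterization above, this yields $W>0$ and closes the induction.

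Finally, I would observe that the same factorization offers an alternative route that avoids citing Lm. \ref{Lm:TwoByTwoBlockMatrixPDF} at the end: the two outer triangular factors are full-rank and mutually transpose, so writing $W = L D L^\T$ with $D=\diag(W_{n-1},\, c - b^\T W_{n-1}^{-1}b)$ and applying the congruence principle (Lm. \ref{Lm:PreAndPostMultiplication}) gives directly $W>0 \iff D>0$, i.e.\ $W_{n-1}>0$ and $c-b^\T W_{n-1}^{-1}b>0$. Thus either lemma already established in the excerpt suffices to organize the reduction, and the determinant identity above is the only genuinely new ingredient needed.
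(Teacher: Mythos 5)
Your proposal is correct, and there is nothing in the paper to compare it against: the paper states Sylvester's criterion as a known result cited from the literature and provides no proof of its own, so your argument is genuinely new content relative to the excerpt. The structure is sound on both sides. Necessity via zero-padding ($v^\T W_k v = x^\T W x > 0$) plus the eigenvalue-product argument for $\det W_k > 0$ is complete. For sufficiency, your induction is airtight: the first $n-1$ leading minors of $W$ are exactly those of $W_{n-1}$, the block factorization you display is easily verified (and its outer factors are indeed mutual transposes, using the symmetry of $W_{n-1}$), and taking determinants of unit-triangular factors gives $\det W = \det(W_{n-1})\,\bigl(c - b^\T W_{n-1}^{-1} b\bigr)$, forcing the Schur complement to be positive; either Lm. \ref{Lm:TwoByTwoBlockMatrixPDF}(i) or, as you note, the congruence route via Lm. \ref{Lm:PreAndPostMultiplication} applied to $W = LDL^\T$ then closes the step. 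It is worth observing that your $LDL^\T$ route is very much in the spirit of how the paper itself operates elsewhere: its proof of Lm. \ref{Lm:MainLemmaShort} proceeds by an explicit block Cholesky factorization $W = LL^\T$ and reduces positive definiteness to positivity of the diagonal blocks $L_{ii}L_{ii}^\T = \tilde{W}_{ii}$, which is essentially the recursive Schur-complement mechanism your induction makes explicit one step at a time. So your proof not only fills the citation gap but does so with machinery consistent with the paper's main technical development.
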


\begin{lemma}(Cholesky decomposition \cite{Bernstein2009}) \label{Lm:CholeskyDecomposition}
A symmetric matrix $W$ is positive definite (or positive semi-definite) if and only if there exists a lower-triangular matrix $L$ with positive (or non-negative) diagonal entries such that $W=LL^\T$. 
\end{lemma}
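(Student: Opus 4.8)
The plan is to prove both directions of the equivalence, treating the positive-definite (PD) case as the main argument and then adapting it to the positive-semidefinite (PSD) case. The sufficiency (``if'') direction is immediate. Suppose $W = LL^\T$ with $L$ lower-triangular and positive diagonal entries. Then $\det L$ equals the product of its diagonal entries, which is strictly positive, so $L$ is invertible. For any $x \neq \0$ we have $x^\T W x = x^\T L L^\T x = \Vert L^\T x \Vert^2 > 0$ since $L^\T x \neq \0$; hence $W > 0$. For the PSD variant, $L$ with non-negative diagonals gives $x^\T W x = \Vert L^\T x \Vert^2 \geq 0$.

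For the necessity (``only if'') direction in the PD case, I would proceed by induction on the size $n$ of $W$. The base case $n = 1$ is trivial: $W = [w_{11}]$ with $w_{11} > 0$, so take $L = [\sqrt{w_{11}}]$. For the inductive step, partition
\[
W = \bm{w_{11} & b^\T \\ b & W_{22}},
\]
where $w_{11}$ is the scalar pivot, $b$ a column vector, and $W_{22}$ the $(n-1)\times(n-1)$ trailing block. Since $W > 0$, testing against the first standard basis vector gives $w_{11} > 0$, so I can set the leading diagonal entry of $L$ to $\sqrt{w_{11}}$ and the remainder of its first column to $b/\sqrt{w_{11}}$. By Schur's complement (Lm. \ref{Lm:TwoByTwoBlockMatrixPDF}), the trailing Schur complement $S \triangleq W_{22} - b b^\T / w_{11}$ satisfies $S > 0$. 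The induction hypothesis then yields a lower-triangular $L_{22}$ with positive diagonals such that $S = L_{22}L_{22}^\T$, and a direct block multiplication verifies that
\[
L \triangleq \bm{\sqrt{w_{11}} & \0 \\ b/\sqrt{w_{11}} & L_{22}}
\]
is lower-triangular with positive diagonals and satisfies $LL^\T = W$.

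Finally, for the PSD case the induction requires a case split on the pivot. If $w_{11} > 0$, the same Schur-complement step applies, using the semidefinite version of the complement argument. If $w_{11} = 0$, then $W \geq 0$ forces $b = \0$: each $2\times 2$ principal submatrix $\scriptsize \bm{0 & b_k \\ b_k & (W_{22})_{kk}}$ must be PSD, which requires $b_k = 0$; hence I take a zero first column of $L$ and recurse on $W_{22} \geq 0$. I expect this pivot-vanishing case to be the main obstacle, both because the available Schur-complement lemma (Lm. \ref{Lm:TwoByTwoBlockMatrixPDF}) is stated only for invertible pivots and because the Cholesky factor is no longer unique when the diagonal can vanish. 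A clean alternative that sidesteps this is a limiting argument: apply the PD result to $W + \epsilon \I > 0$ to obtain factors $L_\epsilon$, and extract a convergent subsequence as $\epsilon \to 0^+$ whose limit is lower-triangular with non-negative diagonals and factors $W$. Since the result is classical and attributed to \cite{Bernstein2009}, either route completes the proof.
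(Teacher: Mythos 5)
The paper does not prove this lemma at all: it is quoted verbatim from \cite{Bernstein2009} as a known prerequisite, so there is no in-paper argument to compare yours against. Your blind proof is correct and self-contained, and it is worth noting that your inductive construction is, in scalar form, exactly the recursion the paper later unrolls at block level in the proof of Lm.~\ref{Lm:MainLemmaShort} (where $L_{ij}$ is built column by column and the pivots $L_{ii}L_{ii}^\T = \tilde{W}_{ii}$ play the role of your Schur complements), so your route also illuminates why the paper's main decentralized test works. Two small points deserve tightening, both of which you already flagged. First, in the PSD inductive step with $w_{11}>0$ you invoke ``the semidefinite version'' of the Schur-complement argument, but Lm.~\ref{Lm:TwoByTwoBlockMatrixPDF} as stated in the paper covers only the strict case; the semidefinite variant needs its own one-line proof, e.g.\ for any $y$ take $x = \bigl(-b^\T y/w_{11},\, y^\T\bigr)^\T$ and observe $0 \leq x^\T W x = y^\T\bigl(W_{22}-bb^\T/w_{11}\bigr)y$. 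Second, your limiting argument needs a justification that $\{L_\epsilon\}$ is bounded before extracting a convergent subsequence; this is immediate from $\Vert L_\epsilon \Vert_F^2 = \tr{L_\epsilon L_\epsilon^\T} = \tr{W} + n\epsilon$, but it should be said. With either of those one-liners added, the proof is complete; the limiting route is the cleaner of the two for the PSD case since it avoids the vanishing-pivot case analysis entirely, at the cost of a compactness argument that is less constructive than the recursion the rest of the paper relies on.
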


The following lemma is parallel to Lm. \ref{Lm:NetworkMatrixProperties}-Case 3.

\begin{lemma}\label{Lm:ColumnandRowPermutations}
Let $\Psi = [\Psi^{kl}]_{k,l\in\N_m}$ be an $m\times m$ block matrix where each of the constituent matrices $\{\Psi^{kl}:k,l\in\N_m\}$ is also an $n \times n$ block matrix (with appropriate dimensions). Then, $\Psi > 0$ iff $W=\text{BEW}(\Psi) \triangleq [[\Psi^{kl}_{ij}]_{k,l\in\N_m}]_{i,j\in\N_n}>0$.
\end{lemma}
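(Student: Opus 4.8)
The plan is to recognize the block-element-wise (BEW) operation as nothing more than a symmetric reindexing of the rows and columns of $\Psi$, and then invoke the congruence principle. Concretely, a scalar row of $\Psi$ is identified by a pair $(k,i)$, where $k\in\N_m$ selects the outer block row and $i\in\N_n$ selects the inner block row within $\Psi^{k\cdot}$; likewise a column is identified by $(l,j)$. In $W=\text{BEW}(\Psi)$ the very same scalar rows and columns are merely relabelled: the row $(k,i)$ of $\Psi$ is relabelled $(i,k)$ in $W$ (outer block index $i\in\N_n$, inner block index $k\in\N_m$), and, by definition of BEW, the $((i,k),(j,l))$ block of $W$ equals $\Psi^{kl}_{ij}$, i.e. the $((k,i),(l,j))$ block of $\Psi$. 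Thus the involution $(k,i)\mapsto(i,k)$ on the combined index set governs the whole transformation. This is precisely the reindexing used to establish Lemma \ref{Lm:NetworkMatrixProperties}-Case 3, now applied to a single matrix rather than to the network-matrix pattern.

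First I would make this precise by constructing the permutation matrix $P$ implementing the reordering. For both $\Psi$ and $W$ to be well defined, square and symmetric, the block $\Psi^{kl}_{ij}$ must have row dimension depending only on $(k,i)$ and column dimension depending only on $(l,j)$; under this (structurally automatic) consistency the index swap $(k,i)\mapsto(i,k)$ lifts to a genuine permutation of the underlying scalar coordinates. Applying the same permutation identically to rows and to columns then yields $W = P\Psi P^\T$.

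The conclusion follows at once: a permutation matrix is orthogonal, hence full-rank with $P^\T=P^{-1}$, so the congruence principle (Lemma \ref{Lm:PreAndPostMultiplication}) gives $\Psi>0 \iff P\Psi P^\T>0 \iff W>0$, which is both directions of the claimed equivalence simultaneously.

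The main obstacle is bookkeeping rather than mathematical depth: one must verify that the map $(k,i)\mapsto(i,k)$ is applied identically to the row and column labels (so that $P$ acts on the left and $P^\T$ on the right, preserving symmetry and making the transformation a congruence), and that it respects the block dimensions so that $P$ is a bona fide permutation of scalar coordinates. Once that dimensional consistency is noted, no further computation is required beyond citing Lemma \ref{Lm:PreAndPostMultiplication}.
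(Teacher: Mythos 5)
Your proposal is correct and follows essentially the same route as the paper's own proof: both identify $W=\text{BEW}(\Psi)$ as a simultaneous, identical permutation of rows and columns of $\Psi$, write $W=P\Psi P^\T$ for a permutation matrix $P$, and conclude via the congruence principle (Lemma \ref{Lm:PreAndPostMultiplication}). Your version is in fact slightly tidier in that you make the index map $(k,i)\mapsto(i,k)$ and the dimension bookkeeping explicit, and you do not need the paper's side remark that $P$ is symmetric, since applying the same permutation to rows and columns already yields the congruence form directly.
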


\begin{proof}
Notice that $W$ is the block element-wise form of $\Psi$ ($W=\text{BEW}(\Psi)$). Therefore, $W$ can be constructed from $\Psi$ by executing a series of simultaneous row swap and column swap operations on $\Psi$. In other words, we can find a permutation matrix $P$ so that $W=P\Psi P$. Note that this permutation matrix $P$ will be symmetric as all the required row/column operations are simple independent row/column swap operations. Therefore, $W=P \Psi P^\T$. Since all permutation matrices are full-rank, we can apply Lm. \ref{Lm:PreAndPostMultiplication} to arrive at result: $\Psi>0 \iff W = P \Psi P^\T > 0$.
\end{proof}

\subsection{The Main Theoretical Result}

We are now ready to establish our main theoretical result as a lemma, which will be exploited throughout the remainder of this paper. We also acknowledge that different versions of this lemma have already appeared in \cite{Agarwal2021,Agarwal2022} and \cite{Welikala2019P3}, but without rigorous proofs. Therefore, here we provide a concise version of it along with a complete proof. 

\begin{lemma}\label{Lm:MainLemmaShort}
A symmetric $N \times N$ block matrix $W = [W_{ij}]_{i,j\in\N_N} > 0$ iff  $\tilde{W}_{ii} >0, \forall i\in\N_N$ where 
\begin{equation}\label{Eq:Lm:MainLemmaShort}
    \tilde{W}_{ij} = W_{ij} - \sum_{k\in\N_{j-1}} \tilde{W}_{ik}\tilde{W}_{kk}^{-1}\tilde{W}_{jk}^\T, \ \ \ \ \forall j\in\N_{i}.
\end{equation}
\end{lemma}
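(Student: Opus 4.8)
The plan is to read the recursion \eqref{Eq:Lm:MainLemmaShort} as the block $LDL^\top$ (Cholesky-type, cf. Lm. \ref{Lm:CholeskyDecomposition}) factorization of $W$: the pivots $\tilde W_{ii}$ play the role of the block-diagonal factors, the ratios $\tilde W_{ij}\tilde W_{jj}^{-1}$ ($j<i$) are the strictly-lower factors, and the claim reduces to the statement that $W$ is congruent to $\diag(\tilde W_{ii}:i\in\N_N)$ through a unit block-lower-triangular (hence full-rank) matrix. By the congruence principle (Lm. \ref{Lm:PreAndPostMultiplication}), $W>0$ is then equivalent to $\tilde W_{ii}>0$ for all $i$. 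Rather than constructing this factorization explicitly, I would carry out the argument by induction on $N$ using Schur's complement (Lm. \ref{Lm:TwoByTwoBlockMatrixPDF}), which amounts to performing the elimination one pivot at a time.

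The base case $N=1$ is immediate since $\tilde W_{11}=W_{11}$. For the inductive step, partition $W=\left[\begin{smallmatrix} W_{11} & r^\top\\ r & W'\end{smallmatrix}\right]$ with $r=[W_{i1}]_{i=2}^{N}$ and $W'=[W_{ij}]_{i,j=2}^{N}$. Since $\tilde W_{11}=W_{11}$, Lm. \ref{Lm:TwoByTwoBlockMatrixPDF}(i) gives $W>0 \iff \tilde W_{11}>0 \text{ and } W''>0$, where the Schur complement $W''\triangleq W'-rW_{11}^{-1}r^\top$ is an $(N-1)\times(N-1)$ block matrix with blocks $W''_{ij}=W_{ij}-\tilde W_{i1}\tilde W_{11}^{-1}\tilde W_{j1}^\top$ (using $W_{i1}=\tilde W_{i1}$, $\tilde W_{11}=W_{11}$ and $W_{1j}=W_{j1}^\top=\tilde W_{j1}^\top$ from the symmetry of $W$).

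The crux is then to show that applying the recursion \eqref{Eq:Lm:MainLemmaShort} to $W''$ reproduces the original quantities, i.e. $\tilde W''_{ij}=\tilde W_{ij}$ for all $i,j\in\{2,\dots,N\}$ with $j\le i$; in particular the pivots of $W''$ are exactly $\tilde W_{ii}$, $i\ge 2$. I would prove this by an inner induction on $j$: substituting the block formula for $W''_{ij}$ together with the inductive identities $\tilde W''_{ik}=\tilde W_{ik}$ for $k<j$ merges the single extra correction term (the $k=1$ term produced by the Schur step) into the summation, converting $\sum_{k=2}^{j-1}$ into $\sum_{k=1}^{j-1}$ and recovering exactly the defining formula for $\tilde W_{ij}$. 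Applying the outer inductive hypothesis to $W''$ then yields $W''>0\iff \tilde W_{ii}>0$ for all $i\in\{2,\dots,N\}$, and combining this with the Schur equivalence above closes the induction.

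Two bookkeeping facts should be folded into the same induction. First, each pivot $\tilde W_{ii}$ is symmetric --- this holds because $W_{ii}$ is symmetric and every correction term $\tilde W_{ik}\tilde W_{kk}^{-1}\tilde W_{ik}^\top$ is symmetric once $\tilde W_{kk}^{-1}$ is --- so that the inequalities ``$\tilde W_{ii}>0$'' are meaningful and Lm. \ref{Lm:TwoByTwoBlockMatrixPDF} applies. Second, the recursion divides by $\tilde W_{kk}$, so its well-definedness must be tracked: this is the main obstacle. In the direction $\tilde W_{ii}>0 \Rightarrow W>0$ there is nothing to check, since positive-definite pivots are invertible by assumption. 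In the converse direction the invertibility is produced on the fly: the Schur step first certifies $\tilde W_{11}=W_{11}>0$ (hence invertible) before $W''$ is even formed, and the reduced problem inherits positive definiteness, so each successive pivot is shown invertible exactly when it is needed. Accordingly I would phrase the equivalence as: $W>0$ if and only if the recursion \eqref{Eq:Lm:MainLemmaShort} is well-defined and $\tilde W_{ii}>0$ for every $i\in\N_N$.
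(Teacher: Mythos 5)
Your proof is correct, but it takes a genuinely different route from the paper's. The paper argues via an explicit block Cholesky factorization: it solves $W=LL^\T$ for a block lower-triangular $L$, obtains $L_{ij}=\tilde{W}_{ij}L_{jj}^{-\T}$ and $L_{ii}L_{ii}^\T=\tilde{W}_{ii}$, verifies $L_{ik}L_{jk}^\T=\tilde{W}_{ik}\tilde{W}_{kk}^{-1}\tilde{W}_{jk}^\T$, and concludes through Lm.~\ref{Lm:CholeskyDecomposition} that $W>0$ iff every pivot $\tilde{W}_{ii}$ is positive definite. You never construct the factorization: you eliminate one pivot at a time with Schur's complement (Lm.~\ref{Lm:TwoByTwoBlockMatrixPDF}) and induct on $N$, the crux being your inner induction showing that the recursion applied to the Schur complement $W''$ reproduces the original quantities ($\tilde{W}''_{ij}=\tilde{W}_{ij}$ for $i,j\geq 2$, the single $k=1$ correction term from the Schur step merging into the sum to turn $\sum_{k=2}^{j-1}$ into $\sum_{k=1}^{j-1}$); that computation checks out, and the two arguments are of course two faces of the same block elimination ($LL^\T$ versus $LDL^\T$/Schur pivoting, related by congruence via Lm.~\ref{Lm:PreAndPostMultiplication}). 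As for what each buys: the paper's route delivers the Cholesky factor $L$ as a byproduct, giving a closed-form interpretation of the $\tilde{W}_{ij}$ that it reuses downstream, while yours is more elementary---essentially only Lm.~\ref{Lm:TwoByTwoBlockMatrixPDF} is needed---and is more careful on two points the paper leaves implicit: the symmetry of each pivot $\tilde{W}_{ii}$ (needed for ``$\tilde{W}_{ii}>0$'' to be meaningful and for the Schur lemma to apply), and the well-definedness of the recursion. Indeed, for a symmetric $W$ that is \emph{not} positive definite the pivots $\tilde{W}_{kk}$ can be singular and the recursion \eqref{Eq:Lm:MainLemmaShort} undefined (and a real factorization $W=LL^\T$ need not exist at all), so your restatement ``$W>0$ iff the recursion is well-defined and $\tilde{W}_{ii}>0$ for all $i\in\N_N$'' is the precise form of the claim; in your argument the invertibility of each pivot is certified by the preceding Schur step exactly when it is needed, whereas the paper tacitly assumes a factorization with invertible $L_{jj}$ exists.
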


\begin{proof}
To apply Lm. \ref{Lm:CholeskyDecomposition}, we first determine a lower-triangular block matrix $L=[L_{ij}]_{i,j\in\N_N}$ such that $W=LL^\T$. Note that 
\begin{equation}\label{Eq:Lm:MainLemmaShortStep0}
    W = LL^\T \iff W_{ij} = \sum_{k\in \N_N} L_{ik}L_{jk}^\T = \sum_{k\in \N_{\min\{i,j\}}} L_{ik}L_{jk}^\T,
\end{equation}
for any $i,j\in\N_N$ (the last step is due to $L_{ik}=\0, \forall k>i$ and $L_{jk}=\0, \forall k>j$). In \eqref{Eq:Lm:MainLemmaShortStep0}, to make the term $L_{ij}$ the subject, we consider the case $i \geq j$:  
\begin{equation}
    W_{ij} = \sum_{k\in \N_j} L_{ik}L_{jk}^\T = L_{ij}L_{jj}^\T + \sum_{k\in \N_{j-1}} L_{ik}L_{jk}^\T,
\end{equation}
which gives (also using the fact that $L_{ij} =\0$, for $i<j$)
\begin{equation}\label{Eq:Lm:MainLemmaShortStep1}
    L_{ij} = (W_{ij}-\sum_{k\in\N_{j-1}}L_{ik}L_{jk}^\T)L_{jj}^{-\T}\mb{1}_{\{i \geq j\}} \ \ \ \ \forall i,j\in\N_N.
\end{equation}
With $L=[L_{ij}]_{i,j\in\N_N}$ derived in \eqref{Eq:Lm:MainLemmaShortStep1} we get $W=LL^\T$. Therefore, according to Lm. \ref{Lm:CholeskyDecomposition}, $W>0$ if and only if the diagonal elements of $L$ are positive, i.e., if and only if the diagonal elements of lower-triangular matrices  $\{L_{ii}:i\in\N_N\}$ are positive. Re-using Lm. \ref{Lm:CholeskyDecomposition}, it is easy to see that the latter will occur if and only if $L_{ii}L_{ii}^\T>0, \forall i \in \N_N$. In all, 
\begin{equation}\label{Eq:Lm:MainLemmaShortStep2}
W>0 \iff L_{ii}L_{ii}^\T>0, \ \ \ \ \forall i \in \N_N.
\end{equation}

Now, for the case $j \leq i$ (i.e., $j\in\N_i$), we can state \eqref{Eq:Lm:MainLemmaShortStep1} as 
\begin{equation}\label{Eq:Lm:MainLemmaShortStep3}
    L_{ij} = \tilde{W}_{ij}L_{jj}^{-\T}, 
\end{equation}
where we define $\tilde{W}_{ij}$ as 
\begin{equation}\label{Eq:Lm:MainLemmaShortStep4}
    \tilde{W}_{ij} \triangleq W_{ij}-\sum_{k\in\N_{j-1}}L_{ik}L_{jk}^\T,  \ \ \ \ j \in \N_i.
\end{equation}
From \eqref{Eq:Lm:MainLemmaShortStep3} (or \eqref{Eq:Lm:MainLemmaShortStep1}), when $j=i$, we get the relationship
\begin{equation} \label{Eq:Lm:MainLemmaShortStep5}
    L_{ii}L_{ii}^\T = \tilde{W}_{ii}.
\end{equation}
From \eqref{Eq:Lm:MainLemmaShortStep5} and \eqref{Eq:Lm:MainLemmaShortStep2}, it is clear that $W>0 \iff \tilde{W}_{ii} >0, \forall i\in\N_N$. Therefore, we now only need to prove that \eqref{Eq:Lm:MainLemmaShortStep4} $\iff$ \eqref{Eq:Lm:MainLemmaShort}. For this purpose, we first simplify the $L_{ik}L_{jk}^\T$ term in \eqref{Eq:Lm:MainLemmaShortStep4} using \eqref{Eq:Lm:MainLemmaShortStep3} as 
\begin{align}
    L_{ik}L_{jk}^\T =&\ \tilde{W}_{ik}L_{kk}^{-\T}(\tilde{W}_{jk}L_{kk}^{-\T})^\T = \tilde{W}_{ik} L_{kk}^{-\T}L_{kk}^{-1}\tilde{W}_{jk}^\T \nonumber \\ \label{Eq:Lm:MainLemmaShortStep6}
    =&\ \tilde{W}_{ik}(L_{kk}L_{kk}^\T)^{-1}\tilde{W}_{jk}^\T = \tilde{W}_{ik}\tilde{W}_{kk}^{-1}\tilde{W}_{jk}^\T.  
\end{align}
Finally, applying \eqref{Eq:Lm:MainLemmaShortStep6} in \eqref{Eq:Lm:MainLemmaShortStep4}, we can obtain \eqref{Eq:Lm:MainLemmaShort}. 
\end{proof}

According to the above lemma, testing positive definiteness of an $N\times N$ block matrix $W=[W_{ij}]_{i,j\in\N_N}$ can be broken down to $N$ separate smaller tests (iterations). In particular, at the $i$\tsup{th} iteration, we now only need to test whether $\tilde{W}_{ii}>0$. Since $\tilde{W}_{ii} = W_{ii}-\sum_{k\in\N_{i-1}}\tilde{W}_{ik}\tilde{W}_{kk}^{-1}\tilde{W}_{ik}^\T$ \eqref{Eq:Lm:MainLemmaShort}, computing $\tilde{W}_{ii}$ only requires the following sets of matrices: 
\begin{enumerate}
    \item $\{W_{ij}:j \in \N_i\}$ (extracted from $W$ in iteration $i$);
    \item $\{\tilde{W}_{ij}:j\in \N_{i-1}\}$ (computed using \eqref{Eq:Lm:MainLemmaShort} in iteration $i$);
    \item $\{\{\tilde{W}_{jk}:k\in\N_j\}:j\in\N_{i-1}\}$ (computed in previous $(i-1)$ iterations).
\end{enumerate}

The following corollary of Lm. \ref{Lm:MainLemmaShort} provide more insights on how the information computed in previous $(i-1)$ iterations are used when testing $\tilde{W}_{ii}>0$ at the $i$\tsup{th} iteration.

\begin{corollary}\label{Co:MainLemmaMatrix}
A symmetric $N \times N$ block matrix $W = [W_{ij}]_{i,j\in\N_N} > 0$ iff $\tilde{W}_{ii} > 0, \forall i\in\N_N$ where 
\begin{equation}\label{Eq:Co:MainLemmaMatrix}
    \begin{aligned}
        \tilde{W}_{ii} \triangleq&\ W_{ii} - \tilde{W}_i \mathcal{D}_i \tilde{W}_i^\T,\\
        \tilde{W}_i \triangleq&\ [\tilde{W}_{i1}, \tilde{W}_{i2}, \ldots, \tilde{W}_{i,i-1}] \triangleq W_i(\mathcal{D}_i\mathcal{A}_i^\T)^{-1},\\
        W_i \triangleq&\  [W_{i1},W_{i2},\ldots, W_{i,i-1}], \\
        \mathcal{D}_i \triangleq&\ \diag([\tilde{W}_{11}^{-1},\tilde{W}_{22}^{-1},\ldots,\tilde{W}_{i-1,i-1}^{-1}]),\\
        \mathcal{A}_i \triangleq&\ 
        \bm{
        \tilde{W}_{11} & \0 & \cdots & \0 \\
        \tilde{W}_{21} & \tilde{W}_{22} & \cdots & \0\\
        \vdots & \vdots & \vdots & \vdots \\
        \tilde{W}_{i-1,1} & \tilde{W}_{i-1,2} & \cdots & \tilde{W}_{i-1,i-1}
        }.
    \end{aligned}
\end{equation}
\end{corollary}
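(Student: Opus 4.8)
The plan is to derive the corollary directly from Lemma \ref{Lm:MainLemmaShort} by showing that the matrix-form definitions in \eqref{Eq:Co:MainLemmaMatrix} merely repackage the scalar recursion \eqref{Eq:Lm:MainLemmaShort}. Since the equivalence $W>0 \iff \tilde{W}_{ii}>0,\ \forall i\in\N_N$ is already granted by Lemma \ref{Lm:MainLemmaShort}, the only remaining task is to verify that the quantities $\tilde{W}_{ij}$ and $\tilde{W}_{ii}$ produced by \eqref{Eq:Co:MainLemmaMatrix} coincide with those produced by \eqref{Eq:Lm:MainLemmaShort}. I would organize this as an induction on the iteration index $i$, assuming that all blocks $\tilde{W}_{kl}$ with $k<i$ have already been shown to agree in the two formulations, so that $\mathcal{D}_i$ and $\mathcal{A}_i$ are well-defined and identical in both.

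First I would dispatch the diagonal formula. Setting $j=i$ in \eqref{Eq:Lm:MainLemmaShort} gives $\tilde{W}_{ii}=W_{ii}-\sum_{k\in\N_{i-1}}\tilde{W}_{ik}\tilde{W}_{kk}^{-1}\tilde{W}_{ik}^\T$. Because $\mathcal{D}_i=\diag(\tilde{W}_{kk}^{-1}:k\in\N_{i-1})$ is block diagonal, a direct expansion of the quadratic form yields $\tilde{W}_i\mathcal{D}_i\tilde{W}_i^\T=\sum_{k\in\N_{i-1}}\tilde{W}_{ik}\tilde{W}_{kk}^{-1}\tilde{W}_{ik}^\T$, so the corollary's $\tilde{W}_{ii}=W_{ii}-\tilde{W}_i\mathcal{D}_i\tilde{W}_i^\T$ matches the lemma immediately, provided the row $\tilde{W}_i=[\tilde{W}_{i1},\ldots,\tilde{W}_{i,i-1}]$ is the correct one.

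The main work is therefore to justify $\tilde{W}_i=W_i(\mathcal{D}_i\mathcal{A}_i^\T)^{-1}$, i.e.\ that this closed form reproduces the off-diagonal recursion. I would prove the equivalent identity $\tilde{W}_i\,\mathcal{D}_i\mathcal{A}_i^\T=W_i$ block by block. Using that $\mathcal{A}_i$ is block lower-triangular with $(j,k)$ block equal to $\tilde{W}_{jk}$ for $k\leq j$ (and $\0$ otherwise), the $(k,j)$ block of $\mathcal{D}_i\mathcal{A}_i^\T$ equals $\tilde{W}_{kk}^{-1}\tilde{W}_{jk}^\T$ for $k\leq j$ and vanishes for $k>j$; hence the $j$-th block of $\tilde{W}_i\mathcal{D}_i\mathcal{A}_i^\T$ is $\sum_{k\in\N_j}\tilde{W}_{ik}\tilde{W}_{kk}^{-1}\tilde{W}_{jk}^\T$. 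Splitting off the $k=j$ term and using that $\tilde{W}_{jj}$ is symmetric (which follows from $\tilde{W}_{jj}=L_{jj}L_{jj}^\T$ in \eqref{Eq:Lm:MainLemmaShortStep5}, so $\tilde{W}_{jj}^{-1}\tilde{W}_{jj}^\T=\I$; alternatively by an easy induction on \eqref{Eq:Lm:MainLemmaShort}) collapses that term to $\tilde{W}_{ij}$, leaving $\tilde{W}_{ij}+\sum_{k\in\N_{j-1}}\tilde{W}_{ik}\tilde{W}_{kk}^{-1}\tilde{W}_{jk}^\T$, which is precisely $W_{ij}$ by \eqref{Eq:Lm:MainLemmaShort}. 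This establishes $\tilde{W}_i\mathcal{D}_i\mathcal{A}_i^\T=W_i$.

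Finally I would note that $\mathcal{A}_i$ is block lower-triangular with invertible diagonal blocks $\tilde{W}_{kk}$ (invertible since $\tilde{W}_{kk}=L_{kk}L_{kk}^\T$ appears with an inverse in \eqref{Eq:Lm:MainLemmaShort}) and $\mathcal{D}_i$ is invertible, so $\mathcal{D}_i\mathcal{A}_i^\T$ is invertible and solving the above identity for $\tilde{W}_i$ gives the stated closed form $\tilde{W}_i=W_i(\mathcal{D}_i\mathcal{A}_i^\T)^{-1}$, which completes the induction. I expect the main obstacle to be purely the index bookkeeping in the triangular product---in particular, ensuring the $k=j$ term of the sum collapses correctly through the symmetry of $\tilde{W}_{jj}$---rather than any conceptual difficulty, since the corollary is ultimately a reformulation of Lemma \ref{Lm:MainLemmaShort}.
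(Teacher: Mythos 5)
Your proposal is correct and follows essentially the same route as the paper's own proof: rearranging the recursion \eqref{Eq:Lm:MainLemmaShort} to obtain the system $W_i = \tilde{W}_i\mathcal{D}_i\mathcal{A}_i^\T$ and then inverting, with the diagonal case $j=i$ handled separately, before concluding via Lemma \ref{Lm:MainLemmaShort}. The only difference is that you make explicit two details the paper leaves implicit---the collapse of the $k=j$ term via the symmetry $\tilde{W}_{jj}=L_{jj}L_{jj}^\T$, and the invertibility of the block-triangular factor $\mathcal{D}_i\mathcal{A}_i^\T$---both of which are correct and welcome bookkeeping.
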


\begin{proof}
By simplifying \eqref{Eq:Lm:MainLemmaShort} for each $j \in \N_{i-1}$ and re-arranging its terms to make the term $W_{ij}$ the subject, we can obtain a system of $(i-1)$ equations that can be jointly represented by $W_i = \tilde{W}_i \mathcal{D}_i \mathcal{A}_i^\T$ (using the block matrices defined in \eqref{Eq:Co:MainLemmaMatrix}). Therefore, $\tilde{W}_i = W_i(\mathcal{D}_i\mathcal{A}_i^\T)^{-1}$. 
Finally, by writing \eqref{Eq:Lm:MainLemmaShort} for $j=i$, we get 
$
\tilde{W}_{ii} 
= W_{ii} - \sum_{k\in\N_{i-1}} \tilde{W}_{ik}\tilde{W}_{kk}^{-1}\tilde{W}_{ik}^\T 
= W_{ii} - \tilde{W}_{i}\mathcal{D}_i\tilde{W}_{i}^\T,
$ 
and hence the proof is complete via Lm. \ref{Lm:MainLemmaShort}.
\end{proof}

\subsection{Application to Networked Systems Analysis}

Let the $N \times N$ block matrix $W$ considered in Co. \ref{Co:MainLemmaMatrix} be a network matrix (see Def. \ref{Def:NetworkMatrices}) corresponding to some networked system  $\mathcal{G}_N$ \eqref{Eq:CTNSDynamics}. The following remarks now can be made regarding using Co. \ref{Co:MainLemmaMatrix} in a such network setting.

\subsubsection{\textbf{Decentralized and Compositional Nature}} 
The nature of Co. \ref{Co:MainLemmaMatrix} implies that testing/enforcing $W>0$ can be achieved in a \textbf{decentralized} manner over $\mathcal{G}_N$ by sequentially testing/enforcing $\tilde{W}_{ii}>0$ at each subsystem $\Sigma_i, i\in\N_N$. 

Moreover, during a such process, at a subsystem $\Sigma_i$, it only requires to execute some local computations using some information obtained from the subsystems that came before it (i.e., from $\{\Sigma_j: j\in\N_{i-1}\}$). Therefore, testing/enforcing $W>0$ can be achieved in a \textbf{compositional} manner. 

In other words, adding a new subsystem to $\mathcal{G}_N$ while ensuring the positive definiteness of some overall network matrix (now corresponding to $\mathcal{G}_{N+1}$) can be efficiently and conveniently achieved without having to re-evaluate the local tests/enforcements at the existing subsystems in $\mathcal{G}_N$.

\subsubsection{\textbf{Resilience to Subsystem Removals}}
The aforementioned compositionality property implies that the proposed decentralized approach to test/enforce the positive definiteness of a network matrix (via. Co. \ref{Co:MainLemmaMatrix}) is resilient to \emph{subsystem additions}. It turns out that the proposed approach is also resilient to \emph{subsystem removals}. To understand this, first, note that a removal of a subsystem $\Sigma_i, i\in\N_N$ will change the network matrix $W$ into a smaller $(N-1)\times(N-1)$ block network matrix $\bar{W}$. Algebraically, $\bar{W}$ can be obtained from $W$ by removing its $i$\tsup{th} block row and block column. The following Lm. \ref{Lm:SubsystemRemoval} proves that $W>0 \implies \bar{W}$ for any $i\in\N_N$, i.e., if the network matrix $W$ is positive definite, any possible residual network matrix $\bar{W}$ will also be positive definite. 

\begin{lemma}\label{Lm:SubsystemRemoval}
Given a symmetric $N\times N$ block matrix $W>0$, any $(N-1)\times(N-1)$ block matrix $\bar{W}$ obtained from $W$ by removing its $i$\tsup{th} block row and block column, will retain the positive definiteness, i.e., $\bar{W}>0$, for any $i\in\N_N$.
\end{lemma}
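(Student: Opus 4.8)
The plan is to prove the standard fact that a principal submatrix of a positive definite matrix is again positive definite, specialized here to the simultaneous deletion of the $i$-th block row and block column. I would establish this directly through the defining quadratic-form characterization of positive definiteness, rather than through any matrix decomposition such as the one in Lm.~\ref{Lm:MainLemmaShort}. First, I would fix an arbitrary $i\in\N_N$ and describe $\bar{W}$ concretely: it is the $(N-1)\times(N-1)$ block matrix whose blocks are $\{W_{kl}:k,l\in\N_N\backslash\{i\}\}$, re-indexed to fill the gap left by the removed $i$-th block. Since $W$ is symmetric and the same index $i$ is dropped from both the rows and the columns, $\bar{W}$ is symmetric as well, so it only remains to verify strict positivity of the associated quadratic form.

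The key step is an embedding. Let $\bar{v}$ be an arbitrary vector of the dimension compatible with $\bar{W}$, partitioned into blocks $\{\bar{v}_k:k\in\N_N\backslash\{i\}\}$ conforming to the block structure of $\bar{W}$. I would extend $\bar{v}$ to a full-dimension vector $v\in\R^n$ (conforming to the block structure of $W$) by setting $v_k=\bar{v}_k$ for all $k\neq i$ and $v_i=\0$ in the $i$-th slot. Because the $i$-th block of $v$ is zero, every term of the expansion $v^\T W v=\sum_{k,l}v_k^\T W_{kl}v_l$ that involves the removed $i$-th block row or block column vanishes, so a direct comparison of the surviving terms yields the identity $v^\T W v=\bar{v}^\T\bar{W}\bar{v}$.

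To close the argument I would note that whenever $\bar{v}\neq\0$ at least one block $\bar{v}_k$ is nonzero, hence the embedded $v\neq\0$; positive definiteness of $W$ then gives $\bar{v}^\T\bar{W}\bar{v}=v^\T W v>0$. As $\bar{v}$ was an arbitrary nonzero vector and $\bar{W}$ is symmetric, this establishes $\bar{W}>0$. Equivalently, one may phrase the same computation through a mild extension of Lm.~\ref{Lm:PreAndPostMultiplication} to rectangular full-column-rank matrices: writing $\bar{W}=S^\T W S$, where $S$ is the block selection matrix retaining every block except the $i$-th, the congruence $S^\T W S>0$ follows immediately from $W>0$ because $S$ has full column rank.

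I do not anticipate a genuine obstacle, since this is essentially a textbook property of positive definite matrices; the only point requiring a little care is the bookkeeping for an arbitrary interior index $i$ (as opposed to the boundary case $i=N$), which the zero-padding embedding handles uniformly, together with the observation that a nonzero $\bar{v}$ produces a nonzero $v$ so that strict positivity is preserved rather than merely semidefiniteness.
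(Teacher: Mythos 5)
Your proof is correct, but it takes a genuinely different route from the paper's. The paper proves the lemma by reusing its own machinery: it picks a symmetric permutation matrix $P$ that moves the removed block row/column to the last position, so that $PWP^\T = \bigl[\begin{smallmatrix}\bar{W} & W_i^\T \\ W_i & W_{ii}\end{smallmatrix}\bigr]$, invokes the congruence principle (Lm.~\ref{Lm:PreAndPostMultiplication}) to get $PWP^\T>0$, and then reads off $\bar{W}>0$ from Sylvester's criterion (Lm.~\ref{Lm:SylvestersCriterion}) applied to the leading principal block. Your zero-padding quadratic-form argument is more elementary and entirely self-contained: it needs neither the permutation bookkeeping (you handle an interior index $i$ uniformly, whereas the paper must first relocate it to the boundary) nor Sylvester's criterion, and the identity $v^\T W v = \bar{v}^\T \bar{W}\bar{v}$ with $v\neq\0$ whenever $\bar{v}\neq\0$ immediately yields strict positivity; it also generalizes verbatim to deleting several block rows/columns at once, which the paper obtains only by repeated application of the lemma. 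One caveat on your alternative phrasing: writing $\bar{W}=S^\T W S$ and citing Lm.~\ref{Lm:PreAndPostMultiplication} is not quite licensed as stated, since that lemma assumes a square full-rank $P$ (it is an equivalence, which fails for rectangular $S$); you correctly flag this as a ``mild extension'' to full-column-rank matrices, where only the implication $W>0 \implies S^\T WS>0$ survives — but your primary embedding argument does not need this extension at all, so the proof stands as written. The paper's version buys consistency with its toolkit (permutation congruences also appear in Lms.~\ref{Lm:NetworkMatrixProperties} and \ref{Lm:ColumnandRowPermutations}); yours buys brevity and independence from auxiliary lemmas.
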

\begin{proof}
Note that, for any $i\in\N_N$, there exists a symmetric permutation matrix $P$ such that 
$$
PWP^\T=
\begin{bmatrix}
\bar{W} & W_i^\T\\
W_i & W_{ii}
\end{bmatrix},
$$
where $[W_i, W_{ii}]$ contains all the blocks of the $i$\tsup{th} block row of $W$. According to the Sylvester's criterion (see Lm. \ref{Lm:SylvestersCriterion}), $PWP^\T > 0 \implies \bar{W}>0$. Moreover, as $P$ is a full-rank matrix (bu definition), from Lm. \ref{Lm:PreAndPostMultiplication}, $W>0 \iff PWP^\T > 0$. Therefore, by combining these two results, we get $W>0 \implies \bar{W}>0$.  
\end{proof}

This result implies that the proposed approach to enforce/test the positive definiteness of a network matrix (via. Co. \ref{Co:MainLemmaMatrix}) is resilient to \emph{subsystem removals}. Repeated use of this result implies that if a network matrix of some network is positive definite, even if several subsystems were removed from that network, the network matrix of the residual network will remain positive definite. Therefore, this eliminates the need to re-evaluate the local tests upon such subsystem removals from a network.

\subsubsection{\textbf{The Algorithm}}

Note that the $j$\tsup{th} block row of the matrix $\mathcal{A}_i$ in \eqref{Eq:Co:MainLemmaMatrix} can be obtained from the information seen at the subsystem $\Sigma_j, j\in \N_{i-1}$ (when $\tilde{W}_{jj}>0$ was tested). In essence, this matrix $\mathcal{A}_i$ can be seen as a compilation of messages received at the subsystem $\Sigma_i$ from previous/existing $(i-1)$ subsystems in the network. On the other hand, the matrix $\mathcal{D}_i$ in \eqref{Eq:Co:MainLemmaMatrix} is fully-determined by this message matrix $\mathcal{A}_i$. Note also that some components of the matrix $W_i$ in \eqref{Eq:Co:MainLemmaMatrix} may still be unknown to the subsystem $i$ if the network is asymmetric. However, such unknown components can be obtained by requesting them from previous/existing $(i-1)$ subsystems in the network. Hence additional communications may be required to create the matrix $W_i$. Finally, note that $W_{ii}$ is known at (in fact, is intrinsic to) the subsystem $i$. Therefore, it is now clear how $\tilde{W}_{ii}$ (of which the positive definiteness needs to be tested) can be obtained using \eqref{Eq:Co:MainLemmaMatrix}. The proposed overall decentralized and compositional approach to test/enforce the positive-definiteness of a network matrix $W$ (based on Co. \ref{Co:MainLemmaMatrix}) is summarized in the following Alg. \ref{Alg:DistributedPositiveDefiniteness}

\begin{algorithm}[!h]
\caption{Testing/Enforcing $W>0$ in a Network Setting.}
\label{Alg:DistributedPositiveDefiniteness}
\begin{algorithmic}[1]
\State \textbf{Input: } $W = [W_{ij}]_{i,j\in\N_N}$
\State \textbf{At each subsystem $\Sigma_i, i \in \N_N$ execute:} 
\Indent
    \If{$i=1$}
        \State Test/Enforce: $W_{11}>0$
        \State Store: $\tilde{W}_1 \triangleq [W_{11}]$ \Comment{To be sent to others.}
    \Else
        \State \textbf{From each subsystem $\Sigma_j, j\in\N_{i-1}$:}
        \Indent
            \State Receive: $\tilde{W}_j \triangleq  [\tilde{W}_{j1},\tilde{W}_{j2},\ldots,\tilde{W}_{jj}]$
            \State Receive: Required info. to compute $W_{ij}$
        \EndIndent
        \State \textbf{End receiving}
        \State Construct: $\mathcal{A}_i, \mathcal{D}_i$ and $W_i$.  
        \Comment{Using: \eqref{Eq:Co:MainLemmaMatrix}.}
        \State Compute: $\tilde{W}_i \triangleq W_i (\mathcal{D}_i\mathcal{A}_i^\T)^{-1}$ 
        \Comment{From \eqref{Eq:Co:MainLemmaMatrix}.}
        \State Compute: $\tilde{W}_{ii} \triangleq W_{ii} - \tilde{W}_{i}\mathcal{D}_i\tilde{W}_i^\T$ 
        \Comment{From \eqref{Eq:Co:MainLemmaMatrix}.}
        \State Test/Enforce: $\tilde{W}_{ii} > 0$
        \State Store: $\tilde{W}_i \triangleq [\tilde{W}_i, \tilde{W}_{ii}]$
        \Comment{To be sent to others}
    \EndIf
\EndIndent
\State \textbf{End execution}
\end{algorithmic}
\end{algorithm}

\subsection{Inter-Subsystem Communications}

\subsubsection{\textbf{Redundant Communications}}

Even though Alg. \ref{Alg:DistributedPositiveDefiniteness} is decentralized and compositional, it is, in general, not distributed. This limitation is evident from the fact that a subsystem $\Sigma_i, i\in\N_N$ having to communicate with all the subsystems that came before it (i.e., with $\{\Sigma_j: j\in \N_{i-1}\}$) so as to construct the matrix $\mathcal{A}_i$ \eqref{Eq:Co:MainLemmaMatrix} when executing Alg. \ref{Alg:DistributedPositiveDefiniteness}.

Nevertheless, we prove the following corollary of Lm. \ref{Lm:MainLemmaShort} to show that some communication sessions between subsystems are redundant and thus can be avoided - depending on the network topology and the used subsystem indexing scheme.

\begin{corollary}\label{Lm:MainLemmaInfo}
A symmetric $N \times N$ block network matrix $W = [W_{ij}]_{i,j\in\N_N} > 0$ iff $\tilde{W}_{ii} >0, \forall i\in\N_N$ where 
\begin{equation}\label{Eq:Lm:MainLemmaInfo}
    \tilde{W}_{ij} = W_{ij} - \sum_{k\in\N_{j-1}\backslash \N_{L_{ij}-1}}\tilde{W}_{ik}\tilde{W}_{kk}^{-1}\tilde{W}_{jk}^\T, \ \ \forall j\in\N_{i}, 
\end{equation}
with $\mathcal{L}_{ij} \triangleq \max\{\min\{\bar{\C}_i\},\min\{\bar{\C}_j\}\}$.
\end{corollary}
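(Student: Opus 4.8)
The plan is to recognize that Corollary \ref{Lm:MainLemmaInfo} is simply Lemma \ref{Lm:MainLemmaShort} with provably-zero summands deleted. Concretely, I would keep the blocks $\tilde{W}_{ij}$ exactly as defined by the full recursion \eqref{Eq:Lm:MainLemmaShort}, and show that for every $k\in\N_{\mathcal{L}_{ij}-1}$ the summand $\tilde{W}_{ik}\tilde{W}_{kk}^{-1}\tilde{W}_{jk}^\T$ vanishes. Once that is established, \eqref{Eq:Lm:MainLemmaInfo} holds as an identity for these same blocks (the reduced sum equals the full sum), so the equivalence $W>0\iff\tilde{W}_{ii}>0,\ \forall i\in\N_N$ is inherited verbatim from Lemma \ref{Lm:MainLemmaShort} and nothing else need be reproven.

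The heart of the argument is a vanishing claim: writing $m_i\triangleq\min\{\bar{\C}_i\}$, I claim $\tilde{W}_{ik}=\0$ whenever $k<m_i$. I would prove this by strong induction on $k$ for each fixed row $i$. Since $i\in\bar{\C}_i$ we have $m_i\le i$, so $k<m_i$ forces $k<i$ and $k\notin\bar{\C}_i$; hence $k\neq i$ and $k\notin\C_i$, and the network-matrix property of $W$ (Def. \ref{Def:NetworkMatrices}) gives $W_{ik}=\0$. In the induction, computing $\tilde{W}_{ik}$ via \eqref{Eq:Lm:MainLemmaShort} uses $\tilde{W}_{ik}=W_{ik}-\sum_{l\in\N_{k-1}}\tilde{W}_{il}\tilde{W}_{ll}^{-1}\tilde{W}_{kl}^\T$; every index $l<k$ also satisfies $l<m_i$, so the induction hypothesis yields $\tilde{W}_{il}=\0$, and since $\tilde{W}_{il}$ is the \emph{left} factor of each summand, the whole sum collapses and $\tilde{W}_{ik}=W_{ik}=\0$ (the base case $k$ minimal is the empty-sum instance). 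Note this induction is self-contained on the single row $i$, because the decisive zero factor is always a row-$i$ block, so no hypothesis on $\tilde{W}_{kl}$ or $\tilde{W}_{ll}^{-1}$ is needed.

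With the claim in hand, I would fix $j\le i$ and take any $k\in\N_{\mathcal{L}_{ij}-1}$, i.e. $k<\mathcal{L}_{ij}=\max\{m_i,m_j\}$. Then $k<m_i$ or $k<m_j$: in the first case the claim gives $\tilde{W}_{ik}=\0$, in the second $\tilde{W}_{jk}=\0$, and either way $\tilde{W}_{ik}\tilde{W}_{kk}^{-1}\tilde{W}_{jk}^\T=\0$. Thus the deleted index block contributes nothing and $\sum_{k\in\N_{j-1}}=\sum_{k\in\N_{j-1}\backslash\N_{\mathcal{L}_{ij}-1}}$ for these summands, so \eqref{Eq:Lm:MainLemmaShort} and \eqref{Eq:Lm:MainLemmaInfo} produce the same blocks $\tilde{W}_{ij}$, completing the proof. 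The edge case is automatic: if $\mathcal{L}_{ij}>j-1$ the set $\N_{j-1}\backslash\N_{\mathcal{L}_{ij}-1}$ is empty, which is consistent because then every $k\le j-1$ satisfies $k<m_i$ and the full sum already vanishes.

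The hard part will be stating the vanishing induction cleanly, specifically recognizing that it can be run one row at a time precisely because $\tilde{W}_{il}$ appears as the left factor of every term (so a single zero factor kills the summand), and correctly translating the inequality $k<\min\{\bar{\C}_i\}$ into the premise ``$k\neq i$ and $k\notin\C_i$'' that unlocks Def. \ref{Def:NetworkMatrices}. Everything after that is index-set bookkeeping and a direct appeal to Lemma \ref{Lm:MainLemmaShort}.
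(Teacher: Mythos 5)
Your proposal is correct and follows essentially the same route as the paper: both reduce Corollary \ref{Lm:MainLemmaInfo} to Lemma \ref{Lm:MainLemmaShort} by establishing the vanishing claim $\tilde{W}_{ik}=\0$ for all $k<\min\{\bar{\C}_i\}$ (and symmetrically for row $j$), so that every deleted summand $\tilde{W}_{ik}\tilde{W}_{kk}^{-1}\tilde{W}_{jk}^\T$ with $k<\mathcal{L}_{ij}$ is zero and \eqref{Eq:Lm:MainLemmaInfo} coincides with \eqref{Eq:Lm:MainLemmaShort}. The only cosmetic difference is the mechanism for the vanishing claim: the paper reads it off from $\tilde{W}_i = W_i(\mathcal{D}_i\mathcal{A}_i^\T)^{-1}$ in Co.~\ref{Co:MainLemmaMatrix}, noting that right-multiplication by the upper-triangular $(\mathcal{D}_i\mathcal{A}_i^\T)^{-1}$ preserves leading zero blocks, whereas you prove the equivalent fact by a row-wise strong induction on the recursion \eqref{Eq:Lm:MainLemmaShort} -- two phrasings of the same observation.
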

\begin{proof}
According to \eqref{Eq:Co:MainLemmaMatrix}, $\mathcal{D}_i\mathcal{A}_i$ is upper-triangular. Therefore, $(\mathcal{D}_i\mathcal{A}_i)^{-1}$ is also upper-triangular. Consequently, due to the relationship $\tilde{W}_i = W_i (\mathcal{D}_i\mathcal{A}_i)^{-1}$ \eqref{Eq:Co:MainLemmaMatrix}, whenever the first $n\in\N_{i-1}$ blocks of $W_i$ \eqref{Eq:Co:MainLemmaMatrix} are zero blocks, the first $n\in\N_{i-1}$ blocks of $\tilde{W}_i$ \eqref{Eq:Co:MainLemmaMatrix} are also zero blocks. 

Since $W$ is a network matrix, $W_{ij}=\0$ for all $i,j\in\N_N$ such that $j \not \in \C_i$ (recall that $\C_i\triangleq \E_i \cup \F_i$, see Def. \ref{Def:NetworkMatrices}). Therefore, the first $\min\{\bar{\C}_i\}-1$ blocks of $W_i$ are zero blocks. Consequently, the first $\min\{\bar{\C}_i\}-1$ blocks of $\tilde{W}_i$ are also zero blocks. Simply, for any $i\in \N_N$, $\tilde{W}_{ij} = \0,  \forall j < \min\{\bar{\C}_i\}$. Similarly, for any $i\in\N_N$, $\tilde{W}_{ik} = \0, \forall k<\min\{\bar{\C}_i \}$ and for any $j\in\N_N$,  $\tilde{W}_{jk} = \0, \forall k < \min\{\bar{\C}_j\}$. 

Therefore, $\tilde{W}_{ik}\tilde{W}_{kk}^{-1}\tilde{W}_{jk}^\T=\0, \forall k<\mathcal{L}_{ij}$ where $\mathcal{L}_{ij}\triangleq \max\{\min\{\bar{\C}_i\},\min\{\bar{\C}_j\}\}$. By applying this result in \eqref{Eq:Lm:MainLemmaShort}, we can directly obtain \eqref{Eq:Lm:MainLemmaInfo}. 
\end{proof}

Compared to \eqref{Eq:Lm:MainLemmaShort}, in \eqref{Eq:Lm:MainLemmaInfo}, there may be less number of terms in the summation as $\mathcal{L}_{ij} \geq 1$. This implies a possible reduction in the communications required at the subsystem $\Sigma_i$ compared to what is required in \eqref{Eq:Lm:MainLemmaShort}. For example, in a  bi-directional network, if a subsystem $\Sigma_j, j \in \N_{i-1}$ is such that $\min\{\bar{\C}_j\} > j-1$ (i.e., $\Sigma_j$ is the least indexed subsystem among its neighbors, which also implies $\mathcal{L}_{ij}>j-1$), then the subsystem $\Sigma_i$ only need to get $\tilde{W}_{jj}$ value from subsystem $\Sigma_j$ as $\tilde{W}_{jk}=\0, \forall k\in\N_{j-1}$ under $\mathcal{L}_{ij}>j-1$. 

At this point, it should be clear that even though the proposed positive definiteness testing/enforcing criterion (i.e., Alg. \ref{Alg:DistributedPositiveDefiniteness}) is not distributed in general, depending on the network topology and the used subsystem indexing scheme, some communication sessions can be avoided. We next propose a communication cost function to find an optimum indexing scheme for a given network topology.

\subsubsection{\textbf{Communication Cost}}

According Co. \ref{Lm:MainLemmaInfo}, to analyze/enforce $\tilde{W}_{ii}>0$ at a subsystem $\Sigma_i, i\in\N_N$, it requires each previous subsystem $\Sigma_j, j\in\N_{i-1}$ to send the information $\{\tilde{W}_{jk}:k\in\N_j\}$ and out of these block matrices, $\mathcal{L}_{ij}-1$ (at most $j-1$) number of block matrices will be $\0$, making them redundant. Let us assume the communication cost associated with a such block matrix $\tilde{W}_{jk}$ as $\gamma_{ijk}$. On the other hand, the subsystem $\Sigma_i$ might also require additional information from each subsystem $\Sigma_j, j\in\N_{i-1}$ just to compute $\{W_{ij}:j\in\N_{i-1}\}$ - if there are unknown components in these block matrices. Note that this requirement arises only if $j \in \F_i$. Let us assume the communication cost associated with the block matrix $W_{ij}$ as $\beta_{ij}$. Taking these costs into account, we can formulate a communication cost function as: 
\begin{equation}\label{Eq:CommunicationCost}
    J(X) = \sum_{i\in\N_N} \sum_{j\in\N_{i-1}} \alpha_{ij} 
    \big(
    \beta_{ij}\mb{1}_{\{j\in\F_i\}} + \gamma_{ijj} + \sum_{k= \mathcal{L}_{ij}}^{j-1} \gamma_{ijk}  \big),
\end{equation}
associated with executing Alg. \ref{Alg:DistributedPositiveDefiniteness} over the considered network topology under the used subsystem indexing scheme $X$.  In \eqref{Eq:CommunicationCost}, $\alpha_{ij}\in\R$ represents the unit cost of communication from subsystem $j$ to $i$. Simply, we can set $\alpha_{ij} = \mb{1}_{\{j\not\in\E_i\}}$ to penalize the communications that happen over subsystems that are not neighbors. Here, the indexing scheme $X$ can be any permutation of $\N_N$ and it determines the neighbor sets (e.g., $\E_i, \F_i, \C_i$) and $\mathcal{L}_{ij}$ values for all $i,j\in \N_N$. Therefore, the objective function $J(X)$ formulated above \eqref{Eq:CommunicationCost} can be used to determine an optimal subsystem indexing scheme that minimizes costly inter-subsystem communications.

The following corollary shows that, in some networks, under some subsystem indexing schemes, we can altogether avoid communications between subsystems that are not neighbors, i.e., we can execute the proposed decentralized process (Alg. \ref{Alg:DistributedPositiveDefiniteness}) in a distributed manner. For example, Figure \ref{Fig:CascadedNetwork} shows such a network configuration. 

\begin{corollary}
Under $\alpha_{ij} \triangleq \mb{1}_{\{j\not\in\E_i\}}$ in  \eqref{Eq:CommunicationCost}, if there exists a subsystem indexing scheme $X$ that makes $\E_i \supseteq \N_{i-1}$ at each subsystem $\Sigma_i, i\in\N_N$, then, $J(X)=0$, i.e., no communications are needed between non-neighbors.  
\end{corollary}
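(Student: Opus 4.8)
The plan is to show that under the given hypothesis every summand of the communication cost $J(X)$ in \eqref{Eq:CommunicationCost} is annihilated by its leading factor $\alpha_{ij}$, so that the whole sum collapses to zero without ever having to evaluate the bracketed quantity $\beta_{ij}\mb{1}_{\{j\in\F_i\}}+\gamma_{ijj}+\sum_{k=\mathcal{L}_{ij}}^{j-1}\gamma_{ijk}$.

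First I would note that in \eqref{Eq:CommunicationCost} the inner sum runs over $j\in\N_{i-1}=\{1,\ldots,i-1\}$, so that every index pair $(i,j)$ contributing to $J(X)$ satisfies $j<i$, equivalently $j\in\N_{i-1}$. Next I would apply the indexing hypothesis $\E_i\supseteq\N_{i-1}$, which states exactly that every index strictly smaller than $i$ is an in-neighbor of $\Sigma_i$; hence for each such pair we have $j\in\E_i$. With the prescribed weighting $\alpha_{ij}=\mb{1}_{\{j\not\in\E_i\}}$, this gives $\alpha_{ij}=0$ for every $(i,j)$ occurring in the double sum. Since each term of $J(X)$ is the product of this vanishing factor with a finite bracketed quantity, every term equals $0$, and therefore $J(X)=0$. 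The stated interpretation — that no inter-subsystem communications between non-neighbors are required — then follows because $\alpha_{ij}$ was defined precisely to charge only the non-neighbor ($j\notin\E_i$) communications.

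I do not anticipate a substantive obstacle: the claim reduces to the observation that the support of $\alpha_{ij}$ (the non-neighbor pairs) is disjoint from the index range $\{(i,j):j\in\N_{i-1}\}$ of the sum, which is exactly what $\E_i\supseteq\N_{i-1}$ guarantees. The only point meriting a moment's care is making the containment $\N_{i-1}\subseteq\E_i$ explicit for every $i$, so that no summation index escapes the neighbor set; no sign analysis or cancellation among the individual cost terms $\beta_{ij},\gamma_{ijk}$ is needed.
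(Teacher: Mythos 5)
Your proposal is correct and matches the paper's own proof essentially verbatim: both arguments observe that $\E_i \supseteq \N_{i-1}$ forces $j \in \E_i$ for every summation index $j \in \N_{i-1}$, so $\alpha_{ij} = \mb{1}_{\{j\not\in\E_i\}} = 0$ annihilates every term of \eqref{Eq:CommunicationCost} and $J(X)=0$ follows without examining the bracketed costs. No gaps.
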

\begin{proof}
Note that $\E_i  \supseteq \N_{i-1} \iff j \in \E_i, \forall j \in \N_{i-1}$. Moreover, since $\alpha_{ij} \triangleq \mb{1}_{\{j\not\in\E_i\}}$, $j \in \E_i \iff \alpha_{ij}=0$. Combining these two relationships, we obtain that 
$$\E_i  \supseteq \N_{i-1} \iff \alpha_{ij}=0, \forall j \in \N_{i-1}.$$
Therefore, if there exists a subsystem indexing scheme $X$ that makes $\E_i  \supseteq \N_{i-1}, \forall i\in\N_N$, it implies that $\alpha_{ij}=0, \forall j \in \N_{i-1}, \forall i\in\N_N$. Applying this in \eqref{Eq:CommunicationCost} we get the communication cost $J(X)$ as $J(X)=0$. 
\end{proof}

\begin{figure}[!h]
    \centering
    \includegraphics[width=3in]{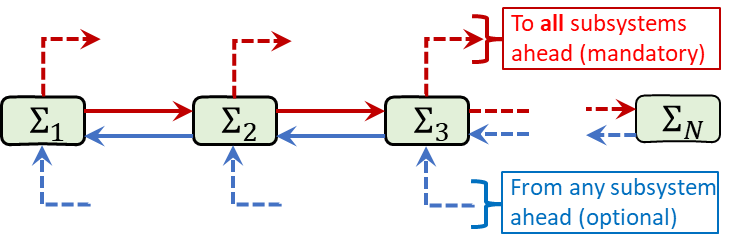}
    \caption{A network configuration where the proposed decentralized process in Alg. \ref{Alg:DistributedPositiveDefiniteness} can be executed without communications between subsystems that are not neighbors, i.e., in a  distributed manner.}
    \label{Fig:CascadedNetwork}
\end{figure}

\paragraph{\textbf{Communication Cost Optimization}}
Now, let us re-state the communication cost function $J(X)$ \eqref{Eq:CommunicationCost} as 
\begin{equation}\label{Eq:CommunicationCost2}
    J(X) = \sum_{i\in\N_N} \sum_{j\in\N_{i-1}} \bar{\alpha}_{ij},
\end{equation}
where we define 
\begin{equation}\label{Eq:ComCostElement}
 \bar{\alpha}_{ij} \triangleq \alpha_{ij}(\beta_{ij}\mb{1}_{\{i\in\E_j\}} + \gamma_{ijj} + \sum_{k= \mathcal{L}_{ij}}^{j-1} \gamma_{ijk}), 
\end{equation}
and $G\triangleq[\bar{\alpha}_{ij}]_{i,j\in\N_N}$, and make the assumption given below.

\begin{assumption}\label{As:LinearComCost}
Each $\bar{\alpha}_{ij}$ value \eqref{Eq:ComCostElement} required in \eqref{Eq:CommunicationCost2} is only dependent on the nature of the interconnection between subsystems $i$ and $j$, but not on specific $i$ and $j$ values.
\end{assumption}

For example, As. \ref{As:LinearComCost} holds if $\bar{\alpha}_{ij} = \mb{1}_{\{j\not\in\E_i\}}$ but not if $\bar{\alpha}_{ij} = j\mb{1}_{\{j\in\E_i\}}$. Pertaining to the definition given in \eqref{Eq:ComCostElement}, note that As. \ref{As:LinearComCost} holds if $\gamma_{ijk}=0, \forall k$ but not if $\gamma_{ijk}=\gamma_{ijj}, \forall k$.

Under As. \ref{As:LinearComCost}, it is easy to see that the problem of finding the optimal subsystem indexing scheme $X$ that minimizes $J(X)$ \eqref{Eq:CommunicationCost2} is identical to the problem of finding the optimal permutation matrix $P\in\R^{N \times N}$ that minimizes the sum of all upper triangular elements of $PGP^\T$ (recall that $G=[\bar{\alpha}_{ij}]_{i,j\in\N_N}$ is a known matrix). Conveniently, the latter problem belongs to a well-known class of combinatorial optimization problems called ``linear ordering problems'' \cite{Ceberio2015,Festa2001,Grotschel1984,DeCani1972,deCani1969}. 

While linear ordering problems are NP-hard \cite{Festa2001}, when $N$ is small, exact solutions can be obtained using brute-force or branch-and-bound algorithms \cite{DeCani1972}. Even if $N$ is large, local optimal solutions can be obtained using efficient and systematic heuristic algorithms \cite{Ceberio2015,Festa2001} that start with a greedy solution and execute local optimizations until a convergence is achieved. In general, linear ordering problems can be formulated as standard linear integer programs \cite{Grotschel1984}, and therefore, can also be solved using commercial solvers.

Finally, to make a remark on situations where As. \ref{As:LinearComCost} does not hold, consider the simple case where $\gamma_{ijk}=\gamma_{ijj}, \forall k$ with $\mathcal{L}_{ij}=1$ in \eqref{Eq:ComCostElement}. This transforms \eqref{Eq:ComCostElement} into the form:
\begin{equation}
    \bar{\alpha}_{ij} = \underbrace{\bar{\alpha}_{ij}\beta_{ij}\mb{1}_{\{i\in\E_j\}}}_{\triangleq\, \bar{\alpha}_{ij}^{(1)}} + j \underbrace{\bar{\alpha}_{ij}\gamma_{ijj}}_{\triangleq\,  \bar{\alpha}_{ij}^{(2)}} = \bar{\alpha}_{ij}^{(1)} + j\bar{\alpha}_{ij}^{(2)}.
\end{equation}
Consequently, the communication cost function $J(X)$ in \eqref{Eq:CommunicationCost2} is non-linear, and thus, the corresponding ``ordering problem'' is also non-linear. To the best of the authors' knowledge, no existing literature directly studies such non-linear ordering problems. However, even for such non-linear ordering problems, we still can use a brute force or a heuristic algorithm (like before) that starts with a greedy solution and executes local optimizations to obtain a locally optimal solution.

Figure \ref{Fig:ComCostNets} shows two example network configurations with the overall inter-subsystem communication cost function $J(X)$ \eqref{Eq:CommunicationCost} defined as: $\alpha_{ij}=\mb{1}_{j\not\in\E_i}$, $\beta_{ij}=2n_j^2$, $\gamma_{ijj}=n_j^2$, $\gamma_{ijk}=n_jn_k$ and $n_i=2,\forall i\in\N_N$. In there, apart from a random nominal subsystem indexing scheme, both the worst and the best possible subsystem indexing schemes obtained by optimizing $J(X)$ (via a brute force algorithm) are indicated respectively using black, red and blue colored texts. Note that, in these two examples, optimizing the subsystem indexing scheme has respectively lead to $75.0\%$ and $59.0\%$ savings in the communication cost function $J(X)$ \eqref{Eq:CommunicationCost}.

\begin{figure}[!h]
    \centering
    \begin{subfigure}{0.24\textwidth}
        \includegraphics[width=1.6in]{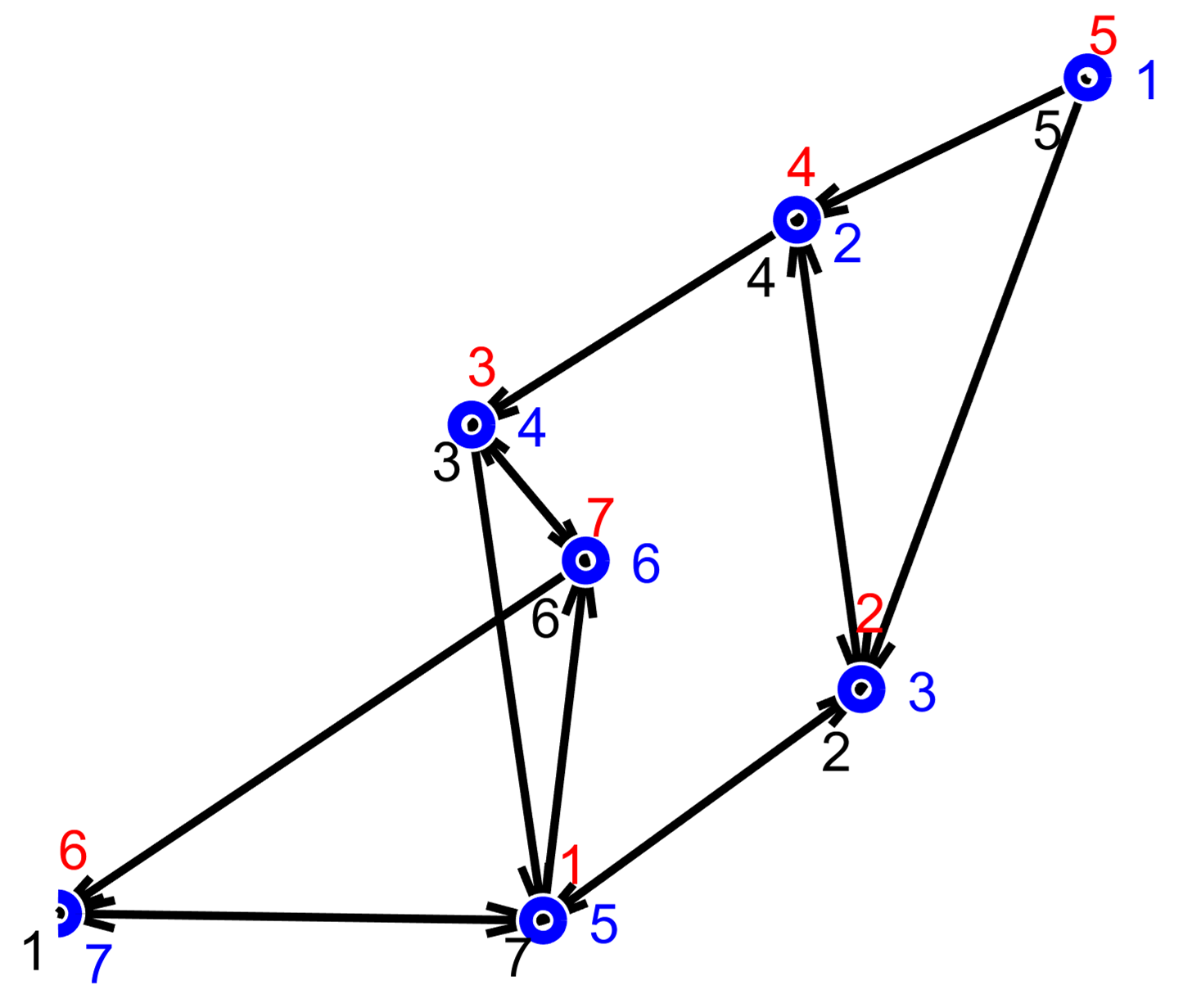}
        \vspace{1mm}
        \subcaption{Ex. 1: $J(X)=\textcolor{red}{220},192,\textcolor{blue}{48}$.}
        \label{Fig:ComCostNet1}
    \end{subfigure}
    \begin{subfigure}{0.24\textwidth}
        \includegraphics[width=1.5in]{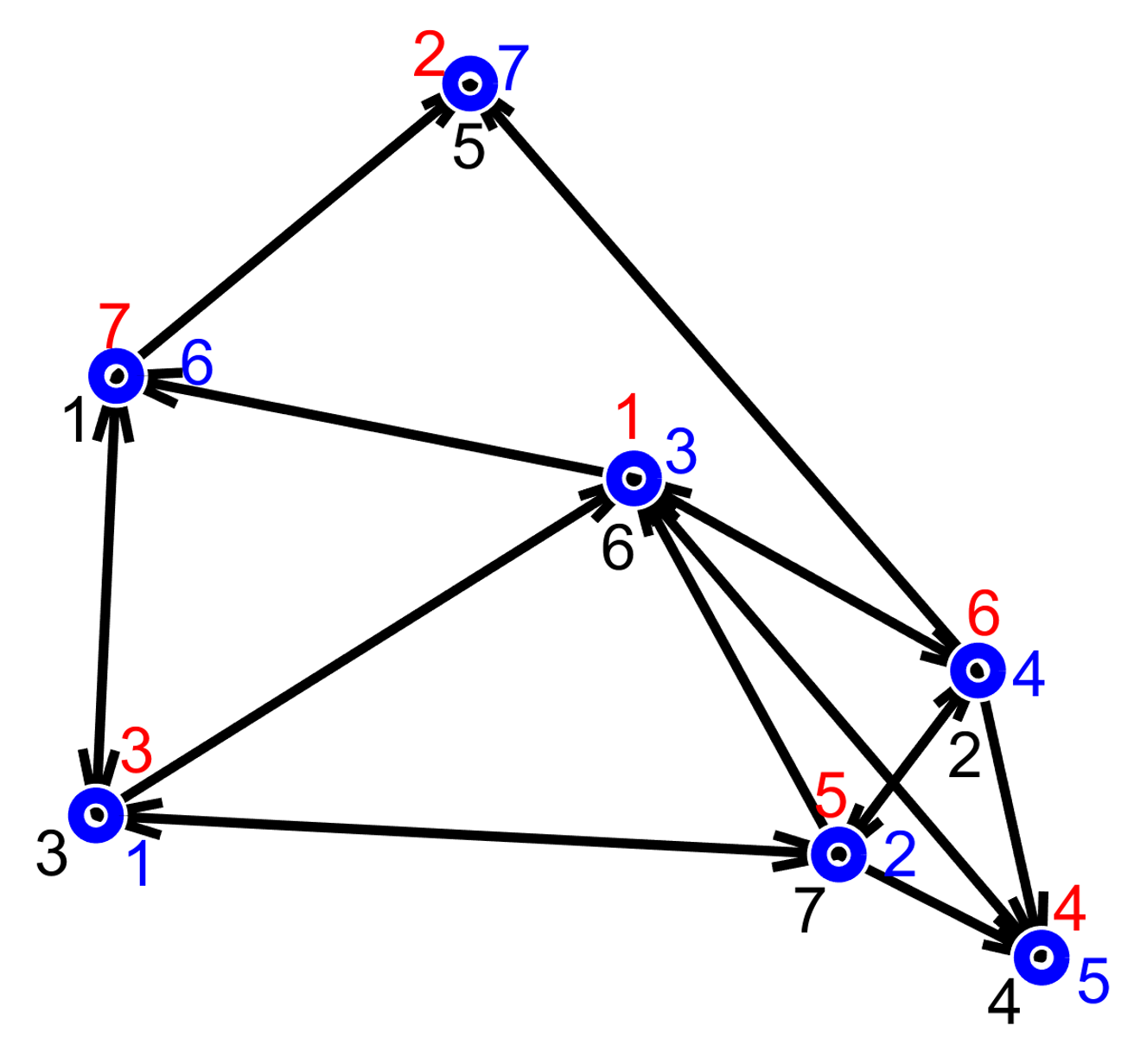}
        \subcaption{Ex. 2: $J(X)=\textcolor{red}{220},156,\textcolor{blue}{64}$.}
        \label{Fig:ComCostNet2}    
    \end{subfigure}
    \caption{Two example network configurations and their communication cost values under different subsystem indexing schemes: (1) worst (red), (2) nominal (black), (3) best (blue).}
    \label{Fig:ComCostNets}
\end{figure}

\subsection{Enforcing Matrix Equatlities}
In this section, so far, we have focused on establishing a decentralized approach for testing/enforcing matrix inequalities of the form $W>0$, i.e., Alg. \ref{Alg:DistributedPositiveDefiniteness}. In the same spirit, we conclude this section by providing a simple decentralized approach for enforcing matrix equalities of the form $V=0$, i.e., Alg. \ref{Alg:DistributedEquality}.

\begin{algorithm}[!h]
\caption{Enforcing $V=0$ in a Network Setting.}
\label{Alg:DistributedEquality}
\begin{algorithmic}[1]
\State \textbf{Input: } $V = [V_{ij}]_{i,j\in\N_N}$
\State \textbf{At each subsystem $\Sigma_i, i \in \N_N$ execute:} 
\Indent
    \If{$i=1$}
        \State Enforce: $V_{11}=0$
    \Else
        \State \textbf{From each subsystem $\Sigma_j, j\in\N_{i-1}$:}
        \Indent
            \State Receive: Required info. to compute $V_{ij},V_{ji}$
        \EndIndent
        \State \textbf{End receiving}
        \State Enforce: $V_{ij} = 0, V_{ji} = 0, \forall j\in\N_{i-1}, V_{ii} = 0$
    \EndIf
\EndIndent
\State \textbf{End execution}
\end{algorithmic}
\end{algorithm}

Consider the case where $V=0$ is a linear matrix equality (LME) in terms of a single controllable matrix variable $X$:
\begin{equation}\label{Eq:LinearMatrixEqualityCondition}
    V \equiv AXB+C = 0,
\end{equation}
with matrices $A,B,C$ and $X$ being network matrices corresponding to some network $\mathcal{G}_N$. For this particular case, the following lemma provides conditions that ensure the applicability of Alg. \ref{Alg:DistributedEquality} to enforce the LME \eqref{Eq:LinearMatrixEqualityCondition}.

\begin{lemma}\label{Lm:matrixEquality}
A unique solution for the linear matrix equality $V=0$ \eqref{Eq:LinearMatrixEqualityCondition} can be found via Alg. \ref{Alg:DistributedEquality} iff $A,B$ are block diagonal with non-singular blocks and: (1) $X$ is a strictly non-block diagonal, or (2) $C,X$ are both block diagonal.
\end{lemma}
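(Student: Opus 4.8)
The plan is to use the block-diagonal structure of $A$ and $B$ to collapse the single matrix equation \eqref{Eq:LinearMatrixEqualityCondition} into a family of mutually independent block equations, and then to check, pattern by pattern, when each enforced block equation in Alg. \ref{Alg:DistributedEquality} admits a unique local solution. First I would write $A=\diag(A_{ii}:i\in\N_N)$ and $B=\diag(B_{jj}:j\in\N_N)$; by Lm. \ref{Lm:NetworkMatrixProperties}-Case 2 the products $AX$ and $(AX)B$ remain network matrices, so $V=AXB+C$ is a network matrix and block multiplication yields the decoupled identity
\begin{equation*}
V_{ij}=A_{ii}X_{ij}B_{jj}+C_{ij},\qquad \forall\,i,j\in\N_N.
\end{equation*}
This is precisely the structure Alg. \ref{Alg:DistributedEquality} needs: the block $V_{ij}$ (enforced locally at $\Sigma_{\max\{i,j\}}$) contains the single unknown block $X_{ij}$ and nothing else.

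Next I would set up a per-block dichotomy. When $X_{ij}$ is a controllable (free) block, the non-singularity of $A_{ii}$ and $B_{jj}$ makes $V_{ij}=\0$ equivalent to the unique assignment $X_{ij}=-A_{ii}^{-1}C_{ij}B_{jj}^{-1}$; when the prescribed sparsity of $X$ forces $X_{ij}=\0$, the equation degenerates to the consistency requirement $C_{ij}=\0$, which determines nothing about $X$ and must hold for a solution to exist at all. Feeding the two admissible patterns through this dichotomy gives the sufficiency direction: in Case (2), where $X$ and $C$ are both block diagonal, the diagonal equations uniquely fix $X_{ii}=-A_{ii}^{-1}C_{ii}B_{ii}^{-1}$ while every off-diagonal equation reduces to $C_{ij}=\0$ $(i\neq j)$ and is automatically satisfied; in Case (1), where $X$ is strictly non-block-diagonal, the off-diagonal neighbor equations uniquely fix each controllable $X_{ij}$ while the diagonal equations collapse to the consistency conditions $C_{ii}=\0$, leaving the (off-diagonal) solution unique. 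In either case Alg. \ref{Alg:DistributedEquality} terminates with a single, well-defined $X$.

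For the necessity (``only if'') direction I would reason contrapositively on each hypothesis. If $A$ is not block diagonal, some $A_{ik}\neq\0$ with $k\neq i$ survives and the $(i,j)$ block becomes $V_{ij}=\sum_{k}A_{ik}X_{kj}B_{jj}+C_{ij}$, which entangles blocks $X_{kj}$ intrinsic to several distinct subsystems; the lone equation available at $\Sigma_{\max\{i,j\}}$ then cannot isolate one unknown block, so no well-defined local solution (hence no unique global one) is produced---and the symmetric argument on the right forces $B$ to be block diagonal as well. Non-singularity of each $A_{ii},B_{jj}$ is likewise forced, since otherwise $A_{ii}X_{ij}B_{jj}=-C_{ij}$ is either unsolvable or has a non-trivial null space. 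It then remains to exclude the one sparsity combination left outside Cases (1)-(2), namely $X$ block diagonal together with $C$ \emph{not} block diagonal: there an off-diagonal $C_{ij}\neq\0$ meets a structurally zero $X_{ij}$, so $V_{ij}=C_{ij}\neq\0$ can never be enforced and no solution exists.

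I expect the structural bookkeeping in the necessity direction to be the main obstacle: the algebraic decoupling is routine once $A,B$ are block diagonal, but pinning down that Cases (1)-(2) are \emph{exactly} the patterns for which every enforced block equation is simultaneously uniquely solvable wherever it carries a free block and vacuously consistent wherever it does not requires a careful case analysis over the diagonal-versus-off-diagonal supports of $X$ and $C$, making sure no equation is over-determined (a nonzero $C$-block against a forced-zero $X$-block) and no free block is left undetermined.
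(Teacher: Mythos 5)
Your constructive core coincides with the paper's proof: both exploit block-diagonal $A,B$ with non-singular blocks to decouple \eqref{Eq:LinearMatrixEqualityCondition} into independent block equations $V_{ij}=A_{ii}X_{ij}B_{jj}+C_{ij}=\0$, each solved locally at one subsystem by $X_{ij}=-A_{ii}^{-1}C_{ij}B_{jj}^{-1}$ (respectively $X_{ij}=-A_{ii}^{-1}C_{ii}e_{ij}B_{jj}^{-1}$ in Case (2)). Two remarks on where you diverge. First, you read Case (1) (``$X$ strictly non-block diagonal'') as forcing $X_{ii}=\0$, which generates extra consistency conditions $C_{ii}=\0$ on the diagonal equations; the paper instead treats Case (1) as all (network-supported) blocks of $X$ being free, so its displayed solution $X_{ij}=-A_{ii}^{-1}C_{ij}B_{jj}^{-1}$ covers every block with no consistency requirement. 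Your reading is defensible given the lemma's phrasing, but it makes your sufficiency claim conditional on $C_{ii}=\0$, a hypothesis the lemma does not state, so you should either adopt the paper's reading or flag the added condition explicitly. Second, your necessity argument (entanglement of several $X_{kj}$ in one local equation when $A$ is not block diagonal; singular blocks yielding unsolvable or non-unique equations; a nonzero off-diagonal $C_{ij}$ meeting a structurally zero $X_{ij}$) goes beyond the paper, whose proof only exhibits the unique per-block solutions in the two cases and leaves the ``only if'' direction implicit; this extra direction is the right instinct for an ``iff'' claim, though your step asserting that a single local equation carrying several unknown blocks cannot produce uniqueness via Alg.~\ref{Alg:DistributedEquality} is stated informally rather than proved, and would need a short argument (one block equation in two or more free unknown blocks is underdetermined, so the local enforcement step admits infinitely many solutions) to be fully rigorous.
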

\begin{proof}
The proof is complete by considering the $(i,j)$\tsup{th} element of both sides in \eqref{Eq:LinearMatrixEqualityCondition}, $i,j\in\N_N$, which respectively gives unique solutions for $X$ as: (1)
$$
V_{ij} \equiv A_{ii}X_{ij}B_{jj} + C_{ij} = 0 \iff 
X_{ij} = -A_{ii}^{-1}C_{ij}B_{jj}^{-1},
$$
or (2) (recall $e_{ij}\triangleq \I \cdot \mb{1}_{\{i=j\}}$)
$$
V_{ij} \equiv A_{ii}X_{ij}B_{jj} + C_{ii}e_{ij} = 0 \iff 
X_{ij} = -A_{ii}^{-1}C_{ii}e_{ij}B_{jj}^{-1}.
$$
\end{proof}

We conclude this section by pointing out that enforcing LME conditions of the form \eqref{Eq:LinearMatrixEqualityCondition} via Alg. \ref{Alg:DistributedEquality} is not only decentralized and compositional but also distributed.

\section{Distributed Analysis and Control Synthesis of CTNS}
\label{Sec:DistributedTechniquesForCTNS}

In this section, we design decentralized, compositional and possibly distributed techniques for different analysis and control synthesis tasks discussed in Sec. \ref{Sec:BasicsOfCTLTISystems} over CTNSs introduced in Sec. \ref{Sec:CTNetworkedSystem} using the algorithms proposed in Sec. \ref{Sec:DecentralizedAnalysis}.

\begin{landscape}
\begin{table}[]
\centering
\caption{Summary of Results for Decentralized Analysis and Control Synthesis Techniques for Continuous-Time Networked Systems (CTNS)}
\label{Tab:CT-LTILocalResultsSummary}
\resizebox{\columnwidth}{!}{%
\begin{tabular}{|c|l|c|l|llllc|lccl|}
\hline
\multirow{4}{*}{\textbf{Task}} &
  \multicolumn{1}{c|}{\multirow{4}{*}{\textbf{Concept}}} &
  \multirow{4}{*}{\textbf{\begin{tabular}[c]{@{}c@{}}Global\\ LMI\\ Prop. \#\end{tabular}}} &
  \multicolumn{1}{c|}{\multirow{4}{*}{\textbf{Preliminary Conditions}}} &
  \multicolumn{5}{c|}{\textbf{Assumptions on Global CTNS Parameters and Global LMI Variables}} &
  \multicolumn{4}{c|}{\textbf{Local LMI variables, Local LMIs and Local LMEs (at subsystem $\Sigma_i,i\in\N_N$)}} \\ \cline{5-13} 
 &
  \multicolumn{1}{c|}{} &
   &
  \multicolumn{1}{c|}{} &
  \multicolumn{4}{c|}{\textbf{Network   Matrices}} &
  \multirow{3}{*}{\textbf{\begin{tabular}[c]{@{}c@{}}Scaler\\ LMI\\ Var./Obj.\end{tabular}}} &
  \multicolumn{1}{c|}{\multirow{3}{*}{\textbf{LMI Variables}}} &
  \multicolumn{1}{c|}{\multirow{3}{*}{\textbf{\begin{tabular}[c]{@{}c@{}}Scaler\\ LMI\\ Var./Obj.\end{tabular}}}} &
  \multicolumn{1}{c|}{\multirow{3}{*}{\textbf{\begin{tabular}[c]{@{}c@{}}LMIs: BEW\\ Forms of \\ (via Alg. \ref{Alg:DistributedPositiveDefiniteness}):\end{tabular}}}} &
  \multicolumn{1}{c|}{\multirow{3}{*}{\textbf{\begin{tabular}[c]{@{}c@{}}LMEs: BEW\\ Form of \\ (via Alg.\ref{Alg:DistributedEquality}):\end{tabular}}}} \\ \cline{5-8}
 &
  \multicolumn{1}{c|}{} &
   &
  \multicolumn{1}{c|}{} &
  \multicolumn{2}{c|}{\textbf{CTNS Parameters}} &
  \multicolumn{2}{c|}{\textbf{LMI Variables}} &
   &
  \multicolumn{1}{c|}{} &
  \multicolumn{1}{c|}{} &
  \multicolumn{1}{c|}{} &
  \multicolumn{1}{c|}{} \\ \cline{5-8}
 &
  \multicolumn{1}{c|}{} &
   &
  \multicolumn{1}{c|}{} &
  \multicolumn{1}{c|}{\textbf{General}} &
  \multicolumn{1}{c|}{\textbf{Diagonal}} &
  \multicolumn{1}{c|}{\textbf{General}} &
  \multicolumn{1}{c|}{\textbf{Diagonal}} &
   &
  \multicolumn{1}{c|}{} &
  \multicolumn{1}{c|}{} &
  \multicolumn{1}{c|}{} &
  \multicolumn{1}{c|}{} \\ \hline
\multirow{6}{*}{\textbf{\begin{tabular}[c]{@{}c@{}}LTI   \\ System\\ Analysis\end{tabular}}} &
  \textbf{Stability} &
  \textbf{\ref{Pr:CTLTIStability}} &
  \textbf{$u=0,w=0$} &
  \multicolumn{1}{l|}{\textbf{$A$}} &
  \multicolumn{1}{l|}{\textbf{}} &
  \multicolumn{1}{l|}{\textbf{}} &
  \multicolumn{1}{l|}{\textbf{$P$}} &
  \textbf{} &
  \multicolumn{1}{l|}{\textbf{$P_{ii}$}} &
  \multicolumn{1}{c|}{\textbf{}} &
  \multicolumn{1}{c|}{\textbf{\eqref{Eq:Pr:CTLTIStability}}} &
  \textbf{} \\ \cline{2-13} 
 &
  \textbf{Dissipativity} &
  \textbf{\ref{Pr:CTLTIQSRDissipativity}} &
  \textbf{$Q<0,R=R^\T,w=0,u \rightarrow y$} &
  \multicolumn{1}{l|}{\textbf{$A,B,S,R$}} &
  \multicolumn{1}{l|}{\textbf{$C,D,Q$}} &
  \multicolumn{1}{l|}{\textbf{}} &
  \multicolumn{1}{l|}{\textbf{$P$}} &
  \textbf{} &
  \multicolumn{1}{l|}{\textbf{$P_{ii}$}} &
  \multicolumn{1}{c|}{\textbf{}} &
  \multicolumn{1}{c|}{\textbf{\eqref{Eq:Pr:CTLTIQSRDissipativity}}} &
  \textbf{} \\ \cline{2-13} 
 &
  \textbf{\color{red}$\H_2$-Norm} &
  \textbf{\ref{Pr:H2Norm}} &
  \textbf{$\textcolor{red}{D=0},w=0,u\rightarrow y$} &
  \multicolumn{1}{l|}{\textbf{$A,B,C$}} &
  \multicolumn{1}{l|}{\textbf{}} &
  \multicolumn{1}{l|}{\textbf{$Q$}} &
  \multicolumn{1}{l|}{\textbf{$P$}} &
  \textbf{$\gamma$} &
  \multicolumn{1}{l|}{\textbf{$Q_i,P_{ii}$}} &
  \multicolumn{1}{c|}{\textbf{$\gamma_i$}} &
  \multicolumn{1}{c|}{\textbf{\eqref{Eq:Pr:H2Norm1} or \eqref{Eq:Pr:H2Norm2}}} &
  \textbf{} \\ \cline{2-13} 
 &
  \textbf{$H_\infty$-Norm} &
  \textbf{\ref{Pr:HInfNorm}} &
  \textbf{$w=0,u\rightarrow y$} &
  \multicolumn{1}{l|}{\textbf{$A,B,C,D$}} &
  \multicolumn{1}{l|}{\textbf{}} &
  \multicolumn{1}{l|}{\textbf{}} &
  \multicolumn{1}{l|}{\textbf{$P$}} &
  \textbf{$\gamma$} &
  \multicolumn{1}{l|}{\textbf{$P_{ii}$}} &
  \multicolumn{1}{c|}{\textbf{$\gamma_i$}} &
  \multicolumn{1}{c|}{\textbf{\eqref{Eq:Pr:HInfNorm}}} &
  \textbf{} \\ \cline{2-13} 
 &
  \textbf{Stabilizability} &
  \textbf{\ref{Pr:Stabilizability}} &
  \textbf{} &
  \multicolumn{1}{l|}{\textbf{$A$}} &
  \multicolumn{1}{l|}{\textbf{$B$}} &
  \multicolumn{1}{l|}{\textbf{}} &
  \multicolumn{1}{l|}{\textbf{$P$}} &
  \textbf{} &
  \multicolumn{1}{l|}{\textbf{$P_{ii}$}} &
  \multicolumn{1}{c|}{\textbf{}} &
  \multicolumn{1}{c|}{\textbf{\eqref{Eq:Pr:Stabilizability}}} &
  \textbf{} \\ \cline{2-13} 
 &
  \textbf{Detectability} &
  \textbf{\ref{Pr:Detectability}} &
  \textbf{} &
  \multicolumn{1}{l|}{\textbf{$A$}} &
  \multicolumn{1}{l|}{\textbf{$C$}} &
  \multicolumn{1}{l|}{\textbf{}} &
  \multicolumn{1}{l|}{\textbf{$P$}} &
  \textbf{} &
  \multicolumn{1}{l|}{\textbf{$P_{ii}$}} &
  \multicolumn{1}{c|}{\textbf{}} &
  \multicolumn{1}{c|}{\textbf{\eqref{Eq:Pr:Detectability}}} &
  \textbf{} \\ \hline
\multirow{4}{*}{\textbf{\begin{tabular}[c]{@{}c@{}}FSF\\ Controller\\ Synthesis\end{tabular}}} &
  \textbf{Stability} &
  \textbf{\ref{Pr:StabilizationUnderFSF}} &
  \textbf{$D=0,w=0$} &
  \multicolumn{1}{l|}{\textbf{$A$}} &
  \multicolumn{1}{l|}{\textbf{$B$}} &
  \multicolumn{1}{l|}{\textbf{$L$}} &
  \multicolumn{1}{l|}{\textbf{$M$}} &
  \textbf{} &
  \multicolumn{1}{l|}{\textbf{$L_i,M_{ii}$}} &
  \multicolumn{1}{c|}{\textbf{}} &
  \multicolumn{1}{c|}{\textbf{\eqref{Eq:Pr:StabilizationUnderFSF}}} &
  \textbf{$K=LM^{-1}$} \\ \cline{2-13} 
 &
  \textbf{Dissipativity} &
  \textbf{\ref{Pr:DissipativationUnderFSF}} &
  \textbf{$D=0,Q<0,R=R^\T,w\rightarrow y$} &
  \multicolumn{1}{l|}{\textbf{$A,E,S,R$}} &
  \multicolumn{1}{l|}{\textbf{$B,C,F,Q$}} &
  \multicolumn{1}{l|}{\textbf{$L$}} &
  \multicolumn{1}{l|}{\textbf{$M$}} &
  \textbf{} &
  \multicolumn{1}{l|}{\textbf{$L_i,M_{ii}$}} &
  \multicolumn{1}{c|}{\textbf{}} &
  \multicolumn{1}{c|}{\textbf{\eqref{Eq:Pr:DissipativationUnderFSF}}} &
  \textbf{$K=LM^{-1}$} \\ \cline{2-13} 
 &
  \textbf{\color{red}$\H_2$-Norm} &
  \textbf{\ref{Pr:H2ControlUnderFSF}} &
  \textbf{$\textcolor{red}{F=0},w\rightarrow y$} &
  \multicolumn{1}{l|}{\textbf{$A,C,E$}} &
  \multicolumn{1}{l|}{\textbf{$B,D$}} &
  \multicolumn{1}{l|}{\textbf{$L,Q$}} &
  \multicolumn{1}{l|}{\textbf{$M$}} &
  \textbf{$\gamma$} &
  \multicolumn{1}{l|}{\textbf{$L_i,Q_i,M_{ii}$}} &
  \multicolumn{1}{c|}{\textbf{$\gamma_i$}} &
  \multicolumn{1}{c|}{\textbf{\eqref{Eq:Pr:H2ControlUnderFSF}}} &
  \textbf{$K=LM^{-1}$} \\ \cline{2-13} 
 &
  \textbf{$H_\infty$-Norm} &
  \textbf{\ref{Pr:HInfControlUnderFSF}} &
  \textbf{$w\rightarrow y$} &
  \multicolumn{1}{l|}{\textbf{$A,C,E,F$}} &
  \multicolumn{1}{l|}{\textbf{$B,D$}} &
  \multicolumn{1}{l|}{\textbf{$L$}} &
  \multicolumn{1}{l|}{\textbf{$M$}} &
  \textbf{$\gamma$} &
  \multicolumn{1}{l|}{\textbf{$L_i,M_{ii}$}} &
  \multicolumn{1}{c|}{\textbf{$\gamma_i$}} &
  \multicolumn{1}{c|}{\textbf{\eqref{Eq:Pr:HInfControlUnderFSF}}} &
  \textbf{$K=LM^{-1}$} \\ \hline
\multirow{4}{*}{\textbf{\begin{tabular}[c]{@{}c@{}}Observer\\ Design\end{tabular}}} &
  \textbf{Stability} &
  \textbf{\ref{Pr:Observer}} &
  \textbf{$w=0$} &
  \multicolumn{1}{l|}{\textbf{$A$}} &
  \multicolumn{1}{l|}{\textbf{$C,D$}} &
  \multicolumn{1}{l|}{\textbf{$K$}} &
  \multicolumn{1}{l|}{\textbf{$P$}} &
  \textbf{} &
  \multicolumn{1}{l|}{\textbf{$K_i,P_{ii}$}} &
  \multicolumn{1}{c|}{\textbf{}} &
  \multicolumn{1}{c|}{\textbf{\eqref{Eq:Pr:Observer}}} &
  \textbf{$L=P^{-1}K$, \eqref{Eq:LuenbergerObserverParameters}} \\ \cline{2-13} 
 &
  \textbf{Dissipativity} &
  \textbf{\ref{Pr:DissipativeObserver}} &
  \textbf{$Q<0,R=R^\T,w\rightarrow z$} &
  \multicolumn{1}{l|}{\textbf{$A,E,S,R$}} &
  \multicolumn{1}{l|}{\textbf{$C,D,F,G,J,Q$}} &
  \multicolumn{1}{l|}{\textbf{$K$}} &
  \multicolumn{1}{l|}{\textbf{$P$}} &
  \textbf{} &
  \multicolumn{1}{l|}{\textbf{$K_i,P_{ii}$}} &
  \multicolumn{1}{c|}{\textbf{}} &
  \multicolumn{1}{c|}{\textbf{\eqref{Eq:Pr:DissipativeObserver}}} &
  \textbf{$L=P^{-1}K$, \eqref{Eq:LuenbergerObserverParameters}} \\ \cline{2-13} 
 &
  \textbf{\color{red}$\H_2$-Norm} &
  \textbf{\ref{Pr:H2Observer}} &
  \textbf{$\textcolor{red}{J=0},w\rightarrow z$} &
  \multicolumn{1}{l|}{\textbf{$A,E,G$}} &
  \multicolumn{1}{l|}{\textbf{$C,D,F$}} &
  \multicolumn{1}{l|}{\textbf{$K,Q$}} &
  \multicolumn{1}{l|}{\textbf{$P$}} &
  \textbf{$\gamma$} &
  \multicolumn{1}{l|}{\textbf{$K_i,Q_i,P_{ii}$}} &
  \multicolumn{1}{c|}{\textbf{$\gamma_i$}} &
  \multicolumn{1}{c|}{\textbf{\eqref{Eq:Pr:H2Observer}}} &
  \textbf{$L=P^{-1}K$, \eqref{Eq:LuenbergerObserverParameters}} \\ \cline{2-13} 
 &
  \textbf{$H_\infty$-Norm} &
  \textbf{\ref{Pr:HInfObserver}} &
  \textbf{$w \rightarrow z$} &
  \multicolumn{1}{l|}{\textbf{$A,E,G,J$}} &
  \multicolumn{1}{l|}{\textbf{$C,D,F$}} &
  \multicolumn{1}{l|}{\textbf{$K$}} &
  \multicolumn{1}{l|}{\textbf{$P$}} &
  \textbf{$\gamma$} &
  \multicolumn{1}{l|}{\textbf{$K_i,P_{ii}$}} &
  \multicolumn{1}{c|}{\textbf{$\gamma_i$}} &
  \multicolumn{1}{c|}{\textbf{\eqref{Eq:Pr:HInfObserver}}} &
  \textbf{$L=P^{-1}K$, \eqref{Eq:LuenbergerObserverParameters}} \\ \hline
\multirow{4}{*}{\textbf{\begin{tabular}[c]{@{}c@{}}DOF\\ Controller\\ Synthesis\end{tabular}}} &
  \textbf{Stability} &
  \textbf{\ref{Pr:StabilizationUnderDOF}} &
  \textbf{$D=0,w=0$} &
  \multicolumn{1}{l|}{\textbf{$A$}} &
  \multicolumn{1}{l|}{\textbf{$B,C$}} &
  \multicolumn{1}{l|}{\textbf{$A_n,B_n,C_n,D_n$}} &
  \multicolumn{1}{l|}{\textbf{$X,Y$}} &
  \textbf{} &
  \multicolumn{1}{l|}{\textbf{$A_{n,i},B_{n,i},C_{n,i},D_{n,i},X_{ii},Y_{ii}$}} &
  \multicolumn{1}{c|}{\textbf{}} &
  \multicolumn{1}{c|}{\textbf{\eqref{Eq:Pr:StabilizationUnderDOF1},\eqref{Eq:Pr:StabilizationUnderDOF2}}} &
  \textbf{\eqref{Eq:COVsAuxToDOF}} \\ \cline{2-13} 
 &
  \textbf{Dissipativity} &
  \textbf{\ref{Pr:DissipativationUsingDOF}} &
  \textbf{$D=0,Q<0,R=R^\T,w\rightarrow z$} &
  \multicolumn{1}{l|}{\textbf{$A,E,G,J,R$}} &
  \multicolumn{1}{l|}{\textbf{$B,C,F,H,S,Q$}} &
  \multicolumn{1}{l|}{\textbf{$A_n,B_n,C_n,D_n$}} &
  \multicolumn{1}{l|}{\textbf{$X,Y$}} &
  \textbf{} &
  \multicolumn{1}{l|}{\textbf{$A_{n,i},B_{n,i},C_{n,i},D_{n,i},X_{ii},Y_{ii}$}} &
  \multicolumn{1}{c|}{\textbf{}} &
  \multicolumn{1}{c|}{\textbf{\eqref{Eq:Pr:DissipativationUsingDOF1},\eqref{Eq:Pr:DissipativationUsingDOF2}}} &
  \textbf{\eqref{Eq:COVsAuxToDOF}} \\ \cline{2-13} 
 &
  \textbf{\color{red}$\H_2$-Norm} &
  \textbf{\ref{Pr:H2ControlUnderDOF}} &
  \textbf{$D=0,w\rightarrow z$} &
  \multicolumn{1}{l|}{\textbf{$A,E,G,J$}} &
  \multicolumn{1}{l|}{\textbf{$B,C,F,H$}} &
  \multicolumn{1}{l|}{\textbf{$A_n,B_n,C_n,D_n,Q$}} &
  \multicolumn{1}{l|}{\textbf{$X,Y$}} &
  \textbf{$\gamma$} &
  \multicolumn{1}{l|}{\textbf{$A_{n,i},B_{n,i},C_{n,i},D_{n,i},Q_iX_{ii},Y_{ii}$}} &
  \multicolumn{1}{c|}{\textbf{$\gamma_i$}} &
  \multicolumn{1}{c|}{\textbf{\eqref{Eq:Pr:H2ControlUnderDOF1}, \eqref{Eq:Pr:H2ControlUnderDOF2}}} &
  \textbf{\textcolor{red}{$J+HD_nF=0$}, \eqref{Eq:COVsAuxToDOF}} \\ \cline{2-13} 
 &
  \textbf{$H_\infty$-Norm} &
  \textbf{\ref{Pr:HInfControlUnderDOF}} &
  \textbf{$D=0,w\rightarrow z$} &
  \multicolumn{1}{l|}{\textbf{$A,E,G,J$}} &
  \multicolumn{1}{l|}{\textbf{$B,C,F,H$}} &
  \multicolumn{1}{l|}{\textbf{$A_n,B_n,C_n,D_n$}} &
  \multicolumn{1}{l|}{\textbf{$X,Y$}} &
  \textbf{$\gamma$} &
  \multicolumn{1}{l|}{\textbf{$A_{n,i},B_{n,i},C_{n,i},D_{n,i},X_{ii},Y_{ii}$}} &
  \multicolumn{1}{c|}{\textbf{$\gamma_i$}} &
  \multicolumn{1}{c|}{\textbf{\eqref{Eq:Pr:HInfControlUnderDOF1}, \eqref{Eq:Pr:HInfControlUnderDOF2}}} &
  \textbf{\eqref{Eq:COVsAuxToDOF}} \\ \hline
\end{tabular}%
}
\end{table}

\begin{table}[]
\centering
\caption{Summary of Results for Decentralized Analysis and Control Synthesis Techniques for Discrete-Time Networked Systems (DTNS)}
\label{Tab:DT-LTILocalResultsSummary}
\resizebox{\columnwidth}{!}{%
\begin{tabular}{|c|l|c|l|llllc|lccl|}
\hline
\multirow{4}{*}{\textbf{Task}} &
  \multicolumn{1}{c|}{\multirow{4}{*}{\textbf{Concept}}} &
  \multirow{4}{*}{\textbf{\begin{tabular}[c]{@{}c@{}}Global\\ LMI\\ Prop. \#\end{tabular}}} &
  \multicolumn{1}{c|}{\multirow{4}{*}{\textbf{Preliminary Conditions}}} &
  \multicolumn{5}{c|}{\textbf{Assumptions on Global DTNS Parameters and Global LMI Variables}} &
  \multicolumn{4}{c|}{\textbf{Local LMI variables, Local LMIs and Local LMEs (at subsystem $\Sigma_i,i\in\N_N$)}} \\ \cline{5-13} 
 &
  \multicolumn{1}{c|}{} &
   &
  \multicolumn{1}{c|}{} &
  \multicolumn{4}{c|}{\textbf{Network   Matrices}} &
  \multirow{3}{*}{\textbf{\begin{tabular}[c]{@{}c@{}}Scaler\\ LMI\\ Var./Obj.\end{tabular}}} &
  \multicolumn{1}{c|}{\multirow{3}{*}{\textbf{LMI Variables}}} &
  \multicolumn{1}{c|}{\multirow{3}{*}{\textbf{\begin{tabular}[c]{@{}c@{}}Scaler\\ LMI\\ Var./Obj.\end{tabular}}}} &
\multicolumn{1}{c|}{\multirow{3}{*}{\textbf{\begin{tabular}[c]{@{}c@{}}LMIs: BEW\\ Forms of \\ (via Alg. \ref{Alg:DistributedPositiveDefiniteness}):\end{tabular}}}} &
  \multicolumn{1}{c|}{\multirow{3}{*}{\textbf{\begin{tabular}[c]{@{}c@{}}LMEs: BEW\\ Form of \\ (via Alg.\ref{Alg:DistributedEquality}):\end{tabular}}}} \\ \cline{5-8}
 &
  \multicolumn{1}{c|}{} &
   &
  \multicolumn{1}{c|}{} &
  \multicolumn{2}{c|}{\textbf{DTNS Parameters}} &
  \multicolumn{2}{c|}{\textbf{LMI Variables}} &
   &
  \multicolumn{1}{c|}{} &
  \multicolumn{1}{c|}{} &
  \multicolumn{1}{c|}{} &
  \multicolumn{1}{c|}{} \\ \cline{5-8}
 &
  \multicolumn{1}{c|}{} &
   &
  \multicolumn{1}{c|}{} &
  \multicolumn{1}{c|}{\textbf{General}} &
  \multicolumn{1}{c|}{\textbf{Diagonal}} &
  \multicolumn{1}{c|}{\textbf{General}} &
  \multicolumn{1}{c|}{\textbf{Diagonal}} &
   &
  \multicolumn{1}{c|}{} &
  \multicolumn{1}{c|}{} &
  \multicolumn{1}{c|}{} &
  \multicolumn{1}{c|}{} \\ \hline
\multirow{6}{*}{\textbf{\begin{tabular}[c]{@{}c@{}}LTI   \\ System\\ Analysis\end{tabular}}} &
  \textbf{Stability} &
  \textbf{\ref{Pr:DTLTIStability}} &
  \textbf{$u=0,w=0$} &
  \multicolumn{1}{l|}{\textbf{$A$}} &
  \multicolumn{1}{l|}{\textbf{}} &
  \multicolumn{1}{l|}{\textbf{}} &
  \multicolumn{1}{l|}{\textbf{$P$}} &
  \textbf{} &
  \multicolumn{1}{l|}{\textbf{$P_{ii}$}} &
  \multicolumn{1}{c|}{\textbf{}} &
  \multicolumn{1}{c|}{\textbf{\eqref{Eq:Pr:DTLTIStability}}} &
  \textbf{} \\ \cline{2-13} 
 &
  \textbf{Dissipativity} &
  \textbf{\ref{Pr:DTLTIQSRDissipativity}} &
  \textbf{$Q<0,R=R^\T,w=0,u \rightarrow y$} &
  \multicolumn{1}{l|}{\textbf{$A,B,S,R$}} &
  \multicolumn{1}{l|}{\textbf{$C,D,Q$}} &
  \multicolumn{1}{l|}{\textbf{}} &
  \multicolumn{1}{l|}{\textbf{$P$}} &
  \textbf{} &
  \multicolumn{1}{l|}{\textbf{$P_{ii}$}} &
  \multicolumn{1}{c|}{\textbf{}} &
  \multicolumn{1}{c|}{\textbf{\eqref{Eq:Pr:DTLTIQSRDissipativity}}} &
  \textbf{} \\ \cline{2-13} 
 &
  \textbf{\color{red}$\H_2$-Norm} &
  \textbf{\ref{Pr:DTH2Norm}} &
  \textbf{$w=0,u\rightarrow y$} &
  \multicolumn{1}{l|}{\textbf{$A,B,C,\textcolor{red}{D}$}} &
  \multicolumn{1}{l|}{\textbf{}} &
  \multicolumn{1}{l|}{\textbf{$Q$}} &
  \multicolumn{1}{l|}{\textbf{$P$}} &
  \textbf{$\gamma$} &
  \multicolumn{1}{l|}{\textbf{$Q_i,P_{ii}$}} &
  \multicolumn{1}{c|}{\textbf{$\gamma_i$}} &
  \multicolumn{1}{c|}{\textbf{\eqref{Eq:Pr:DTH2Norm1} or \eqref{Eq:Pr:DTH2Norm2}}} &
  \textbf{} \\ \cline{2-13} 
 &
  \textbf{$H_\infty$-Norm} &
  \textbf{\ref{Pr:DTHInfNorm}} &
  \textbf{$w=0,u\rightarrow y$} &
  \multicolumn{1}{l|}{\textbf{$A,B,C,D$}} &
  \multicolumn{1}{l|}{\textbf{}} &
  \multicolumn{1}{l|}{\textbf{}} &
  \multicolumn{1}{l|}{\textbf{$P$}} &
  \textbf{$\gamma$} &
  \multicolumn{1}{l|}{\textbf{$P_{ii}$}} &
  \multicolumn{1}{c|}{\textbf{$\gamma_i$}} &
  \multicolumn{1}{c|}{\textbf{\eqref{Eq:Pr:DTHInfNorm1} or \eqref{Eq:Pr:DTHInfNorm1}}} &
  \textbf{} \\ \cline{2-13} 
 &
  \textbf{Stabilizability} &
  \textbf{\ref{Pr:DTStabilizability}} &
  \textbf{} &
  \multicolumn{1}{l|}{\textbf{$A$}} &
  \multicolumn{1}{l|}{\textbf{$B$}} &
  \multicolumn{1}{l|}{\textbf{}} &
  \multicolumn{1}{l|}{\textbf{$P$}} &
  \textbf{} &
  \multicolumn{1}{l|}{\textbf{$P_{ii}$}} &
  \multicolumn{1}{c|}{\textbf{}} &
  \multicolumn{1}{c|}{\textbf{\eqref{Eq:Pr:DTStabilizability}}} &
  \textbf{} \\ \cline{2-13} 
 &
  \textbf{Detectability} &
  \textbf{\ref{Pr:DTDetectability}} &
  \textbf{} &
  \multicolumn{1}{l|}{\textbf{$A$}} &
  \multicolumn{1}{l|}{\textbf{$C$}} &
  \multicolumn{1}{l|}{\textbf{}} &
  \multicolumn{1}{l|}{\textbf{$P$}} &
  \textbf{} &
  \multicolumn{1}{l|}{\textbf{$P_{ii}$}} &
  \multicolumn{1}{c|}{\textbf{}} &
  \multicolumn{1}{c|}{\textbf{\eqref{Eq:Pr:DTDetectability}}} &
  \textbf{} \\ \hline
\multirow{4}{*}{\textbf{\begin{tabular}[c]{@{}c@{}}FSF\\ Controller\\ Synthesis\end{tabular}}} &
  \textbf{Stability} &
  \textbf{\ref{Pr:DTStabilizationUnderFSF}} &
  \textbf{$D=0,w=0$} &
  \multicolumn{1}{l|}{\textbf{$A$}} &
  \multicolumn{1}{l|}{\textbf{$B$}} &
  \multicolumn{1}{l|}{\textbf{$L$}} &
  \multicolumn{1}{l|}{\textbf{$M$}} &
  \textbf{} &
  \multicolumn{1}{l|}{\textbf{$L_i,M_{ii}$}} &
  \multicolumn{1}{c|}{\textbf{}} &
  \multicolumn{1}{c|}{\textbf{\eqref{Eq:Pr:DTStabilizationUnderFSF}}} &
  \textbf{$K=LM^{-1}$} \\ \cline{2-13} 
 &
  \textbf{Dissipativity} &
  \textbf{\ref{Pr:DTDissipativationUnderFSF}} &
  \textbf{$D=0,Q<0,R=R^\T,w\rightarrow y$} &
  \multicolumn{1}{l|}{\textbf{$A,E,S,R$}} &
  \multicolumn{1}{l|}{\textbf{$B,C,F,Q$}} &
  \multicolumn{1}{l|}{\textbf{$L$}} &
  \multicolumn{1}{l|}{\textbf{$M$}} &
  \textbf{} &
  \multicolumn{1}{l|}{\textbf{$L_i,M_{ii}$}} &
  \multicolumn{1}{c|}{\textbf{}} &
  \multicolumn{1}{c|}{\textbf{\eqref{Eq:Pr:DTDissipativationUnderFSF}}} &
  \textbf{$K=LM^{-1}$} \\ \cline{2-13} 
 &
  \textbf{\color{red}$\H_2$-Norm} &
  \textbf{\ref{Pr:DTH2ControlUnderFSF}} &
  \textbf{$w\rightarrow y$} &
  \multicolumn{1}{l|}{\textbf{$A,C,E,\textcolor{red}{F}$}} &
  \multicolumn{1}{l|}{\textbf{$B,D$}} &
  \multicolumn{1}{l|}{\textbf{$L,Q$}} &
  \multicolumn{1}{l|}{\textbf{$M$}} &
  \textbf{$\gamma$} &
  \multicolumn{1}{l|}{\textbf{$L_i,Q_i,M_{ii}$}} &
  \multicolumn{1}{c|}{\textbf{$\gamma_i$}} &
  \multicolumn{1}{c|}{\textbf{\eqref{Eq:Pr:DTH2ControlUnderFSF}}} &
  \textbf{$K=LM^{-1}$} \\ \cline{2-13} 
 &
  \textbf{$H_\infty$-Norm} &
  \textbf{\ref{Pr:DTHInfControlUnderFSF}} &
  \textbf{$w\rightarrow y$} &
  \multicolumn{1}{l|}{\textbf{$A,C,E,F$}} &
  \multicolumn{1}{l|}{\textbf{$B,D$}} &
  \multicolumn{1}{l|}{\textbf{$L$}} &
  \multicolumn{1}{l|}{\textbf{$M$}} &
  \textbf{$\gamma$} &
  \multicolumn{1}{l|}{\textbf{$L_i,M_{ii}$}} &
  \multicolumn{1}{c|}{\textbf{$\gamma_i$}} &
  \multicolumn{1}{c|}{\textbf{\eqref{Eq:Pr:DTHInfControlUnderFSF}}} &
  \textbf{$K=LM^{-1}$} \\ \hline
\multirow{4}{*}{\textbf{\begin{tabular}[c]{@{}c@{}}Observer\\ Design\end{tabular}}} &
  \textbf{Stability} &
  \textbf{\ref{Pr:DTObserver}} &
  \textbf{$w=0$} &
  \multicolumn{1}{l|}{\textbf{$A$}} &
  \multicolumn{1}{l|}{\textbf{$C,D$}} &
  \multicolumn{1}{l|}{\textbf{$K$}} &
  \multicolumn{1}{l|}{\textbf{$P$}} &
  \textbf{} &
  \multicolumn{1}{l|}{\textbf{$K_i,P_{ii}$}} &
  \multicolumn{1}{c|}{\textbf{}} &
  \multicolumn{1}{c|}{\textbf{\eqref{Eq:Pr:DTObserver}}} &
  \textbf{$L=P^{-1}K$, \eqref{Eq:LuenbergerObserverParameters}} \\ \cline{2-13} 
 &
  \textbf{Dissipativity} &
  \textbf{\ref{Pr:DTDissipativeObserver}} &
  \textbf{$Q<0,R=R^\T,w\rightarrow z$} &
  \multicolumn{1}{l|}{\textbf{$A,E,S,R$}} &
  \multicolumn{1}{l|}{\textbf{$C,D,F,G,J,Q$}} &
  \multicolumn{1}{l|}{\textbf{$K$}} &
  \multicolumn{1}{l|}{\textbf{$P$}} &
  \textbf{} &
  \multicolumn{1}{l|}{\textbf{$K_i,P_{ii}$}} &
  \multicolumn{1}{c|}{\textbf{}} &
  \multicolumn{1}{c|}{\textbf{\eqref{Eq:Pr:DTDissipativeObserver}}} &
  \textbf{$L=P^{-1}K$, \eqref{Eq:LuenbergerObserverParameters}} \\ \cline{2-13} 
 &
  \textbf{\color{red}$\H_2$-Norm} &
  \textbf{\ref{Pr:DTH2Observer}} &
  \textbf{$w\rightarrow z$} &
  \multicolumn{1}{l|}{\textbf{$A,E,G,\textcolor{red}{J}$}} &
  \multicolumn{1}{l|}{\textbf{$C,D,F$}} &
  \multicolumn{1}{l|}{\textbf{$K,Q$}} &
  \multicolumn{1}{l|}{\textbf{$P$}} &
  \textbf{$\gamma$} &
  \multicolumn{1}{l|}{\textbf{$K_i,Q_i,P_{ii}$}} &
  \multicolumn{1}{c|}{\textbf{$\gamma_i$}} &
  \multicolumn{1}{c|}{\textbf{\eqref{Eq:Pr:DTH2Observer}}} &
  \textbf{$L=P^{-1}K$, \eqref{Eq:LuenbergerObserverParameters}} \\ \cline{2-13} 
 &
  \textbf{$H_\infty$-Norm} &
  \textbf{\ref{Pr:DTHInfObserver}} &
  \textbf{$w \rightarrow z$} &
  \multicolumn{1}{l|}{\textbf{$A,E,G,J$}} &
  \multicolumn{1}{l|}{\textbf{$C,D,F$}} &
  \multicolumn{1}{l|}{\textbf{$K$}} &
  \multicolumn{1}{l|}{\textbf{$P$}} &
  \textbf{$\gamma$} &
  \multicolumn{1}{l|}{\textbf{$K_i,P_{ii}$}} &
  \multicolumn{1}{c|}{\textbf{$\gamma_i$}} &
  \multicolumn{1}{c|}{\textbf{\eqref{Eq:Pr:DTHInfObserver}}} &
  \textbf{$L=P^{-1}K$, \eqref{Eq:LuenbergerObserverParameters}} \\ \hline
\multirow{4}{*}{\textbf{\begin{tabular}[c]{@{}c@{}}DOF\\ Controller\\ Synthesis\end{tabular}}} &
  \textbf{Stability} &
  \textbf{\ref{Pr:DTStabilizationUnderDOF}} &
  \textbf{$D=0,w=0$} &
  \multicolumn{1}{l|}{\textbf{$A$}} &
  \multicolumn{1}{l|}{\textbf{$B,C$}} &
  \multicolumn{1}{l|}{\textbf{$A_n,B_n,C_n,D_n$}} &
  \multicolumn{1}{l|}{\textbf{$X,Y$}} &
  \textbf{} &
  \multicolumn{1}{l|}{\textbf{$A_{n,i},B_{n,i},C_{n,i},D_{n,i},X_{ii},Y_{ii}$}} &
  \multicolumn{1}{c|}{\textbf{}} &
  \multicolumn{1}{c|}{\textbf{\eqref{Eq:Pr:DTStabilizationUnderDOF1},\eqref{Eq:Pr:DTStabilizationUnderDOF2}}} &
  \textbf{\eqref{Eq:COVsAuxToDOF}} \\ \cline{2-13} 
 &
  \textbf{Dissipativity} &
  \textbf{\ref{Pr:DTDissipativationUsingDOF}} &
  \textbf{$D=0,Q<0,R=R^\T,w\rightarrow z$} &
  \multicolumn{1}{l|}{\textbf{$A,E,G,J,R$}} &
  \multicolumn{1}{l|}{\textbf{$B,C,F,H,S,Q$}} &
  \multicolumn{1}{l|}{\textbf{$A_n,B_n,C_n,D_n$}} &
  \multicolumn{1}{l|}{\textbf{$X,Y$}} &
  \textbf{} &
  \multicolumn{1}{l|}{\textbf{$A_{n,i},B_{n,i},C_{n,i},D_{n,i},X_{ii},Y_{ii}$}} &
  \multicolumn{1}{c|}{\textbf{}} &
  \multicolumn{1}{c|}{\textbf{\eqref{Eq:Pr:DTDissipativationUsingDOF1},\eqref{Eq:Pr:DTDissipativationUsingDOF2}}} &
  \textbf{\eqref{Eq:COVsAuxToDOF}} \\ \cline{2-13} 
 &
  \textbf{\color{red}$\H_2$-Norm} &
  \textbf{\ref{Pr:DTH2ControlUnderDOF}} &
  \textbf{$D=0,w\rightarrow z$} &
  \multicolumn{1}{l|}{\textbf{$A,E,G,J$}} &
  \multicolumn{1}{l|}{\textbf{$B,C,F,H$}} &
  \multicolumn{1}{l|}{\textbf{$A_n,B_n,C_n,D_n,Q$}} &
  \multicolumn{1}{l|}{\textbf{$X,Y$}} &
  \textbf{$\gamma$} &
  \multicolumn{1}{l|}{\textbf{$A_{n,i},B_{n,i},C_{n,i},D_{n,i},Q_iX_{ii},Y_{ii}$}} &
  \multicolumn{1}{c|}{\textbf{$\gamma_i$}} &
  \multicolumn{1}{c|}{\textbf{\eqref{Eq:Pr:DTH2ControlUnderDOF1}, \eqref{Eq:Pr:DTH2ControlUnderDOF2}}} &
  \textbf{\eqref{Eq:COVsAuxToDOF}} \\ \cline{2-13} 
 &
  \textbf{$H_\infty$-Norm} &
  \textbf{\ref{Pr:DTHInfControlUnderDOF}} &
  \textbf{$D=0,w\rightarrow z$} &
  \multicolumn{1}{l|}{\textbf{$A,E,G,J$}} &
  \multicolumn{1}{l|}{\textbf{$B,C,F,H$}} &
  \multicolumn{1}{l|}{\textbf{$A_n,B_n,C_n,D_n$}} &
  \multicolumn{1}{l|}{\textbf{$X,Y$}} &
  \textbf{$\gamma$} &
  \multicolumn{1}{l|}{\textbf{$A_{n,i},B_{n,i},C_{n,i},D_{n,i},X_{ii},Y_{ii}$}} &
  \multicolumn{1}{c|}{\textbf{$\gamma_i$}} &
  \multicolumn{1}{c|}{\textbf{\eqref{Eq:Pr:DTHInfControlUnderDOF1}, \eqref{Eq:Pr:DTHInfControlUnderDOF2}}} &
  \textbf{\eqref{Eq:COVsAuxToDOF}} \\ \hline
\end{tabular}%
}
\end{table}

\paragraph*{\textbf{Notation}} Double and single subscripts used in local LMI variables (Column 10) denote variable matrices as follows: (1) Double subscripts, e.g., $P_{ii}$ is the $i$\tsup{th} diagonal block of $P\triangleq \diag(P_{ii}:i\in\N_N)$; and (2) Single subscript, e.g., $Q_i$ is the collection of blocks $L_i\triangleq \{L_{ii}\}\cup\{L_{ij}:j\in\N_{i-1}\}\cup\{L_{ji}:j\in\N_{i-1}\}$ taken from $L \triangleq [L_{ij}]_{i,j\in\N_N}$. The acronym BEW (Column 12-13) stands for ``Block Element-Wise'' (see Lms. \ref{Lm:NetworkMatrixProperties} and \ref{Lm:ColumnandRowPermutations}).
\end{landscape}


\subsection{Summary of Results (Table \ref{Tab:CT-LTILocalResultsSummary})}

Recall that, in Sec. \ref{Sec:BasicsOfCTLTISystems}, we formulated different analysis and control synthesis tasks of interest for CT-LTI systems \eqref{Eq:CTLTISystem} as LMI problems in Props. \ref{Pr:CTLTIStability}-\ref{Pr:HInfControlUnderDOF}. As pointed out in Sec. \ref{Sec:CTNetworkedSystem}, each of these LMI problems are directly applicable to globally analyze and synthesize controllers for CTNSs \eqref{Eq:CTNSDynamics}. Moreover, we showed in Section \ref{Sec:DecentralizedAnalysis} that, under certain conditions, such global LMI (and LME) conditions that arise in network settings can be enforced in a decentralized, compositional and possibly distributed manner using Alg. \ref{Alg:DistributedPositiveDefiniteness} (and \ref{Alg:DistributedEquality}) based on the established Lm. \ref{Lm:MainLemmaShort} (and \ref{Lm:matrixEquality}). For  example, to decentrally analyze/enforce the global LMI condition $W>0$ via Alg. \ref{Alg:DistributedPositiveDefiniteness}, $W$ requires to be a network matrix corresponding to the considered CTNS (see Def. \ref{Def:NetworkMatrices} and Lm. \ref{Lm:NetworkMatrixProperties}). In all, under certain conditions, different analysis and control synthesis tasks of interest for CTNSs \eqref{Eq:CTNSDynamics} can be executed in a decentralized manner via solving specifically formulated local versions of the LMI problems in Props. \ref{Pr:CTLTIStability}-\ref{Pr:HInfControlUnderDOF} using Algs. \ref{Alg:DistributedPositiveDefiniteness} and \ref{Alg:DistributedEquality}.

In the interest of brevity, rather than explicitly stating all such local (decentralized) versions of the LMI problems in Props. \ref{Pr:CTLTIStability}-\ref{Pr:HInfControlUnderDOF}, we have summarized their key details in Tab. \ref{Tab:CT-LTILocalResultsSummary}. In particular, Tab. \ref{Tab:CT-LTILocalResultsSummary}-Col. 4 provides the preliminary conditions required of the CTNS to apply the interested proposition (given in Col. 2) while Cols. 5-9 provide the assumptions required regarding various CTNS parameters and LMI variables to decentralize the interested proposition. Note that, while all the CTNS parameters and LMI variables are, by definition, network matrices (except for the scalar LMI variables given in Tab. \ref{Tab:CT-LTILocalResultsSummary}-Col. 9), some of them may need to be block diagonal network matrices in order to decentralize the LMI problem in the interested proposition. 

For example, consider the stability analysis (i.e., Prop. \ref{Pr:CTLTIStability} and Tab. \ref{Tab:CT-LTILocalResultsSummary}-Row 1) which depends on finding a matrix $P>0$ such that $W\equiv -A^\T P -PA >0$. In this case, to decentralize the enforcement of the LMI $W>0$, $W$ needs to be a network matrix. As $A$, by definition, is a general (non-block diagonal) network matrix, based on Lm. \ref{Lm:NetworkMatrixProperties}, for $W$ to be a network matrix, $P$ needs to be a block diagonal network matrix.

Table \ref{Tab:CT-LTILocalResultsSummary} Cols. 10-13 summarize the details of the decentralized LMIs that needs to be solved locally. In particular, the variables involved in the local LMI problem solved at the subsystem $\Sigma_i,i\in\N_N$ are given in Tab. \ref{Tab:CT-LTILocalResultsSummary} Cols. 10-11. According to the used notation (see below Tabs. \ref{Tab:CT-LTILocalResultsSummary}-\ref{Tab:DT-LTILocalResultsSummary}), note that, such a local LMI problem may involve finding: (1) the $i$-th diagonal block of a block diagonal network matrix (denoted using double subscripts, e.g., $P_{ii}$), (2) a certain collection of blocks (matrices) of a general network matrix (denoted using single subscripts, e.g., $L_i\triangleq \{L_{ii}\} \cup \{L_{ij}:j\in\N_{i-1}\}\cup\{L_{ji}:j\in\N_{i-1}\}$). On the other hand, the local LMIs and LMEs that need to be enforced at subsystem $\Sigma_i, i\in\N_N$ are given in Tab. \ref{Tab:CT-LTILocalResultsSummary} Cols. 12-13. Note that, such local LMIs and LMEs are basically the BEW forms of global LMIs and LMEs (given in Props. \ref{Pr:CTLTIStability}-\ref{Pr:HInfControlUnderDOF}) enforced through decentralized Algs. \ref{Alg:DistributedPositiveDefiniteness} and \ref{Alg:DistributedEquality}.

For example, consider the FSF stabilization (i.e., Prop. \ref{Pr:StabilizationUnderFSF} and Tab. \ref{Tab:CT-LTILocalResultsSummary}-Row 7) which depends on finding a matrix $M>0$ and $L$ such that $W\equiv 
-MA^\T-AM-L^\T B^\T-BL>0$ (where FSF controller gain $K=LM^{-1}$). According to Lm. \ref{Lm:NetworkMatrixProperties}, when $A,L$ are considered as general network matrices, we need to constrain $M,B$ to be block diagonal network matrices so as to ensure $W$ is a network matrix (to enable decentralization). Upon decentralization, the local LMI variables at subsystem $\Sigma_i, i\in\N_N$ are $M_{ii}$ and $L_i\triangleq \{L_{ii}\} \cup \{L_{ij}:j\in\N_{i-1}\}\cup\{L_{ji}:j\in\N_{i-1}\}$, which respectively corresponds to the global LMI variables $M$ and $L$. The local LMI that needs to be enforced is found when enforcing 
$$
W \equiv [-M_{ii}A_{ji}^\T - A_{ij}M_{jj} -L_{ji}^\T B_{jj}^\T - B_{ii}L_{ij}]_{i,j\in\N_N}>0
$$ 
via Alg. \ref{Alg:DistributedPositiveDefiniteness} (i.e., its Step 14: $\tilde{W}_{ii}>0$). Similarly, the local LME that needs to be enforced is found when enforcing 
$$
V \equiv K-LM^{-1} = [K_{ij}-L_{ij}M_{jj}^{-1}]_{i,j\in\N_N} = 0
$$
via Alg. \ref{Alg:DistributedEquality} (i.e., its Step 9: $K_{ij}=L_{ij}M_{jj}^{-1}$, $K_{ji}=L_{ji}M_{ii}^{-1}, \forall j\in\N_{i-1}$, $K_{ii} = L_{ii}M_{ii}^{-1}$). Finally, $K_{ij},K_{ji}$ and $K_{ii}$ matrices determined above (collectively denoted as $K_i$) give the necessary local FSF controller gains \eqref{Eq:CTLocalFSFController}.

\subsection{Stability Related Decentralized Results}

Formally, the aforementioned decentralized stability analysis and FSF stabilization techniques can be summarized respectively as in the following two theorems (correspond to Props. \ref{Pr:CTLTIStability} and \ref{Pr:StabilizationUnderFSF}, and Tab. \ref{Tab:CT-LTILocalResultsSummary}-Rows 1 and 7).

\begin{theorem} \label{Th:CTLTIStability}(Stability analysis)
The CTNS \eqref{Eq:CTNSDynamics} under $u(t)=\0$ and $w(t)=\0$ is stable if at each subsystem $\Sigma_i,i\in\N_N$, the problem  
\begin{equation}\label{Eq:Th:CTLTIStability}
    \mathbb{P}_1: \ \ \text{Find} \ \ P_{ii}\ \ \text{such that} \ \ P_{ii} > 0, \ \ \tilde{W}_{ii}>0,
\end{equation}
is feasible, where $\tilde{W}_{ii}$ is computed from Alg. \ref{Alg:DistributedPositiveDefiniteness} (Steps: 3-16) when analyzing $W = [W_{ij}]_{i,j\in\N_N}>0$ with  
\begin{equation}\label{Eq:Th:CTLTIStability2}
    W_{ij} = -A_{ji}^\T P_{jj} - P_{ii}A_{ij}.
\end{equation}
\end{theorem}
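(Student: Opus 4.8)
The plan is to reduce the decentralized condition to the centralized stability criterion of Prop. \ref{Pr:CTLTIStability} by restricting the Lyapunov certificate to a block-diagonal network matrix and then invoking the decentralized positive-definiteness test of Lm. \ref{Lm:MainLemmaShort}. First, under $u(t)=\0$ and $w(t)=\0$, the CTNS \eqref{Eq:CTNSDynamics} collapses to $\dot{x}(t)=Ax(t)$ with $A=[A_{ij}]_{i,j\in\N_N}$, so by Prop. \ref{Pr:CTLTIStability} it suffices to exhibit some $P>0$ with $-A^\T P - PA>0$. I would choose $P\triangleq \diag(P_{ii}:i\in\N_N)$ to be a block-diagonal network matrix; since a block-diagonal matrix is positive definite exactly when each diagonal block is, the constraint $P>0$ decouples into the local constraints $P_{ii}>0$ demanded in $\mathbb{P}_1$.

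The second step is to identify $W\triangleq -A^\T P - PA$ and compute its blocks. Because $P$ is block diagonal, $[A^\T P]_{ij}=A_{ji}^\T P_{jj}$ and $[PA]_{ij}=P_{ii}A_{ij}$, which yields exactly $W_{ij}=-A_{ji}^\T P_{jj}-P_{ii}A_{ij}$ as in \eqref{Eq:Th:CTLTIStability2}. By Lm. \ref{Lm:NetworkMatrixProperties} (Cases 1--2), the product of the general network matrix $A$ with the block-diagonal network matrix $P$ is again a network matrix, and so is their symmetric combination; hence $W$ is a network matrix of $\mathcal{G}_N$, which is precisely the structural hypothesis needed to decentralize the test $W>0$.

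I would then invoke Lm. \ref{Lm:MainLemmaShort} (equivalently Co. \ref{Co:MainLemmaMatrix}): $W>0$ iff $\tilde{W}_{ii}>0$ for all $i\in\N_N$, where the Schur-like quantities $\tilde{W}_{ii}$ obey the recursion \eqref{Eq:Lm:MainLemmaShort}. The crucial observation is that $\tilde{W}_{ii}$ is an affine function of $P_{ii}$ alone: the term $W_{ii}=-A_{ii}^\T P_{ii}-P_{ii}A_{ii}$ carries the dependence on $P_{ii}$, whereas every subtracted term $\tilde{W}_{ik}\tilde{W}_{kk}^{-1}\tilde{W}_{ik}^\T$ involves only indices $k\in\N_{i-1}$ and therefore only the blocks $\{P_{jj}:j<i\}$ already fixed by the preceding subsystems. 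Consequently $\tilde{W}_{ii}>0$ is a genuine LMI in the single unknown $P_{ii}$, and it is exactly the quantity assembled in Steps 3--16 of Alg. \ref{Alg:DistributedPositiveDefiniteness}. Solving the local feasibility problem $\mathbb{P}_1$ at each subsystem in index order thus simultaneously enforces $P_{ii}>0$ and $\tilde{W}_{ii}>0$.

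Finally, I would close the argument: if $\mathbb{P}_1$ is feasible at every $\Sigma_i$, the collected blocks give $P=\diag(P_{ii}:i\in\N_N)>0$ and, by Lm. \ref{Lm:MainLemmaShort}, $W=-A^\T P-PA>0$; Prop. \ref{Pr:CTLTIStability} then certifies global exponential stability. The main subtlety to flag is the greedy, sequential nature of the construction --- each $\tilde{W}_{ii}$ depends on the earlier choices $\{P_{jj}:j<i\}$ --- so the statement is only a sufficient (``if'') condition: restricting $P$ to be block diagonal and fixing earlier blocks before later ones may render some local problem infeasible even when a non-diagonal or jointly chosen $P$ would certify stability. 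Confirming that the $\tilde{W}_{ii}$ produced by Alg. \ref{Alg:DistributedPositiveDefiniteness} coincides with the recursion's $\tilde{W}_{ii}$ --- so that local LMI feasibility genuinely delivers the global certificate --- is the one step meriting care rather than routine calculation.
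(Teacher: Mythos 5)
Your proposal follows essentially the same route as the paper's own proof: define $P\triangleq\diag(P_{ii}:i\in\N_N)$, set $W\triangleq -A^\T P - PA$ so that $W_{ij}=-A_{ji}^\T P_{jj}-P_{ii}A_{ij}$, observe via Lm.~\ref{Lm:NetworkMatrixProperties} that $W$ is a symmetric network matrix, decentralize $W>0$ through Lm.~\ref{Lm:MainLemmaShort}/Co.~\ref{Co:MainLemmaMatrix} and Alg.~\ref{Alg:DistributedPositiveDefiniteness}, and close with Prop.~\ref{Pr:CTLTIStability}; your explicit remark that the restriction to block-diagonal $P$ and the sequential fixing of earlier blocks makes the condition only sufficient is a welcome addition the paper leaves implicit. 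One technical slip worth fixing: $\tilde{W}_{ii}$ is \emph{not} affine in $P_{ii}$, because the subtracted terms $\tilde{W}_{ik}\tilde{W}_{kk}^{-1}\tilde{W}_{ik}^\T$ involve the row-$i$ blocks $\tilde{W}_{ik}$, which inherit a $P_{ii}$-dependence from $W_{ik}=-A_{ki}^\T P_{kk}-P_{ii}A_{ik}$ whenever $A_{ik}\neq\0$; hence $\tilde{W}_{ii}$ is quadratic in $P_{ii}$. The local problem $\mathbb{P}_1$ is nevertheless an LMI, since $\tilde{W}_{ii}=W_{ii}-\tilde{W}_i\mathcal{D}_i\tilde{W}_i^\T>0$ with $\mathcal{D}_i>0$ fixed and $\tilde{W}_i$ affine in $P_{ii}$ converts by Schur's complement (Lm.~\ref{Lm:TwoByTwoBlockMatrixPDF}) into a linear condition---exactly the point the paper makes immediately after Th.~\ref{Th:StabilizationUnderFSF}. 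This mischaracterization does not affect the validity of your main chain of reasoning, which only needs feasibility, not affinity.
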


\begin{proof}
Let us define $P \triangleq \diag(P_{ii}:i\in\N_N)$ and $W \triangleq -A^\T P - P A$. According to Prop. \ref{Pr:CTLTIStability}, we need to find $P>0$ such that $W>0$ to establish the global stability. Under the above definitions, it easy to see that $W_{ij} = -A_{ji}^\T P_{jj} - P_{ii}A_{ij}$, i.e., \eqref{Eq:Th:CTLTIStability2}. Note also that $W$ is a symmetric network matrix (see Def. \ref{Def:NetworkMatrices}). Therefore, $W>0$ can be tested in a decentralized manner by applying Alg. \ref{Alg:DistributedPositiveDefiniteness} and testing $\tilde{W}_{ii}>0$ at each subsystem $\Sigma_i,i\in\N_N$. Note also that $P_{ii}>0, \forall i\in\N_N \implies P>0$. Therefore, the existence of a matrix $P>0$ such that $W>0$ can be evaluated in a decentralized manner by solving $\mathbb{P}_1$ \eqref{Eq:Th:CTLTIStability} at each subsystem $\Sigma_i, i\in\N_N$. 
\end{proof}

\begin{theorem}\label{Th:StabilizationUnderFSF}
(FSF Stabilization) The CTNS \eqref{Eq:CTNSDynamics} (where $B$ is block diagonal) under $D=\0$, $w(t)=\0$ and local FSF control \eqref{Eq:CTLocalFSFController} is stable if at each subsystem $\Sigma_i,i\in\N_N$, the problem    
\begin{equation}\label{Eq:Th:StabilizationUnderFSF}
    \mathbb{P}_2: \ \ \text{Find} \ \ M_{ii},\ \ L_i\ \ \text{such that} \ \ M_{ii} > 0, \ \ \tilde{W}_{ii}>0, 
\end{equation}
is feasible, where $\tilde{W}_{ii}$ is computed from Alg. \ref{Alg:DistributedPositiveDefiniteness} (Steps: 3-16) when enforcing $W = [W_{ij}]_{i,j\in\N_N}>0$ with 
\begin{equation}\label{Eq:Th:StabilizationUnderFSF2}
    W_{ij} = -M_{ii}A_{ji}^\T - A_{ij}M_{jj} -L_{ji}^\T B_{jj}^\T - B_{ii}L_{ij}.
\end{equation}
The local FSF controller gains $K_i$ are computed using $M_{ii}$ and $L_i$, from Alg. \ref{Alg:DistributedEquality} (Steps 3-10) when enforcing $V = [V_{ij}]_{i,j\in\N_N}=0$ with 
\begin{equation}\label{Eq:Th:StabilizationUnderFSF3}
    V_{ij} = K_{ij}-L_{ij}M_{jj}^{-1}.
\end{equation}
\end{theorem}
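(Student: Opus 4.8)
The plan is to mirror the proof of Theorem \ref{Th:CTLTIStability} verbatim in structure, but now invoking Proposition \ref{Pr:StabilizationUnderFSF} (the centralized FSF stabilization result) in place of Proposition \ref{Pr:CTLTIStability}. First I would assemble the global LMI variables from the local ones: define the block-diagonal matrix $M \triangleq \diag(M_{ii}:i\in\N_N)$ and collect the local gain blocks $L_i$ into a single general network matrix $L=[L_{ij}]_{i,j\in\N_N}$. By Proposition \ref{Pr:StabilizationUnderFSF}, the closed-loop CTNS \eqref{Eq:CTLTIUnderFSF} under $D=w(t)=\0$ is stabilized iff there exist $M>0$ and $L$ such that $W \triangleq -MA^\T - AM - L^\T B^\T - BL > 0$, with the FSF gain recovered as $K=LM^{-1}$. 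A direct block-wise expansion of $W$, using the block-diagonality of $M$ and $B$ to collapse the summations, yields precisely $W_{ij} = -M_{ii}A_{ji}^\T - A_{ij}M_{jj} - L_{ji}^\T B_{jj}^\T - B_{ii}L_{ij}$, i.e.\ \eqref{Eq:Th:StabilizationUnderFSF2}.

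The key structural step is to certify that $W$ is a symmetric network matrix so that Algorithm \ref{Alg:DistributedPositiveDefiniteness} applies. Here I would appeal to Lemma \ref{Lm:NetworkMatrixProperties}: since $A$ and $L$ are general network matrices while $M$ and $B$ are block-diagonal network matrices, Case 2 of that lemma makes each of $MA^\T$, $AM$, $L^\T B^\T$, $BL$ a network matrix, and Case 1 makes their linear combination $W$ a network matrix; symmetry is clear by construction. This is exactly where the hypothesis that $B$ is block diagonal (together with the forced block-diagonality of $M$) is indispensable. With $W$ a symmetric network matrix, Algorithm \ref{Alg:DistributedPositiveDefiniteness} (equivalently Corollary \ref{Co:MainLemmaMatrix}) reduces the test $W>0$ to verifying $\tilde{W}_{ii}>0$ at each $\Sigma_i$. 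Combined with the fact that $M_{ii}>0,\forall i\in\N_N \implies M>0$ for block-diagonal $M$, the feasibility of $\mathbb{P}_2$ \eqref{Eq:Th:StabilizationUnderFSF} at every subsystem furnishes an admissible $M>0$ and $L$, hence stability. Note this yields only sufficiency, since restricting $M$ to be block diagonal discards the ``only if'' direction of Proposition \ref{Pr:StabilizationUnderFSF}.

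For the controller recovery, I would observe that the block-diagonality of $M$ makes $M^{-1}=\diag(M_{ii}^{-1}:i\in\N_N)$ block diagonal as well, so the global relation $K=LM^{-1}$ reads block-wise as $K_{ij}=L_{ij}M_{jj}^{-1}$. This is exactly the linear matrix equality $V \equiv K - LM^{-1} = 0$ with $V_{ij}$ given by \eqref{Eq:Th:StabilizationUnderFSF3}, which has the form \eqref{Eq:LinearMatrixEqualityCondition} with identity (block-diagonal, nonsingular) pre- and post-factors and the controllable variable $K$ a general network matrix. Lemma \ref{Lm:matrixEquality} (Case 1) therefore guarantees that Algorithm \ref{Alg:DistributedEquality} recovers the unique local gains $K_i$ in a decentralized fashion. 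I expect the only nontrivial point to be the network-matrix verification of $W$ in the second paragraph; once that is settled, everything else is a routine application of the machinery established in Section \ref{Sec:DecentralizedAnalysis}.
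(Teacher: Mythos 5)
Your proposal is correct and follows essentially the same route as the paper's own proof: define $M \triangleq \diag(M_{ii}:i\in\N_N)$ and the global $L$, invoke Prop.~\ref{Pr:StabilizationUnderFSF} to reduce stabilization to $W>0$ with $K=LM^{-1}$, verify $W$ is a symmetric network matrix so Alg.~\ref{Alg:DistributedPositiveDefiniteness} decentralizes the test, and recover the gains via Lm.~\ref{Lm:matrixEquality} and Alg.~\ref{Alg:DistributedEquality}. Your only additions---spelling out the network-matrix verification through Lm.~\ref{Lm:NetworkMatrixProperties} (Cases 1--2) and noting explicitly that restricting $M$ to block-diagonal form sacrifices the ``only if'' direction---are refinements the paper leaves implicit, not a different argument.
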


\begin{proof}
Let us define $M \triangleq \diag(M_{ii}:i\in\N_N)$, $W \triangleq -MA^\T-AM-L^\T B^\T-BL$ and $V\triangleq K-LM^{-1}$. According to Prop. \ref{Pr:StabilizationUnderFSF}, we need to find $M>0$ and $L$ such that $W>0$ to ensure the closed loop stability, and then, the controller gains can be found using $K=LM^{-1}$. Under the above definitions, it easy to see that $W_{ij} = -M_{ii}A_{ji}^\T - A_{ij}M_{jj} -L_{ji}^\T B_{jj}^\T - B_{ii}L_{ij}$, i.e., \eqref{Eq:Th:StabilizationUnderFSF2} and $V_{ij}=K_{ij}-L_{ij}M_{jj}^{-1}$, i.e., \eqref{Eq:Th:StabilizationUnderFSF3}. Note also that $W$ is a symmetric network matrix (see Def. \ref{Def:NetworkMatrices}) while $V$ is of the form \eqref{Eq:LinearMatrixEqualityCondition} and satisfies the conditions stated in Lm. \ref{Lm:matrixEquality}. Consequently, $W>0$ can be enforced in a decentralized manner by applying Alg. \ref{Alg:DistributedPositiveDefiniteness} and enforcing $\tilde{W}_{ii}>0$ at each subsystem $\Sigma_i,i\in\N_N$. Since $M_{ii}>0, \forall i\in\N_N \implies M>0$, the existence of a matrix $M>0$ and $L$ such that $W>0$ can be evaluated in a decentralized manner by solving $\mathbb{P}_2$ \eqref{Eq:Th:StabilizationUnderFSF}. Subsequently, $V=0$ can be enforced in a decentralized manner by applying Alg. \ref{Alg:DistributedPositiveDefiniteness} to determine the local FSF controller gains $K_i$ at each subsystem $\Sigma_i,i\in\N_N$.
\end{proof}

Note that, the problems $\mathbb{P}_1$ \eqref{Eq:Th:CTLTIStability} and $\mathbb{P}_2$ \eqref{Eq:Th:StabilizationUnderFSF} are LMI problems due to the applicability of Lm. \ref{Lm:TwoByTwoBlockMatrixPDF} to simplify the matrix inequality $\tilde{W}_{ii}>0$ in Alg. \ref{Alg:DistributedPositiveDefiniteness} (Step 14). Therefore, such problems can be solved conveniently and efficiently using readily available LMI software toolboxes \cite{Boyd1994}.

To conclude the discussion on stability-based decentralized results, in what follows, we provide two more theorems, respectively, regarding decentralized stable observer design and DOF stabilization techniques (correspond to Props. \ref{Pr:Observer} and  \ref{Pr:StabilizationUnderDOF}, and Tab. \ref{Tab:CT-LTILocalResultsSummary}-Rows 11 and 15).

\begin{theorem}\label{Th:Observer}
(Stable observer design) For the CTNS \eqref{Eq:CTNSDynamics} (where $C,D$ are block diagonal) under $w(t)=\0$, 
the local Luenberger observers \eqref{Eq:CTLocalObserver} render the state estimation error dynamics \eqref{Eq:LuenbergerObserver} stable if at each subsystem $\Sigma_i,i\in\N_N$, the problem
\begin{equation}\label{Eq:Th:Observer}
    \mathbb{P}_3: \ \ \text{Find} \ \ P_{ii},\ \ K_i\ \ \text{such that} \ \ P_{ii} > 0, \ \ \tilde{W}_{ii}>0, 
\end{equation}
is feasible, where $\tilde{W}_{ii}$ is computed from Alg. \ref{Alg:DistributedPositiveDefiniteness} (Steps: 3-16) when enforcing $W = [W_{ij}]_{i,j\in\N_N}>0$ with 
\begin{equation}\label{Eq:Th:Observer2}
    W_{ij} = -A_{ji}^\T P_{jj} - P_{ii}A_{ij} + C_{ii}^\T K_{ji}^\T + K_{ij}C_{jj}.
\end{equation}
The local Luenberger observer parameters $L_i, \hat{A}_{i}$ and $\hat{B}_i$ \eqref{Eq:CTLocalObserver} are computed using $P_{ii}$ and $K_i$, from Alg. \ref{Alg:DistributedEquality} (Steps 3-10) when enforcing $V^{(k)} = [V_{ij}^{(k)}]_{i,j\in\N_N}=0$, for $k=1,2,3,$ where  
\begin{equation}\label{Eq:Th:Observer3}
\begin{aligned}
    V_{ij}^{(1)} =&\ L_{ij}-P_{ii}^{-1} K_{ij},\\
    V_{ij}^{(2)} =&\ \hat{A}_{ij} - A_{ij} + L_{ij}C_{jj},\\
    V_{ij}^{(3)} =&\ \hat{B}_{ij} - B_{ij} + L_{ij}D_{jj}.
\end{aligned}
\end{equation}
\end{theorem}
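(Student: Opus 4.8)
The plan is to lift the single-system Luenberger-observer result (Prop.~\ref{Pr:Observer}) to the network level using exactly the template already validated for Thms.~\ref{Th:CTLTIStability}--\ref{Th:StabilizationUnderFSF}. First I would introduce the global block-diagonal variable $P \triangleq \diag(P_{ii}:i\in\N_N)$ and the auxiliary gain $K \triangleq [K_{ij}]_{i,j\in\N_N}$, then define the global matrices $W \triangleq -A^\T P - PA + C^\T K^\T + KC$ together with the three equality matrices $V^{(1)} \triangleq L - P^{-1}K$, $V^{(2)} \triangleq \hat{A} - A + LC$ and $V^{(3)} \triangleq \hat{B} - B + LD$. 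By Prop.~\ref{Pr:Observer}, applied at the network level to the global observer \eqref{Eq:CTGlobalObserver} obtained by composing the local observers \eqref{Eq:CTLocalObserver}, the networked estimation-error dynamics are stable iff there exist $P>0$ and $K$ with $W>0$, after which $L=P^{-1}K$ and the observer matrices are recovered from \eqref{Eq:LuenbergerObserverParameters} as $\hat{A}=A-LC$ and $\hat{B}=B-LD$; these recovery relations are precisely $V^{(1)}=V^{(2)}=V^{(3)}=0$.

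The main computational step is to confirm the claimed block expressions. Exploiting that $P$, $C$ and $D$ are block diagonal, direct block multiplication gives $(-A^\T P)_{ij}=-A_{ji}^\T P_{jj}$, $(-PA)_{ij}=-P_{ii}A_{ij}$, $(C^\T K^\T)_{ij}=C_{ii}^\T K_{ji}^\T$ and $(KC)_{ij}=K_{ij}C_{jj}$, which sum to \eqref{Eq:Th:Observer2}; the same block-diagonality collapses the products in $V^{(1)},V^{(2)},V^{(3)}$ to the entrywise forms in \eqref{Eq:Th:Observer3}. Symmetry of $W$ is immediate since $W^\T = -PA-A^\T P + KC + C^\T K^\T = W$, and by Lm.~\ref{Lm:NetworkMatrixProperties} (transpose, scalar combination, and multiplication by the block-diagonal network matrices $P,C$) $W$ is a network matrix of $\mathcal{G}_N$.

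With these structural facts in hand, the conclusion follows by invoking the decentralized machinery. Since $W$ is a symmetric network matrix, $W>0$ is enforced block-by-block through Alg.~\ref{Alg:DistributedPositiveDefiniteness} by requiring $\tilde{W}_{ii}>0$ at each $\Sigma_i$; because $P_{ii}>0,\ \forall i\in\N_N \implies P>0$, feasibility of $\mathbb{P}_3$ \eqref{Eq:Th:Observer} at every subsystem is equivalent to the existence of the global pair $(P,K)$, which establishes stability. Each $V^{(k)}=0$ is a linear matrix equality of the form \eqref{Eq:LinearMatrixEqualityCondition} with $A=B=\I$ (block diagonal with nonsingular blocks) and a network-matrix right-hand side, so Lm.~\ref{Lm:matrixEquality} (case~1) guarantees a unique solution obtainable distributedly via Alg.~\ref{Alg:DistributedEquality}, yielding $L_i$, $\hat{A}_i$ and $\hat{B}_i$.

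I expect the only delicate point to be bookkeeping the \emph{order} in which the three equalities are solved: $V^{(1)}=0$ must be enforced first to obtain $L$, since $V^{(2)}$ and $V^{(3)}$ depend on $L$ through $LC$ and $LD$. I would therefore state explicitly that Alg.~\ref{Alg:DistributedEquality} is run sequentially for $k=1,2,3$, and note that block-diagonality of $C,D$ is exactly what keeps $LC$ and $LD$ network matrices, so the later LMEs still satisfy Lm.~\ref{Lm:matrixEquality}. Everything else reduces to the routine block-matrix algebra already carried out in the proofs of Thms.~\ref{Th:CTLTIStability} and \ref{Th:StabilizationUnderFSF}.
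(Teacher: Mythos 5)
Your proof is correct and takes essentially the same route as the paper's: the paper's (deliberately terse) proof defines exactly the same global quantities $P \triangleq \diag(P_{ii}:i\in\N_N)$, $W \triangleq -A^\T P - PA + C^\T K^\T + KC$, and $V^{(1)},V^{(2)},V^{(3)}$, then invokes Prop.~\ref{Pr:Observer} and proceeds as in the proof of Th.~\ref{Th:StabilizationUnderFSF}, which is what you reconstruct. Your additional remarks---the explicit block computations verifying \eqref{Eq:Th:Observer2} and \eqref{Eq:Th:Observer3} under block-diagonal $P,C,D$, and the observation that the LMEs must be enforced sequentially ($V^{(1)}=0$ first, since $V^{(2)},V^{(3)}$ depend on $L$)---simply fill in details the paper omits.
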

\begin{proof}
The proof starts by defining  $P \triangleq \diag(P_{ii}:i\in\N_N)$, $W \triangleq -A^\T P - PA + C^\T K + KC$, $V^{(1)} \triangleq L-P^{-1}K$, $V^{(2)} \triangleq \hat{A}-A+LC$, $V^{(3)} \triangleq \hat{B}-B+LD$ and proceeds using Prop. \ref{Pr:Observer} in a similar manner to the proof of Th. \ref{Th:StabilizationUnderFSF}. Therefore, here we omit providing explicit details. 
\end{proof}

\begin{theorem}\label{Th:StabilizationUnderDOF}
(DOF Stabilization) The CTNS \eqref{Eq:CTNSDynamics} (where $B,C$ are block diagonal) under $D=\0$, $w(t)=\0$ and local DOF control \eqref{Eq:CTLocalDOFController} is stable if at each subsystem $\Sigma_i,i\in\N_N$, the problem    
\begin{equation}\label{Eq:Th:StabilizationUnderDOF}
\begin{aligned}
    \mathbb{P}_4: \ \ \ \ \ \ \text{Find}& \ \ X_{ii},\ Y_{ii},\ A_{n,i},\ B_{n,i},\ C_{n,i},\ D_{n,i} \\ 
    \text{such that}& \ \ X_{ii} > 0,\ Y_{ii} > 0,\ \tilde{W}_{ii}^{(1)}>0,\ \tilde{W}_{ii}^{(2)}>0,  
\end{aligned}
\end{equation}
is feasible, where $\tilde{W}_{ii}^{(k)}$ is computed from Alg. \ref{Alg:DistributedPositiveDefiniteness} (Steps: 3-16) when enforcing $W^{(k)} = [W_{ij}^{(k)}]_{i,j\in\N_N}>0$, for $k=1,2,$ with 
\begin{equation}\label{Eq:Th:StabilizationUnderDOF2}
\begin{aligned}
    &W_{ij}^{(1)} = \bm{Y_{ii}e_{ij} & e_{ij} \\ e_{ij} & X_{ij}e_{ij}},\ \ \ \ 
    W_{ij}^{(2)}=\\
    &\bm{-\H_s(A_{ij}Y_{jj}+B_{ii}C_{n,ij}) & -A_{ij} - B_{ii}D_{n,ij}C_{jj} - A_{n,ji}^\T \\ 
    \star & -\H_s(X_{ii}A_{ij} + B_{n,ij}C_{jj})}. 
\end{aligned}
\end{equation}
The local DOF controller parameters $A_{c,i},B_{c,i},C_{c,i}$ and $D_{c,i}$ are computed using $X_{ii},Y_{ii},A_{n,i},B_{n,i},C_{n,i}$ and $D_{n,i}$, from Alg. \ref{Alg:DistributedEquality} (Steps 3-10) when enforcing $V^{(k)} = [V_{ij}^{(k)}]_{i,j\in\N_N}=0$, for $k=1,2,3,4$ with 
\begin{equation}\label{Eq:Th:StabilizationUnderDOF3}
\begin{aligned}
    V_{ij}^{(1)} =&\ D_{c,ij}-D_{n,ij},\\
    V_{ij}^{(2)} =&\ C_{c,ij}-(C_{n,ij}-D_{n,ij}C_{jj}Y_{jj})N^{-\T}_{jj},\\
    V_{ij}^{(3)} =&\ B_{c,ij}-M^{-1}_{ii}(B_{n,ij}-X_{ii}B_{ii}D_{n,ij}),\\
    V_{ij}^{(4)} =&\ A_{c,ij}-M^{-1}_{ii}(A_{n,ij}-B_{n,ij}C_{jj}Y_{jj}-X_{ii}B_{ii}C_{n,ij}\\
    &\ -X_{ii}(A_{ij}-B_{ii}D_{n,ij}C_{jj})Y_{jj})N^{-\T}_{jj},
\end{aligned}
\end{equation}
where $M_{ii}$ and $N_{ii}, i\in\N_N$ are any two matrices that satisfy 
\begin{equation}\label{Eq:Th:StabilizationUnderDOF4}
    X_{ii}Y_{ii} + M_{ii}N_{ii}^\T = \I.
\end{equation}
\end{theorem}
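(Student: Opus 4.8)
The plan is to mirror the proofs of Th.~\ref{Th:StabilizationUnderFSF} and Th.~\ref{Th:Observer}: promote the local blocks to global network matrices, invoke the centralized result Prop.~\ref{Pr:StabilizationUnderDOF}, and then decentralize the resulting global conditions through Algs.~\ref{Alg:DistributedPositiveDefiniteness} and \ref{Alg:DistributedEquality}. First I would set $X\triangleq\diag(X_{ii}:i\in\N_N)$ and $Y\triangleq\diag(Y_{ii}:i\in\N_N)$ (block-diagonal network matrices) and assemble $A_n,B_n,C_n,D_n$ as the general network matrices whose blocks are the local variables $A_{n,ij},B_{n,ij},C_{n,ij},D_{n,ij}$. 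By Prop.~\ref{Pr:StabilizationUnderDOF}, the closed-loop CTNS \eqref{Eq:CTNSDynamics} under local DOF control \eqref{Eq:CTLocalDOFController} is stable iff $X,Y>0$ and the two matrix inequalities \eqref{Eq:Pr:StabilizationUnderDOF1}, \eqref{Eq:Pr:StabilizationUnderDOF2} hold, after which the controller is recovered via the CoVs \eqref{Eq:CoVsXYToMN}, \eqref{Eq:COVsAuxToDOF}. Writing $W^{(1)}$ and $W^{(2)}$ for these two inequality matrices, the task reduces to showing that each $W^{(k)}>0$ can be tested decentrally and each CoV enforced decentrally.

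The key structural step, and the main point of departure from Th.~\ref{Th:StabilizationUnderFSF}, is that $W^{(1)}$ and $W^{(2)}$ are $2\times2$ block matrices whose entries are themselves $N\times N$ block matrices rather than single network matrices. I would therefore invoke the block-element-wise (BEW) rearrangement of Lm.~\ref{Lm:ColumnandRowPermutations}: because $B,C$ (and $X,Y$) are block diagonal, each entry of $W^{(1)},W^{(2)}$ is a network matrix by Lm.~\ref{Lm:NetworkMatrixProperties} (e.g.\ $BC_n$, $XA$, and $BD_nC$ are products of block-diagonal network matrices with network matrices), so $W^{(k)}>0\iff\text{BEW}(W^{(k)})>0$, and the latter is a genuine network matrix. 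A short direct substitution, collapsing the sums in $AY$, $BC_n$, $XA$, $B_nC$, and $BD_nC$ using block-diagonality and employing the convention $\H_s(M_{ij})\triangleq M_{ij}+M_{ji}^\T$, then shows the $(i,j)$ block of $\text{BEW}(W^{(k)})$ is exactly \eqref{Eq:Th:StabilizationUnderDOF2}. With the BEW forms identified as symmetric network matrices, Alg.~\ref{Alg:DistributedPositiveDefiniteness} applies, giving $W^{(k)}>0\iff\tilde{W}^{(k)}_{ii}>0,\ \forall i\in\N_N$; combined with $X_{ii},Y_{ii}>0\ \forall i\implies X,Y>0$, feasibility of $\mathbb{P}_4$ \eqref{Eq:Th:StabilizationUnderDOF} at every subsystem becomes equivalent to the hypotheses of Prop.~\ref{Pr:StabilizationUnderDOF}.

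For controller recovery I would first solve \eqref{Eq:Th:StabilizationUnderDOF4}, the localized form of \eqref{Eq:CoVsXYToMN}, for block-diagonal $M\triangleq\diag(M_{ii})$ and $N\triangleq\diag(N_{ii})$ with nonsingular blocks (available by LU-decomposition whenever $\I-X_{ii}Y_{ii}$ is nonsingular). Substituting these block-diagonal factors into the CoVs \eqref{Eq:COVsAuxToDOF} yields the four equalities \eqref{Eq:Th:StabilizationUnderDOF3}; since each controller matrix $A_c,B_c,C_c,D_c$ is the sought variable while the accompanying constant terms are network matrices (products of $X,Y,M^{-1},N^{-\T},B,C$ with network matrices, using block-diagonality throughout), every $V^{(k)}=0$ is of the form \eqref{Eq:LinearMatrixEqualityCondition} and meets the conditions of Lm.~\ref{Lm:matrixEquality}, hence is enforceable blockwise by Alg.~\ref{Alg:DistributedEquality}. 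The main obstacle is purely the bookkeeping of the $2\times2$-of-$N\times N$ block structure: I must track it carefully to confirm that the BEW blocks coincide with \eqref{Eq:Th:StabilizationUnderDOF2}--\eqref{Eq:Th:StabilizationUnderDOF3} and that block-diagonality of $B,C,X,Y,M,N$ is precisely what renders the BEW forms network matrices and the LMEs Alg.~\ref{Alg:DistributedEquality}-compatible, so that the decentralization is legitimate.
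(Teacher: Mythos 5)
Your proposal is correct and follows essentially the same route as the paper's proof: define the global block-diagonal $X,Y,M,N$, invoke Prop.~\ref{Pr:StabilizationUnderDOF}, pass to the BEW forms via Lm.~\ref{Lm:ColumnandRowPermutations} (using block-diagonality of $B,C,X,Y$ together with Lm.~\ref{Lm:NetworkMatrixProperties} to make $W^{(1)},W^{(2)}$ network matrices), and enforce the LMIs/LMEs through Algs.~\ref{Alg:DistributedPositiveDefiniteness} and \ref{Alg:DistributedEquality}. The only nitpick is that feasibility of $\mathbb{P}_4$ at all subsystems \emph{implies} (rather than is equivalent to) the hypotheses of Prop.~\ref{Pr:StabilizationUnderDOF}, since $X,Y$ are restricted to block-diagonal form --- which is exactly why the theorem claims only sufficiency, the direction your argument establishes.
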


\begin{proof}
The proof starts by defining (inspired by \eqref{Eq:Pr:StabilizationUnderDOF1}, \eqref{Eq:Pr:StabilizationUnderDOF2}, \eqref{Eq:COVsAuxToDOF} and \eqref{Eq:CoVsXYToMN}): 
\begin{equation*}
    \begin{aligned}
        &X \triangleq \diag(X_{ii}:i\in\N_N),\ \ Y \triangleq \diag(Y_{ii}:i\in\N_N),\\
        &W^{(1)} \triangleq \text{BEW}\left(\bm{Y & \I \\ \I & X}\right), \ \ W^{(2)} \triangleq \\
        &\text{BEW}\left(\bm{-\H_s(AY+BC_n) & -A - BD_nC - A_n^\T \\ \star & -\H_s(XA + B_nC)}\right),\\
        &V^{(1)} \triangleq D_c - D_n, \ \ 
        V^{(2)} \triangleq C_c - (C_n-D_nCY)N^{-\T},\\ 
        &V^{(3)} \triangleq B_c - M^{-1}(B_n-XBD_n),\ \ V^{(4)} \triangleq \\ 
        &A_c - M^{-1}(A_n-B_nCY-XBC_n-X(A-BD_nC)Y)N^{-\T},\\ 
        &M \triangleq \diag(M_{ii}:i\in\N_N),\ \ N \triangleq \diag(Y_{ii}:i\in\N_N).
    \end{aligned}
\end{equation*}
and proceeds using Prop. \ref{Pr:StabilizationUnderDOF} in a similar manner to the proof of Th. \ref{Th:StabilizationUnderFSF} (except for the use of Lm. \ref{Lm:ColumnandRowPermutations}: $W=\text{BEW}(\Psi)>0 \iff \Psi>0$). Hence, here we omit providing explicit details. 
\end{proof}

\begin{figure*}[!b]
    \centering
    \hrulefill
    \begin{equation}\label{Eq:Th:DissipativationUnderFSF2}
    W_{ij} = \bm{
    -\H_s(A_{ij}M_{jj}+B_{ii}L_{ij}) & -E_{ij}+M_{ii}C_{ii}^\T S_{ij} & M_{ii}C^\T_{ii}e_{ij} \\ 
    \star & \H_e(F_{ii}^\T S_{ij})+R_{ij} & F_{ii}^\T e_{ij} \\
    \star & \star & -Q_{ii}^{-1}e_{ij} 
    }
    \end{equation}
    \begin{equation} \label{Eq:Th:DissipativeObserver2}
    W_{ij} = 
    \bm{
    -\H_s(P_{ii}A_{ij}-K_{ij}C_{jj}) & -P_{ii}E_{ij}+K_{ij}F_{jj}+G_{ii}^\T S_{ij} & G_{ii}^\T e_{ij} \\ 
    \star & \H_s(J_{ii}^\T S_{ij})+R_{ij} & J_{ii}^\T e_{ij} \\ \star & \star & -Q_{ii}^{-1} e_{ij}
    }
    \end{equation}
    \begin{equation}\label{Eq:Th:DissipativationUsingDOF2}
    \scriptsize
    W_{ij}^{(2)} = 
    \bm{
    -\H_s(A_{ij}Y_{jj}+B_{ii}C_{n,ij}) & -A_{ij}-B_{ii}D_{n,ij}C_{jj}-A_{n,ji}^\T & -E_{ij}-B_{ii}D_{n,ij}F_{jj}+(Y_{ii}G_{ji}^\T+C_{n,ji}^\T H_{jj}^\T)S_{jj} & Y_{ii}G_{ji}^\T+C_{n,ji}^\T H_{jj}^\T\\
    \star & -\H_s(X_{ii}A_{ij}+B_{n,ij}C_{jj}) & -X_{ii}E_{ij}-B_{n,ij}F_{jj}+(G_{ji}^\T+C_{ii}^\T D_{n,ji}^\T H_{jj}^\T)S_{jj} & G_{ji}^\T+C_{ii}^\T D_{n,ji}^\T H_{jj}^\T\\
    \star & \star & \H_s((J_{ji}^\T+F_{ii}^\T D_{n,ji}^\T H_{jj}^\T)S_{jj})+R_{ij} & J_{ji}^\T+F_{ii}^\T D_{n,ji}^\T H_{jj}^\T\\
    \star & \star & \star & -Q_{ii}^{-1}e_{ij}
    }
    \end{equation}
    \begin{equation}\label{Eq:Th:DissipativationUsingDOFStep1}
    \scriptsize
        W^{(2)} \triangleq \text{BEW}\Big( 
    \bm{
    -\H_s(AY+BC_n) & -A-BD_nC-A_n^\T & -E-BD_nF+(YG^\T+C_n^\T H^\T)S & YG^\T+C_n^\T H^\T\\
    \star & -\H_s(XA+B_nC) & -XE-B_nF+(G^\T+C^\T D_n^\T H^\T)S & G^\T+C^\T D_n^\T H^\T\\
    \star & \star & \H_s((J^\T+F^\T D_n^\T H^\T)S)+R & J^\T+F^\T D_n^\T H^\T\\
    \star & \star & \star & -Q^{-1}
    }\Big)
    \end{equation}
\end{figure*}

\subsection{Dissipativity Related Decentralized Results}

In this subsection, analogous to Theorems \ref{Th:CTLTIStability}-\ref{Th:StabilizationUnderDOF}, we provide $(Q,S,R)$-dissipativity based results on:
\begin{enumerate}
    \item decentralized $(Q,S,R)$-dissipativity analysis,
    \item decentralized FSF $(Q,S,R)$-dissipativation,
    \item decentralized $(Q,S,R)$-dissipative observer design and 
    \item decentralized DOF $(Q,S,R)$-dissipativation,
\end{enumerate}
that respectively correspond to Props. \ref{Pr:CTLTIQSRDissipativity}, \ref{Pr:DissipativationUnderFSF}, \ref{Pr:DissipativeObserver} and \ref{Pr:DissipativationUsingDOF} (Tab. \ref{Tab:CT-LTILocalResultsSummary}-Rows 2, 8, 12 and 16). In what follows, regarding the given specification matrices $Q,S,R$, it is assumed that: (1) $Q$ is a block diagonal network matrix, (2) $-Q>0$, and (3) $R=R^\T$.

\begin{theorem} \label{Th:CTLTIQSRDissipativity}(Dissipativity analysis)
The CTNS \eqref{Eq:CTNSDynamics} (where $C,D$ are block diagonal) under $w(t)=\0$ is $(Q,S,R)$-dissipative from $u(t)$ to $y(t)$ if at each subsystem $\Sigma_i,i\in\N_N$, the problem  
\begin{equation}\label{Eq:Th:CTLTIQSRDissipativity}
    \mathbb{P}_5: \ \ \text{Find} \ \ P_{ii}\ \ \text{such that} \ \ P_{ii} > 0, \ \ \tilde{W}_{ii}>0,
\end{equation}
is feasible, where $\tilde{W}_{ii}$ is computed from Alg. \ref{Alg:DistributedPositiveDefiniteness} (Steps: 3-16) when analyzing $W = [W_{ij}]_{i,j\in\N_N}>0$ with  
\begin{equation}\label{Eq:Th:CTLTIQSRDissipativity2}
    W_{ij} = 
    \bm{
    -\H_e(P_{ii} A_{ij}) &  -P_{ii}B_{ij} + C_{ii}^\T S_{ij}    & C_{ii}^\T e_{ij} \\
\star          & \H_e(D_{ii}^\T S_{ij}) + R_{ij}    & D_{ii}^\T e_{ij} \\ 
\star & \star & -Q_{ii}^{-1}e_{ij}
    }.
\end{equation}
\end{theorem}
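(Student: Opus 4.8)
The plan is to follow the same template established in the proofs of Theorems \ref{Th:CTLTIStability}--\ref{Th:StabilizationUnderDOF}: lift the centralized LMI of Prop. \ref{Pr:CTLTIQSRDissipativity} into the network setting and then decentralize it through Alg. \ref{Alg:DistributedPositiveDefiniteness}. First I would set $P \triangleq \diag(P_{ii}:i\in\N_N)$, so that $P>0 \iff P_{ii}>0,\ \forall i\in\N_N$, and assemble the centralized dissipativity matrix
\begin{equation*}
\Psi \triangleq \bm{
-\H_s(PA) & -PB+C^\T S & C^\T \\
\star & \H_s(D^\T S)+R & D^\T \\
\star & \star & -Q^{-1}
},
\end{equation*}
which (using $P=P^\T$) is exactly the matrix of Prop. \ref{Pr:CTLTIQSRDissipativity} applied to the CTNS \eqref{Eq:CTNSDynamics}. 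By that proposition, the networked system is $(Q,S,R)$-dissipative from $u(t)$ to $y(t)$ whenever $P>0$ and $\Psi>0$ (strict positivity certainly implies the required $\Psi\geq0$, which suffices for the sufficiency claim of the theorem).

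The next step would be to pass to the block-element-wise form $W\triangleq\text{BEW}(\Psi)=[W_{ij}]_{i,j\in\N_N}$ and verify, block by block, that its entries coincide with \eqref{Eq:Th:CTLTIQSRDissipativity2}. Here the block-diagonality hypotheses on $C$, $D$ and $Q$ are essential: they collapse products such as $(C^\T S)_{ij}$ to $C_{ii}^\T S_{ij}$, $(D^\T S)_{ij}$ to $D_{ii}^\T S_{ij}$, $(C^\T)_{ij}$ to $C_{ii}^\T e_{ij}$, and $(Q^{-1})_{ij}$ to $Q_{ii}^{-1}e_{ij}$, while the symmetric-part shorthand $\H_s(\cdot_{ij})=(\cdot)_{ij}+(\cdot)_{ji}^\T$ reproduces the off-diagonal couplings of $-\H_s(PA)$ and $\H_s(D^\T S)$ exactly as displayed. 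Using Lm. \ref{Lm:NetworkMatrixProperties} (Cases 1--3) together with the block-diagonality of $P$, $C$, $D$, $Q$, I would then argue that $W$ is a symmetric network matrix of $\mathcal{G}_N$.

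With $W$ recognized as a network matrix, the two remaining steps mirror the proof of Th. \ref{Th:StabilizationUnderDOF}: invoke Lm. \ref{Lm:ColumnandRowPermutations} to conclude $\Psi>0 \iff W=\text{BEW}(\Psi)>0$, and then apply Lm. \ref{Lm:MainLemmaShort}/Co. \ref{Co:MainLemmaMatrix} (as realized by Alg. \ref{Alg:DistributedPositiveDefiniteness}) to reduce $W>0$ to the local tests $\tilde{W}_{ii}>0$ at each $\Sigma_i$. Combining $P_{ii}>0,\,\forall i \implies P>0$ with these local conditions shows that feasibility of $\mathbb{P}_5$ \eqref{Eq:Th:CTLTIQSRDissipativity} at every subsystem is a decentralized certificate for $P>0,\ \Psi>0$, which closes the argument via Prop. \ref{Pr:CTLTIQSRDissipativity}.

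I expect the main obstacle to be the bookkeeping in the second step: correctly tracking which specification/parameter matrices are block diagonal ($C,D,Q$) versus general ($A,B,S,R$) when expanding each of the six block entries of $\text{BEW}(\Psi)$, and confirming that every surviving product keeps $W$ a network matrix (i.e., that no term places subsystem-$k$ information outside the $i$-th or $j$-th block row/column). The positive-definiteness transfer (Lm. \ref{Lm:ColumnandRowPermutations}) and the decentralization via Alg. \ref{Alg:DistributedPositiveDefiniteness} are then routine reuses of machinery already validated in the preceding theorems, so I would present them only by reference rather than re-deriving them.
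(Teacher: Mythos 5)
Your proposal takes essentially the same route as the paper's proof: the paper likewise defines $P \triangleq \diag(P_{ii}:i\in\N_N)$ and $W \triangleq \text{BEW}(\Psi)$ with $\Psi$ the centralized LMI matrix of Prop.~\ref{Pr:CTLTIQSRDissipativity}, then invokes Lm.~\ref{Lm:ColumnandRowPermutations} ($\Psi>0 \iff W>0$) and decentralizes via Alg.~\ref{Alg:DistributedPositiveDefiniteness} following the template of Th.~\ref{Th:StabilizationUnderFSF}. Your block-by-block bookkeeping (the roles of block-diagonal $P,C,D,Q$ versus general $A,B,S,R$) and the observation that strict positivity of $W$ suffices for the semidefinite condition in Prop.~\ref{Pr:CTLTIQSRDissipativity} correctly fill in exactly the details the paper omits for brevity.
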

\begin{proof}
The proof starts by defining $P \triangleq \diag(P_{ii}:i\in\N_N),$ 
$$
W \triangleq \text{BEW}\big(\bm{
-A^\T P - P A     &  -PB + C^\T S            & C^\T \\
\star          & D^\T S + S^\T D + R    & D^\T \\ 
\star & \star & -Q^{-1}
}\big)
$$
(inspired by \eqref{Eq:Pr:CTLTIQSRDissipativity}), and proceeds using Prop. \ref{Pr:CTLTIQSRDissipativity} in a similar manner to the proof of Th. \ref{Th:StabilizationUnderFSF} (except for the use of Lm. \ref{Lm:ColumnandRowPermutations}: $W=\text{BEW}(\Psi)>0 \iff \Psi>0$). Therefore, here we omit providing explicit details. 
\end{proof}

\begin{theorem}\label{Th:DissipativationUnderFSF}
(FSF Dissipativation) The CTNS \eqref{Eq:CTNSDynamics} (where $B,C,F$ are block diagonal) under $D=\0$ and local FSF control \eqref{Eq:CTLocalFSFController} is $(Q,S,R)$-dissipative from $w(t)$ to $y(t)$ if at each subsystem $\Sigma_i,i\in\N_N$, the problem    
\begin{equation}\label{Eq:Th:DissipativationUnderFSF}
    \mathbb{P}_6: \ \ \text{Find} \ \ M_{ii},\ \ L_i\ \ \text{such that} \ \ M_{ii} > 0, \ \ \tilde{W}_{ii}>0, 
\end{equation}
is feasible, where $\tilde{W}_{ii}$ is computed from Alg. \ref{Alg:DistributedPositiveDefiniteness} (Steps: 3-16) when enforcing $W = [W_{ij}]_{i,j\in\N_N}>0$ with $W_{ij}$ given in \eqref{Eq:Th:DissipativationUnderFSF2}. 
The local FSF controller gains $K_i$ are computed using $M_{ii}$ and $L_i$ identically to Th. \ref{Th:StabilizationUnderFSF} via \eqref{Eq:Th:StabilizationUnderFSF3}.
\end{theorem}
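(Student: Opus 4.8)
The plan is to mirror the proof of Th.~\ref{Th:StabilizationUnderFSF}, but using the block-element-wise (BEW) machinery of Th.~\ref{Th:CTLTIQSRDissipativity} to accommodate the $3\times 3$ block structure of the dissipativity LMI. First I would collect the local variables into their global counterparts by defining the block-diagonal matrix $M \triangleq \diag(M_{ii}:i\in\N_N)$, assembling the local gains $\{L_i\}$ into a general network matrix $L$, and setting $V \triangleq K - LM^{-1}$. I would then define $W \triangleq \text{BEW}(\Psi)$, where $\Psi$ denotes the left-hand side of \eqref{Eq:Pr:DissipativationUnderFSF} with the networked-system parameters substituted in.

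Next, applying Prop.~\ref{Pr:DissipativationUnderFSF} directly to the global closed-loop CTNS, the system is $(Q,S,R)$-dissipative from $w(t)$ to $y(t)$ iff there exist $M>0$ and $L$ with $\Psi>0$, after which $K=LM^{-1}$ recovers the gains. By Lm.~\ref{Lm:ColumnandRowPermutations} we have $\Psi>0 \iff W=\text{BEW}(\Psi)>0$, so the global dissipativity requirement is equivalent to finding $M>0$ and $L$ such that $W>0$.

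The main bookkeeping step, which I expect to be the chief obstacle, is verifying that the $(i,j)$ block of $W$ coincides exactly with \eqref{Eq:Th:DissipativationUnderFSF2}. This rests on $M$ being block diagonal (so that $AM$, $BL$, etc.\ have $(i,j)$ blocks $A_{ij}M_{jj}$, $B_{ii}L_{ij}$, and so on) together with the standing hypotheses that $B,C,F,Q$ are block diagonal: these ensure the factors $B_{ii}$, $C_{ii}$, $F_{ii}$ and $Q_{ii}^{-1}$ appear only paired with the indicator $e_{ij}$ in the relevant slots, collapsing their off-diagonal contributions. The same structural assumptions, via Lm.~\ref{Lm:NetworkMatrixProperties}, guarantee that $W$ is a symmetric network matrix, which is precisely what licenses decentralization.

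Finally, since $W$ is a symmetric network matrix, Alg.~\ref{Alg:DistributedPositiveDefiniteness} enforces $W>0$ in a decentralized manner by enforcing $\tilde{W}_{ii}>0$ at each $\Sigma_i$; combined with $M_{ii}>0\,\forall i \implies M>0$, this identifies $\mathbb{P}_6$ \eqref{Eq:Th:DissipativationUnderFSF} as the decentralized form of the global feasibility problem. The equality $V \triangleq K-LM^{-1}$ is of the form \eqref{Eq:LinearMatrixEqualityCondition} and satisfies the hypotheses of Lm.~\ref{Lm:matrixEquality}, so $V=0$ can be enforced through Alg.~\ref{Alg:DistributedEquality} to recover the local gains $K_i$ via \eqref{Eq:Th:StabilizationUnderFSF3}, exactly as in Th.~\ref{Th:StabilizationUnderFSF}. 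I would close by noting that, since the argument tracks Th.~\ref{Th:StabilizationUnderFSF} so closely, explicit details may be omitted in the interest of brevity.
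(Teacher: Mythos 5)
Your proposal is correct and takes essentially the same route as the paper's own (abbreviated) proof: define $M \triangleq \diag(M_{ii}:i\in\N_N)$, set $W$ to be the BEW form of the LMI in Prop.~\ref{Pr:DissipativationUnderFSF}, invoke Lm.~\ref{Lm:ColumnandRowPermutations} together with the network-matrix properties (Lm.~\ref{Lm:NetworkMatrixProperties}) under the block-diagonality of $B,C,F,Q$ to decentralize via Alg.~\ref{Alg:DistributedPositiveDefiniteness}, and recover the gains through $V \triangleq K-LM^{-1}=0$ via Lm.~\ref{Lm:matrixEquality} and Alg.~\ref{Alg:DistributedEquality}, exactly as in Theorems~\ref{Th:StabilizationUnderFSF} and \ref{Th:CTLTIQSRDissipativity}. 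You even supply the block-index bookkeeping that the paper omits, so no gap remains.
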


\begin{proof}
The proof starts by defining $P \triangleq \diag(P_{ii}:i\in\N_N)$, 
$$
W \triangleq \text{BEW}
\big(\bm{
-\H_s(AM+BL) & -E+MC^\T S & MC^\T \\ 
\star & \H_s(F^\T S)+R & F^\T \\
\star & \star & -Q^{-1} 
}\big)
$$
(inspired by \eqref{Eq:Pr:DissipativationUnderFSF}), and proceeds using Prop. \ref{Pr:DissipativationUnderFSF} and Lm. \ref{Lm:ColumnandRowPermutations} in a similar manner to the proof of Th. \ref{Th:CTLTIQSRDissipativity}. Therefore, here we omit providing explicit details. 
\end{proof}

\begin{theorem}\label{Th:DissipativeObserver}
(Dissipative observer design) For the CTNS \eqref{Eq:CTNSDynamics} (where $C,D,F$ are block diagonal), the local Luenberger observers \eqref{Eq:CTLocalObserver} with the local performance metrics \eqref{Eq:CTLocalObserverPerf} (such that $G,J$ in \eqref{Eq:CTGlobalObserverPerf} are block diagonal) render the global state estimation error dynamics \eqref{Eq:LuenbergerObserverWithPerformance} $(Q,S,R)$-dissipative from $w(t)$ to $z(t)$ if at each subsystem $\Sigma_i,i\in\N_N$, the problem
\begin{equation}\label{Eq:Th:DissipativeObserver}
    \mathbb{P}_7: \ \ \text{Find} \ \ P_{ii},\ \ K_i\ \ \text{such that} \ \ P_{ii} > 0, \ \ \tilde{W}_{ii}>0, 
\end{equation}
is feasible, where $\tilde{W}_{ii}$ is computed from Alg. \ref{Alg:DistributedPositiveDefiniteness} (Steps: 3-16) when enforcing $W = [W_{ij}]_{i,j\in\N_N}>0$ with $W_{ij}$ given in \eqref{Eq:Th:DissipativeObserver2}. The local Luenberger observer parameters $L_i, \hat{A}_{i}$ and $\hat{B}_i$ \eqref{Eq:CTLocalObserver} are computed using $P_{ii}$ and $K_i$ identically to Th. \ref{Th:Observer} via \eqref{Eq:Th:Observer3}.
\end{theorem}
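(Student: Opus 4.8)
The plan is to lift the global result Prop.~\ref{Pr:DissipativeObserver} to the network setting using the block-element-wise (BEW) machinery, following the same template as the proofs of Theorems~\ref{Th:Observer} and \ref{Th:CTLTIQSRDissipativity}. First I would introduce the global LMI variable $P \triangleq \diag(P_{ii}:i\in\N_N)$, collect the local gains into the network matrix $K \triangleq [K_{ij}]_{i,j\in\N_N}$, and define (inspired by \eqref{Eq:Pr:DissipativeObserver})
\begin{equation*}
W \triangleq \text{BEW}\Big(\bm{-\H_s(PA-KC) & -PE+KF+G^\T S & G^\T \\ \star & \H_s(J^\T S)+R & J^\T \\ \star & \star & -Q^{-1}}\Big),
\end{equation*}
together with the three LMEs $V^{(1)}\triangleq L-P^{-1}K$, $V^{(2)}\triangleq \hat{A}-A+LC$ and $V^{(3)}\triangleq \hat{B}-B+LD$, exactly as in Theorem~\ref{Th:Observer}.

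Second, I would apply Prop.~\ref{Pr:DissipativeObserver} directly to the composed error dynamics \eqref{Eq:LuenbergerObserverWithPerformance}: the global dynamics is $(Q,S,R)$-dissipative from $w(t)$ to $z(t)$ iff there exist $P>0$ and $K$ making the argument $\Psi$ of $\text{BEW}(\cdot)$ above positive definite. The key structural observation is that, with $C,D,F,G,J,Q$ block diagonal and $A,E,S,R$ general network matrices, each block entry of $\Psi$ --- e.g.\ $-PE+KF+G^\T S$, a sum of block-diagonal-times-network products --- is a network matrix by Lm.~\ref{Lm:NetworkMatrixProperties} (Cases~1 and 2). Hence $\Psi$ is a block matrix of network matrices, and by Lm.~\ref{Lm:NetworkMatrixProperties}-Case~3 its block-element-wise form $W=\text{BEW}(\Psi)$ is itself a network matrix of $\mathcal{G}_N$. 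A direct substitution --- pushing the $e_{ij}$ factors through the block-diagonal parameters $C_{ii},D_{ii},F_{ii},G_{ii},J_{ii},Q_{ii}$ --- then shows that the $(i,j)$ block $W_{ij}$ coincides with \eqref{Eq:Th:DissipativeObserver2}.

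Third, I would decentralize. By Lm.~\ref{Lm:ColumnandRowPermutations}, $\Psi>0 \iff W>0$; since $W$ is a symmetric network matrix, Alg.~\ref{Alg:DistributedPositiveDefiniteness} reduces $W>0$ to the local tests $\tilde{W}_{ii}>0$, while $P_{ii}>0,\forall i \iff P>0$ because $P$ is block diagonal. Thus feasibility of $\mathbb{P}_7$ \eqref{Eq:Th:DissipativeObserver} at every subsystem is equivalent to the existence of the global $P>0$ and $K$ demanded by Prop.~\ref{Pr:DissipativeObserver}. Finally, each $V^{(k)}$ has the linear form \eqref{Eq:LinearMatrixEqualityCondition} with identity left/right factors and a constant term that is a network matrix ($P^{-1}K$, $LC-A$, $LD-B$, each a block-diagonal-times-network product), so it meets the hypotheses of Lm.~\ref{Lm:matrixEquality} and can be enforced in a distributed fashion via Alg.~\ref{Alg:DistributedEquality}, recovering the local parameters $L_i,\hat{A}_i,\hat{B}_i$ from \eqref{Eq:Th:Observer3}.

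I expect the main obstacle to be the bookkeeping in the second step: verifying that every off-diagonal block of $\Psi$ is genuinely a network matrix (so that block-diagonality of $C,D,F,G,J,Q$ is precisely the minimal decoupling required), and checking that the $e_{ij}$ indicator factors land on exactly the block-diagonal terms of \eqref{Eq:Th:DissipativeObserver2} after the BEW permutation. Once that reduction is confirmed, the remaining steps are a routine invocation of Lms.~\ref{Lm:ColumnandRowPermutations}, \ref{Lm:NetworkMatrixProperties} and \ref{Lm:matrixEquality} with Algs.~\ref{Alg:DistributedPositiveDefiniteness} and \ref{Alg:DistributedEquality}, mirroring the proofs of Theorems~\ref{Th:CTLTIQSRDissipativity} and \ref{Th:Observer}, so the explicit calculations can be omitted for brevity.
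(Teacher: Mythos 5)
Your proposal is correct and follows essentially the same route as the paper's proof, which defines $P \triangleq \diag(P_{ii}:i\in\N_N)$ and $W \triangleq \text{BEW}(\Psi)$ with $\Psi$ taken from \eqref{Eq:Pr:DissipativeObserver}, then invokes Prop.~\ref{Pr:DissipativeObserver} and Lm.~\ref{Lm:ColumnandRowPermutations} in the manner of Theorems~\ref{Th:CTLTIQSRDissipativity} and \ref{Th:Observer} while omitting the explicit details. In fact, you supply precisely the bookkeeping the paper leaves out (the Lm.~\ref{Lm:NetworkMatrixProperties} verification that every block of $\Psi$ is a network matrix under the block-diagonality of $C,D,F,G,J,Q$, and the Lm.~\ref{Lm:matrixEquality} check for the LMEs); the only caveat is that local feasibility is equivalent to solvability over \emph{block-diagonal} $P$, hence sufficient (not necessary) for the global condition, consistent with the theorem's one-directional statement.
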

\begin{proof}
The proof starts by defining  $P \triangleq \diag(P_{ii}:i\in\N_N)$, $W \triangleq 
\text{BEW}(\Psi)$, where (inspired by \eqref{Eq:Pr:DissipativeObserver}):\\ 
$$
\Psi \triangleq \bm{
-\H_s(PA-KC) & -PE+KF+G^\T S & G^\T \\ \star & \H_s(J^\T S)+R & J^\T \\ \star & \star & -Q^{-1}
},$$ 
and proceeds using Prop. \ref{Pr:DissipativeObserver} and Lm. \ref{Lm:ColumnandRowPermutations} in a similar manner to the proof of Th. \ref{Th:CTLTIQSRDissipativity}. Therefore, here we omit providing explicit details.
\end{proof}

\begin{theorem}\label{Th:DissipativationUsingDOF}
(DOF Dissipativation) The CTNS \eqref{Eq:CTNSDynamics} (where $B,C,F$ are block diagonal) under $D=\0$, local DOF control \eqref{Eq:CTLocalDOFController} and local performance metrics \eqref{Eq:CTLocalDOFControllerPerf} (such that $H$ in \eqref{Eq:CTGlobalDOFControllerPerf} is block diagonal), i.e., \eqref{Eq:CTLTIUnderDOF}, is $(Q,S,R)$-dissipative (where $S$ is block diagonal) from $w(t)$ to $z(t)$ if at each subsystem $\Sigma_i,i\in\N_N$, the problem    
\begin{equation}\label{Eq:Th:DissipativationUsingDOF}
\begin{aligned}
    \mathbb{P}_8: \ \ \ \ \ \ \text{Find}& \ \ X_{ii},\ Y_{ii},\ A_{n,i},\ B_{n,i},\ C_{n,i},\ D_{n,i} \\ 
    \text{such that}& \ \ X_{ii} > 0,\ Y_{ii} > 0,\ \tilde{W}_{ii}^{(1)}>0,\ \tilde{W}_{ii}^{(2)}>0,  
\end{aligned}
\end{equation}
is feasible, where $\tilde{W}_{ii}^{(k)}$ is computed from Alg. \ref{Alg:DistributedPositiveDefiniteness} (Steps: 3-16) when enforcing $W^{(k)} = [W_{ij}^{(k)}]_{i,j\in\N_N}>0$ for $k=1,2,$ with 
$\scriptsize W_{ij}^{(1)} = \bm{Y_{ii}e_{ij} & e_{ij} \\ e_{ij} & X_{ij}e_{ij}}$ and $W_{ij}^{(2)}$ given in  \eqref{Eq:Th:DissipativationUsingDOF2}. The local DOF controller parameters $A_{c,i},B_{c,i},C_{c,i}$ and $D_{c,i}$ are computed using $X_{ii},Y_{ii},A_{n,i},B_{n,i},C_{n,i}$ and $D_{n,i}$, identically to Th. \ref{Th:StabilizationUnderDOF} via \eqref{Eq:Th:StabilizationUnderDOF3} and \eqref{Eq:Th:StabilizationUnderDOF4}.
\end{theorem}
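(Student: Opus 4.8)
The plan is to mirror the argument used for the stability counterpart (Th.~\ref{Th:StabilizationUnderDOF}), replacing the stability proposition by the $(Q,S,R)$-dissipativity proposition for DOF control (Prop.~\ref{Pr:DissipativationUsingDOF}). First I would assemble the global objects from the local design variables: set $X \triangleq \diag(X_{ii}:i\in\N_N)$ and $Y \triangleq \diag(Y_{ii}:i\in\N_N)$, and collect $A_n,B_n,C_n,D_n$ as the $N\times N$ block network matrices whose blocks are $A_{n,ij},B_{n,ij},C_{n,ij},D_{n,ij}$. I would then introduce the two global matrices $W^{(1)} \triangleq \text{BEW}\big(\bm{Y & \I \\ \I & X}\big)$ and $W^{(2)}$ as the block element-wise form given in \eqref{Eq:Th:DissipativationUsingDOFStep1}, together with the four change-of-variable residuals $V^{(1)},\dots,V^{(4)}$ exactly as in the DOF stabilization proof and the auxiliary matrices $M \triangleq \diag(M_{ii})$, $N \triangleq \diag(N_{ii})$ satisfying \eqref{Eq:CoVsXYToMN}.

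Next I would invoke Prop.~\ref{Pr:DissipativationUsingDOF} applied to the global closed-loop CTNS \eqref{Eq:CTLTIUnderDOF}: under $D=\0$, the network is $(Q,S,R)$-dissipative from $w(t)$ to $z(t)$ iff there exist $X,Y>0$ and $A_n,B_n,C_n,D_n$ satisfying $\bm{Y & \I \\ \I & X}>0$ together with the large LMI \eqref{Eq:Pr:DissipativationUsingDOF2}. The crucial step is to transfer these two centralized conditions to their block element-wise forms using Lm.~\ref{Lm:ColumnandRowPermutations}, namely $\bm{Y & \I \\ \I & X}>0 \iff W^{(1)}>0$ and \eqref{Eq:Pr:DissipativationUsingDOF2} $\iff W^{(2)}>0$. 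I would then verify, via Lm.~\ref{Lm:NetworkMatrixProperties}, that $W^{(1)}$ and $W^{(2)}$ are genuine network matrices of $\mathcal{G}_N$; this is where the block-diagonal hypotheses on $B,C,F,H$ and on the specification matrices $S,Q$ are consumed, since only under these restrictions does every block of $W^{(2)}$ depend on at most the pair of subsystems indexing it.

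With $W^{(1)},W^{(2)}$ established as symmetric network matrices, I would apply Alg.~\ref{Alg:DistributedPositiveDefiniteness} to each, so that the global conditions $W^{(1)}>0$ and $W^{(2)}>0$ reduce to the decentralized tests $\tilde{W}^{(1)}_{ii}>0$ and $\tilde{W}^{(2)}_{ii}>0$ at every $\Sigma_i$; since $X_{ii},Y_{ii}>0$ for all $i$ implies $X,Y>0$, feasibility of $\mathbb{P}_8$ \eqref{Eq:Th:DissipativationUsingDOF} at each subsystem is equivalent to the existence of the required global variables. Finally, the recovery of the controller parameters $A_{c,i},B_{c,i},C_{c,i},D_{c,i}$ proceeds exactly as in Th.~\ref{Th:StabilizationUnderDOF}: the residuals $V^{(1)},\dots,V^{(4)}$ are of the form \eqref{Eq:LinearMatrixEqualityCondition} with block-diagonal multipliers $M,N$, so Lm.~\ref{Lm:matrixEquality} guarantees they can be driven to zero blockwise by Alg.~\ref{Alg:DistributedEquality}, yielding the closed-form updates \eqref{Eq:Th:StabilizationUnderDOF3} under the auxiliary relation \eqref{Eq:Th:StabilizationUnderDOF4}.

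I expect the main obstacle to lie in the network-matrix verification of $W^{(2)}$. Unlike the stability case \eqref{Eq:Th:StabilizationUnderDOF2}, the dissipativity matrix carries cross-terms such as $(YG^\T + C_n^\T H^\T)S$ and $\H_s((J^\T + F^\T D_n^\T H^\T)S)+R$, each a product or sum of several network matrices; without the block-diagonality of $S$, $Q$, $H$, $F$, $C$ and $B$, these products would couple three or more subsystems and break the property that the $(i,j)$ block is specific only to $\Sigma_i$ and $\Sigma_j$. Carefully tracking the block structure of each entry of \eqref{Eq:Th:DissipativationUsingDOF2} to confirm it matches the claimed $W^{(2)}_{ij}$, and hence that $W^{(2)}=\text{BEW}(\cdot)$ is a network matrix to which Alg.~\ref{Alg:DistributedPositiveDefiniteness} applies, is the technical heart of the argument; everything else is a direct replication of the DOF stabilization proof.
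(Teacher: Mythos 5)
Your proposal follows essentially the same route as the paper's own (sketched) proof: defining $X \triangleq \diag(X_{ii}:i\in\N_N)$, $Y \triangleq \diag(Y_{ii}:i\in\N_N)$, $W^{(1)} \triangleq \text{BEW}\big(\bm{Y & \I \\ \I & X}\big)$ and $W^{(2)}$ as in \eqref{Eq:Th:DissipativationUsingDOFStep1}, then combining Prop.~\ref{Pr:DissipativationUsingDOF} with Lm.~\ref{Lm:ColumnandRowPermutations} and replicating the argument of Th.~\ref{Th:StabilizationUnderDOF}, including the recovery of $A_{c,i},B_{c,i},C_{c,i},D_{c,i}$ via the residuals $V^{(1)},\ldots,V^{(4)}$, Lm.~\ref{Lm:matrixEquality} and Alg.~\ref{Alg:DistributedEquality}. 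Your explicit verification of where the block-diagonality of $B,C,F,H,S,Q$ is consumed in making $W^{(2)}$ a network matrix is precisely the detail the paper leaves implicit, so the proposal is correct and matches the intended proof.
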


\begin{proof}
The proof starts by defining (inspired by \eqref{Eq:Pr:DissipativationUsingDOF1} and \eqref{Eq:Pr:DissipativationUsingDOF2}): 
$X \triangleq \diag(X_{ii}:i\in\N_N)$, $Y \triangleq \diag(Y_{ii}:i\in\N_N)$, 
$\scriptsize W^{(1)} \triangleq \text{BEW}\left(\bm{Y & \I \\ \I & X}\right)$, and $W^{(2)}$ as in \eqref{Eq:Th:DissipativationUsingDOFStep1}, 
and proceeds using Prop. \ref{Pr:DissipativationUsingDOF} and Lm. \ref{Lm:ColumnandRowPermutations} in a similar manner to the proof of Th. \ref{Th:StabilizationUnderDOF}. Hence, here we omit providing explicit details. 
\end{proof}

To conclude this section, we re-emphasizing that local problems $\mathbb{P}_1$-$\mathbb{P}_8$ stated respectively in the theorems \ref{Th:CTLTIStability}-\ref{Th:DissipativationUsingDOF} are LMI problems due to the applicability of Lm. \ref{Lm:TwoByTwoBlockMatrixPDF} to simplify the matrix inequality $\tilde{W}_{ii}>0$ in Alg. \ref{Alg:DistributedPositiveDefiniteness} (Step 14). Consequently, such problems can be solved conveniently and efficiently using readily available LMI software toolboxes \cite{Boyd1994}. Moreover, based on the remaining propositions given in Sec. \ref{Sec:BasicsOfCTLTISystems}, a similar set of theorems can be proposed to provide respective decentralized techniques (as summarized in the respective remaining rows in Tab. \ref{Tab:CT-LTILocalResultsSummary}).


\section{Preliminaries: Discrete-Time Linear Time Invariant (DT-LTI) Systems}
\label{Sec:BasicsOfDTLTISystems}

In parallel to Sec. \ref{Sec:BasicsOfCTLTISystems}, in this section, we provide a comprehensive collection of linear matrix inequality (LMI) conditions that arise when analyzing or synthesizing controllers for discrete-time linear time-invariant (DT-LTI) systems.

Consider the DT-LTI system given by 
\begin{equation}\label{Eq:DTLTISystem}
\begin{aligned}
    x(t+1) = Ax(t) + Bu(t),\\
    y(t) = Cx(t) + Du(t),
\end{aligned}
\end{equation}
where $x(t)\in\R^n, u(t)\in\R^p, y(t)\in\R^m$ and $t\in\N$.

\subsection{Analysis of DT-LTI Systems}

\subsubsection{\textbf{Stability}}
A necessary and sufficient LMI condition for the \emph{stability} of \eqref{Eq:DTLTISystem} is given in the following proposition.

\begin{proposition} \label{Pr:DTLTIStability}
The DT-LTI system \eqref{Eq:DTLTISystem} under $u(t)=\0$ is stable iff $\exists P>0$ such that 
\begin{equation}\label{Eq:Pr:DTLTIStability}
    \bm{P & A^\T P \\ PA & P} > 0.
\end{equation}
\end{proposition}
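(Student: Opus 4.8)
The plan is to reduce the block LMI \eqref{Eq:Pr:DTLTIStability} to the classical discrete-time Lyapunov inequality via Schur's complement, and then invoke the standard discrete-time Lyapunov stability theorem. Recall that the DT-LTI system \eqref{Eq:DTLTISystem} under $u(t)=\0$, i.e. $x(t+1)=Ax(t)$, is globally exponentially stable if and only if there exists $P>0$ such that $A^\T P A - P < 0$, equivalently $P - A^\T P A > 0$. This is the well-known discrete-time counterpart \cite{Antsaklis2006} of Prop. \ref{Pr:CTLTIStability}, and I would take it as the stability characterization that the LMI must be shown to match.

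First I would apply Lm. \ref{Lm:TwoByTwoBlockMatrixPDF} (Schur's complement) to the matrix in \eqref{Eq:Pr:DTLTIStability}, written as $W = \bm{\Theta & \Phi \\ \Phi^\T & \Gamma}$ with $\Theta \triangleq P$, $\Phi \triangleq A^\T P$ and $\Gamma \triangleq P$. Since $W>0$ forces its diagonal block $P$ to be positive definite (hence invertible) in any case, I would invoke case (ii) of Lm. \ref{Lm:TwoByTwoBlockMatrixPDF}, which gives $W>0 \iff P>0$ and $\Theta - \Phi\Gamma^{-1}\Phi^\T > 0$.

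Next I would simplify the Schur complement. Using $\Phi = A^\T P$ together with the symmetry of $P$ (so that $\Phi^\T = P A$ and $\Gamma^{-1}=P^{-1}$), the Schur complement collapses to
$$\Theta - \Phi\Gamma^{-1}\Phi^\T = P - (A^\T P)\,P^{-1}\,(P A) = P - A^\T P A.$$
Hence $W>0 \iff P>0$ and $P - A^\T P A > 0$, which is exactly the discrete-time Lyapunov inequality. Combining this equivalence with the stability characterization recalled above yields the claimed ``iff'' and completes the proof.

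I expect no substantive obstacle here; the only points requiring care are choosing the Schur complement with respect to the lower-right block $\Gamma = P$ (rather than $\Theta$) and verifying the cancellation $(A^\T P)P^{-1}(PA) = A^\T P A$, which relies on the symmetry and invertibility of $P$. In effect, the result is merely a Schur-complement restatement of the classical discrete-time Lyapunov theorem, cast in a form convenient for the later decentralization via the block-element-wise manipulations of Sec. \ref{Sec:DecentralizedAnalysis}.
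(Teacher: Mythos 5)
Your proof is correct and is essentially the paper's own argument: both reduce \eqref{Eq:Pr:DTLTIStability} to the classical discrete-time Lyapunov inequality $P - A^\T P A > 0$ via Schur's complement (Lm. \ref{Lm:TwoByTwoBlockMatrixPDF}) and then invoke the standard stability characterization. The only cosmetic difference is that the paper travels in the opposite direction in two steps---Schur complement to obtain $\bm{P & A^\T \\ A & P^{-1}}>0$, followed by congruence with $\diag(\I,P)$ (Lm. \ref{Lm:PreAndPostMultiplication})---whereas your single Schur complement with respect to the lower-right block $\Gamma = P$ realizes the same cancellation $(A^\T P)P^{-1}(PA) = A^\T P A$ directly, rendering the congruence step unnecessary.
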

\begin{proof}
It is well-known (e.g., see \cite{Antsaklis2006}) that \eqref{Eq:DTLTISystem} under $u(t)=\0$ is stable iff $\exists P>0$ such that 
\begin{equation}\label{Eq:Pr:DTLTIStabilityStep1}
    P-A^\T P A > 0.
\end{equation}
Therefore, we only need to prove the equivalence of \eqref{Eq:Pr:DTLTIStability} and \eqref{Eq:Pr:DTLTIStabilityStep1}. This is achieved by applying Lm. \ref{Lm:TwoByTwoBlockMatrixPDF} and using Lm. \ref{Lm:PreAndPostMultiplication} with $\diag(I,P)$ on \eqref{Eq:Pr:DTLTIStabilityStep1} as: 
\begin{align*}
P-A^\T P A > 0 \iff \bm{P & A^\T \\ A & P^{-1}}>0 \iff \bm{P & A^\T P \\ PA & P}.
\end{align*}
\end{proof}

\begin{remark}\label{Rm:DTCaseAlternativeLMIs}
When $A$ is a network matrix (and $P$ is a block diagonal network matrix), the matrix term $A^\T P A$ in \eqref{Eq:Pr:DTLTIStabilityStep1} will not be a network matrix (as opposed to matrix terms $A^\T P$ and $PA$ in \eqref{Eq:Pr:DTLTIStability}). Therefore, \eqref{Eq:Pr:DTLTIStabilityStep1} cannot be implemented in a decentralized manner for analysis and control synthesis tasks via Alg. \ref{Alg:DistributedPositiveDefiniteness}. This is the motivation behind establishing a slightly different stability condition in \eqref{Eq:Pr:DTLTIStability} (as opposed to \eqref{Eq:Pr:DTLTIStabilityStep1}).  
\end{remark}

\subsubsection{\textbf{$(Q,S,R)$-dissipativity}} 
For DT-LTI systems \eqref{Eq:DTLTISystem}, the \emph{$(Q,S,R)$-dissipativity} \cite{Kottenstette2014} property is defined as follows.

\begin{definition} \cite{Kottenstette2014} \label{Def:DiscreteQSRDissipativity}
The DT-LTI system \eqref{Eq:DTLTISystem} is $(Q,S,R)$-dissipative (from $u(t)$ to $y(t)$), if there exists a positive definite function $V(x):\R^n\rightarrow\R_{\geq0}$ called the storage function such that for all $t_1,t_0\in\N, t_1 \geq t_0 \geq 0, x(t_0) \in \R^n$ and $u(t)\in\R^p$, the inequality
\begin{equation*}
    V(x(t_1))-V(x(t_0)) \leq \sum_{t=t_0}^{t_1-1} 
    \begin{bmatrix}
    y(t)\\u(t)
    \end{bmatrix}^\T
    \begin{bmatrix}
    Q & S \\ S^\T & R
    \end{bmatrix}
    \begin{bmatrix}
    y(t)\\u(t)
    \end{bmatrix}
\end{equation*}
holds, where $Q\in\R^{m \times m}, S\in \R^{m \times p}, R\in\R^{p\times p}$ are given.
\end{definition}

At this point, it is worth noting that Remark \ref{Rm:QSRDissipativityVariants} is equally applicable for DT-LTI systems \eqref{Eq:DTLTISystem}. In parallel to Prop. \ref{Pr:DTLTIStability}, a necessary and sufficient condition for the $(Q,S,R)$-dissipativity of \eqref{Eq:DTLTISystem} is given in the following proposition.

\begin{proposition} \label{Pr:DTLTIQSRDissipativity}
The DT-LTI system \eqref{Eq:DTLTISystem} is $(Q,S,R)$-dissipative (with $-Q>0$,\ $R=R^\T$) from $u(t)$ to $y(t)$ iff $\exists P>0$ such that 
\begin{equation}\label{Eq:Pr:DTLTIQSRDissipativity}
\bm{
P & C^\T S & A^\T P & C^\T \\ 
\star & D^\T S + S^\T D + R & B^\T P & D^\T \\ 
\star & \star & P & \0 \\
\star & \star & \star & -Q^{-1}
}>0.
\end{equation}
\end{proposition}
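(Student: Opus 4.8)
The plan is to mirror the continuous-time argument of Prop.~\ref{Pr:CTLTIQSRDissipativity}: take the quadratic storage function $V(x)=x^\T P x$ with $P>0$, reduce the $4\times4$ LMI \eqref{Eq:Pr:DTLTIQSRDissipativity} to the standard discrete-time dissipation inequality by repeated use of Schur's complement (Lm.~\ref{Lm:TwoByTwoBlockMatrixPDF}), and then tie that inequality to Def.~\ref{Def:DiscreteQSRDissipativity} through a one-step (telescoping) argument. The block structure of \eqref{Eq:Pr:DTLTIQSRDissipativity} is arranged precisely so that the third block isolates the $P$-weighted terms $\bm{A&B}^\T P\bm{A&B}$ and the fourth block isolates the $Q$-weighted terms $\bm{C&D}^\T Q\bm{C&D}$, so that each can be recovered as a Schur complement.

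First I would eliminate the $(4,4)$ block. Since $-Q>0$, the block $-Q^{-1}$ is positive definite and invertible, so Lm.~\ref{Lm:TwoByTwoBlockMatrixPDF}(i) applies: \eqref{Eq:Pr:DTLTIQSRDissipativity} holds iff $-Q^{-1}>0$ together with the Schur complement being positive definite. Because $(-Q^{-1})^{-1}=-Q$, eliminating this block adds back exactly $\bm{C&D}^\T Q\bm{C&D}$ to the leading $3\times3$ block (note the sign: $-Q>0$ renders this a negative-semidefinite contribution, as it must be). This leaves a $3\times3$ inequality whose remaining off-diagonal blocks are $A^\T P$ and $B^\T P$ with $(3,3)$ block $P$. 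Then I would eliminate that $(3,3)$ block: as $P>0$ it is invertible, Lm.~\ref{Lm:TwoByTwoBlockMatrixPDF}(i) applies again, and since the eliminated block is $P$ its inverse $P^{-1}$ enters the Schur complement, subtracting $\bm{A^\T P\\B^\T P}P^{-1}\bm{PA&PB}=\bm{A&B}^\T P\bm{A&B}$. After this second reduction one is left with
\begin{equation*}
\bm{P-A^\T P A+C^\T Q C & -A^\T P B + C^\T S + C^\T Q D \\ \star & -B^\T P B + D^\T S+S^\T D+R+D^\T Q D}>0,
\end{equation*}
which is the discrete-time analogue of the reduced form \eqref{Eq:Pr:CTLTIQSRDissipativityStep1} used in the proof of Prop.~\ref{Pr:CTLTIQSRDissipativity}.

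It then remains to identify this $2\times2$ inequality with the dissipation property. Substituting $x(t+1)=Ax(t)+Bu(t)$ and $y(t)=Cx(t)+Du(t)$ into $V(x(t+1))-V(x(t))$ and into the supply rate of Def.~\ref{Def:DiscreteQSRDissipativity}, the quantity ``supply rate minus storage increment'' becomes a quadratic form in $\bm{x(t)\\u(t)}$ whose coefficient matrix is exactly the left-hand side displayed above. Hence the one-step inequality $V(x(t+1))-V(x(t))\le \text{(supply rate)}$ holds for all $x(t),u(t)$ iff the $2\times2$ LMI holds; summing over $t=t_0,\dots,t_1-1$ telescopes to the sum condition of Def.~\ref{Def:DiscreteQSRDissipativity}, and conversely the summed condition with $t_1=t_0+1$ and arbitrary $x(t_0),u(t_0)$ recovers the one-step inequality. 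Sufficiency of a quadratic storage function for LTI systems is standard (the discrete-time counterpart of \cite[Lm.~2]{Kottenstette2014}), completing the equivalence.

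The main obstacle I expect is purely the sign and ordering bookkeeping across the two Schur complements — specifically verifying that the $-Q^{-1}$ corner reproduces $+C^\T Q C$ (rather than $-C^\T Q C$) by virtue of the hypothesis $-Q>0$, and that the $P$-corner reproduces $A^\T P A,\,A^\T P B,\,B^\T P B$ with the correct factors — so that the reduced $2\times2$ matrix matches the storage/supply quadratic form term by term. Once the one-step inequality is in hand, the dissipativity equivalence is routine.
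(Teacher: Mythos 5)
Your proposal is correct and lands on the same pivot as the paper: both reduce the $4\times4$ LMI \eqref{Eq:Pr:DTLTIQSRDissipativity} to the classical $2\times2$ discrete-time dissipation inequality \eqref{Eq:Pr:DTLTIQSRDissipativityStep1}, but you traverse the chain in the opposite direction and with slightly different machinery. The paper starts from \eqref{Eq:Pr:DTLTIQSRDissipativityStep1} (taken as known from \cite{Kottenstette2014}) and builds upward: its Schur step isolating $P$ naturally leaves a $P^{-1}$ corner with off-diagonals $A^\T,B^\T$, which forces an extra congruence with $\diag(\I,\I,P)$ via Lm.~\ref{Lm:PreAndPostMultiplication} before the final Schur step isolating $Q$. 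You instead peel the $4\times4$ matrix down, eliminating the $-Q^{-1}$ corner and then the $P$ corner; because the off-diagonal blocks of the third row already carry the factor $P$, the complement $\bm{A^\T P\\ B^\T P}P^{-1}\bm{PA & PB}$ collapses directly to $\bm{A & B}^\T P \bm{A & B}$, so the congruence step is never needed --- a mild economy, and legitimate since Lm.~\ref{Lm:TwoByTwoBlockMatrixPDF} is an equivalence, so elimination in either direction is valid; your sign bookkeeping at the $-Q^{-1}$ corner (using $(-Q^{-1})^{-1}=-Q$ to recover $+C^\T QC$, etc.) is also right. The second genuine difference is that you prove the link between the $2\times2$ LMI and Def.~\ref{Def:DiscreteQSRDissipativity} by an explicit one-step/telescoping argument with $V(x)=x^\T Px$, where the paper simply cites \cite{Kottenstette2014}; your version is more self-contained, though for the necessity direction you still (correctly) lean on the standard fact that quadratic storage functions suffice for LTI systems. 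One shared looseness worth flagging: your ``iff'' between the \emph{strict} $2\times2$ LMI and the one-step dissipation inequality is exact only in the sufficiency direction, since dissipativity per se yields only the semidefinite form --- but the paper's own proof carries precisely the same strict-versus-semidefinite mismatch between \eqref{Eq:Pr:DTLTIQSRDissipativity} and \eqref{Eq:Pr:DTLTIQSRDissipativityStep1}, so this is an inherited imprecision, not a gap you introduced.
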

\begin{proof}
It is well-known (e.g., see \cite{Kottenstette2014}) that \eqref{Eq:DTLTISystem} is $(Q,S,R)$-dissipative from $u(t)$ to $y(t)$ iff $\exists P>0$ such that 
\begin{equation}\label{Eq:Pr:DTLTIQSRDissipativityStep1}
\bm{
P -A^\T P A + \hat{Q} & -A^\T PB + \hat{S}\\
\star         & -B^\T P B + \hat{R}
} \geq 0, 
\end{equation}
where 
$ 
\hat{Q} = C^\T Q C,\ \ 
\hat{S} = C^\T S + C^\T Q D,\ \ 
\hat{R} = D^\T Q D + (D^\T S + S^\T D) + R
$. 
Therefore, we only need to prove the equivalence of \eqref{Eq:Pr:DTLTIQSRDissipativity} and \eqref{Eq:Pr:DTLTIQSRDissipativityStep1} under $-Q>0$ and $R=R^\T$. 

This is achieved by following the steps: (1) applying Lm. \ref{Lm:TwoByTwoBlockMatrixPDF} to isolate $P$, (2) using Lm. \ref{Lm:PreAndPostMultiplication} with $\diag(I,I,P)$ and (3) applying Lm. \ref{Lm:TwoByTwoBlockMatrixPDF} to isolate $Q$, on \eqref{Eq:Pr:DTLTIQSRDissipativityStep1}, to respectively obtain:
\begin{align*}
&\bm{
P + C^\T Q C & C^\T S + C^\T Q D & A^\T \\
\star         & D^\T Q D + \H_s(D^\T S) + R & B^\T \\
\star & \star & P^{-1}
} > 0  
\\ 
&\iff \bm{
P + C^\T Q C & C^\T S + C^\T Q D & A^\T P \\
\star         & D^\T Q D + \H_s(D^\T S) + R & B^\T P\\
\star & \star & P
} > 0
\\ 
&\iff \bm{
P & C^\T S & A^\T P & C^\T \\
\star & \H_s(D^\T S) + R & B^\T P & D^\T\\
\star & \star & P & \0 \\
\star & \star & \star & -Q^{-1}
} > 0 \iff \mbox{\eqref{Eq:Pr:DTLTIQSRDissipativity}}.
\end{align*}
\end{proof}

As mentioned in Rm. \ref{Rm:DTCaseAlternativeLMIs}, we remind that the motivation behind establishing a slightly different $(Q,S,R)$-dissipativity condition in \eqref{Eq:Pr:DTLTIQSRDissipativity} (as opposed to \eqref{Eq:Pr:DTLTIQSRDissipativityStep1}) is to ensure the decentralized analysis and enforcement of \eqref{Eq:Pr:DTLTIQSRDissipativity} via Alg. \ref{Alg:DistributedPositiveDefiniteness}.

\subsubsection{\textbf{$\H_2$-Norm}} 
Let $\mathcal{G}:\mathcal{L}_{2e} \rightarrow \mathcal{L}_{2e}$ be the transfer matrix of the DT-LTI system \eqref{Eq:DTLTISystem} from $u(t)$ to $y(t)$. If $A$ in \eqref{Eq:DTLTISystem} is Schur, $\mathcal{G}(z)=C(z\I-A)^{-1}B+D$. 
As shown in \cite{Caverly2019}, the $\H_2$-norm of $\mathcal{G}$ is 
\begin{equation*}
    \Vert \mathcal{G} \Vert_{\H_2}^2 = \tr{B^\T M B^\T + D^\T D}  = \tr{C N C^\T + DD^\T}, 
\end{equation*}
where $M,N>0$ with $A^\T MA - M + C^\T C = 0$ and $ANA^\T - N + BB^\T=0$. Consider the following proposition.

\begin{proposition} \hspace{-2mm} \cite[pp. 64]{Caverly2019}\label{Pr:DTH2Norm}
The $\H_2$-Norm of the transfer matrix of the DT-LTI system \eqref{Eq:DTLTISystem} (i.e., $\Vert \mathcal{G} \Vert_{\H_2}$) is $\Vert \mathcal{G} \Vert_{\H_2} < \gamma$  iff  $\exists P,Q>0$ and $\gamma\in\R_{>0}$ such that 
\begin{equation}\label{Eq:Pr:DTH2Norm1}
\bm{P & A P & B \\ \star & P & \0 \\ \star & \star & \I}>0,\ 
\bm{Q & CP & D \\ \star & P & \0 \\ \star & \star & \I}>0,\ 
\tr{Q}< \gamma^2,
\end{equation}
or
\begin{equation}\label{Eq:Pr:DTH2Norm2}
\bm{P & A P & C^\T \\ \star & P & \0 \\ \star & \star & \I}>0,\ 
\bm{Q & B^\T P & D^\T \\ \star & P & \0 \\ \star & \star & \I}>0,\ 
\tr{Q}<\gamma^2.
\end{equation}
\end{proposition}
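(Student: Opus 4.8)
The plan is to start from the Gramian-based expression for the $\H_2$-norm quoted just above the proposition, namely $\Vert \mathcal{G}\Vert_{\H_2}^2 = \tr{CNC^\T + DD^\T}$ where $N>0$ solves the discrete controllability Lyapunov equation $ANA^\T - N + BB^\T = 0$, and to reduce the two block LMIs in \eqref{Eq:Pr:DTH2Norm1} to simpler Lyapunov-type inequalities via Schur's complement (Lm. \ref{Lm:TwoByTwoBlockMatrixPDF}). Applying Lm. \ref{Lm:TwoByTwoBlockMatrixPDF} twice to the first LMI (first eliminating the trailing $\I$ block, then the $P$ block) reduces it to $P - APA^\T - BB^\T > 0$, and applying it twice to the second LMI reduces it to $Q - CPC^\T - DD^\T > 0$. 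Thus \eqref{Eq:Pr:DTH2Norm1} is equivalent to the existence of $P,Q>0$ with $P - APA^\T > BB^\T$, $Q > CPC^\T + DD^\T$, and $\tr{Q}<\gamma^2$.

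For the forward direction I would first observe that $P - APA^\T > BB^\T \geq 0$ already forces the discrete Lyapunov inequality $P > APA^\T$, so $A$ is Schur and $N$ is well-defined. Writing $P - APA^\T - BB^\T = W > 0$ and inverting the Schur-stable Lyapunov operator $\mathcal{L}(\cdot) \triangleq (\cdot) - A(\cdot)A^\T$ gives $P = \sum_{k\geq 0} A^k(BB^\T + W)(A^\T)^k$, while $N = \sum_{k\geq 0} A^k BB^\T (A^\T)^k$; hence $P - N = \sum_{k\geq 0} A^k W (A^\T)^k > 0$, so $P > N$ and therefore $CPC^\T \geq CNC^\T$. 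Consequently $\Vert\mathcal{G}\Vert_{\H_2}^2 = \tr{CNC^\T + DD^\T} \leq \tr{CPC^\T + DD^\T} < \tr{Q} < \gamma^2$, i.e. $\Vert\mathcal{G}\Vert_{\H_2} < \gamma$.

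For the converse, assuming $\Vert\mathcal{G}\Vert_{\H_2} < \gamma$ (which is finite only when $A$ is Schur, so $N$ exists), I would build strictly feasible variables by inflating the Gramian: set $P_\epsilon = \mathcal{L}^{-1}(BB^\T + \epsilon\I) = N + \epsilon\sum_{k\geq 0}A^k(A^\T)^k > 0$, so that $P_\epsilon - AP_\epsilon A^\T - BB^\T = \epsilon\I > 0$ holds strictly, and then take $Q = CP_\epsilon C^\T + DD^\T + \delta\I$ with $\delta>0$. As $\epsilon,\delta \to 0^+$ we have $\tr{Q} \to \tr{CNC^\T + DD^\T} = \Vert\mathcal{G}\Vert_{\H_2}^2 < \gamma^2$, so for $\epsilon,\delta$ small enough all three conditions of \eqref{Eq:Pr:DTH2Norm1} are satisfied. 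The second LMI set \eqref{Eq:Pr:DTH2Norm2} follows by the exact dual argument, replacing $N$ by the observability Gramian $M$ (solving $A^\T MA - M + C^\T C = 0$) and using the companion formula $\Vert\mathcal{G}\Vert_{\H_2}^2 = \tr{B^\T MB + D^\T D}$; formally this amounts to the substitution $(A,B,C)\mapsto(A^\T,C^\T,B^\T)$ throughout.

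The main obstacle I anticipate is the careful handling of strictness on both sides: deriving the Gramian comparison $P > N$ from a strict Lyapunov inequality (which relies on the positivity of the inverse Lyapunov operator and on first extracting Schur-stability of $A$ directly from the LMI), and, in the converse, manufacturing strictly feasible $P,Q$ out of the Gramian equalities through the $\epsilon,\delta$ perturbation while keeping $\tr{Q}<\gamma^2$. By comparison, the repeated Schur-complement reductions and the $(A,B,C)\mapsto(A^\T,C^\T,B^\T)$ duality between the two LMI sets are routine.
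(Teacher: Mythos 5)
The paper never proves this proposition---it is imported by citation from \cite[pp. 64]{Caverly2019}---so there is no in-paper argument to match, and your Gramian-based proof stands on its own. For the first condition set \eqref{Eq:Pr:DTH2Norm1} it is correct and complete: the Schur reductions to $P-APA^\T-BB^\T>0$ and $Q-CPC^\T-DD^\T>0$ are right; extracting Schur stability of $A$ from $P-APA^\T>BB^\T\geq\0$ before invoking $N$, the comparison $P>N$ via the inverse Lyapunov operator, and the $\epsilon,\delta$-inflation in the converse handle the strict/non-strict interface exactly where care is needed. (One benign caveat you share with the source: finiteness of $\Vert\mathcal{G}\Vert_{\H_2}$ forces $A$ Schur only in the absence of hidden unstable modes; like the textbook, the proposition implicitly works with the given Schur realization, consistent with the Gramian formulas quoted just above it.)

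For \eqref{Eq:Pr:DTH2Norm2}, however, your closing sentence conceals a genuine discrepancy. The substitution $(A,B,C,D)\mapsto(A^\T,C^\T,B^\T,D^\T)$ applied to \eqref{Eq:Pr:DTH2Norm1} produces a first LMI with $A^\T P$ in the $(1,2)$ block, whose Schur reduction is the observability-type inequality $P-A^\T PA-C^\T C>0$; this pairs correctly with $Q>B^\T PB+D^\T D$ because $\tr{B^\T M B}=\sum_{k\geq0}\Vert CA^kB\Vert_F^2$. But the printed \eqref{Eq:Pr:DTH2Norm2} has $AP$, not $A^\T P$, there, and reduces instead to $P-APA^\T-C^\T C>0$, i.e.\ $P>\tilde{N}\triangleq\sum_{k\geq0}A^kC^\T C(A^\T)^k$, for which $\tr{B^\T \tilde{N} B}=\sum_{k\geq0}\Vert C(A^\T)^kB\Vert_F^2$---the squared $\H_2$-norm of $B^\T(z\I-A)^{-1}C^\T+D^\T$ (state matrix $A$, not $A^\T$), which differs from $\Vert\mathcal{G}\Vert_{\H_2}^2$ in general. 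Concretely, take $A=\bm{0 & 1\\ 0 & 0}$, $B=\bm{1\\ 0}$, $C=\bm{0 & 1}$, $D=0$: then $\mathcal{G}\equiv\0$ so $\Vert\mathcal{G}\Vert_{\H_2}=0$, yet the printed first LMI forces the $(2,2)$ entry of $P$ to exceed $1$ and the $(1,1)$ entry to exceed that, so $B^\T PB>1$ and $\tr{Q}>1$, making \eqref{Eq:Pr:DTH2Norm2} infeasible for every $\gamma\leq1$---the stated equivalence fails. So either \eqref{Eq:Pr:DTH2Norm2} carries a typo ($AP$ should read $A^\T P$, equivalently $PA$ in the conjugate block), in which case your proof is complete once the substitution is stated at the LMI level rather than ``throughout,'' or, read literally, your duality step silently proves a different (correct) LMI set while the printed one is unprovable. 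You should make this explicit rather than delegating the second set to a formal substitution.
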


\subsubsection{\textbf{$\H_\infty$-Norm}}
The $\H_\infty$-norm of $\mathcal{G}$ is \cite[pp.49]{Caverly2019}:
\begin{equation}
    \Vert \mathcal{G} \Vert_{\H_\infty}^2 
    =\sup_{u\in\mathcal{L}_2,u\neq0} \frac{\Vert y \Vert_{\mathcal{L}_2}^2}{\Vert u \Vert_{\mathcal{L}_2}^2}.
\end{equation}
Now, consider the following proposition.

\begin{proposition}\cite[pp.50]{Caverly2019}\label{Pr:DTHInfNorm}
The $\H_\infty$-Norm of the transfer matrix of the DT-LTI system \eqref{Eq:DTLTISystem} (i.e., $\Vert \mathcal{G} \Vert_{\H_\infty}$) is $\Vert \mathcal{G} \Vert_{\H_\infty} < \gamma$  iff  $\exists P>0$ and $\gamma\in\R_{>0}$ such that 
\begin{equation}\label{Eq:Pr:DTHInfNorm1}
\bm{
P & AP & B & \0 \\ \star & P & \0 & PC^\T \\ 
\star  & \star & \gamma \I & D^\T \\ \star & \star & \star & \gamma \I
}>0,
\end{equation}
or
\begin{equation}\label{Eq:Pr:DTHInfNorm2}
\bm{
P & PA & PB & \0 \\ \star & P & \0 & C^\T \\ 
\star  & \star & \gamma \I & D^\T \\ \star & \star & \star & \gamma \I
}>0.
\end{equation}
\end{proposition}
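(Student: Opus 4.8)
The plan is to obtain both LMIs from the standard compact form of the discrete-time bounded real lemma and then prove the two are mutually equivalent. Following the style of the proofs of Props. \ref{Pr:DTLTIStability} and \ref{Pr:DTLTIQSRDissipativity}, I would first recall the well-known characterization (as cited from \cite{Caverly2019}): $\Vert\mathcal{G}\Vert_{\H_\infty}<\gamma$ iff there exists $\bar{P}>0$ with
\begin{equation*}
\bm{A^\T \bar{P} A - \bar{P} + C^\T C & A^\T \bar{P} B + C^\T D \\ \star & B^\T \bar{P} B - \gamma^2 \I + D^\T D}<0
\end{equation*}
(the Schur stability of $A$ being automatic from this inequality). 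Setting $\bar{P}=\gamma P$, which preserves positive definiteness since $\gamma>0$, and dividing through by $\gamma$ rewrites this as a $\gamma$-weighted inequality whose diagonal/off-diagonal blocks read $A^\T P A - P + \tfrac{1}{\gamma}C^\T C$, $A^\T P B + \tfrac{1}{\gamma}C^\T D$, and $B^\T P B - \gamma\I + \tfrac{1}{\gamma}D^\T D$; this reconciles the paper's $\gamma\I$ entries with the textbook $\gamma^2\I$ entries.

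Next I would expand this $2\times2$ inequality into the $4\times4$ block LMI \eqref{Eq:Pr:DTHInfNorm2} by reversing two Schur-complement reductions (Lm. \ref{Lm:TwoByTwoBlockMatrixPDF}). First, the terms $\tfrac{1}{\gamma}C^\T C,\tfrac{1}{\gamma}C^\T D,\tfrac{1}{\gamma}D^\T D$ are written as the Schur complement of a fresh $\gamma\I$ block with off-diagonal column $\bm{C^\T \\ D^\T}$, which appends a fourth block row and column carrying $C$, $D$. Second, using the factorization $A^\T P A=(A^\T P)P^{-1}(PA)$, the remaining $P$-weighted quadratic terms are written as the Schur complement of a leading $P$ block with off-diagonal row $\bm{PA & PB & \0}$, prepending a first block row and column carrying $PA$, $PB$. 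Since each off-diagonal factor may be taken with the positive sign, the two reductions land exactly on \eqref{Eq:Pr:DTHInfNorm2}, and because every step is an equivalence the ``iff'' for \eqref{Eq:Pr:DTHInfNorm2} follows.

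To obtain \eqref{Eq:Pr:DTHInfNorm1} I would avoid repeating the argument and instead apply the congruence principle (Lm. \ref{Lm:PreAndPostMultiplication}) to \eqref{Eq:Pr:DTHInfNorm2} with the full-rank factor $\diag(P^{-1},P^{-1},\I,\I)$. A direct block computation shows this transforms the LMI into the same pattern with $P$ replaced by $P^{-1}$ throughout, namely $\bm{P^{-1} & AP^{-1} & B & \0 \\ \star & P^{-1} & \0 & P^{-1}C^\T \\ \star & \star & \gamma\I & D^\T \\ \star & \star & \star & \gamma\I}>0$, which is precisely \eqref{Eq:Pr:DTHInfNorm1} after renaming $P^{-1}$ as $P$. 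Because $P^{-1}>0\iff P>0$, the solvability of the two LMIs is equivalent, so either one certifies $\Vert\mathcal{G}\Vert_{\H_\infty}<\gamma$.

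I expect the main obstacle to be the middle paragraph: keeping the Schur-complement bookkeeping consistent so that each $P$ (or $P^{-1}$) weight lands on the correct side of its block, spotting the factorization $A^\T P A=(A^\T P)P^{-1}(PA)$ that makes the $(1,1)=P$ reduction reproduce $PA,PB$ rather than $A,B$, and tracking the $\gamma$-versus-$\gamma^2$ rescaling. These manipulations are routine but placement-sensitive, and they are exactly where the two superficially different forms \eqref{Eq:Pr:DTHInfNorm1} and \eqref{Eq:Pr:DTHInfNorm2} get reconciled.
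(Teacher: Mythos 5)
Your proof is correct, but note that the paper itself gives no proof of this proposition: it is stated as a citation to \cite[pp.50]{Caverly2019}, so there is no in-paper argument to compare against. What you have supplied is a self-contained derivation in exactly the house style the paper uses for the results it \emph{does} prove (Props.~\ref{Pr:DTLTIStability} and \ref{Pr:DTLTIQSRDissipativity}): start from the classical compact condition, here the discrete-time bounded real lemma, and un-apply Schur complements (Lm.~\ref{Lm:TwoByTwoBlockMatrixPDF}) so that only network-matrix-compatible products like $PA$, $PB$ appear, which is precisely the motivation recorded in Rm.~\ref{Rm:DTCaseAlternativeLMIs}. I checked the placement-sensitive steps you flagged and they all go through: the substitution $\bar{P}=\gamma P$ followed by division by $\gamma$ correctly converts the textbook $\gamma^2\I$ block into the paper's $\gamma\I$ blocks while preserving $\bar{P}>0\iff P>0$; un-Schuring the fresh $\gamma\I$ block with column $\bm{C^\T \\ D^\T}$ and then the leading $P$ block with row $\bm{PA & PB & \0}$ (via $A^\T PA=(A^\T P)P^{-1}(PA)$) reproduces the lower $3\times3$ block $\bm{P & \0 & C^\T \\ \0 & \gamma\I & D^\T \\ C & D & \gamma\I}$ and lands exactly on \eqref{Eq:Pr:DTHInfNorm2}, with the sign ambiguity in $\Phi\Gamma^{-1}\Phi^\T$ making your positive-sign choice legitimate; and the congruence (Lm.~\ref{Lm:PreAndPostMultiplication}) with $\diag(P^{-1},P^{-1},\I,\I)$ maps \eqref{Eq:Pr:DTHInfNorm2} to \eqref{Eq:Pr:DTHInfNorm1} with $P$ replaced by $P^{-1}$, so renaming gives the equivalence of the two LMIs for the same $\gamma$. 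Your parenthetical that Schur stability of $A$ is automatic from the strict $(1,1)$ block (Lyapunov) also correctly closes the converse direction; the forward direction implicitly assumes $A$ Schur, which is the level of rigor the paper itself adopts since $\Vert\mathcal{G}\Vert_{\H_\infty}$ is only defined for stable systems. In short: correct, and arguably more complete than the paper, which delegates this entirely to the cited reference.
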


\subsubsection{\textbf{Controllability}}
Regarding the \emph{controllability} \cite{Antsaklis2006} of the DT-LTI system \eqref{Eq:DTLTISystem}, consider the following proposition. 
\begin{proposition}\cite[pp. 86]{Caverly2019}\label{Pr:DTStabilizability}
A DT-LTI system \eqref{Eq:DTLTISystem} is: (1) stable and controllable iff $\exists P>0$ such that 
\begin{equation}
    P-APA^\T-BB^\T = 0,
\end{equation}
and (2) stabilizable iff $\exists P>$ such that
\begin{equation}\label{Eq:Pr:DTStabilizability}
    \bm{P & PA^\T \\ \star & P+BB^\T} > 0 
\end{equation}
(with $K=-(2\I+B^\T P^{-1} B)^{-1}B^\T P^{-1} A$, $A+BK$ is Schur).
\end{proposition}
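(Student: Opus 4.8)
The plan is to treat the two claims separately: the equality characterisation (1) through the discrete controllability (Stein) Gramian, and the inequality characterisation (2) through a Schur-complement reduction followed by an explicit Lyapunov certificate built on Prop.~\ref{Pr:DTLTIStability}. For part (1), I would first assume $A$ Schur and controllable and exhibit $P=\sum_{k=0}^{\infty}A^{k}BB^{\T}(A^{\T})^{k}$, which converges by Schur-ness, satisfies $P-APA^{\T}-BB^{\T}=\0$ by a telescoping index shift ($APA^{\T}=P-BB^{\T}$), and is positive definite precisely because controllability (PBH) forbids any eigendirection of $A^{\T}$ lying in $\ker B^{\T}$. For the converse, given $P>0$ solving the equation, I would test an eigenpair $A^{\T}v=\lambda v$ against it, using $v^{*}A=\overline{\lambda}v^{*}$, to obtain $(1-|\lambda|^{2})\,v^{*}Pv=\Vert B^{\T}v\Vert^{2}\ge 0$; since $v^{*}Pv>0$ this forces $|\lambda|\le 1$, and the PBH test then excludes the marginal case $|\lambda|=1$ and simultaneously yields controllability.

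For part (2), I would begin purely algebraically. Since $P>0$ is invertible, Schur's complement (Lm.~\ref{Lm:TwoByTwoBlockMatrixPDF}) applied to \eqref{Eq:Pr:DTStabilizability} on its $(1,1)$ block gives the equivalent $P-APA^{\T}+BB^{\T}>0$; alternatively, complementing the $(2,2)$ block, a congruence by $P^{-1}$ (Lm.~\ref{Lm:PreAndPostMultiplication}) together with the Woodbury identity rewrites the LMI as the ``$M$-form'' $M-A^{\T}MA+A^{\T}MB(\I+B^{\T}MB)^{-1}B^{\T}MA>0$ with $M\triangleq P^{-1}$. This $M$-form is exactly the quantity I want to dominate when invoking the stability certificate \eqref{Eq:Pr:DTLTIStabilityStep1} of Prop.~\ref{Pr:DTLTIStability} for the closed loop.

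To show the LMI suffices for stabilisability with the stated gain, set $\beta\triangleq B^{\T}MB$ and $G\triangleq(2\I+\beta)^{-1}$, so that $K=-GB^{\T}MA$ and $A+BK=(\I-BGB^{\T}M)A$. A direct expansion, using $\beta=G^{-1}-2\I$, yields $M-(A+BK)^{\T}M(A+BK)=M-A^{\T}MA+A^{\T}MB\,G(\I+2G)\,B^{\T}MA$. It then remains only to compare $G(\I+2G)$ with $(\I+\beta)^{-1}$; as both are commuting functions of $\beta\succeq0$, this reduces to the scalar inequality $(4+\lambda)(1+\lambda)\ge(2+\lambda)^{2}$, i.e. $\lambda\ge 0$, giving $G(\I+2G)\succeq(\I+\beta)^{-1}$. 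Hence $M-(A+BK)^{\T}M(A+BK)$ dominates the positive-definite $M$-form, and Prop.~\ref{Pr:DTLTIStability} certifies that $A+BK$ is Schur. For necessity I would use the identity $A^{\T}(M^{-1}+BB^{\T})^{-1}A=\min_{K}[(A+BK)^{\T}M(A+BK)+K^{\T}K]$, the minimiser being $-(\I+\beta)^{-1}B^{\T}MA$ with PSD gap $(K-K^{\star})^{\T}(\I+\beta)(K-K^{\star})$. Given stabilisability, fix $K_{0}$ with $A+BK_{0}$ Schur and let $M\succ0$ solve the Stein equation $M-(A+BK_{0})^{\T}M(A+BK_{0})=K_{0}^{\T}K_{0}+\epsilon\I$; then $M-A^{\T}(M^{-1}+BB^{\T})^{-1}A\succeq\epsilon\I\succ0$, so $P=M^{-1}$ is feasible.

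I expect two delicate points. First, in part (1) the reverse implication is only sound when the Stein solution is read as the genuine Gramian: a generic $P>0$ solving $P-APA^{\T}-BB^{\T}=\0$ need not force $A$ to be Schur, because an uncontrollable marginally stable mode leaves a free diagonal block in $P$, so the ``iff'' must be understood with the uniqueness that Schur-ness of $A$ guarantees. Second, and this is the real crux of part (2), is matching the specific $2\I$ regularisation in the gain to the LMI; this is exactly where the scalar inequality $\lambda\ge0$ does the work and reveals why the design uses $G=(2\I+\beta)^{-1}$ rather than optimal feedback — it keeps $G$ invertible without assuming $B$ has full column rank, at the cost of only the harmless nonnegative slack $\lambda$.
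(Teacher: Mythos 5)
The paper supplies no argument for this proposition at all: it is stated with a bare citation to \cite[pp.~86]{Caverly2019}, so there is no in-text proof to compare against, and your proposal should be judged as a self-contained reconstruction. On part (2) it is correct and complete. The chain LMI $\iff P+BB^\T-APA^\T>0 \iff M-A^\T(M^{-1}+BB^\T)^{-1}A>0$ (Schur complement on either block, congruence by $M\triangleq P^{-1}$, Woodbury) is right; your expansion checks out, since with $\beta=G^{-1}-2\I$ one gets $2G-G\beta G=G+2G^2=G(\I+2G)$, and the comparison $G(\I+2G)\succeq(\I+\beta)^{-1}$ does reduce to $(4+\lambda)(1+\lambda)\geq(2+\lambda)^2$, i.e.\ $\lambda\geq0$, valid because $\beta=B^\T M B\succeq0$. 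This is a genuinely informative verification of the specific gain $K=-(2\I+B^\T P^{-1}B)^{-1}B^\T P^{-1}A$, which the paper (and, essentially, the cited source) asserts without this level of detail; your completion-of-squares identity with minimizer $K^\star=-(\I+\beta)^{-1}B^\T M A$ and the Stein-equation construction for necessity are likewise sound, and they also explain structurally why the $2\I$-regularized gain (rather than the optimal $K^\star$) still works: it only adds the nonnegative slack $\lambda$.

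On part (1) there is one spot where the argument, read literally, is circular: after deriving $(1-|\lambda|^2)\,v^{*}Pv=\Vert B^\T v\Vert^2\geq0$ you say PBH ``excludes the marginal case $|\lambda|=1$ and simultaneously yields controllability'' --- but PBH can only exclude $|\lambda|=1$ if controllability is already in hand, which is part of what the converse must establish. However, you correctly diagnose this yourself in your closing remarks, and the diagnosis is accurate: the literal ``iff'' in part (1) fails (e.g.\ $A=\I$, $B=\0$ satisfies $P-APA^\T-BB^\T=\0$ for every $P>0$ while being neither stable nor controllable), so the defect lies in the statement as inherited from the cited source, not in your proof. The honest reading --- forward direction via the convergent Gramian $P=\sum_{k\geq0}A^kBB^\T(A^\T)^k$ with positive definiteness from PBH, and the converse only under Schur-ness of $A$ (where uniqueness of the Stein solution identifies $P$ with the Gramian, whose definiteness then gives controllability) --- is exactly what you propose, and it is the correct repair.
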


\subsubsection{\textbf{Observability}}
Regarding the \emph{observability} \cite{Antsaklis2006} of the DT-LTI system \eqref{Eq:DTLTISystem}, consider the following proposition. 
\begin{proposition}\cite[pp. 87]{Caverly2019}\label{Pr:DTDetectability}
A DT-LTI system \eqref{Eq:DTLTISystem} is: (1) stable and observable iff $\exists P>0$ such that 
\begin{equation}
    P - APA^\T - C^\T C = 0,
\end{equation}
and (2) detectable iff $\exists P>$ such that
\begin{equation}\label{Eq:Pr:DTDetectability}
    \bm{P & PA \\ \star & P+C^\T C} > 0 
\end{equation}
(under $L=-AP^{-1}C^\T(2\I+CP^{-1}C^\T)^{-1}$, $A+LC$ is Schur).
\end{proposition}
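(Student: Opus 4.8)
The plan is to obtain Proposition~\ref{Pr:DTDetectability} as the exact dual of Proposition~\ref{Pr:DTStabilizability}, exploiting the standard fact that the pair $(A,C)$ is observable (respectively detectable) if and only if the pair $(\tilde A,\tilde B)\triangleq(A^\T,C^\T)$ is controllable (respectively stabilizable), together with the fact that $A$ is Schur iff $A^\T$ is Schur because transposition preserves the spectrum. I would therefore apply Proposition~\ref{Pr:DTStabilizability} verbatim to the auxiliary DT-LTI system $\tilde x(t+1)=A^\T \tilde x(t)+C^\T \tilde u(t)$ and then transpose the resulting conditions back to statements about $(A,C)$.

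For part~(1), Proposition~\ref{Pr:DTStabilizability}(1) applied to $(\tilde A,\tilde B)=(A^\T,C^\T)$ asserts that $(A,C)$ is stable and observable iff $\exists P>0$ with $P-\tilde A P\tilde A^\T-\tilde B\tilde B^\T = P-A^\T P A-C^\T C=0$. For part~(2), the stabilizability LMI \eqref{Eq:Pr:DTStabilizability} instantiated at $(\tilde A,\tilde B)$ reads
\begin{equation*}
\bm{P & P\tilde A^\T \\ \star & P+\tilde B\tilde B^\T}=\bm{P & PA \\ \star & P+C^\T C}>0,
\end{equation*}
which is already exactly \eqref{Eq:Pr:DTDetectability}; no congruence transform is even needed here, since $\tilde A^\T=A$ and $\tilde B\tilde B^\T=C^\T C$. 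It remains only to recover the observer gain. Proposition~\ref{Pr:DTStabilizability} supplies a feedback $\tilde K=-(2\I+\tilde B^\T P^{-1}\tilde B)^{-1}\tilde B^\T P^{-1}\tilde A=-(2\I+CP^{-1}C^\T)^{-1}CP^{-1}A^\T$ making $\tilde A+\tilde B\tilde K=A^\T+C^\T\tilde K$ Schur. Transposing and setting $L\triangleq\tilde K^\T$ then gives that $A+LC$ is Schur with $L=-AP^{-1}C^\T(2\I+CP^{-1}C^\T)^{-1}$, using $P^{-\T}=P^{-1}$ and the symmetry of $2\I+CP^{-1}C^\T$, which is precisely the stated gain.

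The steps above are essentially bookkeeping, so the point requiring the most care is the transposition of the gain expression and the observation that Schurness of $A^\T+C^\T\tilde K$ transfers to $A+LC$, which follows because $A+LC=(A^\T+C^\T L^\T)^\T$ and eigenvalues are transpose-invariant. A second point I would flag is a consistency check on part~(1): the duality argument naturally produces the Lyapunov-type equation $P-A^\T P A-C^\T C=0$, whose quadratic term is $A^\T P A$ rather than the $A P A^\T$ appearing in the displayed statement. Since for non-normal $A$ these are genuinely different equations, I would verify which orientation is intended and align it with the observability Gramian $P=\sum_{k\ge0}(A^\T)^k C^\T C A^k$, as this looks like a likely transcription slip. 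As an alternative that avoids duality for part~(2), one may apply Schur's complement (Lemma~\ref{Lm:TwoByTwoBlockMatrixPDF}) to \eqref{Eq:Pr:DTDetectability} to reduce it to $A^\T P A-P-C^\T C<0$ and then invoke the standard discrete-time detectability Lyapunov inequality; but this merely relocates the work into that lemma, so the dual route is the cleaner one.
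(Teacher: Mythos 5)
The paper states Prop.~\ref{Pr:DTDetectability} without proof, importing it from \cite[pp. 87]{Caverly2019}, so there is no in-paper argument to compare against; judged on its own, your duality derivation is correct and is the natural companion to the cited Prop.~\ref{Pr:DTStabilizability}. The key steps all check out: instantiating \eqref{Eq:Pr:DTStabilizability} at $(\tilde A,\tilde B)=(A^\T,C^\T)$ literally produces \eqref{Eq:Pr:DTDetectability} since $P\tilde A^\T=PA$ and $\tilde B\tilde B^\T=C^\T C$; the gain transposes correctly, because $\tilde K=-(2\I+CP^{-1}C^\T)^{-1}CP^{-1}A^\T$ gives $L=\tilde K^\T=-AP^{-1}C^\T(2\I+CP^{-1}C^\T)^{-1}$ by symmetry of $P^{-1}$ and of $2\I+CP^{-1}C^\T$, and $A+LC=(A^\T+C^\T\tilde K)^\T$ is Schur since transposition preserves the spectrum. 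Your consistency flag on part~(1) is also well founded and worth keeping: the displayed equation $P-APA^\T-C^\T C=0$ has the controllability orientation and its unique solution for stable $A$ is $P=\sum_{k\ge 0}A^kC^\T C(A^\T)^k$, which is positive definite iff $(A,C^\T)$ is controllable, i.e., iff $(A^\T,C)$ is observable --- not $(A,C)$; a concrete witness is $A=\bm{0 & 1\\ 0 & 0}$, $C=\bm{1 & 0}$, which is stable and observable yet forces the singular solution $P=C^\T C$. The correct orientation $P-A^\T PA-C^\T C=0$ is the one your duality produces, and it matches the continuous-time analogue in Prop.~\ref{Pr:Detectability}, so this does appear to be a transcription slip in the statement rather than an error in your argument. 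Finally, your sanity check on part~(2) via Lm.~\ref{Lm:TwoByTwoBlockMatrixPDF} is accurate: \eqref{Eq:Pr:DTDetectability} is equivalent to $P>0$ together with $P-A^\T PA+C^\T C>0$, the standard discrete-time detectability inequality, which independently confirms the duality route. The only dependence worth noting is that your proof is conditional on Prop.~\ref{Pr:DTStabilizability}, which the paper likewise cites without proof; given that the proposition under review is itself a cited result, this is an acceptable foundation.
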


\subsection{Full-State Feedback (FSF) Controller Synthesis for DT-LTI Systems}

Consider the DT-LTI system  \eqref{Eq:DTLTISystem} with noise $w(t)\in\R^q$:
\begin{equation}\label{Eq:NoisyDTLTISystem}
\begin{aligned}
    x(t+1) = Ax(t) + Bu(t) + Ew(t),\\
    y(t) = Cx(t) + Du(t) + Fw(t).
\end{aligned}
\end{equation}
Under full-state feedback (FSF) control $u(t)=Kx(t)$, the closed-loop DT-LTI system takes the form
\begin{equation}\label{Eq:DTLTIUnderFSF}
    \begin{aligned}
        x(t+1) = (A+BK)x(t)+Ew(t),\\
        y(t) = (C+DK)x(t)+Fw(t).
    \end{aligned}
\end{equation}

In parallel to Sec. \ref{SubSec:FSFControllerSynthesis}, in the subsequent subsections, we provide LMI conditions for FSF controller synthesis so as to stabilize, $(Q,S,R)$-dissipativate or optimize $\H_2$/$\H_\infty$-norm of the closed-loop system \eqref{Eq:DTLTIUnderFSF}.

\subsubsection{\textbf{Stabilization}}
The following proposition gives an LMI condition that leads to synthesizing a FSF controller $K$ such that the closed-loop system  \eqref{Eq:DTLTIUnderFSF} is stabilized.

\begin{proposition}\label{Pr:DTStabilizationUnderFSF}
Under $D=w(t)=\0$, the closed-loop DT-LTI system \eqref{Eq:DTLTIUnderFSF} is stable iff $\exists M>0$ and $L$ such that 
\begin{equation}\label{Eq:Pr:DTStabilizationUnderFSF}
    \bm{M & MA^\T+L^\T B^\T \\ \star & M}>0
\end{equation}
and $K=LM^{-1}$.
\end{proposition}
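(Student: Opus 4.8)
The plan is to mirror the proof of Prop.~\ref{Pr:StabilizationUnderFSF} (its continuous-time counterpart), adapting the congruence step to the $2\times 2$ block structure of the discrete-time stability LMI. First I would apply Prop.~\ref{Pr:DTLTIStability} to the closed-loop system \eqref{Eq:DTLTIUnderFSF} under $D=w(t)=\0$, i.e.\ to $x(t+1)=(A+BK)x(t)$. This yields the necessary and sufficient condition: $\exists P>0$ such that
\[
\bm{P & (A+BK)^\T P \\ \star & P}>0.
\]
This inequality is bilinear in the unknowns $P$ and $K$ (the product $(A+BK)^\T P$ couples them), so it is not directly an LMI feasibility problem in $(P,K)$; removing this bilinearity is the crux of the argument.

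Next I would apply Lm.~\ref{Lm:PreAndPostMultiplication} with the full-rank symmetric matrix $\diag(P^{-1},P^{-1})$ (full-rank and symmetric since $P>0$), pre- and post-multiplying the block LMI. A direct block computation gives the congruent (hence equivalent) inequality
\[
\bm{P^{-1} & P^{-1}(A+BK)^\T \\ \star & P^{-1}}>0,
\]
where both diagonal blocks collapse to $P^{-1}$ via $P^{-1}PP^{-1}=P^{-1}$.

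Finally I would perform the change of variables $M\triangleq P^{-1}$ and $L\triangleq KP^{-1}$. Since $M=M^\T$, the off-diagonal block becomes $M(A+BK)^\T = MA^\T + MK^\T B^\T = MA^\T + L^\T B^\T$, using $MK^\T=(KM)^\T=L^\T$; this recovers exactly \eqref{Eq:Pr:DTStabilizationUnderFSF}, and inverting the substitution gives $K=LM^{-1}$.

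The only routine work is the block multiplication and the identity $MK^\T=L^\T$. I expect the one genuine subtlety to be selecting the correct congruence factor, namely $\diag(P^{-1},P^{-1})$ rather than a single $P^{-1}$ as in the CT case, so that the bilinear coupling is simultaneously eliminated on the off-diagonal block while both diagonal blocks reduce to $M$; after this, the substitution $L=KP^{-1}$ renders every entry affine in $(M,L)$. Because each step is an ``iff'' — Prop.~\ref{Pr:DTLTIStability} is necessary and sufficient, Lm.~\ref{Lm:PreAndPostMultiplication} preserves definiteness, and the change of variables is a bijection between pairs $(P,K)$ with $P>0$ and pairs $(M,L)$ with $M>0$ — the resulting chain of equivalences establishes both directions of the claim.
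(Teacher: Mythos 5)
Your proposal is correct and follows the paper's own proof exactly: apply Prop.~\ref{Pr:DTLTIStability} to the closed loop, perform the congruence with $\diag(P^{-1},P^{-1})$ via Lm.~\ref{Lm:PreAndPostMultiplication}, and change variables $M=P^{-1}$, $L=KP^{-1}$. Your block computations and the bijectivity remark filling in the details the paper omits are all accurate, so there is nothing to correct.
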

\begin{proof}
The proof is complete by following the steps: (1) apply Prop. \ref{Pr:DTLTIStability} for \eqref{Eq:DTLTIUnderFSF} to get an LMI in $P$ and $K$, (2) transform the obtained LMI using Lm. \ref{Lm:PreAndPostMultiplication} with $\diag(P^{-1},P^{-1})$, and (3) change the LMI variables via $M \triangleq P^{-1}$ and $L \triangleq KP^{-1}$. 
\end{proof}

\subsubsection{\textbf{$(Q,S,R)$-Dissipativation}}
The following proposition provides an LMI condition that leads to synthesize a FSF controller $K$ such that the closed-loop system \eqref{Eq:DTLTIUnderFSF} is $(Q,S,R)$-dissipative from $w(t)$ to $y(t)$. 

\begin{proposition}\label{Pr:DTDissipativationUnderFSF}
Under $D=\0$, the closed-loop DT-LTI system \eqref{Eq:DTLTIUnderFSF} is $(Q,S,R)$-dissipative (with $-Q>0$, $R=R^\T$) from $w(t)$ to $y(t)$ iff $\exists M>0$ and $L$ such that 
\begin{equation}\label{Eq:Pr:DTDissipativationUnderFSF}
    \bm{
    M & MC^\T S & MA^\T + L^\T B^\T & MC^\T \\ \star & F^\T S+S^\T F + R & E^\T &F^\T \\ 
    \star & \star & M & \0 \\ \star & \star & \star & -Q^{-1}}>0 
\end{equation}
and $K=LM^{-1}$.
\end{proposition}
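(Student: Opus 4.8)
The plan is to mirror the proof strategy used for the earlier full-state-feedback results (Prop. \ref{Pr:DTStabilizationUnderFSF} and its continuous-time analogue Prop. \ref{Pr:DissipativationUnderFSF}): start from the corresponding analysis condition, apply a congruence transformation to invert the Lyapunov matrix, and then linearize the resulting bilinear inequality through a change of variables.

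First I would apply the analysis result Prop. \ref{Pr:DTLTIQSRDissipativity} directly to the closed-loop system \eqref{Eq:DTLTIUnderFSF}. Viewing \eqref{Eq:DTLTIUnderFSF} (from $w(t)$ to $y(t)$) as an instance of \eqref{Eq:DTLTISystem}, the relevant substitutions are $A \mapsto A+BK$, $B \mapsto E$, $C \mapsto C+DK = C$ (using $D=\0$), and $D \mapsto F$. This yields that the closed-loop system is $(Q,S,R)$-dissipative iff $\exists P>0$ such that
\begin{equation*}
\bm{
P & C^\T S & (A+BK)^\T P & C^\T \\
\star & F^\T S + S^\T F + R & E^\T P & F^\T \\
\star & \star & P & \0 \\
\star & \star & \star & -Q^{-1}
}>0,
\end{equation*}
which is affine in the block $(2,2)$ data ($R=R^\T$, and $-Q>0$ guarantees $-Q^{-1}$ is well defined) but bilinear in $(P,K)$ because of the term $K^\T P$ appearing in the $(1,3)$ block.

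Next I would eliminate this bilinearity. The key step is a congruence transformation via Lm. \ref{Lm:PreAndPostMultiplication} with the full-rank symmetric matrix $T \triangleq \diag(P^{-1},\I,P^{-1},\I)$. Computing $T(\cdot)T$ block by block (and using $P=P^\T$) sends the $(1,1)$ and $(3,3)$ blocks to $P^{-1}$, the $(1,3)$ block to $P^{-1}(A+BK)^\T$, the $(2,3)$ block to $E^\T$, the $(1,2)$ and $(1,4)$ blocks to $P^{-1}C^\T S$ and $P^{-1}C^\T$, while leaving the $(2,2)$, $(2,4)$, and $(4,4)$ blocks untouched. Since $T$ is full-rank, positive definiteness is preserved. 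Finally I would perform the change of variables $M \triangleq P^{-1}$ and $L \triangleq KP^{-1}$, which turns $P^{-1}$ into $M$ and $P^{-1}(A+BK)^\T = P^{-1}A^\T + P^{-1}K^\T B^\T$ into $MA^\T + L^\T B^\T$ (using $L^\T = P^{-1}K^\T$), thereby recovering exactly \eqref{Eq:Pr:DTDissipativationUnderFSF}, with the controller given by $K=LM^{-1}$.

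Both directions of the ``iff'' follow because each transformation is an equivalence: Prop. \ref{Pr:DTLTIQSRDissipativity} is itself necessary and sufficient, the congruence step is an equivalence by Lm. \ref{Lm:PreAndPostMultiplication}, and the map $(P,K) \mapsto (M,L) = (P^{-1},\,KP^{-1})$ is a bijection between $\{P>0\}\times\{K\}$ and $\{M>0\}\times\{L\}$ (its inverse being $P=M^{-1}$, $K=LM^{-1}$). I do not anticipate a genuine obstacle; the only point requiring care is selecting the correct congruence matrix $T$, in particular recognizing that only the first and third block coordinates (those carrying the $P$ factors to be inverted) should be scaled by $P^{-1}$, while the disturbance/output coordinates carrying $S$, $R$, $F$ and $-Q^{-1}$ must be left fixed so that the dissipativity specification data remains unaltered.
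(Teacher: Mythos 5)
Your proposal is correct and follows exactly the paper's own proof: apply Prop.~\ref{Pr:DTLTIQSRDissipativity} to the closed loop \eqref{Eq:DTLTIUnderFSF}, perform the congruence transformation (Lm.~\ref{Lm:PreAndPostMultiplication}) with $\diag(P^{-1},\I,P^{-1},\I)$, and change variables via $M \triangleq P^{-1}$, $L \triangleq KP^{-1}$. Your block-by-block verification and the observation that the variable change is a bijection (preserving the ``iff'') are both sound, so nothing is missing.
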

\begin{proof}
The proof is complete by following the steps: 
(1) apply Prop. \ref{Pr:DTLTIQSRDissipativity} for \eqref{Eq:DTLTIUnderFSF} to get an LMI in $P$ and $K$, 
(2) transform the LMI using Lm. \ref{Lm:PreAndPostMultiplication} with $\diag(P^{-1},\I,P^{-1},\I)$, and (3) change the LMI variables using $M \triangleq P^{-1}$ and $L \triangleq KP^{-1}$.  
\end{proof}

\subsubsection{\textbf{$\H_2$-Optimal Control}}
Under FSF control $u(t)=Kx(t)$, the goal of $\H_2$-optimal control is to synthesize a controller $K$ that minimizes the $\H_2$-norm of the closed-loop system \eqref{Eq:DTLTIUnderFSF} (from $w(t)$ to $y(t)$). For this purpose, the following proposition provides an LMI based approach.

\begin{proposition}\label{Pr:DTH2ControlUnderFSF}
The $\H_2$-optimal FSF controller $K$ for the closed-loop system \eqref{Eq:DTLTIUnderFSF} is found by solving the LMI:
\begin{equation}\label{Eq:Pr:DTH2ControlUnderFSF}
\begin{aligned}
    \min_{M,Q,L,\gamma}&\ \gamma\\
    \mbox{sub. to:}&\ M>0,\ Q>0,\ \gamma>0,\\
    &\bm{M & AM+BL & E \\ \star & M & \0 \\ \star & \star & \I}>0,\\
    &\bm{Q & CM+DL & F \\ \star & M & \0 \\ \star & \star & \I}>0,\ \tr{Q}<\gamma^2, 
\end{aligned}
\end{equation}
and $K=LM^{-1}$.
\end{proposition}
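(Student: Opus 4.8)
The plan is to mirror the continuous-time argument used for Prop. \ref{Pr:H2ControlUnderFSF}, since the discrete-time statement is its direct analogue. First I would read off the closed-loop realization from \eqref{Eq:DTLTIUnderFSF}, namely $\bar{A}=A+BK$, $\bar{B}=E$, $\bar{C}=C+DK$ and $\bar{D}=F$, and observe that synthesizing the $\H_2$-optimal FSF gain is exactly minimizing $\Vert \mathcal{G} \Vert_{\H_2}$ for this realization over the free parameter $K$. The whole proof then reduces to applying an already-established $\H_2$-norm characterization to this closed loop and performing a linearizing change of variables.

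The crucial choice is \emph{which} variant of Prop. \ref{Pr:DTH2Norm} to invoke: I would use the \eqref{Eq:Pr:DTH2Norm1} variant (not the transposed \eqref{Eq:Pr:DTH2Norm2} one), because there the Lyapunov certificate $P$ multiplies the dynamics and output matrices \emph{on the right}, appearing as $\bar{A}P$ and $\bar{C}P$. Substituting the closed-loop matrices yields $\exists P,Q>0,\gamma$ with the two coupled LMIs built from $(A+BK)P$, $E$, $(C+DK)P$, $F$ together with $\tr{Q}<\gamma^2$, characterizing $\Vert \mathcal{G} \Vert_{\H_2}<\gamma$ for fixed $K$. I would then perform the change of variables $M \triangleq P$ and $L \triangleq KP$ (so $K=LM^{-1}$), under which $(A+BK)P = AM+BL$ and $(C+DK)P = CM+DL$ become affine in $(M,L)$, converting the two matrix inequalities into precisely those stated in \eqref{Eq:Pr:DTH2ControlUnderFSF}. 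Note that no congruence transformation (Lm. \ref{Lm:PreAndPostMultiplication}) is required, since $P$ already sits in the correct positions — this is in contrast to the $\H_\infty$/DOF proofs, which genuinely need conjugation by $\Pi_x^\T P^{-1}$ or $\diag(P^{-1},\ldots)$.

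I expect the only real subtleties to be two bookkeeping points rather than a hard analytic step. The first is precisely the variant selection above: the \eqref{Eq:Pr:DTH2Norm2} variant would introduce $\bar{C}^\T = C^\T+K^\T D^\T$ and $\bar{B}^\T P = E^\T P$, which cannot be made affine by $L=KP$ without reintroducing $M^{-1}$. The second, and the point most easily overlooked, is that — unlike the continuous-time counterpart Prop. \ref{Pr:H2ControlUnderFSF}, which assumes $F=\0$ — here no restriction on the feedthrough $\bar{D}=F$ is needed, because the discrete-time $\H_2$-norm admits nonzero feedthrough (the $\tr{D^\T D}$ contribution in the DT $\H_2$ formula), so $F$ may be nonzero and indeed appears explicitly in the second LMI. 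Finally I would close the optimality argument: for each fixed $K$ the infimal feasible $\gamma$ equals $\Vert \mathcal{G} \Vert_{\H_2}$ by Prop. \ref{Pr:DTH2Norm}, so jointly minimizing over $(M,Q,L,\gamma)$ attains $\inf_K \Vert \mathcal{G} \Vert_{\H_2}$, and the minimizer recovers the $\H_2$-optimal gain through $K=LM^{-1}$.
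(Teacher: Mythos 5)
Your proposal is correct and follows essentially the same route as the paper: apply Prop.~\ref{Pr:DTH2Norm} (in its first form \eqref{Eq:Pr:DTH2Norm1}, where $P$ right-multiplies $\bar{A}$ and $\bar{C}$) to the closed loop \eqref{Eq:DTLTIUnderFSF} and linearize via $M=P$, $L=KP$, with no congruence transformation needed. Your added observations — that variant \eqref{Eq:Pr:DTH2Norm2} would not linearize under this substitution, and that no $F=\0$ assumption is required here (unlike the continuous-time Prop.~\ref{Pr:H2ControlUnderFSF}) because the discrete-time $\H_2$-norm tolerates feedthrough — are accurate refinements of the paper's one-line argument.
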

\begin{proof}
The proof is complete by applying Prop. \ref{Pr:DTH2Norm} to \eqref{Eq:DTLTIUnderFSF} and executing the change of variables: $M=P$ and $L=KP$.
\end{proof}

\subsubsection{\textbf{$\H_\infty$-Optimal Control}}
Similarly, the goal of $\H_\infty$-optimal control is to synthesize a controller $K$ that minimizes the $\H_\infty$-norm of the closed-loop system \eqref{Eq:CTLTIUnderFSF} (from $w(t)$ to $y(t)$). 

\begin{proposition}\label{Pr:DTHInfControlUnderFSF}
The $\H_\infty$-optimal FSF controller $K$ for the closed-loop system \eqref{Eq:DTLTIUnderFSF} is found by solving the LMI:
\begin{equation}\label{Eq:Pr:DTHInfControlUnderFSF}
\begin{aligned}
    \min_{M,L,\gamma}&\ \gamma\\
    \mbox{sub. to:}&\ M>0, \gamma>0,\\
    &\bm{
    M & AM+BL & E & \0 \\ \star & M & \0 & MC^\T+L^\T D^\T \\
    \star & \star & \gamma \I & F^\T \\ \star & \star & \star & \gamma \I}>0,
\end{aligned}
\end{equation}
and $K=LM^{-1}$.
\end{proposition}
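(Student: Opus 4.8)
The plan is to follow the same two-step template used for the other discrete-time FSF synthesis results (e.g.\ the proof of Prop.~\ref{Pr:DTH2ControlUnderFSF}): first invoke the matching analysis proposition for the closed-loop system, and then remove the resulting bilinearity by a change of variables. The relevant analysis result here is the $\H_\infty$-norm characterization Prop.~\ref{Pr:DTHInfNorm}, and the crucial modeling choice will be to use its \emph{first} form \eqref{Eq:Pr:DTHInfNorm1} rather than \eqref{Eq:Pr:DTHInfNorm2}.

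First I would apply Prop.~\ref{Pr:DTHInfNorm} directly to the closed-loop system \eqref{Eq:DTLTIUnderFSF}, whose transfer matrix from $w(t)$ to $y(t)$ has state matrix $A+BK$, input matrix $E$, output matrix $C+DK$, and feedthrough matrix $F$. Substituting $A\mapsto A+BK$, $B\mapsto E$, $C\mapsto C+DK$, $D\mapsto F$ into \eqref{Eq:Pr:DTHInfNorm1} gives the necessary-and-sufficient condition for $\Vert\mathcal{G}\Vert_{\H_\infty}<\gamma$: there exist $P>0$ and $\gamma>0$ with
\begin{equation*}
\bm{
P & (A+BK)P & E & \0 \\ \star & P & \0 & P(C+DK)^\T \\
\star & \star & \gamma\I & F^\T \\ \star & \star & \star & \gamma\I
}>0.
\end{equation*}
This is bilinear in the unknowns $P$ and $K$, entering only through the products $(A+BK)P$ (the $(1,2)$ block) and $P(C+DK)^\T$ (the $(2,4)$ block). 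The reason for preferring \eqref{Eq:Pr:DTHInfNorm1} is exactly this placement: the gain $K$ sits to the \emph{left} of $P$ in $(A+BK)P$, so it can be absorbed into a single product variable, whereas the form \eqref{Eq:Pr:DTHInfNorm2} would produce a term $P(A+BK)=PA+PBK$ that does not factor affinely.

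The second step is the change of variables $M\triangleq P$ and $L\triangleq KP$ (equivalently $K=LM^{-1}$). The only point to verify is that this substitution renders every block affine in $(M,L)$ without any congruence transformation: the $(1,2)$ block becomes $(A+BK)P=AM+BL$, and, using $M=M^\T$ so that $L^\T=(KP)^\T=PK^\T=MK^\T$, the $(2,4)$ block becomes $P(C+DK)^\T=MC^\T+L^\T D^\T$. The diagonal and remaining off-diagonal entries ($M$, $E$, $F^\T$, $\gamma\I$) are already affine, and one recovers precisely the LMI in \eqref{Eq:Pr:DTHInfControlUnderFSF}; minimizing $\gamma$ over the feasible $(M,L,\gamma)$ then yields the $\H_\infty$-optimal gain through $K=LM^{-1}$.

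I expect the main (and rather minor) obstacle to be noticing that, in contrast to the continuous-time counterpart Prop.~\ref{Pr:HInfControlUnderFSF} — whose proof requires a congruence with $\diag(P^{-1},\I,\I)$ via Lm.~\ref{Lm:PreAndPostMultiplication} together with the substitution $M\triangleq P^{-1}$ — the discrete-time inequality \eqref{Eq:Pr:DTHInfNorm1} already carries $P$ (not $P^{-1}$) on the correct side of $A$ and $C$. Hence the direct substitution $M=P,\ L=KP$ is sufficient and no pre/post-multiplication is invoked. The only care needed is in tracking the transpose in the $(2,4)$ block, so as to confirm $P(C+DK)^\T=MC^\T+L^\T D^\T$ exactly, rather than a mismatched expression with $K$ on the wrong side.
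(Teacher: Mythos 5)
Your proof is correct, but it reaches \eqref{Eq:Pr:DTHInfControlUnderFSF} by a different (and slightly more economical) route than the paper. The paper's proof applies Prop.~\ref{Pr:DTHInfNorm} to \eqref{Eq:DTLTIUnderFSF}, then performs a congruence transformation via Lm.~\ref{Lm:PreAndPostMultiplication} with $\diag(P^{-1},P^{-1},\I,\I)$, and finally substitutes $M\triangleq P^{-1}$, $L\triangleq KP^{-1}$; for that congruence to linearize the bilinear blocks, the starting point must be the second form \eqref{Eq:Pr:DTHInfNorm2}, which carries $P(A+BK)$ and $(C+DK)^\T$, and indeed $P^{-1}\bigl[P(A+BK)\bigr]P^{-1}=(A+BK)P^{-1}=AM+BL$ and $P^{-1}(C+DK)^\T=MC^\T+L^\T D^\T$. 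You instead start from the first form \eqref{Eq:Pr:DTHInfNorm1}, where $P$ already sits to the right of $A+BK$ and $(C+DK)^\T$ appears as $P(C+DK)^\T$, so the direct substitution $M=P$, $L=KP$ (with $L^\T=MK^\T$ by symmetry of $M$) yields the same final LMI with no congruence at all — exactly the template of the paper's own proof of Prop.~\ref{Pr:DTH2ControlUnderFSF}, as you note. Both routes are valid because the change of variables is bijective for $M>0$ (so the iff character is preserved in each), and they produce the identical inequality; what your version buys is one fewer lemma invocation and a cleaner parallel with the $\H_2$ case, while the paper's version keeps the discrete-time proof structurally parallel to its continuous-time counterpart Prop.~\ref{Pr:HInfControlUnderFSF}, where the congruence-plus-$P^{-1}$ substitution is unavoidable. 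Your closing remark about tracking the transpose in the $(2,4)$ block is exactly the right point of care, and your computation $P(C+DK)^\T=MC^\T+L^\T D^\T$ is correct.
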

\begin{proof}
The proof is complete by following the steps: (1) apply Prop. \ref{Pr:DTHInfNorm} to \eqref{Eq:DTLTIUnderFSF}, (2) transform the main LMI using Lm. \ref{Lm:PreAndPostMultiplication} with $\diag(P^{-1},P^{-1},\I,\I)$, and (3) change the LMI variables using $M \triangleq P^{-1}$ and $L=KP^{-1}$. 
\end{proof}

\subsection{Observer Design for DT-LTI Systems}
For full-state feedback control $u(t)=Kx(t)$ the controller requires the state information $x(t)$ of the DT-LTI system \eqref{Eq:NoisyDTLTISystem}. Typically, $x(t)$ is not available at the controller, and thus, an \emph{observer} is required to keep an \emph{estimate} of $x(t)$ as $\hat{x}(t)$.

\subsubsection{\textbf{Luenberger Observer}}
For the DT-LTI system \eqref{Eq:NoisyDTLTISystem}, consider a \emph{Luenberger observer} implemented at the controller:
\begin{equation}\label{Eq:DTLuenbergerObserver}
    \hat{x}(t+1) = \hat{A}\hat{x}(t) + \hat{B}u(t) + Ly(t),
\end{equation}
that has the estimation error ($e(t)\triangleq x(t)-\hat{x}(t)$) dynamics:
\begin{equation}\label{Eq:DTLuenbergerObserverErrorDynamics}
\begin{aligned}
    e(t+1) = &\hat{A}e(t) + (A-\hat{A}-LC)x(t) \
    \\ &+ (B-\hat{B}-LD)u(t) + (E-LF)w(t).
\end{aligned}
\end{equation}
The Luenberger observer parameters: $\hat{A}, \hat{B}$ and $L$ can be selected according to the following proposition. 

\begin{proposition}\label{Pr:DTObserver}
Under $w(t)=0$ and Luenberger observer \eqref{Eq:DTLuenbergerObserver} parameters: $\hat{A} \triangleq A-LC$ and $\hat{B} \triangleq B-LD$, the estimation error dynamics \eqref{Eq:DTLuenbergerObserverErrorDynamics} are stable iff $\exists P > 0$ and $K$ such that 
\begin{equation}\label{Eq:Pr:DTObserver}
    \bm{ P & A^\T P- C^\T K^\T\\ \star & P} > 0 
\end{equation}
and $L=P^{-1}K$.
\end{proposition}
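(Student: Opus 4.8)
The plan is to follow exactly the path used for the continuous-time observer in Prop.~\ref{Pr:Observer}, but invoking the discrete-time stability condition Prop.~\ref{Pr:DTLTIStability} in place of the continuous-time Prop.~\ref{Pr:CTLTIStability}. First I would substitute the prescribed observer parameters $\hat{A}\triangleq A-LC$ and $\hat{B}\triangleq B-LD$ together with $w(t)=0$ into the error dynamics \eqref{Eq:DTLuenbergerObserverErrorDynamics}. These choices make $A-\hat{A}-LC=\0$ and $B-\hat{B}-LD=\0$, so every term except the first drops out and the error dynamics collapse to the autonomous DT-LTI system $e(t+1)=(A-LC)e(t)$.

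Next I would apply Prop.~\ref{Pr:DTLTIStability} to this reduced system, treating $A-LC$ as its state matrix. This gives that the error dynamics are stable iff there exists $P>0$ such that
\begin{equation*}
    \bm{P & (A-LC)^\T P \\ \star & P}>0.
\end{equation*}
Expanding the off-diagonal block yields $(A-LC)^\T P = A^\T P - C^\T L^\T P$, which is bilinear in the decision variables $L$ and $P$ and therefore not yet an LMI.

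The final and only substantive step is the change of variables $K\triangleq PL$ (equivalently $L=P^{-1}K$), mirroring the substitution in Prop.~\ref{Pr:Observer}. Since $P=P^\T$, we have $K^\T=(PL)^\T=L^\T P$, so the bilinear term becomes $C^\T L^\T P = C^\T K^\T$, and the inequality above turns into the claimed condition \eqref{Eq:Pr:DTObserver}, now jointly linear in $P$ and $K$. The equivalence is biconditional because, for each fixed $P>0$, the map $L\mapsto K=PL$ is a bijection (as $P$ is invertible), and both Prop.~\ref{Pr:DTLTIStability} and the underlying congruence argument are themselves iff statements. I do not expect a genuine obstacle: the computation is routine, and the only point requiring care is the \emph{direction} of the change of variables---one must use $K=PL$ (not $K=LP$) so that $L^\T P$ collapses cleanly to $K^\T$ and the resulting condition is convex in $(P,K)$, which is precisely what makes the decentralized machinery of later sections (where $\hat{A},\hat{B},L$ are recovered via the LMEs \eqref{Eq:LuenbergerObserverParameters} and $L=P^{-1}K$) applicable.
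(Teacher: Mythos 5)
Your proposal is correct and follows essentially the same route as the paper's own proof: reduce \eqref{Eq:DTLuenbergerObserverErrorDynamics} to $e(t+1)=(A-LC)e(t)$ under the given parameter choices and $w(t)=0$, apply Prop.~\ref{Pr:DTLTIStability}, and perform the change of variables $K=PL$ so that $(A-LC)^\T P = A^\T P - C^\T K^\T$. The extra details you supply (the bijection $L\mapsto PL$ for fixed $P>0$ preserving the biconditional, and the need to use $K=PL$ rather than $K=LP$) are exactly the points the paper leaves implicit, and they are handled correctly.
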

\begin{proof}
Under $w(t)=0$ and the given observer parameter choices, the estimation error dynamics \eqref{Eq:DTLuenbergerObserverErrorDynamics} reduces to: 
\begin{equation}\label{Eq:Pr:DTObserverStep1}
e(t+1) = (A-LC)e(t).    
\end{equation}
The proof is complete by applying Prop. \ref{Pr:DTLTIStability} to \eqref{Eq:Pr:DTObserverStep1} and then executing a change of variables using $K=PL$. 
\end{proof}

The Luenberger observer design proposed above assumes the noise-less case of \eqref{Eq:NoisyDTLTISystem} (i.e., \eqref{Eq:DTLTISystem}). This assumption is relaxed in the $(Q,S,R)$-dissipative and $\H_2/\H_\infty$-optimal observer designs described in subsequent subsections. However, before getting into those details, first, consider the Luenberger observer \eqref{Eq:DTLuenbergerObserver} parameters: $\hat{A} \triangleq A-LC$ and $\hat{B} \triangleq B-LD$ under which the estimation error dynamics \eqref{Eq:DTLuenbergerObserverErrorDynamics} take the form 
\begin{equation}\label{Eq:DTLuenbergerObserverWithPerformance}
    \begin{aligned}
        e(t+1) =&\ (A-LC)e(t)+(E-LF)w(t),\\
        z(t) =&\ Ge(t)+Jw(t).
    \end{aligned}
\end{equation}
where $z(t)$ is a pre-defined performance metric.

\subsubsection{\textbf{$(Q,S,R)$-Dissipative Observer}}
The $(Q,S,R)$-dissipative observer synthesizes the Luenberger observer gain $L$ such that \eqref{Eq:DTLuenbergerObserverWithPerformance} is $(Q,S,R)$-dissipative from $w(t)$ to $z(t)$. For this purpose, the following proposition can be used.

\begin{proposition}\label{Pr:DTDissipativeObserver}
The estimation error dynamics \eqref{Eq:DTLuenbergerObserverWithPerformance} is $(Q,S,R)$-dissipative (with $-Q>0$, $R=R^\T$) from $w(t)$ to $z(t)$ iff $\exists P>0$ and $K$ such that 
\begin{equation}\label{Eq:Pr:DTDissipativeObserver}
    \bm{
    P & G^\T S & A^\T P - C^\T K^\T & G^\T \\ \star & J^\T S + S^\T J + R & E^\T P - F^\T K^\T & J^\T \\
    \star & \star & P & \0 \\ \star & \star & \star & -Q^{-1}
    }>0
\end{equation}
and $L=KP^{-1}$.
\end{proposition}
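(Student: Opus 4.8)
The plan is to treat the estimation error dynamics \eqref{Eq:DTLuenbergerObserverWithPerformance} as a standard DT-LTI system of the form \eqref{Eq:DTLTISystem} (with noise driving the state), in which the state matrix is $A-LC$, the input matrix is $E-LF$, the output matrix is $G$, the feedthrough matrix is $J$, the input is $w(t)$, and the output is $z(t)$. With this identification, the dissipativity characterization already established in Prop. \ref{Pr:DTLTIQSRDissipativity} applies verbatim. First I would substitute these closed-loop matrices into the LMI \eqref{Eq:Pr:DTLTIQSRDissipativity}, which (under $-Q>0$ and $R=R^\T$) gives a necessary and sufficient condition for $(Q,S,R)$-dissipativity from $w(t)$ to $z(t)$: there exists $P>0$ such that
\begin{equation*}
\bm{
P & G^\T S & (A-LC)^\T P & G^\T \\
\star & J^\T S + S^\T J + R & (E-LF)^\T P & J^\T \\
\star & \star & P & \0 \\
\star & \star & \star & -Q^{-1}
}>0.
\end{equation*}

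The obstacle is that the $(1,3)$ and $(2,3)$ blocks $(A-LC)^\T P$ and $(E-LF)^\T P$ are bilinear in the unknowns $P$ and $L$, so this is not yet an LMI solvable for the observer gain. To linearize, I would introduce the change of variables $K \triangleq PL$, which is invertible because $P>0$. Expanding the coupling terms and using $L^\T P = (PL)^\T = K^\T$ together with the symmetry of $P$ gives $(A-LC)^\T P = A^\T P - C^\T L^\T P = A^\T P - C^\T K^\T$ and, likewise, $(E-LF)^\T P = E^\T P - F^\T K^\T$. Substituting these two identities into the block matrix above reproduces exactly the LMI \eqref{Eq:Pr:DTDissipativeObserver} in the variables $P$ and $K$.

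Finally, since $P>0$ is nonsingular, the correspondence $(P,L)\mapsto(P,K=PL)$ is a bijection, so feasibility of the bilinear condition for some pair $(P,L)$ is equivalent to feasibility of \eqref{Eq:Pr:DTDissipativeObserver} for some pair $(P,K)$; this preserves the ``iff'' in both directions, and the observer gain is recovered by the inverse change of variables $L=P^{-1}K$. I expect the only delicate point to be the bookkeeping of the transpose when linearizing, i.e. ensuring that $C^\T L^\T P$ collapses to $C^\T K^\T$ (rather than $KC$ or its transpose), which is precisely why I would state the intermediate substituted LMI explicitly before performing the change of variables. The argument is the exact discrete-time analogue of the proof of Prop. \ref{Pr:DissipativeObserver}, with Prop. \ref{Pr:DTLTIQSRDissipativity} playing the role of Prop. \ref{Pr:CTLTIQSRDissipativity}.
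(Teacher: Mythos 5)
Your proof is correct and takes essentially the same route as the paper's: apply Prop.~\ref{Pr:DTLTIQSRDissipativity} to the error dynamics \eqref{Eq:DTLuenbergerObserverWithPerformance} and linearize the bilinear terms via the change of variables $K=PL$. One minor remark: your recovery $L=P^{-1}K$ is the consistent one, since the blocks $A^\T P - C^\T K^\T$ and $E^\T P - F^\T K^\T$ in \eqref{Eq:Pr:DTDissipativeObserver} match $(A-LC)^\T P$ and $(E-LF)^\T P$ precisely under $K=PL$; the statement's ``$L=KP^{-1}$'' appears to be a typo in the paper.
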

\begin{proof}
The proof is complete by applying Prop. \ref{Pr:DTLTIQSRDissipativity} to \eqref{Eq:DTLuenbergerObserverWithPerformance} and executing a change of variables using $K=PL$. 
\end{proof}

\subsubsection{\textbf{$\H_2$-Optimal Observer}}
The goal of $\H_2$-optimal observer is to synthesize the Luenberger observer gain $L$ that minimizes the $\H_2$-norm of \eqref{Eq:LuenbergerObserverWithPerformance} (from $e(t)$ to $z(t)$). 

\begin{proposition}\label{Pr:DTH2Observer}
The $\H_2$-optimal observer gain $L$ for \eqref{Eq:DTLuenbergerObserverWithPerformance} is found by solving the LMI:
\begin{equation}\label{Eq:Pr:DTH2Observer}
\begin{aligned}
    \min_{P,Q,K,\gamma}&\ \gamma\\
    \mbox{sub. to:}&\ P>0,\ Q>0,\ \gamma>0,\\
    &\bm{P & PA-KC & PE-KF \\ \star & P & \0 \\\star & \star & \I}>0,\\
    &\bm{Q & PG & J \\ \star  & P & \0 \\ \star  & \star & \I}>0,\ \tr{Q}<\gamma^2,
\end{aligned}
\end{equation}
and $L=P^{-1}K$.
\end{proposition}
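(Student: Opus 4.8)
The plan is to treat the estimation error dynamics \eqref{Eq:DTLuenbergerObserverWithPerformance} as a DT-LTI system of the form \eqref{Eq:DTLTISystem} driven by $w(t)$ with output $z(t)$, whose state, input, output and feedthrough matrices are $\bar{A}\triangleq A-LC$, $\bar{B}\triangleq E-LF$, $\bar{C}\triangleq G$ and $\bar{D}\triangleq J$, and then to apply the $\H_2$-norm characterization of Prop. \ref{Pr:DTH2Norm} to this system, exactly in the spirit of the proofs of Prop. \ref{Pr:DTH2ControlUnderFSF} and Prop. \ref{Pr:H2Observer}. This reduces the assertion -- that solving \eqref{Eq:Pr:DTH2Observer} yields the $\H_2$-optimal gain -- to showing that, after a suitable change of variables, the conditions of Prop. \ref{Pr:DTH2Norm} coincide with the two block LMIs and the constraint $\tr{Q}<\gamma^2$ in \eqref{Eq:Pr:DTH2Observer}. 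Since each reformulation step is an ``iff'', feasibility for a fixed $\gamma$ is equivalent to $\Vert\mathcal{G}\Vert_{\H_2}<\gamma$ for the error map, so minimizing $\gamma$ returns the optimal observer gain.

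The first step is to instantiate Prop. \ref{Pr:DTH2Norm} (the form \eqref{Eq:Pr:DTH2Norm1}) with $(\bar A,\bar B,\bar C,\bar D)$. The reason this is not a one-line substitution -- and what I expect to be the main obstacle -- is the bilinearity in $L$ and the Lyapunov/Gramian matrix. The gain $L$ enters the error system only through $\bar A$ and $\bar B$, and always as the left-most factor ($-LC$ and $-LF$). Hence $L$ can be absorbed by the invertible substitution $K\triangleq PL$ (well defined as $L=P^{-1}K$ since $P>0$) only if $P$ multiplies $\bar A,\bar B$ from the \emph{left}, so that $P\bar A=PA-PLC=PA-KC$ and $P\bar B=PE-PLF=PE-KF$ are affine in $(P,K)$. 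However, the LMIs of Prop. \ref{Pr:DTH2Norm}, as used for the full-state feedback case in Prop. \ref{Pr:DTH2ControlUnderFSF}, carry the Gramian variable on the \emph{right} (e.g. $\bar A P$), which suits the right-multiplicative feedback change of variables $L=KP$ but not the observer one. I would therefore insert a congruence transformation (Lm. \ref{Lm:PreAndPostMultiplication}) to move $P$ to the left before substituting -- equivalently, letting the decision variable $P$ denote the \emph{inverse} controllability Gramian.

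Concretely, conjugating the Lyapunov LMI by $\diag(P^{-1},P^{-1},\I)$ and the output LMI by $\diag(\I,P^{-1},\I)$ (and relabeling $P^{-1}$ as $P$) converts $\bar A P,\bar B$ into $P\bar A,P\bar B$ and $\bar C P$ into $\bar C$, after which $K=PL$ renders every block affine in $(P,Q,K,\gamma)$: the Lyapunov LMI becomes the first block matrix of \eqref{Eq:Pr:DTH2Observer} with $(1,2)$-block $PA-KC$ and $(1,3)$-block $PE-KF$, and the output LMI together with $\tr{Q}<\gamma^2$ becomes the second condition of \eqref{Eq:Pr:DTH2Observer}, with the performance matrices $G,J$ entering through the output channel since they do not involve $L$. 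Recovering $L=P^{-1}K$ then completes the equivalence. The point to verify carefully is precisely that the congruence produces only left-multiplications of $P$ on the $L$-dependent blocks (so that no residual product of $L$ with $P$ on the right survives, which $K=PL$ could not eliminate) and that $G,J$ pass through the congruence cleanly into the output LMI.
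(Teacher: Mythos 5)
Your proposal is correct and follows essentially the paper's own route: apply Prop.~\ref{Pr:DTH2Norm} (form \eqref{Eq:Pr:DTH2Norm1}) to the error system \eqref{Eq:DTLuenbergerObserverWithPerformance}, pass to the inverse Gramian via the congruences $\diag(P^{-1},P^{-1},\I)$ and $\diag(\I,P^{-1},\I)$, relabel $P^{-1}\to P$, and substitute $K=PL$ --- your congruence step merely makes explicit what the paper's one-line proof (``apply Prop.~\ref{Pr:DTH2Norm} and change variables $K=PL$'') leaves implicit, in the same way congruences are used in the proofs of Props.~\ref{Pr:DTStabilizationUnderFSF} and \ref{Pr:DTDissipativationUnderFSF}. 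One remark: your derivation correctly yields the bare block $G$ in the $(1,2)$ position of the second LMI, so where the printed statement shows $PG$ (dimensionally undefined in general, since $P\in\R^{n\times n}$ and $G\in\R^{l\times n}$) this is a typo in the proposition, not a defect in your argument.
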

\begin{proof}
The proof is complete by applying Prop. \ref{Pr:DTH2Norm} to \eqref{Eq:DTLuenbergerObserverWithPerformance} and executing a change of variables using $K=PL$. 
\end{proof}

\subsubsection{\textbf{$\H_\infty$-Optimal Observer}}
The goal of $\H_\infty$-optimal observer is to synthesize the Luenberger observer gain $L$ that minimizes the $\H_\infty$-norm of \eqref{Eq:DTLuenbergerObserverWithPerformance} (from $e(t)$ to $z(t)$).

\begin{proposition}\label{Pr:DTHInfObserver}
The $\H_\infty$-optimal observer gain $L$ for \eqref{Eq:DTLuenbergerObserverWithPerformance} is found by solving the LMI:
\begin{equation}\label{Eq:Pr:DTHInfObserver}
\begin{aligned}
    \min_{P,K,\gamma}&\ \gamma\\
    \mbox{sub. to:}&\ P>0,\ \gamma>0,\\
    &\bm{ P & PA-KC & PE-KF & \0 \\\star & P & \0 & G^\T \\
    \star & \star & \gamma \I & J^\T \\ \star & \star & \star & \gamma \I}>0,
\end{aligned}
\end{equation}
and $L=P^{-1}K$.
\end{proposition}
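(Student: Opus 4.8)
The plan is to mirror the proof of the $\H_2$-optimal observer (Prop.~\ref{Pr:DTH2Observer}): first apply the DT $\H_\infty$-norm characterization (Prop.~\ref{Pr:DTHInfNorm}) directly to the estimation error dynamics \eqref{Eq:DTLuenbergerObserverWithPerformance}, which produces a matrix inequality that is bilinear in the decision variables $P$ and $L$; then linearize it through the change of variables $K \triangleq PL$. Since the objective is the smallest achievable gain, feasibility of the resulting inequality for a fixed $\gamma$ is exactly the statement $\Vert\mathcal{G}\Vert_{\H_\infty}<\gamma$, so appending $\min \gamma$ recovers the $\H_\infty$-optimal observer gain.

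Concretely, I would read off the state-space data of \eqref{Eq:DTLuenbergerObserverWithPerformance} as $(A-LC,\ E-LF,\ G,\ J)$ playing the roles of $(A,B,C,D)$ in Prop.~\ref{Pr:DTHInfNorm}, and substitute these into the \emph{second} form \eqref{Eq:Pr:DTHInfNorm2} of that proposition. The $(1,2)$ and $(1,3)$ blocks then become $P(A-LC)=PA-PLC$ and $P(E-LF)=PE-PLF$, while the $(2,4)$ and $(3,4)$ blocks are $G^\T$ and $J^\T$, independent of $L$.

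Now the change of variables $K\triangleq PL$ turns $PA-PLC$ into $PA-KC$ and $PE-PLF$ into $PE-KF$, both affine in $(P,K)$; this yields exactly the LMI \eqref{Eq:Pr:DTHInfObserver}. Because $P>0$ is invertible, $K$ and $L$ are in one-to-one correspondence via $L=P^{-1}K$, so solving the LMI in $(P,K,\gamma)$ and then recovering $L$ loses no generality and preserves the necessary-and-sufficient character inherited from Prop.~\ref{Pr:DTHInfNorm}.

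The only genuine decision point---hardly an obstacle---is selecting the correct form of Prop.~\ref{Pr:DTHInfNorm}. The first form \eqref{Eq:Pr:DTHInfNorm1} places the closed-loop $A$-matrix on the right, producing the term $(A-LC)P=AP-LCP$, in which $L$ is not pre-multiplied by $P$; that product is not collapsed by $K=PL$. In contrast, the second form \eqref{Eq:Pr:DTHInfNorm2} always multiplies the closed-loop data on the left by $P$, so the gain $L$ appears only inside the products $PL$, which is precisely what $K=PL$ reduces to a single variable. Choosing \eqref{Eq:Pr:DTHInfNorm2} is therefore what makes the synthesis LMI convex, and the remainder of the argument is routine substitution.
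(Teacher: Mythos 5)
Your proposal is correct and matches the paper's own proof exactly: the paper also applies Prop.~\ref{Pr:DTHInfNorm} in its second form \eqref{Eq:Pr:DTHInfNorm2} to the error dynamics \eqref{Eq:DTLuenbergerObserverWithPerformance} and then performs the change of variables $K=PL$. Your added remark about why \eqref{Eq:Pr:DTHInfNorm1} would fail (since $(A-LC)P$ is not linearized by $K=PL$) is a correct justification of the choice the paper makes implicitly.
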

\begin{proof}
The proof is complete by applying Prop. \ref{Pr:DTHInfNorm}\eqref{Eq:Pr:DTHInfNorm2} to \eqref{Eq:DTLuenbergerObserverWithPerformance} and executing a change of variables using $K=PL$. 
\end{proof}

\subsection{Dynamic Output Feedback (DOF) Control Synthesis for DT-LTI Systems}

Consider the DT-LTI system \eqref{Eq:DTLTISystem} with noise $w(t)\in\R^{q}$ and performance $z(t)\in\R^{l}$:
\begin{equation}\label{Eq:NoisyDTLTISystemWithPerf}
    \begin{aligned}
        x(t+1) = Ax(t) + Bu(t) + Ew(t),\\
        y(t) = Cx(t) + Du(t) + Fw(t),\\
        z(t) = Gx(t) + Hu(t) + Jw(t).
    \end{aligned}
\end{equation}
Under $D=\0$ and dynamic output feedback (DOF) control (from $y(t)$ to $u(t)$):
\begin{equation}\label{Eq:DTDOFController}
    \begin{aligned}
        \zeta(t+1) = A_c \zeta(t) + B_c y(t),\\
        u(t) = C_c \zeta(t) + D_c y(t),
    \end{aligned}
\end{equation}
where $\zeta(t)\in\R^{r}$, the closed-loop DT-LTI system \eqref{Eq:NoisyDTLTISystemWithPerf} takes the form:
\begin{equation}\label{Eq:DTLTIUnderDOF}
    \begin{aligned}
        \theta(t+1) = \bar{A}\theta(t) + \bar{B}w(t),\\
        z(t) = \bar{C}\theta(t) + \bar{D}w(t),
    \end{aligned}
\end{equation}
with $\theta(t)=\bm{x^\T(t) & \zeta^\T(t)}^\T$ and 
\begin{equation*}
\begin{aligned}
    \bar{A} \triangleq \bm{A+BD_cC & BC_c\\B_cC & A_c},\ 
    \bar{B} \triangleq \bm{E+BD_cF\\B_cF},\\ 
    \bar{C} \triangleq \bm{G+HD_cC & HC_c},\ 
    \bar{D} \triangleq \bm{J+HD_cF}.
\end{aligned}
\end{equation*}

In parallel to Sec. \ref{SubSec:DOFControllerSynthesis}, in the subsequent subsections, we provide LMI conditions for DOF controller synthesis (i.e., to design $A_c,B_c,C_c,D_c$ in \eqref{Eq:DTDOFController}) so as to stabilize, $(Q,S,R)$-dissipativate or optimize $\H_2$/$\H_\infty$-norm of the closed-loop system \eqref{Eq:DTLTIUnderDOF}. For this purpose, we can use the same change of variables process introduced in section \ref{SubSubSec:ChangeOfVariables} due to the equivalence of coefficient matrices in \eqref{Eq:DTLTIUnderDOF} and \eqref{Eq:CTLTIUnderDOF}.

\subsubsection{\textbf{Stabilization}}
The following proposition gives an LMI condition that leads to synthesize a DOF controller \eqref{Eq:DTDOFController} such that the closed-loop system  \eqref{Eq:DTLTIUnderDOF} is stabilized.

\begin{proposition}\label{Pr:DTStabilizationUnderDOF}
Under $D=w(t)=\0$, the closed-loop DT-LTI system \eqref{Eq:DTLTIUnderDOF} is stable iff $\exists X,Y>0$ and $A_n,B_n,C_n,D_n$ such that 
\begin{align}
\label{Eq:Pr:DTStabilizationUnderDOF1}
\bm{Y & \I \\ \I & X}>0,\\ 
\label{Eq:Pr:DTStabilizationUnderDOF2}
\bm{Y & \I & YA^\T+C_n^\T B^\T & A_n^\T \\ \star & X & A^\T+C^\T D_n^\T B^\T & A^\T X+C^\T B_n^\T \\ 
\star & \star & Y & \I \\ \star & \star & \star & X}>0,
\end{align}
and $A_c,B_c,C_c,D_c$ \eqref{Eq:DTDOFController} are found by CoVs \eqref{Eq:CoVsXYToMN} and \eqref{Eq:COVsAuxToDOF}. 
\end{proposition}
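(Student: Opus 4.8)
The plan is to mirror the continuous-time argument of Prop.~\ref{Pr:StabilizationUnderDOF}, replacing the Lyapunov inequality of Prop.~\ref{Pr:CTLTIStability} with its discrete-time counterpart in Prop.~\ref{Pr:DTLTIStability}. First I would apply Prop.~\ref{Pr:DTLTIStability} to the closed-loop system \eqref{Eq:DTLTIUnderDOF}: it is stable iff there exists $P>0$ such that
\[
\bm{P & \bar{A}^\T P \\ P\bar{A} & P}>0.
\]
As in the continuous-time proof, the constraint $P>0$ is handled separately: using the change of variables \eqref{Eq:CoVPAndPiDef} together with property \eqref{Eq:CoVPisPositiveDefinite}, $P>0$ is equivalent to \eqref{Eq:Pr:DTStabilizationUnderDOF1}. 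It then remains to show the block inequality above is equivalent to \eqref{Eq:Pr:DTStabilizationUnderDOF2}.

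The key step is a single congruence transformation. I would apply Lm.~\ref{Lm:PreAndPostMultiplication} with the matrix $\diag(\Pi_x^\T P^{-1},\Pi_x^\T P^{-1})$, noting that $\Pi_x^\T P^{-1}=\Pi_y^\T$ by \eqref{Eq:CoVPixPiy} and that it is full rank whenever $M$ (equivalently $N$) is non-singular, the standing assumption of the CoVs. This produces
\[
\bm{\Pi_y^\T P\Pi_y & \Pi_y^\T \bar{A}^\T P\Pi_y \\ \star & \Pi_y^\T P\Pi_y}>0.
\]
Using $P\Pi_y=\Pi_x$ from \eqref{Eq:CoVPixPiy}, each diagonal block becomes $\Pi_y^\T\Pi_x=\bm{Y & \I \\ \I & X}$ (exactly as in the derivation of \eqref{Eq:CoVPisPositiveDefinite}), supplying the top-left and bottom-right blocks of \eqref{Eq:Pr:DTStabilizationUnderDOF2}. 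For the off-diagonal block I would write $\Pi_y^\T\bar{A}^\T P\Pi_y=\Pi_y^\T\bar{A}^\T\Pi_x=(\Pi_x^\T\bar{A}\Pi_y)^\T$ and substitute the first identity of \eqref{Eq:COVsKeyResults}; transposing it and using $X=X^\T$ yields precisely $\bm{YA^\T+C_n^\T B^\T & A_n^\T \\ A^\T+C^\T D_n^\T B^\T & A^\T X+C^\T B_n^\T}$, which matches \eqref{Eq:Pr:DTStabilizationUnderDOF2}. Once \eqref{Eq:Pr:DTStabilizationUnderDOF1}--\eqref{Eq:Pr:DTStabilizationUnderDOF2} are solved for $X,Y,A_n,B_n,C_n,D_n$, the controller matrices $A_c,B_c,C_c,D_c$ are recovered through \eqref{Eq:CoVsXYToMN} and \eqref{Eq:COVsAuxToDOF}, and the equivalence holds in both directions since the congruence and Schur-complement steps are all reversible.

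I expect the main obstacle to be purely in the off-diagonal bookkeeping. Unlike the symmetric term $-\H_s(\Pi_x^\T\bar{A}\Pi_y)$ appearing in the continuous-time stability LMI \eqref{Eq:Pr:StabilizationUnderDOF2}, the discrete-time LMI retains $\bar{A}^\T P$ as an un-symmetrized off-diagonal block, so I must carry the full non-symmetric $2\times2$ expression $\Pi_y^\T\bar{A}^\T\Pi_x$ and verify its four sub-blocks individually against \eqref{Eq:COVsKeyResults} rather than collapsing them via $\H_s(\cdot)$. The only other point requiring care is justifying that the chosen congruence matrix is full rank, equivalently that $M,N$ from \eqref{Eq:CoVsXYToMN} exist and are non-singular; this follows from the non-singularity of $\I-XY$, which is in turn guaranteed by \eqref{Eq:Pr:DTStabilizationUnderDOF1} exactly as in the continuous-time case.
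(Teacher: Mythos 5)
Your proposal is correct and follows essentially the same route as the paper's proof: apply Prop.~\ref{Pr:DTLTIStability} to \eqref{Eq:DTLTIUnderDOF}, handle $P>0$ via \eqref{Eq:CoVPAndPiDef} and \eqref{Eq:CoVPisPositiveDefinite}, then apply Lm.~\ref{Lm:PreAndPostMultiplication} with $\diag(\Pi_x^\T P^{-1},\Pi_x^\T P^{-1})$ and substitute via \eqref{Eq:CoVPixPiy} and \eqref{Eq:COVsKeyResults}. Your explicit verification of the non-symmetric off-diagonal block $(\Pi_x^\T\bar{A}\Pi_y)^\T$ and of the full-rank/non-singularity of $\I-XY$ is sound and in fact slightly more careful than the paper's own write-up.
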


\begin{proof}
Applying Prop. \ref{Pr:DTLTIStability} to \eqref{Eq:DTLTIUnderDOF} give the LMI conditions necessary and sufficient for the stabilization of \eqref{Eq:DTLTIUnderDOF} as: $\exists P>0$ such that 
\begin{equation}\label{Eq:Pr:DTStabilizationUnderDOFStep1}
    \bm{ P & \bar{A}^\T P \\ \star & P } >0.  
\end{equation}
Using the CoVs in \eqref{Eq:CoVPAndPiDef} and \eqref{Eq:CoVPisPositiveDefinite}, the LMI $P>0$ can be transformed to \eqref{Eq:Pr:DTStabilizationUnderDOF1}. Finally, \eqref{Eq:Pr:DTStabilizationUnderDOFStep1} can be transformed to \eqref{Eq:Pr:DTStabilizationUnderDOF2} by applying Lm. \ref{Lm:PreAndPostMultiplication} with $\diag(\Pi_x^\T P^{-1}, \Pi_x^\T P^{-1})$ and substituting from \eqref{Eq:CoVPixPiy} as:
\begin{align}
\bm{ P & \bar{A}^\T P \\ \star & P } >0
\iff 
\bm{\Pi_x^\T P^{-1}\Pi_x & \Pi_x^\T P^{-1}\bar{A}^\T\Pi_x \\ \star & \Pi_x^\T P^{-1}\Pi_x}>0 \nonumber\\
\iff 
\bm{\Pi_y^\T \Pi_x & \Pi_y^\T \bar{A}^\T \Pi_x \\ \star & \Pi_y^\T\Pi_x}>0 \iff \mbox{\eqref{Eq:Pr:DTStabilizationUnderDOF2}}.\nonumber
\end{align} 
Note that the last step above results from \eqref{Eq:COVsKeyResults}.
\end{proof}

\subsubsection{\textbf{$(Q,S,R)$-Dissipativation}}
The following proposition provides an LMI condition that leads to synthesize a DOF controller \eqref{Eq:DTDOFController} such that the closed-loop system \eqref{Eq:DTLTIUnderDOF} is $(Q,S,R)$-dissipative from $w(t)$ to $z(t)$. 

\begin{proposition}\label{Pr:DTDissipativationUsingDOF}
Under $D=\0$, the closed-loop DT-LTI system \eqref{Eq:DTLTIUnderDOF} is $(Q,S,R)$-dissipative (with $-Q>0$, $R=R^\T$) from $w(t)$ to $z(t)$ iff $\exists X,Y>0$ and $A_n,B_n,C_n,D_n$ such that 
\begin{equation}\label{Eq:Pr:DTDissipativationUsingDOF1}
    \bm{Y & \I \\ \I & X}>0 \mbox{ and  \eqref{Eq:Pr:DTDissipativationUsingDOF2}},
\end{equation}
and $A_c,B_c,C_c,D_c$ \eqref{Eq:DTDOFController} are found by CoVs \eqref{Eq:CoVsXYToMN} and \eqref{Eq:COVsAuxToDOF}. 
\end{proposition}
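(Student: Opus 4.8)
The plan is to mirror the continuous-time argument of Prop. \ref{Pr:DissipativationUsingDOF}, but to start from the discrete-time dissipativity characterization of Prop. \ref{Pr:DTLTIQSRDissipativity} instead of its continuous-time counterpart. First I would apply Prop. \ref{Pr:DTLTIQSRDissipativity} to the closed-loop system \eqref{Eq:DTLTIUnderDOF}, treating $w(t)$ as the input and $z(t)$ as the output so that $A,B,C,D$ are replaced by $\bar{A},\bar{B},\bar{C},\bar{D}$. This gives that \eqref{Eq:DTLTIUnderDOF} is $(Q,S,R)$-dissipative from $w$ to $z$ iff $\exists P>0$ such that
\begin{equation*}
\bm{
P & \bar{C}^\T S & \bar{A}^\T P & \bar{C}^\T \\
\star & \H_s(\bar{D}^\T S) + R & \bar{B}^\T P & \bar{D}^\T \\
\star & \star & P & \0 \\
\star & \star & \star & -Q^{-1}
}>0,
\end{equation*}
which is the DT analogue of the step \eqref{Eq:Pr:DissipativationUsingDOFStep1} appearing in the proof of Prop. \ref{Pr:DissipativationUsingDOF}.

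Next I would deploy the change of variables of Sec. \ref{SubSubSec:ChangeOfVariables}. Exactly as in Prop. \ref{Pr:DissipativationUsingDOF}, the constraint $P>0$ is converted into the first inequality of \eqref{Eq:Pr:DTDissipativationUsingDOF1}, namely $\bm{Y & \I \\ \I & X}>0$, using \eqref{Eq:CoVPAndPiDef} and \eqref{Eq:CoVPisPositiveDefinite}. The central step is to apply the congruence principle (Lm. \ref{Lm:PreAndPostMultiplication}) to the displayed LMI with the full-rank block-diagonal factor $\diag(\Pi_x^\T P^{-1}, \I, \Pi_x^\T P^{-1}, \I)$; the two $\Pi_x^\T P^{-1}$ factors act precisely on the first and third block rows/columns, which are the ones carrying a copy of $P$, while the identity factors leave the specification blocks (positions $(2,2)$ and $(4,4)$) untouched. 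Invoking \eqref{Eq:CoVPixPiy} then turns each diagonal block $\Pi_x^\T P^{-1}P\,P^{-1}\Pi_x$ into $\Pi_y^\T\Pi_x=\bm{Y & \I \\ \I & X}$, the block $\Pi_x^\T P^{-1}\bar{A}^\T P\,P^{-1}\Pi_x$ into $(\Pi_x^\T\bar{A}\Pi_y)^\T$, the coupling term $\bar{B}^\T P\,P^{-1}\Pi_x$ into $(\Pi_x^\T\bar{B})^\T$, and the output terms into $(\bar{C}\Pi_y)^\T S$ and $(\bar{C}\Pi_y)^\T$, respectively.

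Finally I would substitute the explicit expressions of \eqref{Eq:COVsKeyResults} for $\Pi_x^\T\bar{A}\Pi_y$, $\Pi_x^\T\bar{B}$, and $\bar{C}\Pi_y$ (together with $\bar{D}=J+HD_nF$ and $\Pi_y^\T\Pi_x=\bm{Y & \I \\ \I & X}$) to rewrite every block of the conjugated inequality purely in terms of the controller-independent variables $X,Y,A_n,B_n,C_n,D_n$, producing exactly \eqref{Eq:Pr:DTDissipativationUsingDOF2}. Recovery of the actual controller matrices $A_c,B_c,C_c,D_c$ then proceeds through the CoVs \eqref{Eq:CoVsXYToMN} and \eqref{Eq:COVsAuxToDOF}, just as in the DT stabilization case Prop. \ref{Pr:DTStabilizationUnderDOF}, and since every transformation used is an equivalence, the resulting conditions are necessary and sufficient.

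The main obstacle I anticipate is bookkeeping rather than conceptual. Because the DT dissipativity LMI carries \emph{two} separate copies of $P$ (a consequence of the Schur-complement isolation introduced in Prop. \ref{Pr:DTLTIQSRDissipativity} precisely to avoid the non-network term $\bar{A}^\T P\bar{A}$, cf. Rm. \ref{Rm:DTCaseAlternativeLMIs}), the congruence factor must carry $\Pi_x^\T P^{-1}$ on both the first and third block positions simultaneously, and one must check that the transformed $(3,4)$ block remains $\Pi_y^\T\cdot\0=\0$ and that the transformed $(3,3)$ block $\Pi_y^\T P\Pi_y$ collapses to $\bm{Y & \I \\ \I & X}$ as required. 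Matching the remaining nontrivial transformed blocks against the lengthy right-hand side \eqref{Eq:Pr:DTDissipativationUsingDOF2} term-by-term — in particular the $S$-weighted entry $\H_s(\bar{D}^\T S)+R=\H_s((J^\T+F^\T D_n^\T H^\T)S)+R$ in the $(2,2)$ specification block — is where the care is needed, but each identity is a direct substitution from \eqref{Eq:COVsKeyResults} with no further algebraic manipulation.
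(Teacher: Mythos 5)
Your proposal is correct and follows essentially the same route as the paper's proof: apply Prop.~\ref{Pr:DTLTIQSRDissipativity} to \eqref{Eq:DTLTIUnderDOF}, convert $P>0$ to the first inequality of \eqref{Eq:Pr:DTDissipativationUsingDOF1} via \eqref{Eq:CoVPAndPiDef} and \eqref{Eq:CoVPisPositiveDefinite}, conjugate with the very same factor $\diag(\Pi_x^\T P^{-1},\I,\Pi_x^\T P^{-1},\I)$, and substitute \eqref{Eq:CoVPixPiy} and \eqref{Eq:COVsKeyResults} to arrive at \eqref{Eq:Pr:DTDissipativationUsingDOF2}. Your block-by-block checks — including the $(3,3)$ block collapsing to $\Pi_y^\T\Pi_x=\bm{Y & \I \\ \I & X}$ and the $(3,4)$ block remaining $\0$ — are all accurate and match the paper's intermediate inequality exactly.
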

\begin{proof}
The proof starts with applying Prop. \ref{Pr:DTLTIQSRDissipativity} to \eqref{Eq:DTLTIUnderDOF} to obtain the  LMI conditions necessary and sufficient for the $(Q,S,R)$-dissipativation of \eqref{Eq:DTLTIUnderDOF} as: 
$\exists P>0$ such that 
\begin{equation}\label{Eq:Pr:DTDissipativationUsingDOFStep1}
    \bm{
    P & \bar{C}^\T S & \bar{A}^\T P & \bar{C}^\T \\ 
    \star & \bar{D}^\T S + S^\T \bar{D} + R & \bar{B}^\T P & \bar{D}^\T \\ 
    \star & \star & P & \0 \\
    \star & \star & \star & -Q^{-1}
    }>0.
\end{equation}
Similar to the proof of Prop. \ref{Pr:DTStabilizationUnderDOF}, using the CoVs in \eqref{Eq:CoVPAndPiDef} and \eqref{Eq:CoVPisPositiveDefinite}, the LMI $P>0$ can be transformed to \eqref{Eq:Pr:DTDissipativationUsingDOF1}. To obtain \eqref{Eq:Pr:DTDissipativationUsingDOF2} from \eqref{Eq:Pr:DTDissipativationUsingDOFStep1}, first, Lm. \ref{Lm:PreAndPostMultiplication} is applied with $\diag(\Pi_x^\T P^{-1},\I,\Pi_x^\T P^{-1},\I\}$ and then the results are substituted using \eqref{Eq:CoVPixPiy} to obtain:
\begin{equation}
    \bm{
    \Pi_y^\T\Pi_x & \Pi_y^\T \bar{C}^\T S & \Pi_y^\T \bar{A}^\T\Pi_x & \Pi_y^\T \bar{C}^\T \\
    \star & \H_s(\bar{D}^\T S)+R & \bar{B}^\T \Pi_x & \bar{D}^\T\\ 
    \star & \star & \Pi_y^\T\Pi_x & \0\\
    \star & \star & \star & -Q^{-1}
    }>0.
\end{equation}
Finally, the above matrix inequality can be transformed to get the LMI in \eqref{Eq:Pr:DTDissipativationUsingDOF2} using \eqref{Eq:COVsKeyResults}.
\end{proof}

\subsubsection{\textbf{$\H_2$-Optimal Control}}
The goal of $\H_2$-optimal control here is to synthesize a DOF controller \eqref{Eq:DTDOFController} that minimizes the $\H_2$-norm of the closed-loop system \eqref{Eq:DTLTIUnderDOF} from $w(t)$ to $z(t)$. 

\begin{proposition}\label{Pr:DTH2ControlUnderDOF}
Under $D=\0$, the $\H_2$-optimal DOF controller \eqref{Eq:DTDOFController} for the closed-loop system \eqref{Eq:DTLTIUnderDOF} is found by solving the LMI:
\begin{equation}\label{Eq:Pr:DTH2ControlUnderDOF1}
\begin{aligned}
    \min_{\substack{X,Y,Q,\gamma\\A_n,B_n,C_n,D_n}}\ \ &\gamma\\
    \mbox{sub. to:}\ \ &X>0,\ Y>0,\ Q>0,\ \gamma>0,\ \mbox{\eqref{Eq:Pr:DTH2ControlUnderDOF2}},\\
    &\tr{Q}<\gamma^2,
\end{aligned}
\end{equation}
and $A_c,B_c,C_c,D_c$ \eqref{Eq:DTDOFController} are found by CoVs \eqref{Eq:CoVsXYToMN} and \eqref{Eq:COVsAuxToDOF}.
\end{proposition}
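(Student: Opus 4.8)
The plan is to mirror the proof of Prop.~\ref{Pr:DTDissipativationUsingDOF}, replacing the $(Q,S,R)$-dissipativity characterization with the $\H_2$-norm characterization of Prop.~\ref{Pr:DTH2Norm}. First I would apply Prop.~\ref{Pr:DTH2Norm} to the closed-loop system \eqref{Eq:DTLTIUnderDOF}, which (under $D=\0$) yields, for some $P,Q>0$ and $\gamma>0$, a pair of block LMIs in $P,Q,\gamma$ and the controller matrices $A_c,B_c,C_c,D_c$ (entering through $\bar A,\bar B,\bar C,\bar D$), together with the scalar constraint $\tr{Q}<\gamma^2$. I would invoke the characterization in the form whose Lyapunov blocks read $\bm{P & \bar A^\T P & \bar C^\T \\ \star & P & \0 \\ \star & \star & \I}>0$ and $\bm{Q & \bar B^\T P & \bar D^\T \\ \star & P & \0 \\ \star & \star & \I}>0$, i.e. with $P$ entering through $\bar A^\T P$ and $\bar B^\T P$ exactly as in \eqref{Eq:Pr:DTLTIQSRDissipativity} and \eqref{Eq:Pr:DTStabilizationUnderDOFStep1}.

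Next I would linearize these conditions through the change of variables of Sec.~\ref{SubSubSec:ChangeOfVariables}, which is admissible here because the coefficient matrices of \eqref{Eq:DTLTIUnderDOF} coincide with those of \eqref{Eq:CTLTIUnderDOF}. Concretely, I would: (i) transform $P>0$ into $\bm{Y & \I\\\I & X}>0$ via \eqref{Eq:CoVPisPositiveDefinite}; (ii) apply Lm.~\ref{Lm:PreAndPostMultiplication} to the first LMI with the full-rank congruence $\diag(\Pi_x^\T P^{-1},\Pi_x^\T P^{-1},\I)$ and to the second with $\diag(\I,\Pi_x^\T P^{-1},\I)$; and (iii) substitute the identities \eqref{Eq:CoVPixPiy} and \eqref{Eq:COVsKeyResults}. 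Using $\Pi_x^\T P^{-1}=\Pi_y^\T$ and $P^{-1}\Pi_x=\Pi_y$, the diagonal $P$-blocks collapse to $\Pi_y^\T\Pi_x=\bm{Y&\I\\\I&X}$, the block $\bar A^\T P$ maps to $\Pi_y^\T\bar A^\T\Pi_x=(\Pi_x^\T\bar A\Pi_y)^\T$, the block $\bar B^\T P$ maps to $(\Pi_x^\T\bar B)^\T$, and $\bar C^\T$ maps to $(\bar C\Pi_y)^\T$. Reading off $\Pi_x^\T\bar A\Pi_y$, $\Pi_x^\T\bar B$, and $\bar C\Pi_y$ from \eqref{Eq:COVsKeyResults} turns the two LMIs into the fully linear conditions collected in \eqref{Eq:Pr:DTH2ControlUnderDOF2}. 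Since $Q$ and $\gamma$ are untouched by these congruences, the constraint $\tr{Q}<\gamma^2$ carries over verbatim. Finally, after solving for $X,Y,A_n,B_n,C_n,D_n,Q,\gamma$, I would recover $M,N$ from \eqref{Eq:CoVsXYToMN} and the controller $A_c,B_c,C_c,D_c$ from \eqref{Eq:COVsAuxToDOF}.

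The main obstacle, and the one point that genuinely differs from the continuous-time $\H_2$ case (Prop.~\ref{Pr:H2ControlUnderDOF}), is getting the placement of $P$ relative to $\bar A$ right. The CoV identities in \eqref{Eq:COVsKeyResults} only produce clean expressions for $\Pi_x^\T\bar A\Pi_y$ (and its transpose), $\Pi_x^\T\bar B$, and $\bar C\Pi_y$; a congruence by $\diag(\Pi_x^\T P^{-1},\dots)$ therefore linearizes the $\bar A$-term only when it appears as $\bar A^\T P$, not as $\bar A P$. I must consequently use the observability-Gramian version of Prop.~\ref{Pr:DTH2Norm}, namely $P>\bar A^\T P\bar A+\bar C^\T\bar C$ with $\Vert\mathcal{G}\Vert_{\H_2}^2=\tr{\bar B^\T P\bar B+\bar D^\T\bar D}$, so that $\bar A^\T P$ and $\bar B^\T P$ occur and the two LMIs share a single $P$; one should check that this shared $P$ indeed certifies $\Vert\mathcal{G}\Vert_{\H_2}<\gamma$. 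A secondary bookkeeping point is that, unlike in continuous time, the discrete-time $\H_2$ norm tolerates direct feedthrough, so no auxiliary equality such as $J+HD_nF=\0$ is imposed: the term $\bar D=J+HD_nF$ is simply retained in the second LMI as $J^\T+F^\T D_n^\T H^\T$, which is already linear in the design variables.
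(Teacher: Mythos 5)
Your proposal is correct and takes essentially the same route as the paper, whose proof is stated only as ``follows similar steps as that of Prop.~\ref{Pr:DTDissipativationUsingDOF}'': apply Prop.~\ref{Pr:DTH2Norm} (second form) to \eqref{Eq:DTLTIUnderDOF}, transform $P>0$ via \eqref{Eq:CoVPisPositiveDefinite}, and perform the congruences $\diag(\Pi_x^\T P^{-1},\Pi_x^\T P^{-1},\I)$ and $\diag(\I,\Pi_x^\T P^{-1},\I)$ together with \eqref{Eq:CoVPixPiy} and \eqref{Eq:COVsKeyResults} to land exactly on \eqref{Eq:Pr:DTH2ControlUnderDOF2}, with $\tr{Q}<\gamma^2$ untouched and, as you note, no equality $J+HD_nF=\0$ needed in discrete time. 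Your insistence on the observability-Gramian form with $\bar{A}^\T P$ and $\bar{B}^\T P$ (so that the CoV identities linearize the blocks) is precisely the right reading, and it in fact repairs an apparent typo in \eqref{Eq:Pr:DTH2Norm2}, whose first LMI should carry $A^\T P$ rather than $AP$, consistent with \eqref{Eq:Pr:DTStabilizationUnderDOFStep1} and the printed blocks of \eqref{Eq:Pr:DTH2ControlUnderDOF2}.
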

\begin{proof}
The proof follows similar steps as that of Prop. \ref{Pr:DTDissipativationUsingDOF}.  
\end{proof}

\subsubsection{\textbf{$\H_\infty$-Optimal Control}}
The goal of $\H_\infty$-optimal control is to synthesize a DOF controller \eqref{Eq:DTDOFController} that minimizes the $\H_\infty$-norm of the closed-loop system \eqref{Eq:DTLTIUnderDOF} (from $w(t)$ to $z(t)$). 

\begin{proposition}\label{Pr:DTHInfControlUnderDOF}
Under $D=0$, the $\H_\infty$-optimal DOF controller \eqref{Eq:DTDOFController} for the closed-loop system \eqref{Eq:DTLTIUnderDOF} is found by solving the LMI:
\begin{equation}\label{Eq:Pr:DTHInfControlUnderDOF1}
\begin{aligned}
    \min_{\substack{X,Y,\gamma\\A_n,B_n,C_n,D_n}} &\ \gamma\\
    \mbox{sub. to:}&\ X>0, Y>0, \gamma>0, \eqref{Eq:Pr:DTHInfControlUnderDOF2}.
\end{aligned}
\end{equation}
and $A_c,B_c,C_c,D_c$ \eqref{Eq:DTDOFController} are found by CoVs \eqref{Eq:CoVsXYToMN} and \eqref{Eq:COVsAuxToDOF}.
\end{proposition}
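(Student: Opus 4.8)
The plan is to mirror the argument used for Prop.~\ref{Pr:DTDissipativationUsingDOF}, replacing the discrete-time dissipativity characterization by the discrete-time $\H_\infty$-norm characterization of Prop.~\ref{Pr:DTHInfNorm}. First I would apply Prop.~\ref{Pr:DTHInfNorm}, specifically the form \eqref{Eq:Pr:DTHInfNorm2}, to the closed-loop system \eqref{Eq:DTLTIUnderDOF} by substituting $(\bar{A},\bar{B},\bar{C},\bar{D})$ (defined below \eqref{Eq:DTLTIUnderDOF}) for $(A,B,C,D)$. This yields the necessary and sufficient condition that $\Vert\mathcal{G}\Vert_{\H_\infty}<\gamma$ holds iff there exist $P>0$ and $\gamma>0$ such that
\begin{equation*}
\bm{
P & P\bar{A} & P\bar{B} & \0 \\ \star & P & \0 & \bar{C}^\T \\
\star  & \star & \gamma \I & \bar{D}^\T \\ \star & \star & \star & \gamma \I
}>0,
\end{equation*}
an LMI in $P$ and the controller matrices $A_c,B_c,C_c,D_c$ of \eqref{Eq:DTDOFController}.

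Next, exactly as in Prop.~\ref{Pr:DTStabilizationUnderDOF} and Prop.~\ref{Pr:DTDissipativationUsingDOF}, I would introduce the change-of-variables objects \eqref{Eq:CoVPAndPiDef} and rewrite the constraint $P>0$ as $\bm{Y & \I\\ \I & X}>0$ using property \eqref{Eq:CoVPisPositiveDefinite}. I would then apply the congruence principle (Lm.~\ref{Lm:PreAndPostMultiplication}) to the displayed four-block LMI using the full-rank block-diagonal multiplier $\diag(\Pi_x^\T P^{-1},\,\Pi_x^\T P^{-1},\,\I,\,\I)$ on the left and its transpose on the right; the first two diagonal blocks carry $P$ (matching the $P$ blocks of \eqref{Eq:Pr:DTHInfNorm2}) and the last two do not. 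After simplification via \eqref{Eq:CoVPixPiy}, the surviving entries become $\Pi_x^\T\bar{A}\Pi_y$ in block $(1,2)$, $\Pi_x^\T\bar{B}$ in block $(1,3)$, $\Pi_y^\T\bar{C}^\T=(\bar{C}\Pi_y)^\T$ in block $(2,4)$, $\bar{D}^\T$ in block $(3,4)$, and $\Pi_y^\T\Pi_x=\bm{Y&\I\\\I&X}$ on the two transformed diagonal blocks. Substituting the explicit expressions \eqref{Eq:COVsKeyResults} for $\Pi_x^\T\bar{A}\Pi_y$, $\Pi_x^\T\bar{B}$ and $\bar{C}\Pi_y$ then produces the LMI \eqref{Eq:Pr:DTHInfControlUnderDOF2} entirely in the variables $X,Y,A_n,B_n,C_n,D_n$ and $\gamma$. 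Since every step is an equivalence, minimizing $\gamma$ subject to $X,Y>0$, $\gamma>0$ and \eqref{Eq:Pr:DTHInfControlUnderDOF2} recovers the $\H_\infty$-optimal design; the controller matrices are obtained by first solving \eqref{Eq:CoVsXYToMN} for full-rank $M,N$ via LU-decomposition and then applying \eqref{Eq:COVsAuxToDOF}.

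The main obstacle, and the only place demanding genuine care, is selecting the \emph{correct} form of the discrete-time $\H_\infty$ LMI so that the congruence transformation produces precisely the three CoV identities \eqref{Eq:COVsKeyResults} with matching orientation, rather than mismatched transposes that cannot be re-expressed in the $X,Y,A_n,B_n,C_n,D_n$ variables. Using form \eqref{Eq:Pr:DTHInfNorm2} (with $PA$, $PB$, $C^\T$) rather than \eqref{Eq:Pr:DTHInfNorm1} is what makes the sandwiched products $\Pi_x^\T\bar{A}\Pi_y$ and $\Pi_x^\T\bar{B}$ emerge cleanly after pre-/post-multiplication. A secondary bookkeeping point is confirming that the multiplier is full-rank: this holds because $\Pi_x$ is full-rank whenever $M$ is full-rank (guaranteed by the nonsingularity of $\I-XY$ assumed in the CoV construction) and $P^{-1}$ is invertible, so Lm.~\ref{Lm:PreAndPostMultiplication} applies and all equivalences are preserved. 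With these observations in place, the remaining manipulations are routine block-matrix algebra, identical in spirit to the proof of Prop.~\ref{Pr:DTDissipativationUsingDOF}.
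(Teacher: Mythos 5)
Your proposal is correct and takes essentially the same route as the paper: the paper's own proof is simply ``follows similar steps as that of Prop.~\ref{Pr:DTDissipativationUsingDOF},'' i.e., apply Prop.~\ref{Pr:DTHInfNorm} in the form \eqref{Eq:Pr:DTHInfNorm2} to \eqref{Eq:DTLTIUnderDOF}, convert $P>0$ via \eqref{Eq:CoVPAndPiDef} and \eqref{Eq:CoVPisPositiveDefinite}, apply Lm.~\ref{Lm:PreAndPostMultiplication} with $\diag(\Pi_x^\T P^{-1},\Pi_x^\T P^{-1},\I,\I)$, and substitute \eqref{Eq:CoVPixPiy} and \eqref{Eq:COVsKeyResults} to obtain \eqref{Eq:Pr:DTHInfControlUnderDOF2}. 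Your selection of the \eqref{Eq:Pr:DTHInfNorm2} orientation and of the congruence multiplier is exactly what the paper's sketch requires, and your block-by-block bookkeeping (including the full-rank justification for the multiplier) checks out.
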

\begin{proof}
The proof follows similar steps as that of Prop. \ref{Pr:DTDissipativationUsingDOF}.  
\end{proof}

An outline of all the established theoretical results in this section can be found in Tab. \ref{Tab:DTLTISystemsLMIResults}.

\begin{figure*}[!hb]
    \centering
    \hrulefill
    \begin{equation}\label{Eq:Pr:DTDissipativationUsingDOF2}
    \bm{
    Y & \I & (YG^\T+C_n^\T H^\T)S & YA^\T + C_n^\T B^\T & A_n^\T & YG^\T+C_n^\T H^\T\\
    \star & X & (G^\T+C^\T D_n^\T H^\T)S & A^\T+C^\T D_n^\T B^\T & A^\T X+C^\T B_n^\T & G^\T+C^\T D_n^\T H^\T \\ 
    \star & \star & \H_s((J^\T+F^\T D_n^\T H^\T)S)+R & E^\T+F^\T D_n^\T B^\T & E^\T X+F^\T B_n^\T & J^\T+F^\T D_n^\T H^\T
    \\\star & \star & \star & Y & \I & \0 \\ 
    \star & \star & \star & \star & X & \0 \\
    \star  & \star & \star  & \star & \star & -Q^{-1}
    }>0
    \end{equation}
    \begin{equation}\label{Eq:Pr:DTH2ControlUnderDOF2}
    \begin{aligned}
    &\bm{
    Y & \I & YA^\T + C_n^\T B^\T & A_n^\T & YG^\T + C_n^\T H^\T\\
    \star & X & A^\T + C^\T D_n^\T B^\T & A^\T X + C^\T B_n^\T & G^\T + C^\T D_n^\T H^\T\\
    \star & \star & Y & \I & \0\\
    \star & \star & \star & X & \0\\
    \star & \star & \star & \star & \I
    }>0 \ \ \mb{ and }\\
    &\bm{
    Q & E^\T + F^\T D_n^\T B^\T & E^\T X + F^\T B_n^\T & J^\T+F^\T D_n^\T H^\T \\
    \star & Y & \I & \0 \\ 
    \star & \star & X & \0 \\
    \star & \star & \star & \I
    }>0
    \end{aligned}
    \end{equation}
    \begin{equation}\label{Eq:Pr:DTHInfControlUnderDOF2}
        \bm{
        Y & \I & AY+BC_n & A+BD_nC & E+BD_nF & \0\\
        \star & X & A_n & XA+B_nC & XE + B_nF & \0\\
        \star & \star & Y & \I & \0 & YG^\T+C_n^\T H^\T\\
        \star & \star & \star & X & \0 & G^\T+C^\T D_n\T H^\T\\
        \star & \star & \star & \star & \gamma \I & J^\T + F^\T D_n^\T H^\T\\
        \star & \star & \star & \star & \star & \gamma \I 
        }>0
    \end{equation}
\end{figure*}

\begin{table}[!t]
\caption{Summary of Theoretical Results (DT-LTI Systems).}
\label{Tab:DTLTISystemsLMIResults}
\resizebox{\columnwidth}{!}{
\begin{tabular}{|ll|llll|}
\hline
\multicolumn{2}{|c|}{\multirow{3}{*}{\begin{tabular}[c]{@{}c@{}}Proposition \#\\ (For DT-LTI Systems) \end{tabular} }} &
  \multicolumn{4}{c|}{Concept} \\ \cline{3-6} 
\multicolumn{2}{|c|}{} &
  \multicolumn{1}{c|}{Stability} &
  \multicolumn{1}{c|}{\begin{tabular}[c]{@{}c@{}}$(Q,S,R)$-\\ Dissipativity\end{tabular}} &
  \multicolumn{1}{c|}{\begin{tabular}[c]{@{}c@{}}$\H_2$-\\ Norm\end{tabular}} &
  \multicolumn{1}{c|}{\begin{tabular}[c]{@{}c@{}}$\H_\infty$-\\ Norm\end{tabular}} \\ \hline
\multicolumn{1}{|l|}{\multirow{8}{*}{Task}} &
  \begin{tabular}[c]{@{}l@{}}DT-LTI System\\ Analysis\end{tabular} &
  \multicolumn{1}{c|}{\ref{Pr:DTLTIStability}} & 
  \multicolumn{1}{c|}{\ref{Pr:DTLTIQSRDissipativity}} &
  \multicolumn{1}{c|}{\ref{Pr:DTH2Norm}} & 
  \multicolumn{1}{c|}{\ref{Pr:DTHInfNorm}} 
   \\ \cline{2-6} 
\multicolumn{1}{|l|}{} &
  \begin{tabular}[c]{@{}l@{}}FSF Controller\\ Synthesis\end{tabular} &
  \multicolumn{1}{c|}{\ref{Pr:DTStabilizationUnderFSF},\ref{Pr:DTStabilizability}} &
  \multicolumn{1}{c|}{\ref{Pr:DTDissipativationUnderFSF}} &
  \multicolumn{1}{c|}{\ref{Pr:DTH2ControlUnderFSF}} & 
  \multicolumn{1}{c|}{\ref{Pr:DTHInfControlUnderFSF}}  
   \\ \cline{2-6} 
\multicolumn{1}{|l|}{} &
  \begin{tabular}[c]{@{}l@{}}Observer\\ Design\end{tabular} &
  \multicolumn{1}{c|}{\ref{Pr:DTObserver},\ref{Pr:DTDetectability}} &
  \multicolumn{1}{c|}{\ref{Pr:DTDissipativeObserver}} &
  \multicolumn{1}{c|}{\ref{Pr:DTH2Observer}} &
  \multicolumn{1}{c|}{\ref{Pr:DTHInfObserver}} 
   \\ \cline{2-6} 
\multicolumn{1}{|l|}{} &
  \begin{tabular}[c]{@{}l@{}}DOF Controller\\ Synthesis\end{tabular} &
  \multicolumn{1}{c|}{\ref{Pr:DTStabilizationUnderDOF}} &
  \multicolumn{1}{c|}{\ref{Pr:DTDissipativationUsingDOF}} &
  \multicolumn{1}{c|}{\ref{Pr:DTH2ControlUnderDOF}} & 
  \multicolumn{1}{c|}{\ref{Pr:DTHInfControlUnderDOF}} 
   \\ \hline
\end{tabular}}
\end{table}

\section{The Discrete-Time Networked System (DTNS)}
\label{Sec:DTNetworkedSystem}

This section briefly provides the unique details of the considered discrete-time networked system (DTNS).

\subsection{Subsystems of the DTNS}

\subsubsection{\textbf{Dynamics}} 
We consider the dynamics of the $i$\tsup{th} subsystem $\Sigma_i, i\in\N_N$ of the DTNS to be 
\begin{subequations}\label{Eq:DTSubsystemDynamics}
\begin{align}
    x_i(t+1) =& \sum_{j\in\bar{\E}_i}A_{ij}x_j(t) + \sum_{j\in\bar{\E}_i}B_{ij}u_j(t)+\sum_{j\in\bar{\E}_i}E_{ij}w_{j}(t),\\ 
    y_i(t) =& \sum_{j\in\bar{\E}_i}C_{ij}x_j(t) + \sum_{j\in\bar{\E}_i}D_{ij}u_j(t) + \sum_{j\in\bar{\E}_i}F_{ij}w_j(t),
\end{align}
\end{subequations}
where now $t\in\N$.

\subsubsection{\textbf{Local Controllers and Observers}}
As mentioned before, one main objective of this paper is to design local controllers/observers at the subsystems of the considered DTNS in a decentralized manner. In this setting, at a subsystem $\Sigma_i$: 
\begin{enumerate}
    \item a local FSF controller may take the form:
        \begin{equation}\label{Eq:DTLocalFSFController}
            u_i(t) = \sum_{j\in \bar{\E}_i} K_{ij} x_j(t);
        \end{equation}
    \item a local Luenberger observer may take the form: 
        \begin{equation}\label{Eq:DTLocalObserver}
            \hat{x}_i(t+1) = \sum_{j\in \bar{\E}_i} \hat{A}_{ij} \hat{x}_j(t) + \sum_{j\in \bar{\E}_i} \hat{B}_{ij} u_j + \sum_{j\in \bar{\E}_i} L_{ij}y_j(t);
        \end{equation}
    \item a local DOF controller may take the form:
    \begin{equation}\label{Eq:DTLocalDOFController}
        \begin{aligned}
            \zeta_i(t+1) = \sum_{j\in\bar{\E}_i} A_{c,ij} \zeta_j(t) + \sum_{j\in\bar{\E}_i} B_{c,ij} y_j(t),\\
            u_i(t) = \sum_{j\in\bar{\E}_i} C_{c,ij} \zeta_j(t) + \sum_{j\in\bar{\E}_i} D_{c,ij} y_j(t).
        \end{aligned}    
    \end{equation}
\end{enumerate}

\subsubsection{\textbf{Local Performance Metrics}} 
Recall that a pre-defined local performance metric is required at each subsystem $\Sigma_i$ when designing local controllers and observers in a $(Q,S,R)$-dissipative or $\H_2/\H_\infty$-optimal sense. Therefore, identically to the continuous-time case, at a subsystem $\Sigma_i$, we use the following pre-defined local performance metrics: 
\begin{enumerate}
    \item For local FSF controller design: 
    \begin{equation}\label{Eq:DTLocalFSFControllerPerf}
        y_i(t) = \sum_{j\in\bar{\E}_i}C_{ij}x_j(t) + \sum_{j\in\bar{\E}_i}D_{ij}u_j(t) + \sum_{j\in\bar{\E}_i}F_{ij}w_j(t);
    \end{equation} 
    \item For local Luenberger observer design: 
    \begin{equation}\label{Eq:DTLocalObserverPerf}
        z_i(t) = \sum_{j\in\bar{\E}_i} G_{ij}(x_j(t)-\hat{x}_j(t)) + \sum_{j\in\bar{\E}_i}J_{ij}w_j(t);
    \end{equation}
    \item For local DOF controller design:
    \begin{equation}\label{Eq:DTLocalDOFControllerPerf}
        z_i(t) = \sum_{j\in\bar{\E}_i} G_{ij}x_j(t) + \sum_{j\in\bar{\E}_i} H_{ij}u_j(t) + \sum_{j\in\bar{\E}_i}J_{ij}w_j(t).
    \end{equation}
\end{enumerate}

\subsection{The DTNS}
\subsubsection{\textbf{Dynamics}}

By stating \eqref{Eq:DTSubsystemDynamics} for all $i\in\N_N$, we can obtain the dynamics of the networked system $\mathcal{G}_N$ as 
\begin{subequations}\label{Eq:DTNSDynamics}
\begin{align}
x(t+1) =&\ Ax(t) + Bu(t) + Ew(t),\\
y(t) =&\ Cx(t) + Du(t) + Fw(t),
\end{align}
\end{subequations}
similar to \eqref{Eq:CTNSDynamics} except for the fact that now time $t\in\N$.

\subsubsection{\textbf{Controllers and Observers}}
By composing each local controller/observer forms in \eqref{Eq:DTLocalFSFController},\eqref{Eq:DTLocalObserver}, \eqref{Eq:DTLocalDOFController} for all $i\in \N_N$, we can respectively obtain the network level (i.e., global): 
\begin{enumerate}
    \item FSF controller 
        \begin{equation}\label{Eq:DTGlobalFSFController}
            u(t) = Kx(t);
        \end{equation}
    \item Luenberger observer 
        \begin{equation}\label{Eq:DTGlobalObserver}
            \hat{x}(t+1) = \hat{A}x(t) + \hat{B}u(t) + Ly(t);  
        \end{equation}
    \item DOF controller 
        \begin{equation}\label{Eq:DTGlobalDOFController}
        \begin{aligned}
            \zeta(t+1) = A_c \zeta(t) + B_c y(y),\\
            u(t) = C_c \zeta(t) + D_c y(y),
        \end{aligned}
        \end{equation}
\end{enumerate}
where $K,\hat{A},\hat{B},L,A_c,B_c,C_c,D_c$ are all $N\times N$ block matrices comprised of the corresponding local design parameters.

\subsubsection{\textbf{Performance Metrics}}
Similarly, by composing each pre-defined local controller/observer performance metric forms in \eqref{Eq:DTLocalFSFControllerPerf},\eqref{Eq:DTLocalObserverPerf},\eqref{Eq:DTLocalDOFControllerPerf} for all $i\in\N_N$, we can respectively obtain the global performance metrics considered for:
\begin{enumerate}
    \item FSF controller design as: 
    \begin{equation}\label{Eq:DTGlobalFSFControllerPerf}
        y(t) = Cx(t) + Dy(t) + Fw(t);
    \end{equation}
    \item Luenberger observer design as: 
    \begin{equation}\label{Eq:DTGlobalObserverPerf}
        z(t) = G(x(t)-\hat{x}(t))+Jw(t);
    \end{equation}
    \item DOF controller design as:
    \begin{equation}\label{Eq:DTGlobalDOFControllerPerf}
        z(t) = Gx(t)+Hu(t)+Jw(t).
    \end{equation}
\end{enumerate}
Here also $C,D,F,G,H,J$ are all $N\times N$ block matrices comprised of the corresponding pre-defined local performance metric parameters (e.g., $C=[C_{ij}]_{i,j\in\N_N}$).

\subsection{The Research Problem}
Similar to the continuous-time case, the forms of the DTNS \eqref{Eq:DTNSDynamics}, global controllers/observers \eqref{Eq:DTGlobalFSFController}-\eqref{Eq:DTGlobalDOFController} and global performance metrics \eqref{Eq:DTGlobalFSFControllerPerf}-\eqref{Eq:DTGlobalDOFControllerPerf} are respectively identical to the general DT-LTI system (e.g., \eqref{Eq:NoisyDTLTISystem}), controllers/observers (e.g., \eqref{Eq:DTLuenbergerObserver},\eqref{Eq:DTDOFController}) and performance metrics (e.g., \eqref{Eq:NoisyDTLTISystemWithPerf},\eqref{Eq:DTLuenbergerObserverWithPerformance}) considered in Sec. \ref{Sec:BasicsOfDTLTISystems}. 
Therefore, all the LMI-based control solutions (Prop. \ref{Pr:DTLTIStability}-\ref{Pr:DTHInfControlUnderDOF}) discussed in Sec. \ref{Sec:BasicsOfDTLTISystems} are directly applicable for the DTNS \eqref{Eq:DTNSDynamics}. Moreover, as pointed out in Rm. \ref{Rm:DTCaseAlternativeLMIs}, we have formulated these LMI-based control solutions such that they can be applied in a decentralized setting (via Alg. \ref{Alg:DistributedPositiveDefiniteness}). In other words, we can use the said LMI-based control solutions for decentralized analysis and controller/observer synthesis of the considered DTNS \eqref{Eq:DTNSDynamics}.


\section{Distributed Analysis and Control Synthesis of DTNS}
\label{Sec:DistributedTechniquesForDTNS}

In this section, parallel to Sec. \ref{Sec:DistributedTechniquesForCTNS}, we provide decentralized, compositional and possibly distributed techniques for different analysis and control synthesis tasks discussed in Sec. \ref{Sec:BasicsOfDTLTISystems} for DTNSs introduced in Sec. \ref{Sec:DTNetworkedSystem} exploiting the algorithms proposed in Sec. \ref{Sec:DecentralizedAnalysis}. Following the same steps as before, we have summarized the details of each decentralized technique that we propose in Tab. \ref{Tab:DT-LTILocalResultsSummary}. To provide examples, in the following subsections, stability and dissipativity related decentralized techniques have been formally stated as theorems (analogous to Theorems \ref{Th:CTLTIStability}-\ref{Th:DissipativationUsingDOF} provided in Sec. \ref{Sec:DistributedTechniquesForCTNS}). However, note that we omit providing proofs in this section as they can be obtained by following similar steps to their counterparts in Sec. \ref{Sec:DistributedTechniquesForCTNS}.


\subsection{Stability Related Decentralized Results}

In this subsection, we provide stability based results on:
\begin{enumerate}
    \item decentralized stability analysis,
    \item decentralized FSF stabilization,
    \item decentralized stable observer design,
    \item decentralized DOF stabilization,
\end{enumerate}
that respectively corresponds to Props. \ref{Pr:DTLTIStability}, \ref{Pr:DTStabilizationUnderFSF}, \ref{Pr:DTObserver} and \ref{Pr:DTStabilizationUnderDOF} (Tab. \ref{Tab:DT-LTILocalResultsSummary}-Rows 1, 7, 11 and 15). 

\begin{theorem} \label{Th:DTLTIStability}(Stability analysis)
The DTNS \eqref{Eq:DTNSDynamics} under $u(t)=\0$ and $w(t)=\0$ is stable if at each subsystem $\Sigma_i,i\in\N_N$, the problem  
\begin{equation}\label{Eq:Th:DTLTIStability}
    \mathbb{P}_9: \ \ \text{Find} \ \ P_{ii}\ \ \text{such that} \ \ P_{ii} > 0, \ \ \tilde{W}_{ii}>0,
\end{equation}
is feasible, where $\tilde{W}_{ii}$ is computed from Alg. \ref{Alg:DistributedPositiveDefiniteness} (Steps: 3-16) when analyzing $W = [W_{ij}]_{i,j\in\N_N}>0$ with  
\begin{equation}\label{Eq:Th:DTLTIStability2}
    W_{ij} = \bm{P_{ii}e_{ij} & A_{ji}^\T P_{jj} \\ P_{ii}A_{ij} & P_{ii}e_{ij}}.
\end{equation}
\end{theorem}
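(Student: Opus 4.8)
The plan is to mirror the continuous-time stability proof (Theorem \ref{Th:CTLTIStability}) exactly, substituting the discrete-time stability proposition (Prop. \ref{Pr:DTLTIStability}) and invoking the block element-wise (BEW) machinery of Lm. \ref{Lm:ColumnandRowPermutations}. First I would define $P \triangleq \diag(P_{ii}:i\in\N_N)$, which is a block diagonal network matrix by construction, and set
\begin{equation*}
    W \triangleq \text{BEW}\left(\bm{P & A^\T P \\ PA & P}\right).
\end{equation*}
By Prop. \ref{Pr:DTLTIStability}, the global DTNS \eqref{Eq:DTNSDynamics} under $u(t)=\0$ is stable iff $\exists P>0$ such that $\scriptsize\bm{P & A^\T P \\ PA & P}>0$. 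Since $P_{ii}>0,\forall i\in\N_N \implies P>0$, it remains to show that this global LMI can be tested decentrally via Alg. \ref{Alg:DistributedPositiveDefiniteness}.

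The key steps are then: (1) verify that $W$ is a \emph{symmetric network matrix} of $\mathcal{G}_N$, so that Alg. \ref{Alg:DistributedPositiveDefiniteness} applies. Here the motivation of Rm. \ref{Rm:DTCaseAlternativeLMIs} is pivotal --- the off-diagonal blocks of the chosen matrix are $A^\T P$ and $PA$ (not $A^\T P A$), and by Lm. \ref{Lm:NetworkMatrixProperties}-Case 2, with $A$ a general network matrix and $P$ block diagonal, both $A^\T P$ and $PA$ remain network matrices; the diagonal blocks $P$ are trivially network matrices. (2) Compute the $(i,j)$\tsup{th} block of $W$. Taking the BEW form of the $2\times 2$ block structure and extracting the $(i,j)$ sub-block of each of the four constituent network matrices yields
\begin{equation*}
    W_{ij} = \bm{P_{ii}e_{ij} & A_{ji}^\T P_{jj} \\ P_{ii}A_{ij} & P_{ii}e_{ij}},
\end{equation*}
which is exactly \eqref{Eq:Th:DTLTIStability2} (using $e_{ij}=\I\cdot\mb{1}_{\{i=j\}}$ for the diagonal $P$ terms and reading off the $(j,i)$ entry of $A^\T$ for the transposed coupling). (3) Invoke Lm. \ref{Lm:ColumnandRowPermutations}, i.e., $W=\text{BEW}(\Psi)>0 \iff \Psi>0$, so that testing $W>0$ is equivalent to testing the original global LMI. (4) Apply Alg. \ref{Alg:DistributedPositiveDefiniteness}, which reduces $W>0$ to checking $\tilde{W}_{ii}>0$ at each subsystem $\Sigma_i$, thereby establishing that feasibility of $\mathbb{P}_9$ \eqref{Eq:Th:DTLTIStability} at every subsystem certifies global stability.

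I expect the main obstacle --- really the only non-mechanical point --- to be step (2), the careful bookkeeping that shows the BEW reshaping of the $2\times 2$ global LMI produces precisely the $2\times 2$ block $W_{ij}$ in \eqref{Eq:Th:DTLTIStability2} rather than some permuted or mis-transposed variant; one must track that BEW interleaves the outer $2\times2$ index with the inner $N\times N$ subsystem index correctly, and that the transpose in $A^\T P$ sends the $(i,j)$ request to the block $A_{ji}^\T P_{jj}$. Everything else is a verbatim transcription of the argument in Theorem \ref{Th:CTLTIStability} (with the added BEW step as in Theorem \ref{Th:CTLTIQSRDissipativity}), so the proof can legitimately be compressed to ``define $P$ and $W$ as above and proceed using Prop. \ref{Pr:DTLTIStability} and Lm. \ref{Lm:ColumnandRowPermutations} as in the proof of Th. \ref{Th:CTLTIStability},'' with explicit details omitted for brevity.
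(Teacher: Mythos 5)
Your proposal is correct and follows exactly the route the paper intends: the paper omits this proof, noting it mirrors Th. \ref{Th:CTLTIStability}, and the paper itself (in Sec. \ref{Sec:DecentralizedAnalysis}, just after Lm. \ref{Lm:NetworkMatrixProperties}) displays precisely your construction $W=\text{BEW}\bigl(\bm{P & A^\T P\\ PA & P}\bigr)$ with the same $(i,j)$ block as \eqref{Eq:Th:DTLTIStability2}, which you then correctly combine with Prop. \ref{Pr:DTLTIStability}, Lm. \ref{Lm:ColumnandRowPermutations}, and Alg. \ref{Alg:DistributedPositiveDefiniteness}. Your block bookkeeping in step (2), including the one-sided restriction $P_{ii}>0,\forall i\in\N_N \implies P>0$ that makes the result a sufficiency statement, is exactly right.
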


\begin{theorem}\label{Th:DTStabilizationUnderFSF}
(FSF Stabilization) The DTNS \eqref{Eq:DTNSDynamics} (where $B$ is block diagonal) under $D=\0$, $w(t)=\0$ and local FSF control \eqref{Eq:DTLocalFSFController} is stable if at each subsystem $\Sigma_i,i\in\N_N$, the problem    
\begin{equation}\label{Eq:Th:DTStabilizationUnderFSF}
    \mathbb{P}_{10}: \ \ \text{Find} \ \ M_{ii},\ \ L_i\ \ \text{such that} \ \ M_{ii} > 0, \ \ \tilde{W}_{ii}>0, 
\end{equation}
is feasible, where $\tilde{W}_{ii}$ is computed from Alg. \ref{Alg:DistributedPositiveDefiniteness} (Steps: 3-16) when enforcing $W = [W_{ij}]_{i,j\in\N_N}>0$ with 
\begin{equation}\label{Eq:Th:DTStabilizationUnderFSF2}
    W_{ij} =  \bm{M_{ii}e_{ij} & M_{ii}A_{ji}^\T+L_{ji}^\T B_{ii}^\T \\ \star & M_{ii}e_{ij}}
\end{equation}
The local FSF controller gains $K_i$ are computed using $M_{ii}$ and $L_i$ identically to Th. \ref{Th:StabilizationUnderFSF} via \eqref{Eq:Th:StabilizationUnderFSF3}.
\end{theorem}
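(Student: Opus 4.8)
The plan is to mirror the proof of the continuous-time counterpart Th.~\ref{Th:StabilizationUnderFSF}, substituting the discrete-time global LMI of Prop.~\ref{Pr:DTStabilizationUnderFSF} and inserting the block-element-wise (BEW) reformulation already used for the discrete-time stability analysis in Th.~\ref{Th:DTLTIStability}. First I would define $M \triangleq \diag(M_{ii}:i\in\N_N)$ (so that $M_{ii}>0,\forall i\in\N_N \implies M>0$), collect the blocks $L_i$ into the network matrix $L$, and set $K \triangleq [K_{ij}]_{i,j\in\N_N}$. By Prop.~\ref{Pr:DTStabilizationUnderFSF}, the closed loop is stable iff $\exists\, M>0, L$ with
$$\Psi \triangleq \bm{M & MA^\T + L^\T B^\T \\ \star & M} > 0 \ \text{ and } \ K = LM^{-1}.$$
Accordingly I would set $W \triangleq \text{BEW}(\Psi)$ and $V \triangleq K - LM^{-1}$, the latter being of the linear-matrix-equality form \eqref{Eq:LinearMatrixEqualityCondition}.

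Next I would verify by direct substitution that the $(i,j)$ block of $W$ is exactly \eqref{Eq:Th:DTStabilizationUnderFSF2}. This is routine bookkeeping: because $M$ is block diagonal, $(M)_{ij}=M_{ii}e_{ij}$ and $(MA^\T)_{ij}=M_{ii}A_{ji}^\T$; because $B$ is block diagonal, the contribution of $L^\T B^\T=(BL)^\T$ to the off-diagonal block is obtained from $(BL)_{ij}=B_{ii}L_{ij}$, and the bottom-right block merely repeats $(M)_{ij}=M_{ii}e_{ij}$. The structurally important step is then to argue that $W$ is a \emph{symmetric network matrix}. By Lm.~\ref{Lm:NetworkMatrixProperties}, $M$ (block diagonal), $MA^\T$ (block-diagonal network matrix times network matrix), and $L^\T B^\T=(BL)^\T$ (transpose of a block-diagonal network matrix times a network matrix) are each network matrices; hence every constituent block of $\Psi$ is a network matrix, and Lm.~\ref{Lm:NetworkMatrixProperties}-Case~3 certifies that $W=\text{BEW}(\Psi)$ is itself a network matrix in the sense of Def.~\ref{Def:NetworkMatrices}.

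With $W$ a symmetric network matrix, I would invoke Lm.~\ref{Lm:ColumnandRowPermutations} to conclude $\Psi>0 \iff W>0$, and then apply Alg.~\ref{Alg:DistributedPositiveDefiniteness} so that $W>0$ is enforced decentrally by requiring $\tilde{W}_{ii}>0$ at each $\Sigma_i$. Combined with $M_{ii}>0,\forall i \implies M>0$, the feasibility of $\mathbb{P}_{10}$ \eqref{Eq:Th:DTStabilizationUnderFSF} at every subsystem is therefore equivalent to the existence of $M>0, L$ with $\Psi>0$, i.e.\ to the global stabilization condition of Prop.~\ref{Pr:DTStabilizationUnderFSF}. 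Finally, since $V = K-LM^{-1}=0$ is of the form \eqref{Eq:LinearMatrixEqualityCondition} with $M$ block diagonal and non-singular, Lm.~\ref{Lm:matrixEquality} guarantees it can be enforced distributedly via Alg.~\ref{Alg:DistributedEquality}, recovering the local gains $K_i$ exactly as in \eqref{Eq:Th:StabilizationUnderFSF3}.

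The main obstacle I anticipate is the BEW bookkeeping rather than any deep argument: one must confirm that the constituent blocks of the $2\times2$ LMI $\Psi$ are network matrices \emph{before} appealing to Lm.~\ref{Lm:ColumnandRowPermutations} and Alg.~\ref{Alg:DistributedPositiveDefiniteness}. This is precisely the discrete-time subtlety flagged in Rm.~\ref{Rm:DTCaseAlternativeLMIs}: Prop.~\ref{Pr:DTStabilizationUnderFSF} is deliberately written with the products $MA^\T$ and $L^\T B^\T$ (network matrices) rather than with the non-network product $A^\T M A$, which is exactly what makes $\text{BEW}(\Psi)$ a legitimate network matrix amenable to the decentralized test. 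Because all other steps transfer verbatim from Th.~\ref{Th:StabilizationUnderFSF}, the remaining arguments can be omitted as in the other discrete-time theorems.
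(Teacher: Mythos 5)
Your proposal is correct and follows essentially the same route as the paper's (omitted) proof: it mirrors the proof of Th.~\ref{Th:StabilizationUnderFSF} with Prop.~\ref{Pr:DTStabilizationUnderFSF} in place of Prop.~\ref{Pr:StabilizationUnderFSF}, supplies the BEW step via Lms.~\ref{Lm:NetworkMatrixProperties} and \ref{Lm:ColumnandRowPermutations} exactly as in the paper's other discrete-time theorems (cf.\ Th.~\ref{Th:DTLTIStability}), and decentralizes the LMI and the gain LME via Algs.~\ref{Alg:DistributedPositiveDefiniteness} and \ref{Alg:DistributedEquality}. One remark: your direct substitution in fact yields $L_{ji}^\T B_{jj}^\T$ (not $L_{ji}^\T B_{ii}^\T$) in the off-diagonal block --- consistent with \eqref{Eq:Th:StabilizationUnderFSF2} and \eqref{Eq:Th:DTDissipativationUnderFSF2} --- so the $B_{ii}^\T$ printed in \eqref{Eq:Th:DTStabilizationUnderFSF2} is an index typo in the paper rather than an error in your argument.
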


\begin{theorem}\label{Th:DTObserver}
(Stable observer design) For the DTNS \eqref{Eq:DTNSDynamics} (where $C,D$ are block diagonal) under $w(t)=\0$, the local Luenberger observers \eqref{Eq:DTLocalObserver} render the state estimation error dynamics \eqref{Eq:DTLuenbergerObserver} stable if at each subsystem $\Sigma_i,i\in\N_N$, the problem
\begin{equation}\label{Eq:Th:DTObserver}
    \mathbb{P}_{11}: \ \ \text{Find} \ \ P_{ii},\ \ K_i\ \ \text{such that} \ \ P_{ii} > 0, \ \ \tilde{W}_{ii}>0, 
\end{equation}
is feasible, where $\tilde{W}_{ii}$ is computed from Alg. \ref{Alg:DistributedPositiveDefiniteness} (Steps: 3-16) when enforcing $W = [W_{ij}]_{i,j\in\N_N}>0$ with 
\begin{equation}\label{Eq:Th:DTObserver2}
    W_{ij} = \bm{ P_{ii}e_{ij} & A_{ji}^\T P_{jj}- C_{ii}^\T K_{ji}^\T\\ \star & P_{ii}e_{ij}}
\end{equation}
The local Luenberger observer parameters $L_i, \hat{A}_{i}$ and $\hat{B}_i$ \eqref{Eq:DTLocalObserver} are computed using $P_{ii}$ and $K_i$ identically to Th. \ref{Th:Observer} via \eqref{Eq:Th:Observer3}.
\end{theorem}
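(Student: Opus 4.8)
The plan is to mirror the proof of the continuous-time observer result (Th. \ref{Th:Observer}), but with the discrete-time modification that the governing LMI in Prop. \ref{Pr:DTObserver} is a $2\times 2$ arrangement of network matrices rather than a single network matrix, so a block-element-wise (BEW) reorganization via Lm. \ref{Lm:ColumnandRowPermutations} is needed exactly as in Th. \ref{Th:DTStabilizationUnderFSF}. First I would define the global block-diagonal matrix $P \triangleq \diag(P_{ii}:i\in\N_N)$ together with $W \triangleq \text{BEW}\big(\bm{P & A^\T P - C^\T K^\T \\ \star & P}\big)$ (inspired by \eqref{Eq:Pr:DTObserver}) and the three linear matrix equalities $V^{(1)} \triangleq L - P^{-1}K$, $V^{(2)} \triangleq \hat{A} - A + LC$ and $V^{(3)} \triangleq \hat{B} - B + LD$ (identical to those in Th. \ref{Th:Observer}, cf. \eqref{Eq:Th:Observer3}).

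Next I would invoke Prop. \ref{Pr:DTObserver}: under $w(t)=\0$ and the observer-parameter choices \eqref{Eq:LuenbergerObserverParameters}, the global estimation-error dynamics \eqref{Eq:DTLuenbergerObserverErrorDynamics} are stable iff there exist $P>0$ and $K$ with $\bm{P & A^\T P - C^\T K^\T \\ \star & P}>0$ and $L = P^{-1}K$. By Lm. \ref{Lm:ColumnandRowPermutations} this $2\times 2$ inequality holds iff its BEW form $W>0$, and a direct computation of the $(i,j)$-th block of $W$ (using that $P$ and $C$ are block diagonal, so that $(A^\T P)_{ij}=A_{ji}^\T P_{jj}$ and $(C^\T K^\T)_{ij}=C_{ii}^\T K_{ji}^\T$) recovers exactly \eqref{Eq:Th:DTObserver2}.

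The decentralization step then proceeds as in Th. \ref{Th:DTStabilizationUnderFSF}: one verifies via Lm. \ref{Lm:NetworkMatrixProperties} that $W$ is a symmetric network matrix, whence $W>0$ can be tested/enforced in a decentralized manner by running Alg. \ref{Alg:DistributedPositiveDefiniteness} and checking $\tilde{W}_{ii}>0$ at each $\Sigma_i$; combined with the observation $P_{ii}>0,\ \forall i \implies P>0$, this reduces the global feasibility to the local problem $\mathbb{P}_{11}$ \eqref{Eq:Th:DTObserver}. Finally, since each $V^{(k)}$ is of the affine form \eqref{Eq:LinearMatrixEqualityCondition} and satisfies the hypotheses of Lm. \ref{Lm:matrixEquality} (the relevant coefficient matrices $P$, $C$, $D$ being block diagonal), Alg. \ref{Alg:DistributedEquality} recovers the local observer parameters $L_i,\hat{A}_i,\hat{B}_i$ from $P_{ii},K_i$ exactly as in Th. \ref{Th:Observer} via \eqref{Eq:Th:Observer3}.

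The main obstacle I anticipate is the bookkeeping in the BEW/network-matrix verification: unlike the continuous-time case where $-A^\T P-PA+C^\T K+KC$ is already a single network matrix, here I must confirm both that the reindexing permutation underlying Lm. \ref{Lm:ColumnandRowPermutations} preserves positive definiteness and that every block $W_{ij}$ in \eqref{Eq:Th:DTObserver2} depends only on data local to subsystems $i$ and $j$. The block-diagonal structure of $C$ (and of $D$, for $V^{(3)}$) is precisely what collapses the otherwise network-wide sums $\sum_k C_{ki}^\T K_{jk}^\T$ to the single term $C_{ii}^\T K_{ji}^\T$, guaranteeing the network-matrix property; confirming this reduction is the only non-routine part, and everything else follows the template already established for Th. \ref{Th:Observer} and Th. \ref{Th:DTStabilizationUnderFSF}.
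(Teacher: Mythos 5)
Your proposal is correct and follows essentially the same route the paper intends: it mirrors the proof of Th.~\ref{Th:Observer} (define block-diagonal $P$, the matrix equalities $V^{(1)},V^{(2)},V^{(3)}$, and decentralize via Algs.~\ref{Alg:DistributedPositiveDefiniteness} and \ref{Alg:DistributedEquality}), with the discrete-time modification of taking $W$ as the BEW form of the $2\times2$ block LMI from Prop.~\ref{Pr:DTObserver} and invoking Lm.~\ref{Lm:ColumnandRowPermutations}, exactly as the paper does for its other DT theorems. Your block computations $(A^\T P)_{ij}=A_{ji}^\T P_{jj}$ and $(C^\T K^\T)_{ij}=C_{ii}^\T K_{ji}^\T$ correctly recover \eqref{Eq:Th:DTObserver2}, so the argument is complete.
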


\begin{theorem}\label{Th:DTStabilizationUnderDOF}
(DOF Stabilization) The DTNS \eqref{Eq:DTNSDynamics} (where $B,C$ are block diagonal) under $D=\0$, $w(t)=\0$ and local DOF control \eqref{Eq:DTLocalDOFController} is stable if at each subsystem $\Sigma_i,i\in\N_N$, the problem    
\begin{equation}\label{Eq:Th:DTStabilizationUnderDOF}
\begin{aligned}
    \mathbb{P}_{12}: \ \ \ \ \ \ \text{Find}& \ \ X_{ii},\ Y_{ii},\ A_{n,i},\ B_{n,i},\ C_{n,i},\ D_{n,i} \\ 
    \text{such that}& \ \ X_{ii} > 0,\ Y_{ii} > 0,\ \tilde{W}_{ii}^{(1)}>0,\ \tilde{W}_{ii}^{(2)}>0,  
\end{aligned}
\end{equation}
is feasible, where $\tilde{W}_{ii}^{(k)}$ is computed from Alg. \ref{Alg:DistributedPositiveDefiniteness} (Steps: 3-16) when enforcing $W^{(k)} = [W_{ij}^{(k)}]_{i,j\in\N_N}>0$, for $k=1,2,$ with 
\begin{equation}\label{Eq:Th:DTStabilizationUnderDOF2}
\begin{aligned}
    &W_{ij}^{(1)} = \bm{Y_{ii}e_{ij} & e_{ij} \\ e_{ij} & X_{ij}e_{ij}},\ \ \ \ W_{ij}^{(2)} =\\
    &\bm{Y_{ii}e_{ij} & e_{ij} & Y_{ii}A_{ji}^\T+C_{n,ji}^\T B_{jj}^\T & A_{n,ji}^\T \\
    \star & X_{ii}e_{ij} & A_{ji}^\T+C_{ii}^\T D_{n,ji}^\T B_{jj}^\T & A_{ji}^\T X_{jj}+C_{ii}^\T B_{n,ji}^\T \\ 
    \star & \star & Y_{ii}e_{ij} & e_{ij} \\ \star & \star & \star & X_{ii}e_{ij}}.
\end{aligned}
\end{equation}
The local DOF controller parameters $A_{c,i},B_{c,i},C_{c,i}$ and $D_{c,i}$ are computed using $X_{ii},Y_{ii},A_{n,i},B_{n,i},C_{n,i}$ and $D_{n,i}$ identically to Th. \ref{Th:StabilizationUnderDOF} via \eqref{Eq:Th:StabilizationUnderDOF3} and \eqref{Eq:Th:StabilizationUnderDOF4}.
\end{theorem}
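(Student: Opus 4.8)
The plan is to mirror the continuous-time argument of Theorem \ref{Th:StabilizationUnderDOF}, replacing the CT stabilization result with its discrete-time counterpart Prop. \ref{Pr:DTStabilizationUnderDOF}. First I would introduce the global matrices obtained by composing the local decision variables: the block-diagonal matrices $X \triangleq \diag(X_{ii}:i\in\N_N)$ and $Y \triangleq \diag(Y_{ii}:i\in\N_N)$, together with the network matrices $A_n,B_n,C_n,D_n$ assembled from the local blocks $A_{n,ij},B_{n,ij},C_{n,ij},D_{n,ij}$. I would then set
$$ W^{(1)} \triangleq \text{BEW}\left(\bm{Y & \I \\ \I & X}\right), $$
let $W^{(2)}$ be the BEW form of the left-hand-side matrix of \eqref{Eq:Pr:DTStabilizationUnderDOF2}, and define the four linear matrix equalities $V^{(1)},\ldots,V^{(4)}$ together with the auxiliary block-diagonal matrices $M \triangleq \diag(M_{ii}:i\in\N_N)$ and $N \triangleq \diag(N_{ii}:i\in\N_N)$ exactly as in \eqref{Eq:Th:StabilizationUnderDOF3}, \eqref{Eq:Th:StabilizationUnderDOF4} and \eqref{Eq:COVsAuxToDOF}.

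By Prop. \ref{Pr:DTStabilizationUnderDOF}, the closed-loop DTNS \eqref{Eq:DTLTIUnderDOF} is stable iff there exist $X,Y>0$ and $A_n,B_n,C_n,D_n$ satisfying \eqref{Eq:Pr:DTStabilizationUnderDOF1} and \eqref{Eq:Pr:DTStabilizationUnderDOF2}, with the controller recovered through the CoVs. The next step is to recast these two global LMIs into a decentralizable form. Because $B,C$ are assumed block diagonal and $X,Y$ are block diagonal by construction, Lm. \ref{Lm:NetworkMatrixProperties} (Cases 1--2) guarantees that every block appearing in \eqref{Eq:Pr:DTStabilizationUnderDOF2} --- e.g. $YA^\T+C_n^\T B^\T$ and $A^\T X+C^\T B_n^\T$ --- is a network matrix; hence by Case 3 both BEW forms $W^{(1)},W^{(2)}$ are symmetric network matrices, and a direct block-by-block computation shows their $(i,j)$ blocks coincide with $W_{ij}^{(1)}$ and $W_{ij}^{(2)}$ in \eqref{Eq:Th:DTStabilizationUnderDOF2}. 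Applying Lm. \ref{Lm:ColumnandRowPermutations} then yields \eqref{Eq:Pr:DTStabilizationUnderDOF1} $\iff W^{(1)}>0$ and \eqref{Eq:Pr:DTStabilizationUnderDOF2} $\iff W^{(2)}>0$.

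Finally, since $W^{(1)},W^{(2)}$ are symmetric network matrices, Co. \ref{Co:MainLemmaMatrix} (equivalently Alg. \ref{Alg:DistributedPositiveDefiniteness}) reduces $W^{(k)}>0$ to the local tests $\tilde{W}_{ii}^{(k)}>0$ for all $i\in\N_N$, while $X_{ii},Y_{ii}>0$ for all $i$ is equivalent to $X,Y>0$. This shows that feasibility of $\mathbb{P}_{12}$ at every subsystem implies the global stability condition, and the controller parameters are obtained by enforcing $V^{(k)}=0$ through Alg. \ref{Alg:DistributedEquality}. I expect this last point to be the main obstacle: unlike the stability and FSF cases, the change of variables \eqref{Eq:COVsAuxToDOF} for $A_c,B_c,C_c$ involves the factors $M^{-1}$, $N^{-\T}$ and products with $Y$ and $C$, so one must verify that $M,N$ stay block diagonal with nonsingular blocks (so the required pre-/post-multipliers are admissible) and that each $V^{(k)}$ separates into the form \eqref{Eq:LinearMatrixEqualityCondition} meeting the hypotheses of Lm. \ref{Lm:matrixEquality}, guaranteeing that Alg. \ref{Alg:DistributedEquality} returns the unique local controller gains in a genuinely distributed fashion.
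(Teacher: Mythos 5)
Your proposal is correct and takes essentially the same approach as the paper: the paper omits this proof, stating only that it follows the continuous-time counterpart (Th.~\ref{Th:StabilizationUnderDOF}) with Prop.~\ref{Pr:DTStabilizationUnderDOF} replacing Prop.~\ref{Pr:StabilizationUnderDOF}, which is precisely your argument --- compose block-diagonal $X,Y$ and the network matrices $A_n,B_n,C_n,D_n$, form the BEW matrices $W^{(1)},W^{(2)}$ from \eqref{Eq:Pr:DTStabilizationUnderDOF1}--\eqref{Eq:Pr:DTStabilizationUnderDOF2}, apply Lm.~\ref{Lm:NetworkMatrixProperties} and Lm.~\ref{Lm:ColumnandRowPermutations}, decentralize via Alg.~\ref{Alg:DistributedPositiveDefiniteness}, and recover the controller by enforcing $V^{(k)}=0$ through Alg.~\ref{Alg:DistributedEquality}. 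Your closing caution about verifying that $M,N$ remain block diagonal with nonsingular blocks (guaranteed since $W^{(1)}>0$ forces $\I-X_{ii}Y_{ii}$ nonsingular) so that each $V^{(k)}$ fits the hypotheses of Lm.~\ref{Lm:matrixEquality} is a correct and worthwhile elaboration of details the paper leaves implicit.
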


\subsection{Dissipativity Related Decentralized Results}

In this subsection, analogous to Theorems \ref{Th:DTLTIStability}-\ref{Th:DTStabilizationUnderDOF}, we provide $(Q,S,R)$-dissipativity based results for:
\begin{enumerate}
    \item decentralized $(Q,S,R)$-dissipativity analysis,
    \item decentralized FSF $(Q,S,R)$-dissipativation,
    \item decentralized $(Q,S,R)$-dissipative observer design,
    \item decentralized DOF $(Q,S,R)$-dissipativation,
\end{enumerate}
that respectively corresponds to Props. \ref{Pr:DTLTIQSRDissipativity}, \ref{Pr:DTDissipativationUnderFSF}, \ref{Pr:DTDissipativeObserver} and \ref{Pr:DTDissipativationUsingDOF} (Tab. \ref{Tab:DT-LTILocalResultsSummary}-Rows 2, 8, 12 and 16). In what follows, regarding the given specification matrices $Q,S,R$, it is assumed that: (1) $Q$ is a block diagonal network matrix, (2) $-Q>0$, and (3) $R=R^\T$.

\begin{theorem} \label{Th:DTLTIQSRDissipativity}(Dissipativity analysis)
The DTNS \eqref{Eq:DTNSDynamics} (where $C,D$ are block diagonal) under $w(t)=\0$ is $(Q,S,R)$-dissipative from $u(t)$ to $y(t)$ if at each subsystem $\Sigma_i,i\in\N_N$, the problem  
\begin{equation}\label{Eq:Th:DTLTIQSRDissipativity}
    \mathbb{P}_{13}: \ \ \text{Find} \ \ P_{ii}\ \ \text{such that} \ \ P_{ii} > 0, \ \ \tilde{W}_{ii}>0,
\end{equation}
is feasible, where $\tilde{W}_{ii}$ is computed from Alg. \ref{Alg:DistributedPositiveDefiniteness} (Steps: 3-16) when analyzing $W = [W_{ij}]_{i,j\in\N_N}>0$ with  
\begin{equation}\label{Eq:Th:DTLTIQSRDissipativity2}
    W_{ij} = 
    \bm{
    P_{ii}e_{ij} & C_{ii}^\T S_{ij} & A_{ji}^\T P_{jj} & C_{ii}^\T e_{ij} \\ 
    \star & \H_e(D_{ii}^\T S_{ij}) + R_{ij} & B_{ji}^\T P_{jj} & D_{ii}^\T e_{ij} \\ 
    \star & \star & P_{ii}e_{ij} & \0 \\
    \star & \star & \star & -Q_{ii}^{-1}e_{ij}
    }.
\end{equation}
\end{theorem}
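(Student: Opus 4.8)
The plan is to mirror the proof of Th.~\ref{Th:CTLTIQSRDissipativity}, substituting the discrete-time ingredients for the continuous-time ones, since $\mathbb{P}_{13}$ is precisely the decentralized (block element-wise) counterpart of the global LMI in Prop.~\ref{Pr:DTLTIQSRDissipativity}. First I would set $P \triangleq \diag(P_{ii}:i\in\N_N)$ and introduce the global candidate matrix
$$
\Psi \triangleq \bm{
P & C^\T S & A^\T P & C^\T \\
\star & D^\T S + S^\T D + R & B^\T P & D^\T \\
\star & \star & P & \0 \\
\star & \star & \star & -Q^{-1}
},
$$
together with $W \triangleq \text{BEW}(\Psi)$. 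By Prop.~\ref{Pr:DTLTIQSRDissipativity}, the DTNS \eqref{Eq:DTNSDynamics} under $w(t)=\0$ is $(Q,S,R)$-dissipative from $u(t)$ to $y(t)$ iff there exists $P>0$ with $\Psi>0$; restricting $P$ to be block diagonal is what renders this condition sufficient (not necessary) and, crucially, decentralizable.

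Next I would verify two facts about $W$. The first is that its $(i,j)$ block equals \eqref{Eq:Th:DTLTIQSRDissipativity2}, which is a direct block-entry computation leaning on the standing assumptions that $C,D,Q$ are block diagonal (while $A,B,S,R$ may be general network matrices). Since $P$ is block diagonal, $(A^\T P)_{ij}=A_{ji}^\T P_{jj}$ and $(B^\T P)_{ij}=B_{ji}^\T P_{jj}$; since $C,D$ are block diagonal, $(C^\T S)_{ij}=C_{ii}^\T S_{ij}$, $(C^\T)_{ij}=C_{ii}^\T e_{ij}$, $(D^\T)_{ij}=D_{ii}^\T e_{ij}$, and $(D^\T S + S^\T D)_{ij}=\H_s(D_{ii}^\T S_{ij})$; and since $Q$ is block diagonal, $(-Q^{-1})_{ij}=-Q_{ii}^{-1}e_{ij}$. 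Assembling these gives exactly \eqref{Eq:Th:DTLTIQSRDissipativity2}. The second fact is that $W$ is a symmetric network matrix: by Lm.~\ref{Lm:NetworkMatrixProperties}, each block of $\Psi$ is a network matrix (the products $C^\T S$, $A^\T P$, $B^\T P$, $S^\T D$, $D^\T S$ are network matrices by Case~2 because one factor is block diagonal, and sums/transposes are handled by Case~1, while $-Q^{-1}$ is block diagonal), so by Case~3 its block element-wise form $W=\text{BEW}(\Psi)$ is a network matrix.

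I would then invoke Lm.~\ref{Lm:ColumnandRowPermutations} to relate the two orderings: $W=\text{BEW}(\Psi)>0 \iff \Psi>0$. Consequently $W>0$ is decentrally testable/enforceable via Alg.~\ref{Alg:DistributedPositiveDefiniteness}, i.e.\ by checking $\tilde{W}_{ii}>0$ at each $\Sigma_i$ (Steps 3--16), where $\tilde{W}_{ii}$ is generated exactly as in \eqref{Eq:Co:MainLemmaMatrix}. Finally, since $P_{ii}>0,\ \forall i\in\N_N \implies P>0$, feasibility of $\mathbb{P}_{13}$ at every subsystem produces a block diagonal $P>0$ with $W>0$, hence $\Psi>0$, which by Prop.~\ref{Pr:DTLTIQSRDissipativity} establishes the claimed $(Q,S,R)$-dissipativity. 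This parallels the argument in the proof of Th.~\ref{Th:StabilizationUnderFSF}/Th.~\ref{Th:CTLTIQSRDissipativity}, so explicit details may be omitted.

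The hard part — the only genuinely new step relative to the stability theorem Th.~\ref{Th:DTLTIStability} — is the BEW reorganization. The proposition's LMI $\Psi$ is a $4\times4$ array whose entries are themselves $N\times N$ block (network) matrices, so $\Psi$ is \emph{not} directly in the single-layer network-matrix form required by Alg.~\ref{Alg:DistributedPositiveDefiniteness}. One must first interleave the LMI index and the subsystem index (via the symmetric permutation underlying Lm.~\ref{Lm:ColumnandRowPermutations}) to obtain the genuine network matrix $W$, and then argue that this permutation preserves positive definiteness so that decentralizing $W>0$ is equivalent to enforcing the original global condition $\Psi>0$. The block-diagonality assumptions on $C,D,Q$ are load-bearing precisely here: without them the constituent products would fail to be network matrices (Case~2 of Lm.~\ref{Lm:NetworkMatrixProperties} requires a block diagonal factor, and $Q^{-1}$ requires $Q$ block diagonal), and $W$ would lose the structure needed by Alg.~\ref{Alg:DistributedPositiveDefiniteness}.
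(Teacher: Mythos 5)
Your proposal is correct and takes essentially the same approach the paper intends: the paper omits this proof and directs the reader to mirror Th.~\ref{Th:CTLTIQSRDissipativity}, whose argument likewise sets $P \triangleq \diag(P_{ii}:i\in\N_N)$, forms $W \triangleq \text{BEW}(\Psi)$ with $\Psi$ the global LMI of Prop.~\ref{Pr:DTLTIQSRDissipativity}, and combines Lm.~\ref{Lm:NetworkMatrixProperties}, Lm.~\ref{Lm:ColumnandRowPermutations} ($W=\text{BEW}(\Psi)>0 \iff \Psi>0$) and Alg.~\ref{Alg:DistributedPositiveDefiniteness}, exactly as you do. Your explicit block-entry verification of \eqref{Eq:Th:DTLTIQSRDissipativity2} and your observation that restricting $P$ to be block diagonal is what turns the proposition's iff into mere sufficiency are precisely the details the paper suppresses, and both are accurate.
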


\begin{theorem}\label{Th:DTDissipativationUnderFSF}
(FSF Dissipativation) The DTNS \eqref{Eq:DTNSDynamics} (where $B,C,F$ are block diagonal) under $D=\0$ and local FSF control \eqref{Eq:DTLocalFSFController} is $(Q,S,R)$-dissipative from $w(t)$ to $y(t)$ if at each subsystem $\Sigma_i,i\in\N_N$, the problem    
\begin{equation}\label{Eq:Th:DTDissipativationUnderFSF}
    \mathbb{P}_{14}: \ \ \text{Find} \ \ M_{ii},\ \ L_i\ \ \text{such that} \ \ M_{ii} > 0, \ \ \tilde{W}_{ii}>0, 
\end{equation}
is feasible, where $\tilde{W}_{ii}$ is computed from Alg. \ref{Alg:DistributedPositiveDefiniteness} (Steps: 3-16) when enforcing $W = [W_{ij}]_{i,j\in\N_N}>0$ with $W_{ij}$ given in \eqref{Eq:Th:DTDissipativationUnderFSF2}. 
The local FSF controller gains $K_i$ are computed using $M_{ii}$ and $L_i$ identically to Th. \ref{Th:StabilizationUnderFSF} via \eqref{Eq:Th:StabilizationUnderFSF3}.
\end{theorem}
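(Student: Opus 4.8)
The plan is to mirror the proof of Theorem~\ref{Th:DissipativationUnderFSF}, replacing its continuous-time dissipativation result with the discrete-time counterpart Prop.~\ref{Pr:DTDissipativationUnderFSF}. First I would define the block-diagonal aggregate $M \triangleq \diag(M_{ii}:i\in\N_N)$ and collect the local gains into the global network matrix $L=[L_{ij}]_{i,j\in\N_N}$, so that the closed-loop FSF gain is recovered via $K=LM^{-1}$. Setting $W \triangleq \text{BEW}(\Psi)$ where $\Psi$ is the $4\times 4$ block LMI matrix appearing in \eqref{Eq:Pr:DTDissipativationUnderFSF}, the global $(Q,S,R)$-dissipativation condition of Prop.~\ref{Pr:DTDissipativationUnderFSF} reads $M>0$ together with $\Psi>0$. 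By Lm.~\ref{Lm:ColumnandRowPermutations}, $\Psi>0 \iff W=\text{BEW}(\Psi)>0$, so it suffices to enforce $M>0$ and $W>0$ in a decentralized fashion.

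The next step is to verify that $W$ is a symmetric \emph{network} matrix, so that Alg.~\ref{Alg:DistributedPositiveDefiniteness} applies. Here the block-diagonality hypotheses on $B,C,F$ (and on $Q$) are essential: under $D=\0$, each constituent block of $\Psi$ --- for instance $MA^\T+L^\T B^\T$, $MC^\T S$, $F^\T S+S^\T F+R$, and $-Q^{-1}$ --- is a network matrix by Lm.~\ref{Lm:NetworkMatrixProperties} (Cases 1 and 2), since multiplying a general network matrix ($A,L,E,S,R$) by a block-diagonal network matrix ($M,B,C,F,Q$) preserves the network-matrix structure and sums of network matrices stay network matrices. Consequently $\Psi$ is a block matrix of block network matrices, and by Lm.~\ref{Lm:NetworkMatrixProperties} (Case~3) its BEW rearrangement $W$ is itself a network matrix, with $(i,j)$-block equal to \eqref{Eq:Th:DTDissipativationUnderFSF2}. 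Since $M_{ii}>0,\ \forall i\in\N_N \implies M>0$, the joint requirement $M>0$, $W>0$ reduces to the local feasibility problem $\mathbb{P}_{14}$ solved at each $\Sigma_i$ by computing $\tilde{W}_{ii}$ via Alg.~\ref{Alg:DistributedPositiveDefiniteness} (Steps 3--16).

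Finally, to recover the controller gains, I would observe that $V\triangleq K-LM^{-1}=0$ is a linear matrix equality of the form \eqref{Eq:LinearMatrixEqualityCondition} (with $M$ block diagonal and nonsingular) satisfying the hypotheses of Lm.~\ref{Lm:matrixEquality}; hence it can be enforced blockwise via Alg.~\ref{Alg:DistributedEquality}, yielding $K_{ij}=L_{ij}M_{jj}^{-1}$, i.e.\ \eqref{Eq:Th:StabilizationUnderFSF3}, exactly as in Theorem~\ref{Th:StabilizationUnderFSF}. I expect the only delicate point to be confirming the network-matrix status of \emph{every} block of $\Psi$ under the stated block-diagonality assumptions (in particular that the cross terms coupling $M$, $L$, and the general matrices $A,E,S,R$ do not destroy the structure); once that bookkeeping is checked, the decentralization through Algs.~\ref{Alg:DistributedPositiveDefiniteness} and~\ref{Alg:DistributedEquality} is routine and parallels the continuous-time argument, so the explicit details may be omitted.
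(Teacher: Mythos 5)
Your proposal is correct and follows exactly the route the paper intends: the paper omits the proof of Theorem~\ref{Th:DTDissipativationUnderFSF}, stating that it parallels its continuous-time counterpart Theorem~\ref{Th:DissipativationUnderFSF}, whose proof likewise defines $M\triangleq\diag(M_{ii}:i\in\N_N)$, sets $W\triangleq\text{BEW}(\Psi)$ with $\Psi$ the block LMI matrix of the corresponding FSF-dissipativation proposition (here Prop.~\ref{Pr:DTDissipativationUnderFSF}), invokes Lm.~\ref{Lm:ColumnandRowPermutations} and Lm.~\ref{Lm:NetworkMatrixProperties} to justify decentralization via Alg.~\ref{Alg:DistributedPositiveDefiniteness}, and recovers $K_i$ through Lm.~\ref{Lm:matrixEquality} and Alg.~\ref{Alg:DistributedEquality} as in Theorem~\ref{Th:StabilizationUnderFSF}. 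Your explicit bookkeeping of the network-matrix status of each block of $\Psi$ under the block-diagonality of $B,C,F,Q$ is precisely the step the paper leaves implicit, and your blocks agree with \eqref{Eq:Th:DTDissipativationUnderFSF2}.
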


\begin{theorem}\label{Th:DTDissipativeObserver}
(Dissipative observer design) For the DTNS \eqref{Eq:DTNSDynamics} (where $C,D,F$ are block diagonal), the local Luenberger observers \eqref{Eq:DTLocalObserver} with the local performance metrics \eqref{Eq:DTLocalObserverPerf} (such that $G,J$ in \eqref{Eq:DTGlobalObserverPerf} are block diagonal) render the global state estimation error dynamics \eqref{Eq:DTLuenbergerObserverWithPerformance} $(Q,S,R)$-dissipative from $w(t)$ to $z(t)$ if at each subsystem $\Sigma_i,i\in\N_N$, the problem
\begin{equation}\label{Eq:Th:DTDissipativeObserver}
    \mathbb{P}_{15}: \ \ \text{Find} \ \ P_{ii},\ \ K_i\ \ \text{such that} \ \ P_{ii} > 0, \ \ \tilde{W}_{ii}>0, 
\end{equation}
is feasible, where $\tilde{W}_{ii}$ is computed from Alg. \ref{Alg:DistributedPositiveDefiniteness} (Steps: 3-16) when enforcing $W = [W_{ij}]_{i,j\in\N_N}>0$ with $W_{ij}$ given in \eqref{Eq:Th:DTDissipativeObserver2}. The local Luenberger observer parameters $L_i, \hat{A}_{i}$ and $\hat{B}_i$ \eqref{Eq:CTLocalObserver} are computed using $P_{ii}$ and $K_i$ identically to Th. \ref{Th:Observer} via \eqref{Eq:Th:Observer3}.
\end{theorem}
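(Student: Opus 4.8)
The plan is to mirror the proof of the continuous-time counterpart, Th.~\ref{Th:DissipativeObserver}, replacing each continuous-time ingredient by its discrete-time analogue and borrowing the block element-wise (BEW) machinery already used in Th.~\ref{Th:DTLTIQSRDissipativity}. First I would assemble the local variables into global network matrices by setting $P \triangleq \diag(P_{ii}:i\in\N_N)$ and $K \triangleq [K_{ij}]_{i,j\in\N_N}$, and then define
\begin{equation*}
    W \triangleq \text{BEW}(\Psi), \quad
    \Psi \triangleq \bm{P & G^\T S & A^\T P - C^\T K^\T & G^\T \\ \star & \H_s(J^\T S)+R & E^\T P - F^\T K^\T & J^\T \\ \star & \star & P & \0 \\ \star & \star & \star & -Q^{-1}},
\end{equation*}
which is inspired by the global dissipativity LMI \eqref{Eq:Pr:DTDissipativeObserver}. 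The observer-parameter choices and the change of variables $K=PL$ are encoded through the linear matrix equalities $V^{(1)} \triangleq L - P^{-1}K$, $V^{(2)} \triangleq \hat{A}-A+LC$ and $V^{(3)} \triangleq \hat{B}-B+LD$, exactly as in Th.~\ref{Th:Observer} via \eqref{Eq:Th:Observer3}.

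Next I would verify the two structural facts that legitimize the decentralization. Because $C,D,F,G,J,Q$ (hence $C^\T,F^\T,G^\T,J^\T,Q^{-1}$) are block-diagonal network matrices while $A,K,S$ are general network matrices, every constituent block of $\Psi$ --- such as $G^\T S$, $A^\T P$, $C^\T K^\T$, $E^\T P$ and $F^\T K^\T$ --- is a product in which at least one factor is block diagonal, so by Lm.~\ref{Lm:NetworkMatrixProperties} (Cases 1 and 2) each is itself a network matrix. Thus $\Psi$ is a block matrix of network matrices, and by Lm.~\ref{Lm:NetworkMatrixProperties}-Case 3 its rearrangement $W=\text{BEW}(\Psi)$ is a symmetric network matrix whose $(i,j)$ block is precisely the $W_{ij}$ stated in \eqref{Eq:Th:DTDissipativeObserver2}. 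Invoking Lm.~\ref{Lm:ColumnandRowPermutations} then gives the pivotal equivalence $\Psi>0 \iff W=\text{BEW}(\Psi)>0$.

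With these facts established, the remainder proceeds as in the proofs of Th.~\ref{Th:DTLTIQSRDissipativity} and Th.~\ref{Th:Observer}. By Prop.~\ref{Pr:DTDissipativeObserver}, the global error dynamics \eqref{Eq:DTLuenbergerObserverWithPerformance} are $(Q,S,R)$-dissipative iff $\Psi>0$ (with $L=P^{-1}K$); via the equivalence above this is the same as $W>0$, which --- $W$ being a symmetric network matrix --- can be tested/enforced decentrally by running Alg.~\ref{Alg:DistributedPositiveDefiniteness} and checking $\tilde{W}_{ii}>0$ at each $\Sigma_i$. Since $P_{ii}>0,\forall i \implies P>0$, feasibility of $\mathbb{P}_{15}$ at every subsystem is equivalent to the existence of the required $P>0$ and $K$. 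Finally, the equalities $V^{(1)}=V^{(2)}=V^{(3)}=0$ are of the form \eqref{Eq:LinearMatrixEqualityCondition} and satisfy the hypotheses of Lm.~\ref{Lm:matrixEquality} (the relevant coefficient matrices $P$ and the block-diagonal $C,D$ have nonsingular diagonal blocks), so Alg.~\ref{Alg:DistributedEquality} recovers the local observer parameters $L_i,\hat{A}_i,\hat{B}_i$ distributedly via \eqref{Eq:Th:Observer3}.

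The step I expect to be the main obstacle is the structural bookkeeping in the second paragraph: one must confirm that every off-diagonal block of $\Psi$ --- especially the mixed terms $A^\T P - C^\T K^\T$ and $E^\T P - F^\T K^\T$ --- remains inside the network-matrix class under the stated block-diagonality assumptions, since it is exactly this property that licenses both Lm.~\ref{Lm:ColumnandRowPermutations} and the applicability of Alg.~\ref{Alg:DistributedPositiveDefiniteness}; a single factor failing to be block diagonal would destroy the BEW equivalence and hence the decentralized test. The dissipativity argument and the LME recovery are then routine, inherited essentially verbatim from the continuous-time development of Th.~\ref{Th:DissipativeObserver}.
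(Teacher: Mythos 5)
Your proposal is correct and follows essentially the same route as the paper, whose (omitted) proof of Th.~\ref{Th:DTDissipativeObserver} is exactly the discrete-time transcription of Th.~\ref{Th:DissipativeObserver}: define $P\triangleq\diag(P_{ii}:i\in\N_N)$ and $W\triangleq\text{BEW}(\Psi)$ with $\Psi$ taken from \eqref{Eq:Pr:DTDissipativeObserver}, invoke Prop.~\ref{Pr:DTDissipativeObserver} together with Lm.~\ref{Lm:NetworkMatrixProperties} and Lm.~\ref{Lm:ColumnandRowPermutations}, and then decentralize via Algs.~\ref{Alg:DistributedPositiveDefiniteness} and \ref{Alg:DistributedEquality}. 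Your structural bookkeeping (block-diagonality of $C,D,F,G,J,Q$ versus general $A,E,S,K$) is precisely what the paper's summary in Tab.~\ref{Tab:DT-LTILocalResultsSummary} encodes, and the only cosmetic slip --- citing nonsingularity of $C,D$ for the LMEs, which enter only the constant terms and hence need no such assumption --- does not affect validity.
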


\begin{theorem}\label{Th:DTDissipativationUsingDOF}
(DOF Dissipativation) The DTNS \eqref{Eq:DTNSDynamics} (where $B,C,F$ are block diagonal) under $D=\0$, local DOF control \eqref{Eq:DTLocalDOFController} and local performance metrics \eqref{Eq:DTLocalDOFControllerPerf} (such that $H$ in \eqref{Eq:DTGlobalDOFControllerPerf} is block diagonal), i.e., \eqref{Eq:DTLTIUnderDOF}, is $(Q,S,R)$-dissipative (where $S$ is block diagonal) from $w(t)$ to $z(t)$ if at each subsystem $\Sigma_i,i\in\N_N$, the problem    
\begin{equation}\label{Eq:Th:DTDissipativationUsingDOF}
\begin{aligned}
    \mathbb{P}_{16}: \ \ \ \ \ \ \text{Find}& \ \ X_{ii},\ Y_{ii},\ A_{n,i},\ B_{n,i},\ C_{n,i},\ D_{n,i} \\ 
    \text{such that}& \ \ X_{ii} > 0,\ Y_{ii} > 0,\ \tilde{W}_{ii}^{(1)}>0,\ \tilde{W}_{ii}^{(2)}>0,  
\end{aligned}
\end{equation}
is feasible, where $\tilde{W}_{ii}^{(k)}$ is computed from Alg. \ref{Alg:DistributedPositiveDefiniteness} (Steps: 3-16) when enforcing $W^{(k)} = [W_{ij}^{(k)}]_{i,j\in\N_N}>0$ for $k=1,2,$ with 
$\scriptsize W_{ij}^{(1)} = \bm{Y_{ii}e_{ij} & e_{ij} \\ e_{ij} & X_{ij}e_{ij}}$ and $W_{ij}^{(2)}$ given in  \eqref{Eq:Th:DTDissipativationUsingDOF2}. The local DOF controller parameters $A_{c,i},B_{c,i},C_{c,i}$ and $D_{c,i}$ are computed using $X_{ii},Y_{ii},A_{n,i},B_{n,i},C_{n,i}$ and $D_{n,i}$, identically to Th. \ref{Th:StabilizationUnderDOF} via \eqref{Eq:Th:StabilizationUnderDOF3} and \eqref{Eq:Th:StabilizationUnderDOF4}.
\end{theorem}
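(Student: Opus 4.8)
The plan is to follow the template established by the continuous-time counterpart, Theorem \ref{Th:DissipativationUsingDOF}, since the discrete-time global result Prop. \ref{Pr:DTDissipativationUsingDOF} plays exactly the role that Prop. \ref{Pr:DissipativationUsingDOF} played there. First I would set $X \triangleq \diag(X_{ii}:i\in\N_N)$ and $Y \triangleq \diag(Y_{ii}:i\in\N_N)$, so that the diagonal blocks solved for at each subsystem assemble into block-diagonal network matrices, and collect the local variables $A_{n,i},B_{n,i},C_{n,i},D_{n,i}$ into the global network matrices $A_n,B_n,C_n,D_n$. With these definitions, the positivity requirement $\bm{Y & \I\\ \I & X}>0$ of \eqref{Eq:Pr:DTDissipativationUsingDOF1} becomes, after a block element-wise rearrangement, $W^{(1)}\triangleq \text{BEW}(\bm{Y & \I\\ \I & X})>0$ with the stated local blocks $W_{ij}^{(1)}$, while the main DT-DOF dissipativity LMI \eqref{Eq:Pr:DTDissipativationUsingDOF2} becomes $W^{(2)}\triangleq \text{BEW}(\Psi^{(2)})>0$ whose $(i,j)$ block is precisely \eqref{Eq:Th:DTDissipativationUsingDOF2}.

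The observation that makes decentralization possible is that, exactly as noted in Rm. \ref{Rm:DTCaseAlternativeLMIs}, the DT LMI \eqref{Eq:Pr:DTDissipativationUsingDOF2} has been written so that every product appearing in it (e.g. $YA^\T+C_n^\T B^\T$ or $A^\T X + C^\T B_n^\T$) is of the form (network matrix)$\times$(block-diagonal network matrix), and hence is again a network matrix by Lm. \ref{Lm:NetworkMatrixProperties}. This is where the block-diagonality hypotheses on $B,C,F,H,S$ (and $Q$) enter: they guarantee that $W^{(2)}$ is a genuine network matrix of $\mathcal{G}_N$, with its $(i,j)$ block depending only on data local to subsystems $i$ and $j$ per Def. \ref{Def:NetworkMatrices}. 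I would then invoke Lm. \ref{Lm:ColumnandRowPermutations} to pass between the blocked form and its BEW form, obtaining $W^{(k)}=\text{BEW}(\Psi^{(k)})>0 \iff \Psi^{(k)}>0$ for $k=1,2$, so that verifying the global LMIs of Prop. \ref{Pr:DTDissipativationUsingDOF} is equivalent to verifying $W^{(1)}>0$ and $W^{(2)}>0$. Since both are symmetric network matrices, Alg. \ref{Alg:DistributedPositiveDefiniteness} (built on Lm. \ref{Lm:MainLemmaShort}/Co. \ref{Co:MainLemmaMatrix}) reduces each to the sequential local tests $\tilde{W}_{ii}^{(1)}>0$ and $\tilde{W}_{ii}^{(2)}>0$, and since $X_{ii},Y_{ii}>0,\ \forall i \implies X,Y>0$, this yields feasibility of $\mathbb{P}_{16}$ \eqref{Eq:Th:DTDissipativationUsingDOF} as the decentralized equivalent.

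To recover the actual local DOF controller parameters, I would mirror the final part of Th. \ref{Th:StabilizationUnderDOF}: the change-of-variables relations \eqref{Eq:COVsAuxToDOF} together with \eqref{Eq:CoVsXYToMN} are linear matrix equalities of the form handled by Lm. \ref{Lm:matrixEquality}, so enforcing them block element-wise via Alg. \ref{Alg:DistributedEquality} produces $A_{c,i},B_{c,i},C_{c,i},D_{c,i}$ at each subsystem $\Sigma_i$ exactly as in \eqref{Eq:Th:StabilizationUnderDOF3}, with the auxiliary $M_{ii},N_{ii}$ fixed by \eqref{Eq:Th:StabilizationUnderDOF4}. Because the coefficient matrices $\bar{A},\bar{B},\bar{C},\bar{D}$ of the closed-loop DTNS \eqref{Eq:DTLTIUnderDOF} coincide structurally with those of the CT closed loop \eqref{Eq:CTLTIUnderDOF}, the same CoVs apply verbatim, as already flagged in Sec. \ref{SubSec:DOFControllerSynthesis}; hence no new derivation of the recovery step is needed.

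The main obstacle I anticipate is bookkeeping rather than conceptual: carefully expanding the global LMI \eqref{Eq:Pr:DTDissipativationUsingDOF2} into its block element-wise $(i,j)$ entries \eqref{Eq:Th:DTDissipativationUsingDOF2} and confirming that each cross term (for instance $Y_{ii}A_{ji}^\T + C_{n,ji}^\T B_{jj}^\T$ or $A_{ji}^\T X_{jj}+C_{ii}^\T B_{n,ji}^\T$) isolates the correct subsystem indices and never references off-structure blocks that would violate Def. \ref{Def:NetworkMatrices}. Verifying this structural locality for the $6\times 6$ block-of-blocks matrix $W^{(2)}$ of the DT case---larger than its CT analog \eqref{Eq:Th:DissipativationUsingDOF2}---is the step most prone to error, but it is the same mechanical check that validated Theorems \ref{Th:StabilizationUnderDOF} and \ref{Th:DissipativationUsingDOF}, and it succeeds here for the same reasons, so I would present it concisely and omit the routine algebra.
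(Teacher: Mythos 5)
Your proposal is correct and follows essentially the same route as the paper: the paper explicitly omits this proof, stating it mirrors the continuous-time counterpart (Th.~\ref{Th:DissipativationUsingDOF}), whose proof defines $X,Y$ block-diagonal, sets $W^{(1)}=\text{BEW}\left(\bm{Y & \I\\ \I & X}\right)$ and $W^{(2)}$ as the BEW form of the global dissipativation LMI, and proceeds via the global proposition (here Prop.~\ref{Pr:DTDissipativationUsingDOF}), Lm.~\ref{Lm:ColumnandRowPermutations}, Alg.~\ref{Alg:DistributedPositiveDefiniteness}, and the CoVs recovery of Th.~\ref{Th:StabilizationUnderDOF} --- exactly your plan, including the correct observation (Rm.~\ref{Rm:DTCaseAlternativeLMIs}) that the DT LMI was deliberately written so all products pair a network matrix with a block-diagonal one.
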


We emphasize that local problems $\mathbb{P}_9$-$\mathbb{P}_{16}$ stated respectively in the theorems \ref{Th:DTLTIStability}-\ref{Th:DTDissipativationUsingDOF} are LMI problems due to the applicability of Lm. \ref{Lm:TwoByTwoBlockMatrixPDF} to simplify the matrix inequality $\tilde{W}_{ii}>0$ in Alg. \ref{Alg:DistributedPositiveDefiniteness} (Step 14). Consequently, such problems can be solved conveniently and efficiently using readily available LMI software toolboxes \cite{Boyd1994}. Moreover, based on the remaining propositions provided in Sec. \ref{Sec:BasicsOfDTLTISystems}, a similar set of theorems can be proposed providing respective decentralized techniques (as summarized in the respective remaining rows in Tab. \ref{Tab:DT-LTILocalResultsSummary}).

\begin{figure*}[!b]
    \centering
    \hrulefill
    \begin{equation}\label{Eq:Th:DTDissipativationUnderFSF2}
    W_{ij} = 
    \bm{M_{ii}e_{ij} & M_{ii}C_{ii}^\T S_{ij} & M_{ii}A_{ji}^\T + L_{ji}^\T B_{jj}^\T & M_{ii}C_{ii}^\T e_{ij} \\ \star & \H_e(F_{ii}^\T S_{ij}) + R_{ij} & E_{ji}^\T & F_{ii}^\T e_{ij} \\ 
    \star & \star & M_{ii}e_{ij} & \0 \\ \star & \star & \star & -Q_{ii}^{-1}e_{ij} } 
    \end{equation}
    \begin{equation} \label{Eq:Th:DTDissipativeObserver2}
    W_{ij} = \bm{P_{ii}e_{ij} & G_{ii}^\T S_{ij} & A_{ji}^\T P_{jj} - C_{ii}^\T K_{ji}^\T & G_{ii}^\T e_{ij} \\ 
    \star & \H_e(J_{ii}^\T S_{ij}) + R_{ij} & E_{ji}^\T P_{jj} - F_{ii}^\T K_{ji}^\T & J_{ii}^\T e_{ij} \\
    \star & \star & P_{ii}e_{ij} & \0 \\ \star & \star & \star & -Q_{ii}^{-1}e_{ij}}
    \end{equation}
    \begin{equation}\label{Eq:Th:DTDissipativationUsingDOF2}
    \scriptsize
    W_{ij}^{(2)} = \bm{Y_{ii}e_{ij} & e_{ij} & (Y_{ii}G_{ji}^\T+C_{n,ji}^\T H_{jj}^\T)S_{jj} & Y_{ii}A_{ji}^\T + C_{n,ji}^\T B_{jj}^\T & A_{n,ji}^\T & Y_{ii}G_{ji}^\T+C_{n,ji}^\T H_{jj}^\T\\
    \star & X_{ii}e_{ij} & (G_{ji}^\T+C_{ii}^\T D_{n,ji}^\T H_{jj}^\T)S_{jj} & A_{ji}^\T+C_{ii}^\T D_{n,ji}^\T B_{jj}^\T & A_{ji}^\T X_{jj}+C_{ii}^\T B_{n,ji}^\T & G_{ji}^\T+C_{ii}^\T D_{n,ji}^\T H_{jj}^\T \\ 
    \star & \star & \H_s((J_{ji}^\T+F_{ii}^\T D_{n,ji}^\T H_{jj}^\T)S_{jj})+R_{ij} & E_{ji}^\T+F_{ii}^\T D_{n,ji}^\T B_{jj}^\T & E_{ji}^\T X_{jj}+F_{ii}^\T B_{n,ji}^\T & J_{ji}^\T+F_{ii}^\T D_{n,ji}^\T H_{jj}^\T
    \\\star & \star & \star & Y_{ii}e_{ij} & e_{ij} & \0 \\ \star & \star & \star & \star & X_{ii}e_{ij} & \0 \\ 
    \star  & \star & \star  & \star & \star & -Q_{ii}^{-1}e_{ij}}
    \end{equation}
\end{figure*}

\section{Simulation Results}
\label{Sec:SimulationResults}

\begin{table*}[!b]
\hrulefill
\begin{equation}\label{Eq:ExampleNetworkedSystem}
\resizebox{0.95\textwidth}{!}{$
\begin{aligned}
\Sigma_1 =&\  
\begin{cases}
\dot{x}_1 = \bm{ 0.198 & 3.412 \\ -3.412 & 0.198} x_1 + \bm{ -0.114 & -0.038 \\ -0.038 & -0.073} x_2 + \bm{ -0.060 & -1.032 \\ 1.032 & -0.060} x_4 + \bm{0.000 \\ 0.905} u_1 + \bm{ 0.000 \\ -0.013} w_1 + \bm{ 0.000 \\ 0.003} w_2 + \bm{ -0.001 \\ 0.006} w_4,\\
y_1 = \bm{1.114 & -2.429} x_1 + \bm{0.000} w_1, \hfill 
z_1 = \bm{1.000 & 1.000} x_1 + \bm{1.000} u_1 + \bm{1.000} w_1.\\
\end{cases}
\\
\Sigma_2 =&\  
\begin{cases}
\dot{x}_2 = \bm{ -0.000 & -0.001 \\ -0.001 & -0.194} x_1 + \bm{ 1.547 & 3.164 \\ -3.164 & 1.547} x_2 + \bm{ -0.258 & -0.008 \\ -0.008 & -0.204} x_4 + \bm{-0.902 \\ 0.000} u_2 + \bm{ -0.004 \\ -0.010} w_1 + \bm{ -0.000 \\ 0.021} w_2 + \bm{ 0.003 \\ 0.002} w_4,\\
y_2 = \bm{0.000 & 1.062} x_2 + \bm{0.002} w_2, \hfill 
z_2 = \bm{1.000 & 1.000} x_2 + \bm{1.000} u_2 + \bm{1.000} w_2.\\
\end{cases}
\\
\Sigma_3 =&\  
\begin{cases}
\dot{x}_3 = \bm{ -0.232 & -0.070 \\ -0.070 & -0.158} x_1 + \bm{ -0.096 & -0.062 \\ -0.062 & -0.085} x_2 + \bm{ 10.791 & 5.354 \\ 5.354 & 3.134} x_3 + \bm{ -0.074 & -0.384 \\ 0.384 & -0.074} x_4 + \bm{-0.324 \\ -1.406} u_3 + \bm{ 0.000 \\ -0.007} w_1 \\ 
+ \bm{ -0.002 \\ -0.004} w_2 + \bm{ 0.000 \\ -0.008} w_3 + \bm{ -0.003 \\ -0.001} w_4, \ \ \ \ 
y_3 = \bm{1.052 & 0.759} x_3 + \bm{-0.011} w_3, \ \ \ \ 
z_3 = \bm{1.000 & 1.000} x_3 \bm{1.000} u_3 \bm{1.000} w_3.\\
\end{cases}
\\
\Sigma_4 =&\  
\begin{cases}
\dot{x}_4 = \bm{ -0.180 & -0.066 \\ -0.066 & -0.078} x_1 + \bm{ 1.669 & 2.302 \\ 2.302 & 3.175} x_4 + \bm{0.000 \\ 0.998} u_4 + \bm{ -0.006 \\ 0.003} w_1 + \bm{ -0.018 \\ 0.008} w_4,\\
y_4 = \bm{0.629 & 0.000} x_4 + \bm{0.000} w_4, \hfill 
z_4 = \bm{1.000 & 1.000} x_4 \bm{1.000} u_4 \bm{1.000} w_4.\\
\end{cases}
\\
\Sigma_5 =&\  
\begin{cases}
\dot{x}_5 = \bm{ -0.059 & 0.033 \\ 0.033 & -0.057} x_4 + \bm{ 0.058 & 0.250 \\ 0.250 & 1.074} x_5 + \bm{0.870 \\ -1.461} u_5 + \bm{ -0.001 \\ -0.004} w_4 + \bm{ -0.006 \\ -0.000} w_5,\\
y_5 = \bm{-0.552 & -0.750} x_5 + \bm{0.000} w_5, \hfill 
z_5 = \bm{1.000 & 1.000} x_5 \bm{1.000} u_5 \bm{1.000} w_5.\\
\end{cases}
\end{aligned}
$}
\end{equation}
\end{table*}

\begin{table*}[!b]
\hrulefill
\begin{equation}\label{Eq:StabilizingControllerGains}
\begin{aligned}
K_1 =&\ \left\{K_{11}=\bm{-0.296 & -1.523}\right\},\ \ \ \ 
K_2 = \left\{K_{21}=\bm{-0.168 & -0.043}, K_{12}=\bm{-0.173 & 0.574}, K_{22}=\bm{5.369 & -3.916}\right\},\\ 
K_3 =&\ \left\{K_{31}=\bm{-0.083 & -0.118}, K_{32}=\bm{-0.057 & -0.067}, K_{33}=\bm{32.449 & 5.429}\right\},\\
K_4 =&\ \left\{K_{41}=\bm{1.023 & 0.279}, K_{14}=\bm{-0.973 & 0.114}, K_{24}=\bm{-0.234 & 0.041}, K_{34}=\bm{0.248 & -0.109}, K_{44}=\bm{-12.974 & -8.081}\right\},\\ 
K_5 =&\ \left\{K_{54}=\bm{0.035 & -0.039}, K_{55}=\bm{1.753 & 2.803}\right\}.
\end{aligned}
\end{equation}
\begin{equation}\label{Eq:StabilizingObserverGains}
\begin{aligned}
L_1 =&\ \left\{L_{11}=\bm{0.306 \\ -0.427}\right\},\ \ \ \ 
L_2 = \left\{L_{21}=\bm{0.060 \\ -0.008}, L_{12}=\bm{-0.042 \\ -0.207}, L_{22}=\bm{-3.403 & 4.596}\right\},\\
L_3 =&\ \left\{L_{31}=\bm{-0.012 \\ 0.043}, L_{32}=\bm{-0.059 \\ -0.080}, L_{33}=\bm{12.219 & 3.616}\right\},\\
L_4 =&\ \left\{L_{41}=\bm{-0.577 \\ -1.591}, L_{14}=\bm{-0.640 \\ 1.395}, L_{24}=\bm{-0.409 \\ -0.012}, L_{34}=\bm{-0.117 \\ 0.610}, L_{44}=\bm{35.602 & 97.048}\right\},\\
L_5 =&\ \left\{L_{54}=\bm{-0.094 \\ 0.053}, L_{55}=\bm{0.589 & -3.585}\right\}.
\end{aligned}
\end{equation}
\end{table*}

\begin{figure}[!b]
    \centering
    \begin{subfigure}[h]{\columnwidth}
        \centering
        \includegraphics[width=0.9\textwidth]{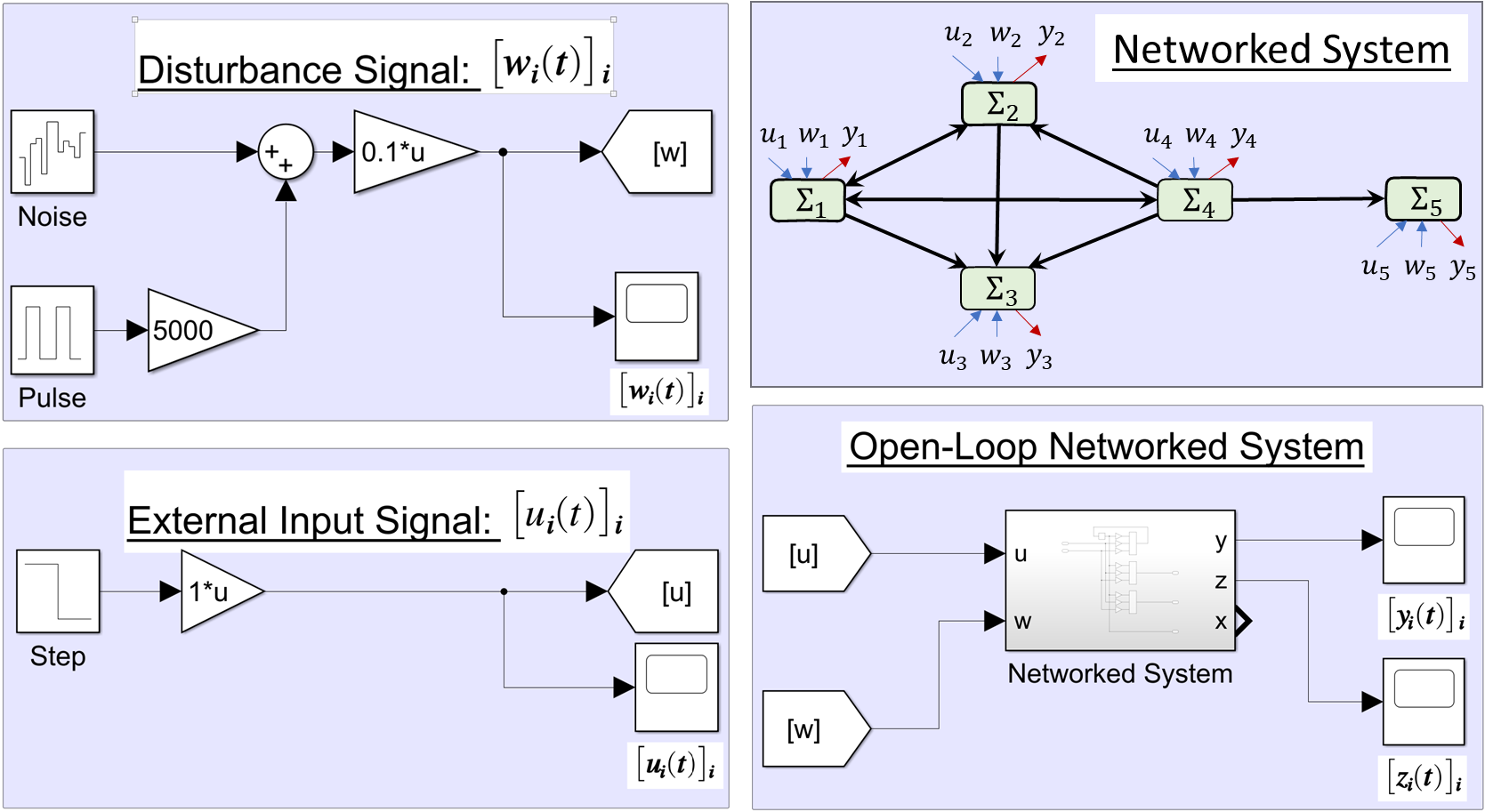}
        \caption{Considered networked system \eqref{Eq:ExampleNetworkedSystem} and simulation environment.}
        \label{Fig:SESetup1}    
    \end{subfigure}
    \begin{subfigure}[h]{\columnwidth}
        \centering
        \includegraphics[width=\textwidth]{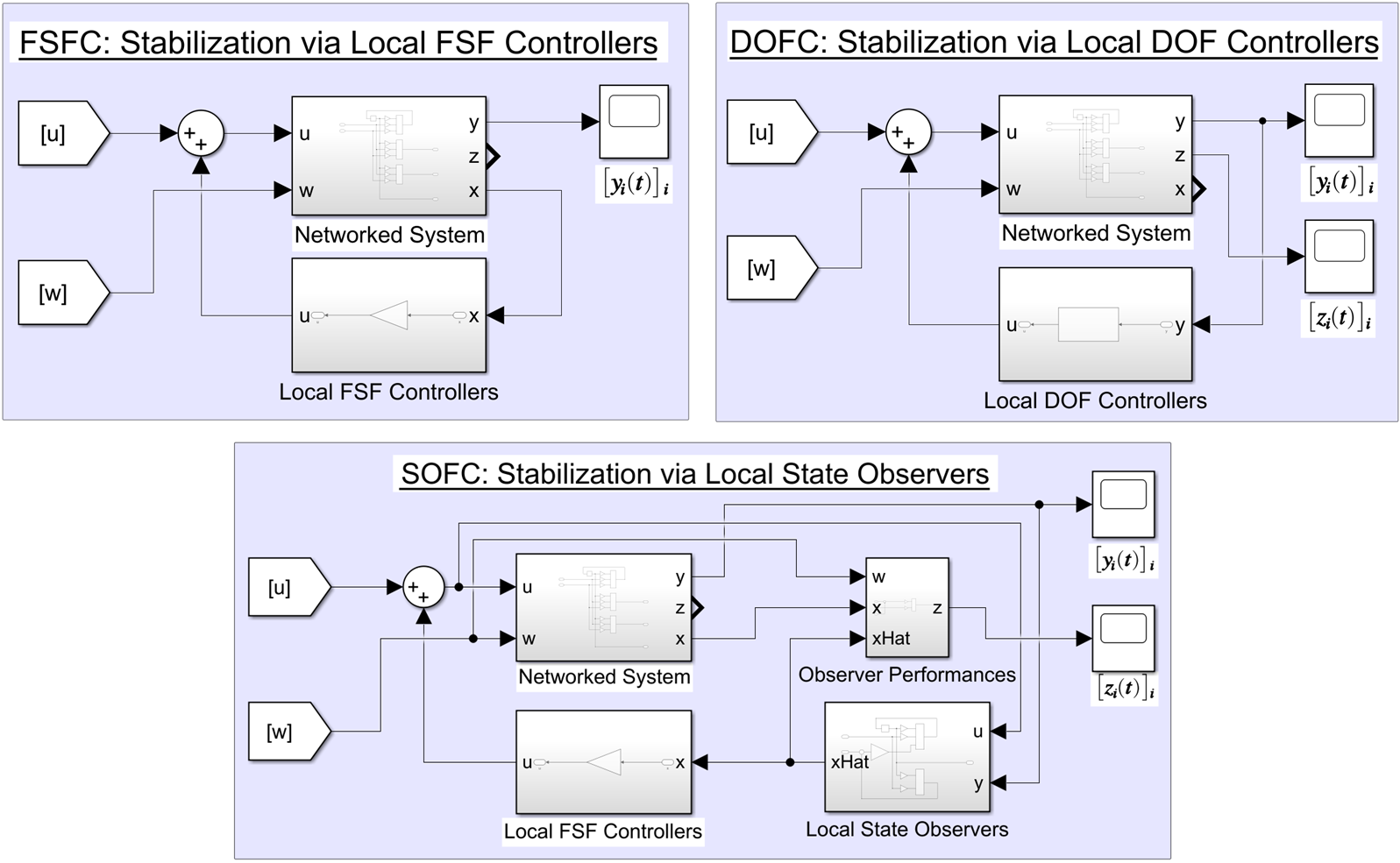}
        \caption{Evaluated three distributed controller configurations for the stabilization of the networked system: (1) \textbf{FSFC}: via local full-state feedback control, (2) \textbf{SOFC}: via local state observer based feedback control, and (3) \textbf{DOFC}: via local dynamic output feedback control.}
        \label{Fig:SESetup2}
    \end{subfigure}
    \caption{Considered simulation example setup: (a) the networked system and (b) the controller configurations.}
    \label{Fig:SESetup}
\end{figure}

\begin{figure}[!b]
    \centering
    \begin{subfigure}[h]{0.48\columnwidth}
        \centering
        \includegraphics[width=\textwidth]{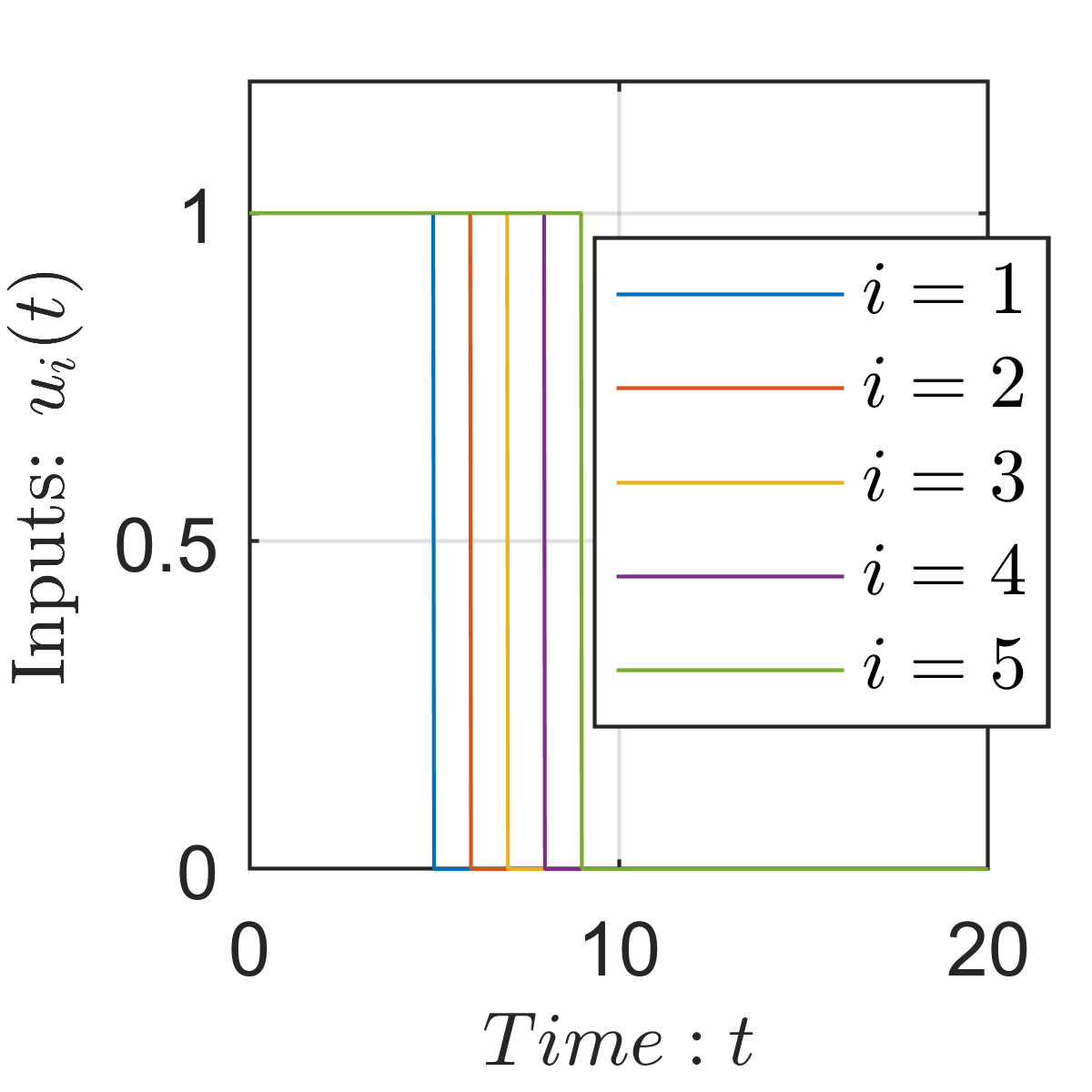}
        \caption{}
        \label{Fig:SEOpenLoopu}    
    \end{subfigure}
    \begin{subfigure}[h]{0.48\columnwidth}
        \centering
        \includegraphics[width=\textwidth]{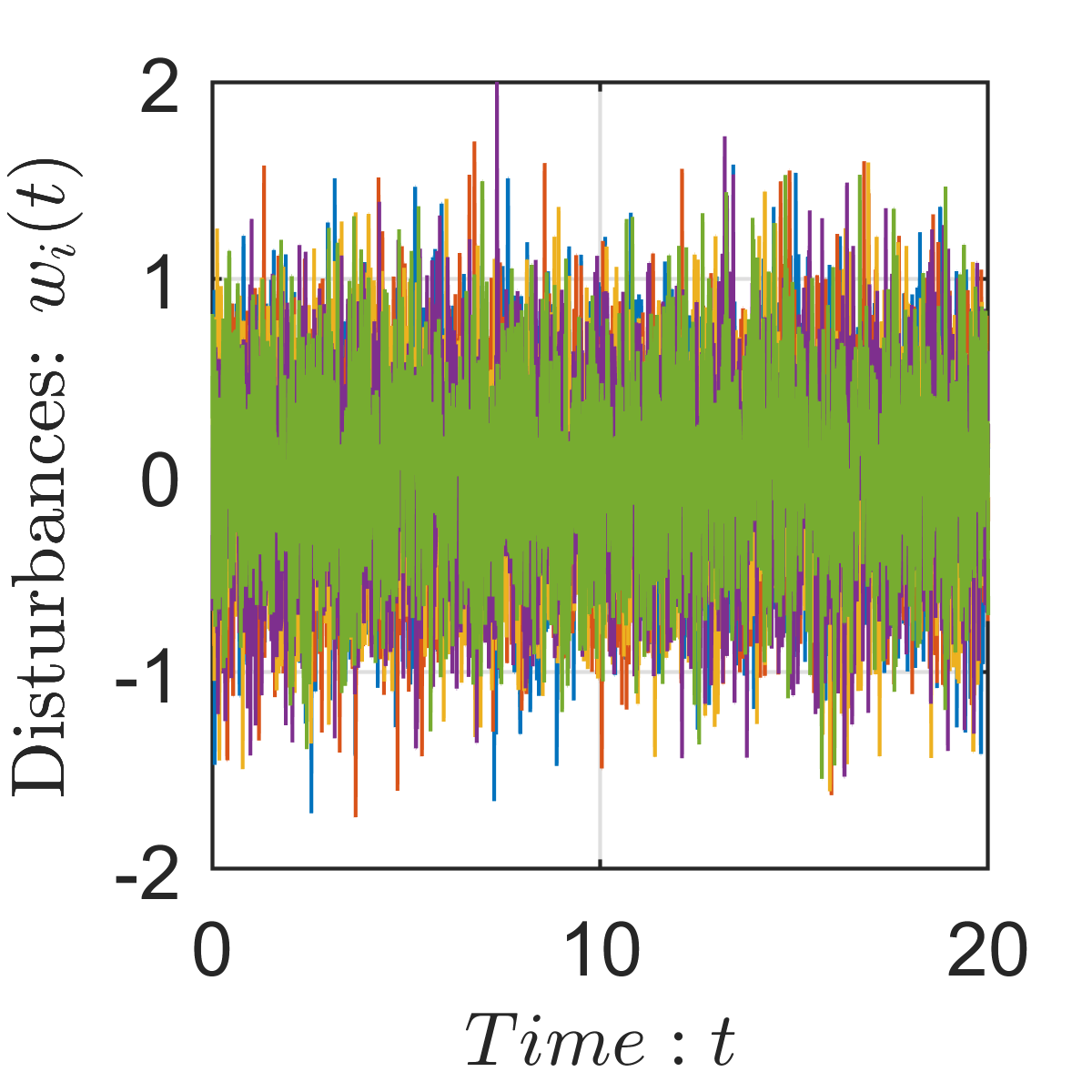}
        \caption{}
        \label{Fig:SEOpenLoopw}    
    \end{subfigure}
    \begin{subfigure}[h]{0.48 \columnwidth}
        \centering
        \includegraphics[width=\textwidth]{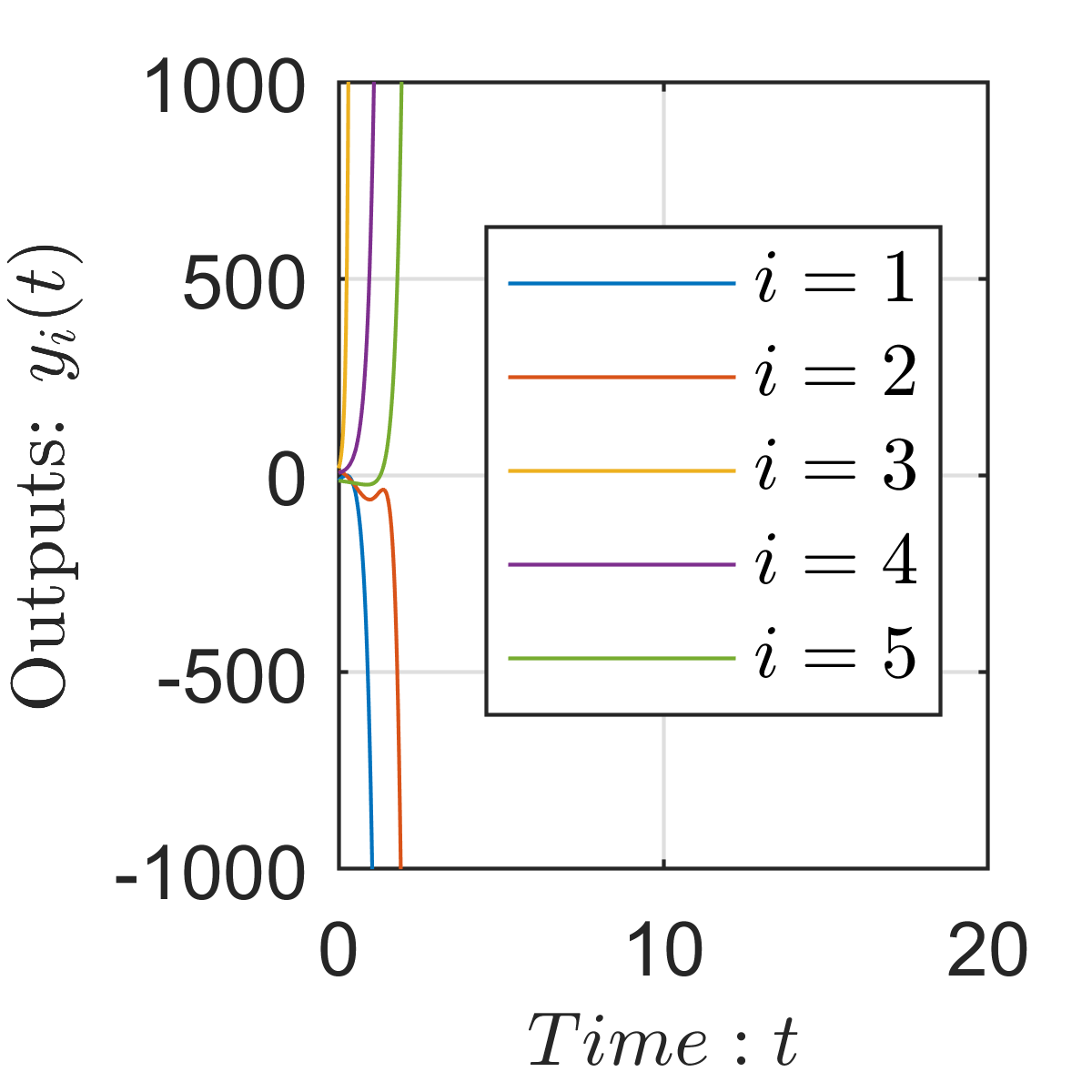}
        \caption{}
        \label{Fig:SEOpenLoopy}
    \end{subfigure}
    \begin{subfigure}[h]{0.48 \columnwidth}
        \centering
        \includegraphics[width=\textwidth]{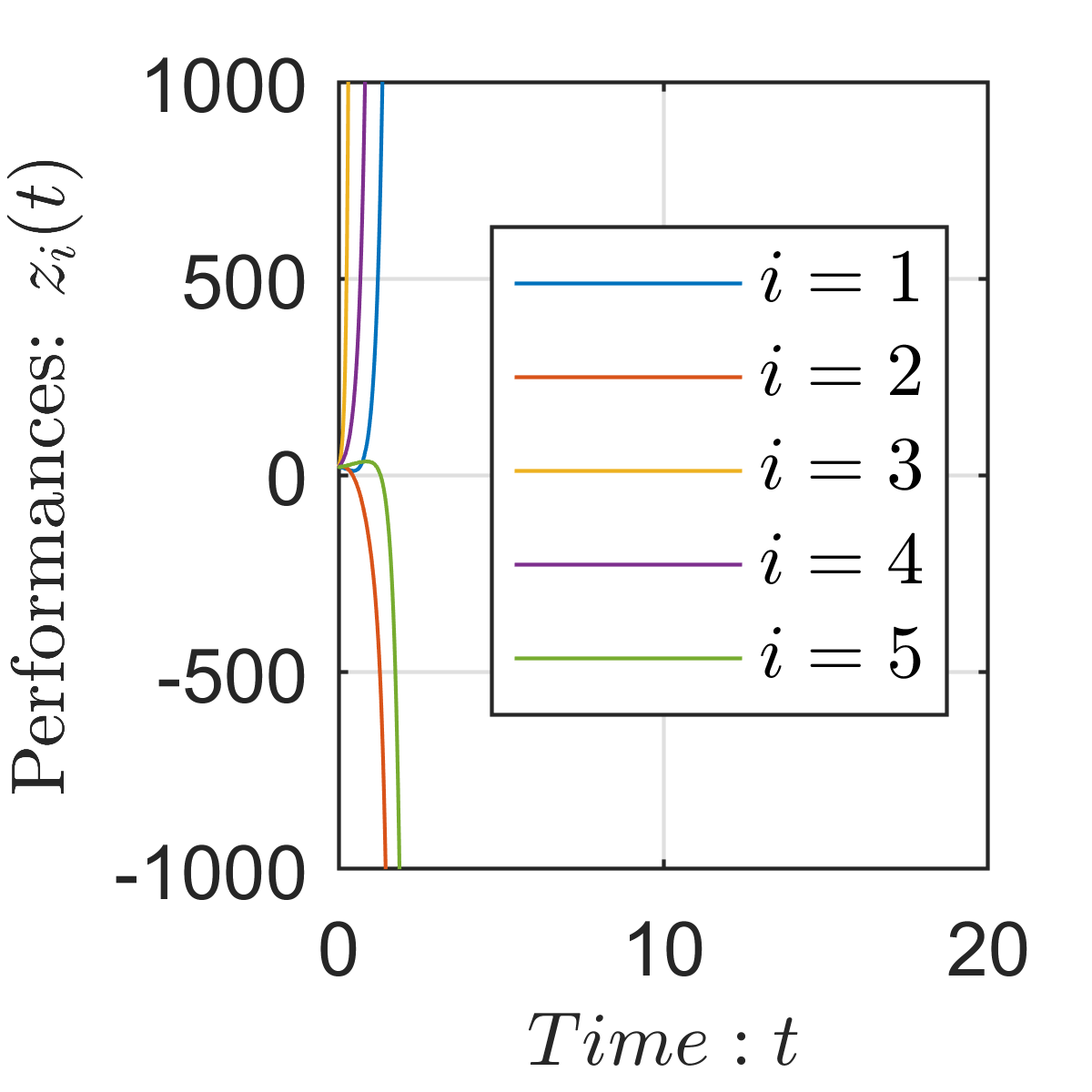}
        \caption{}
        \label{Fig:SEOpenLoopz}
    \end{subfigure}
    \caption{The used signals as subsystem: (a) inputs $[u_i(t)]_{i\in\N_N}$ and (b) disturbances $[w_i(t)]_{i\in\N_N}$, and the resulting open-loop signals for subsystem: (c) outputs $[y_i(t)]_{i\in\N_N}$ and (d) performances $[z_i(t)]_{i\in\N_N}$.}
    \label{Fig:SEOpenLoop}
\end{figure}

\begin{figure}[!hb]
    \centering
    \begin{subfigure}[h]{0.48\columnwidth}
        \centering
        \includegraphics[width=\textwidth]{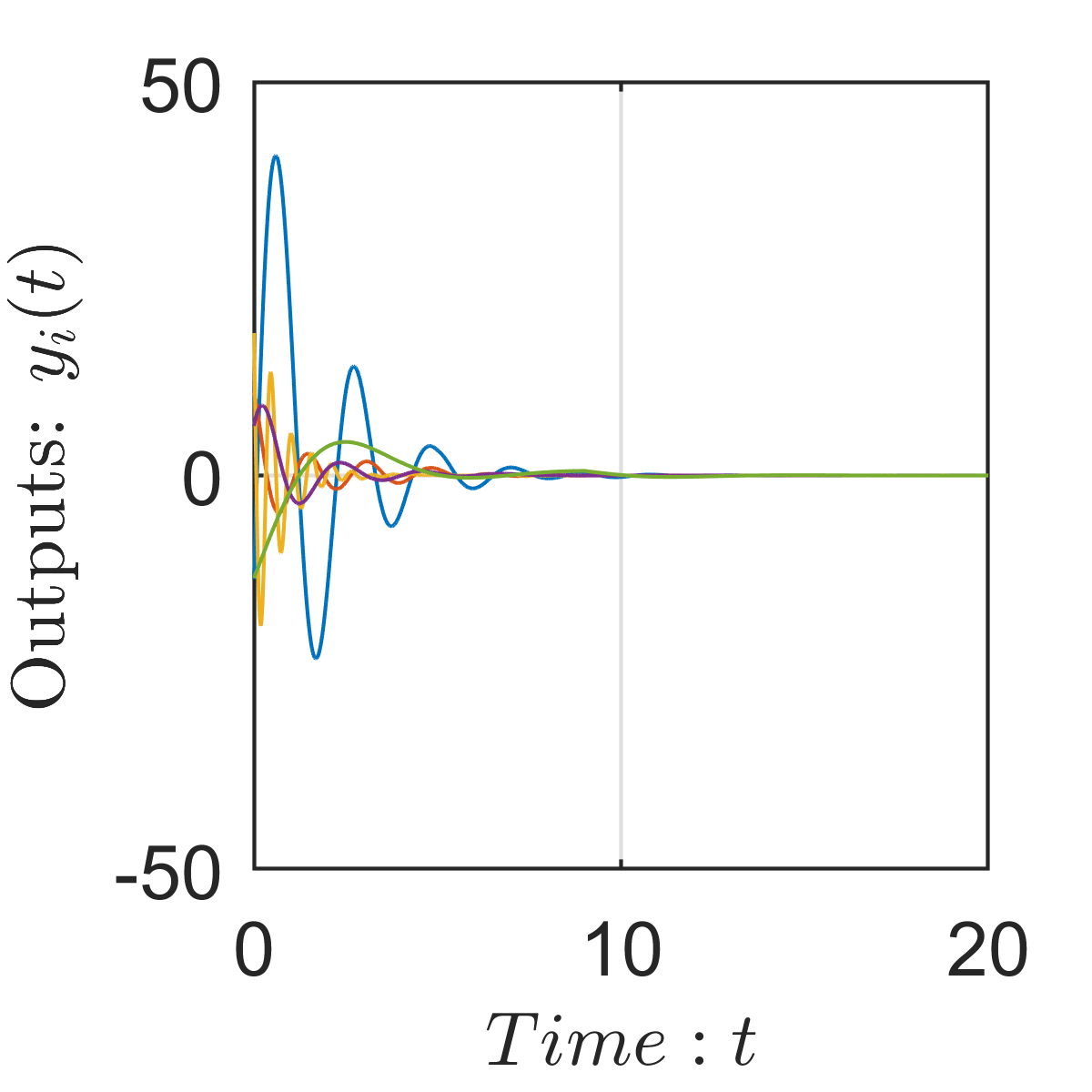}
        \caption{FSFC (centrally derived) \\ \centering MAO = 1.121}
        \label{Fig:}    
    \end{subfigure}
    \begin{subfigure}[h]{0.48\columnwidth}
        \centering
        \includegraphics[width=\textwidth]{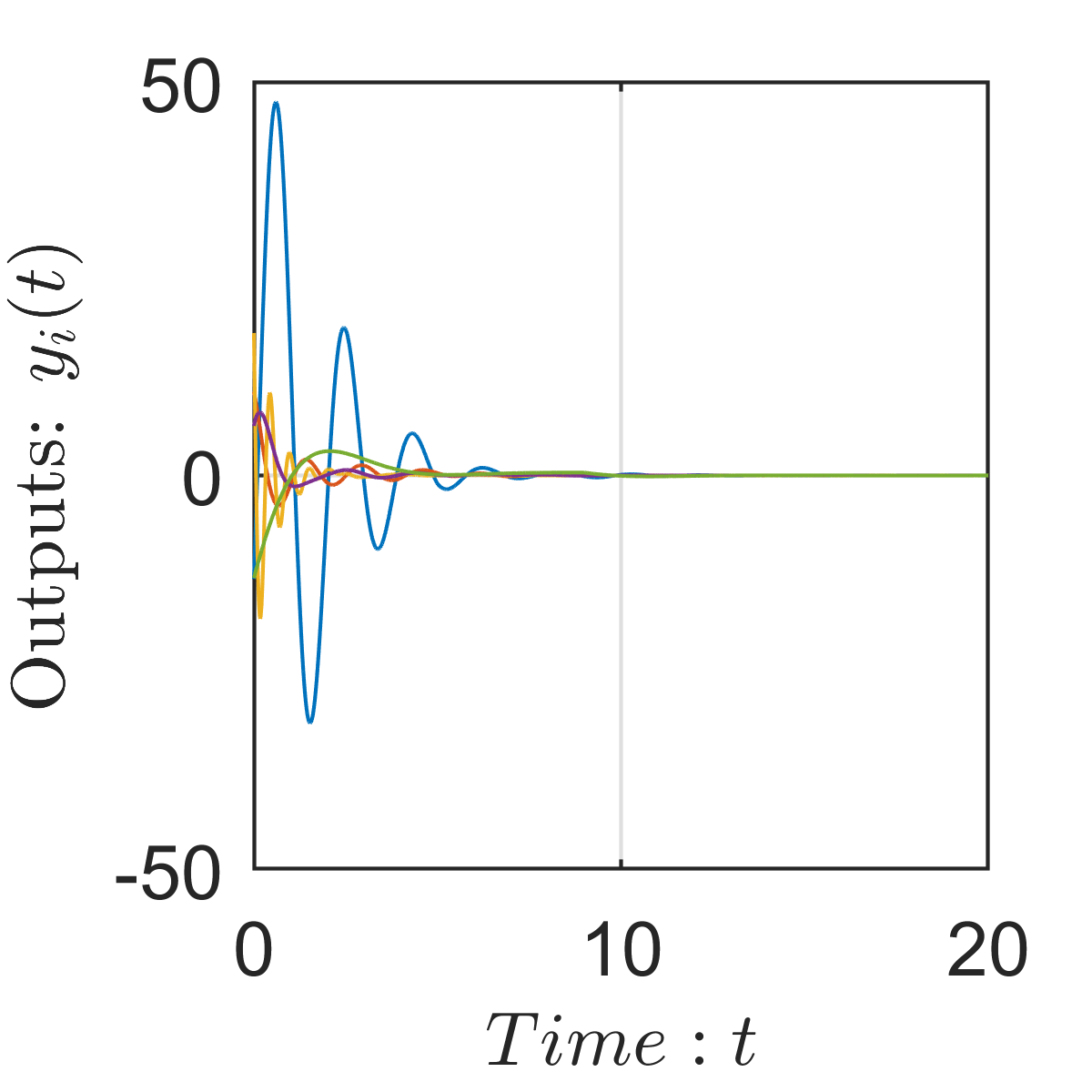}
        \caption{FSFC (decentrally derived) \\ \centering MAO = 1.063 (Impr.:+5.189\%)}
        \label{Fig:}    
    \end{subfigure}
    \begin{subfigure}[h]{0.48 \columnwidth}
        \centering
        \includegraphics[width=\textwidth]{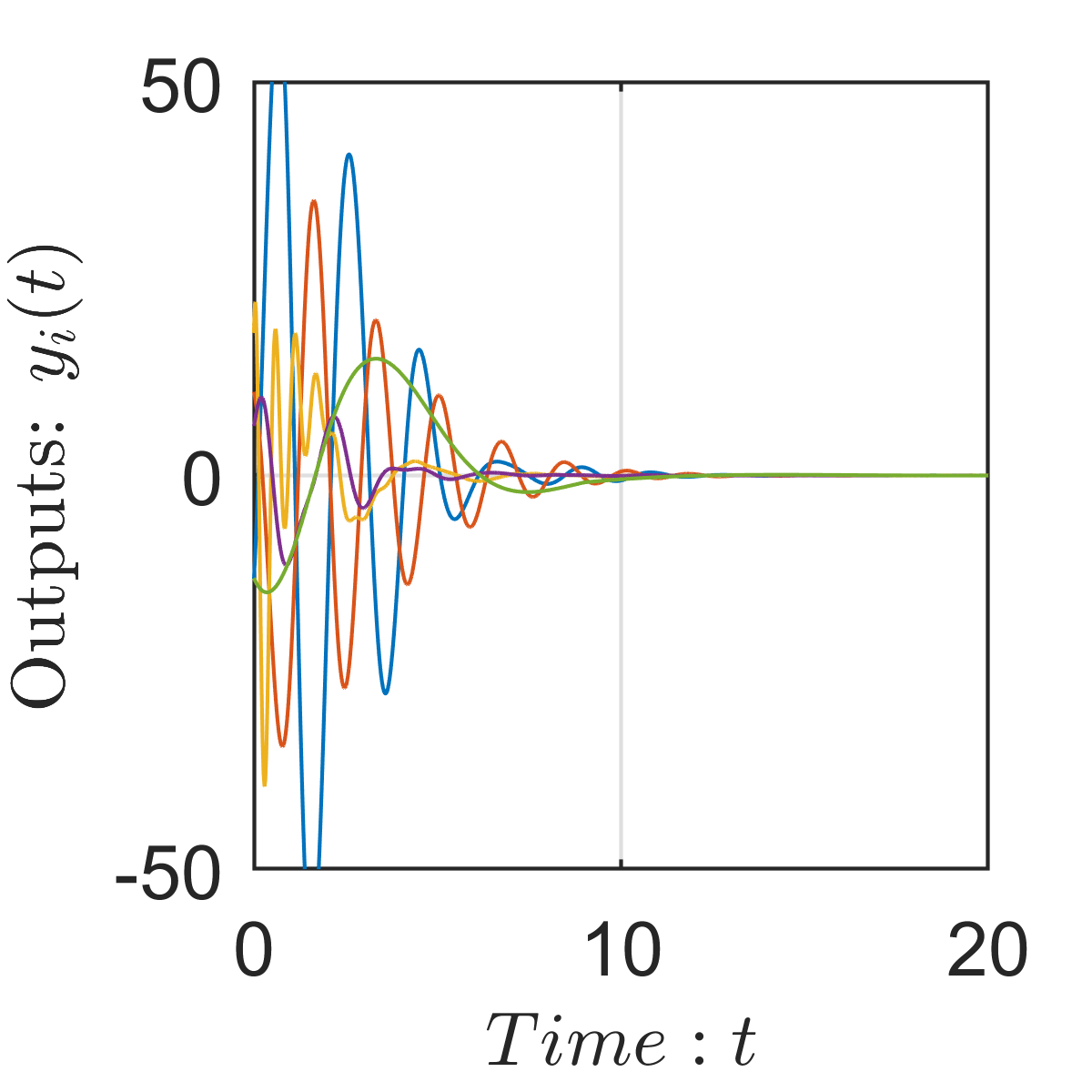}
        \caption{SOFC (centrally derived) \\ \centering MAO = 3.413}
        \label{Fig:}
    \end{subfigure}
    \begin{subfigure}[h]{0.48 \columnwidth}
        \centering
        \includegraphics[width=\textwidth]{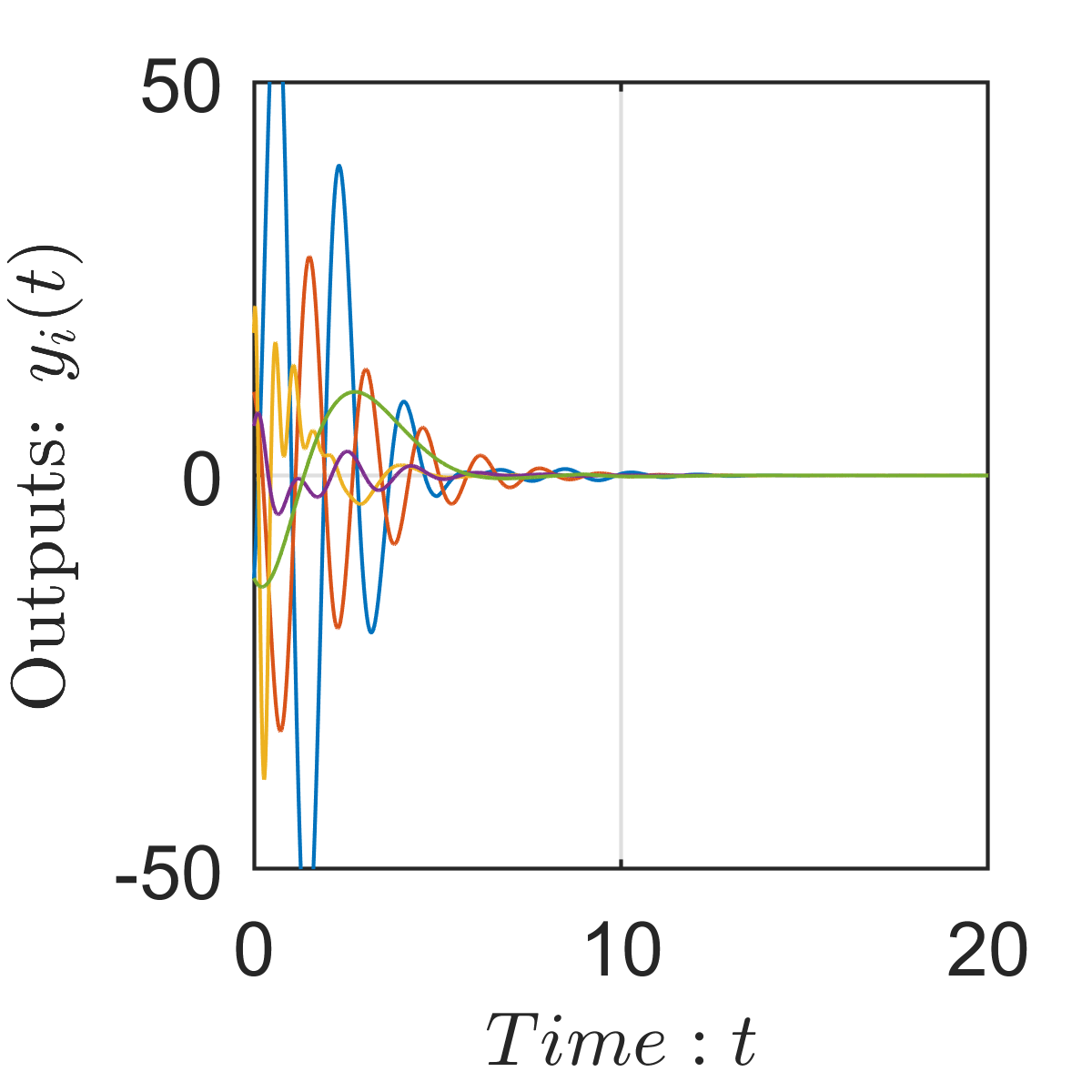}
        \caption{SOFC (decentrally derived) \\ \centering MAO = 2.498 (Impr.:+26.81\%)}
        \label{Fig:}
    \end{subfigure}
    \begin{subfigure}[h]{0.48\columnwidth}
        \centering
        \includegraphics[width=\textwidth]{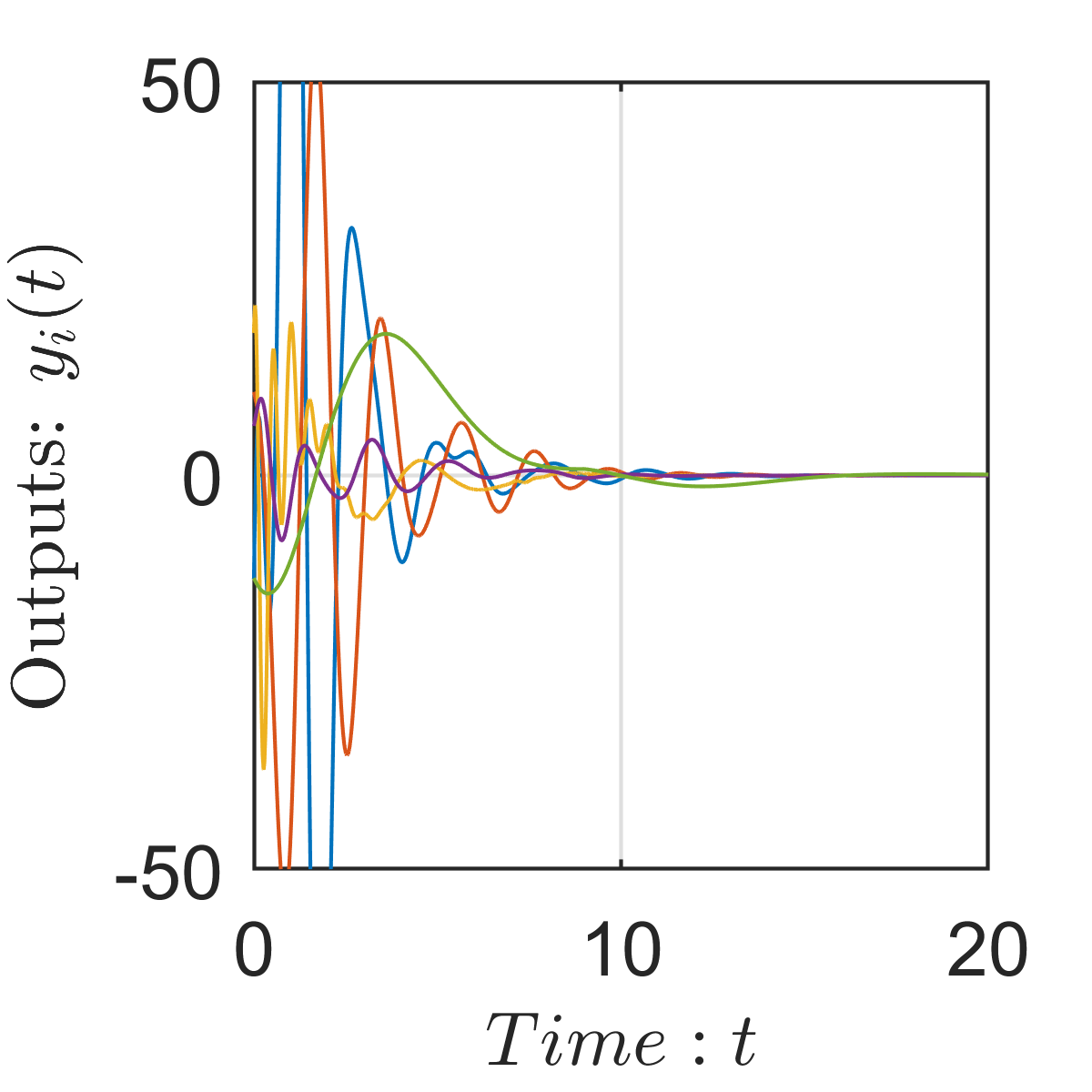}
        \caption{DOFC (centrally derived) \\ \centering MAO = 4.227}
        \label{Fig:}    
    \end{subfigure}
    \begin{subfigure}[h]{0.48\columnwidth}
        \centering
        \includegraphics[width=\textwidth]{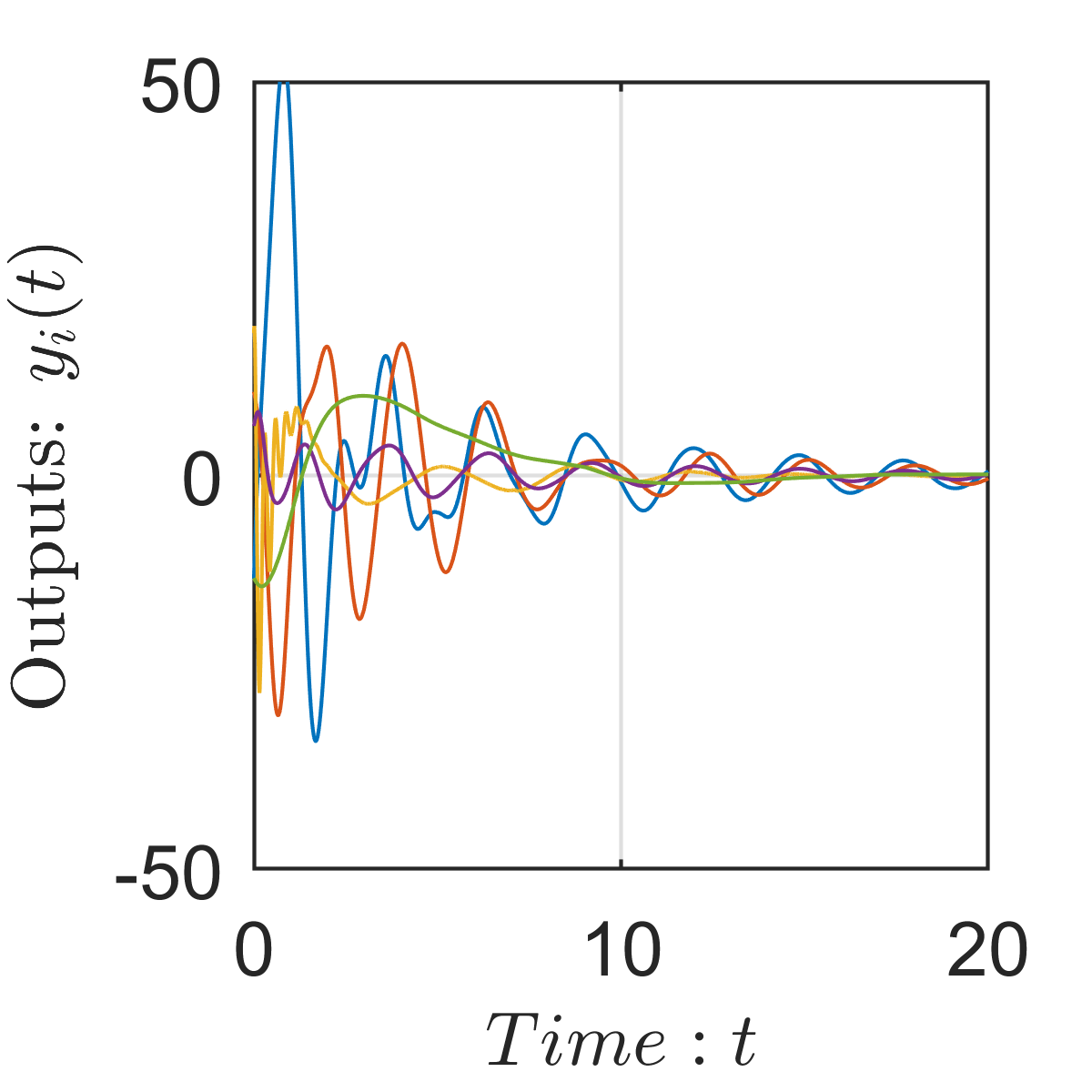}
        \caption{DOFC (decentrally derived) \\ \centering MAO = 3.107 (Impr.:+26.49\%)}
        \label{Fig:}    
    \end{subfigure}
    \caption{\textbf{The effect of decentralization}: Comparison of output trajectories $[y_i(t)]_{i\in\N_N}$ obtained under: \textbf{centrally (left) vs. decentrally (right)} synthesized stabilizing controllers - for the distributed controller configurations: FSFC (a,b), SOFC (c,d) and DOFC (e,f) (see Fig. \ref{Fig:SESetup2}). The observed mean absolute output (MAO) values are given in subcaptions along with the corresponding percentage improvement values.}
    \label{Fig:SEDecentralization}
\end{figure}

\begin{figure}[!ht]
    \centering
    \begin{subfigure}[b]{0.48 \columnwidth}
        \centering
        \includegraphics[width=\textwidth]{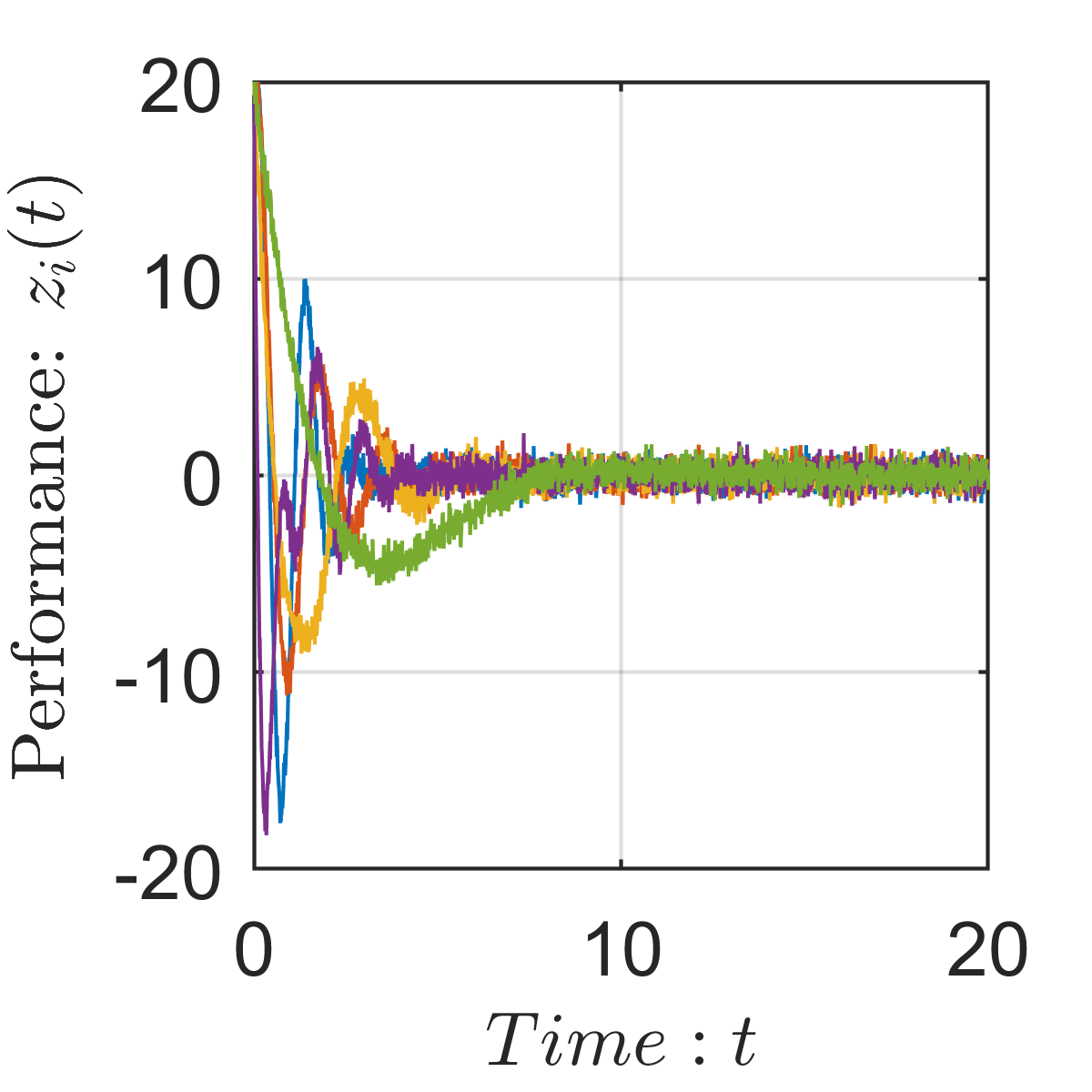}
        \caption{SOFC (derived: \\ centrally to stabilize) \\ \centering MAP = 1.327}
        \label{Fig:}
    \end{subfigure}
    \begin{subfigure}[b]{0.48 \columnwidth}
        \centering
        \includegraphics[width=\textwidth]{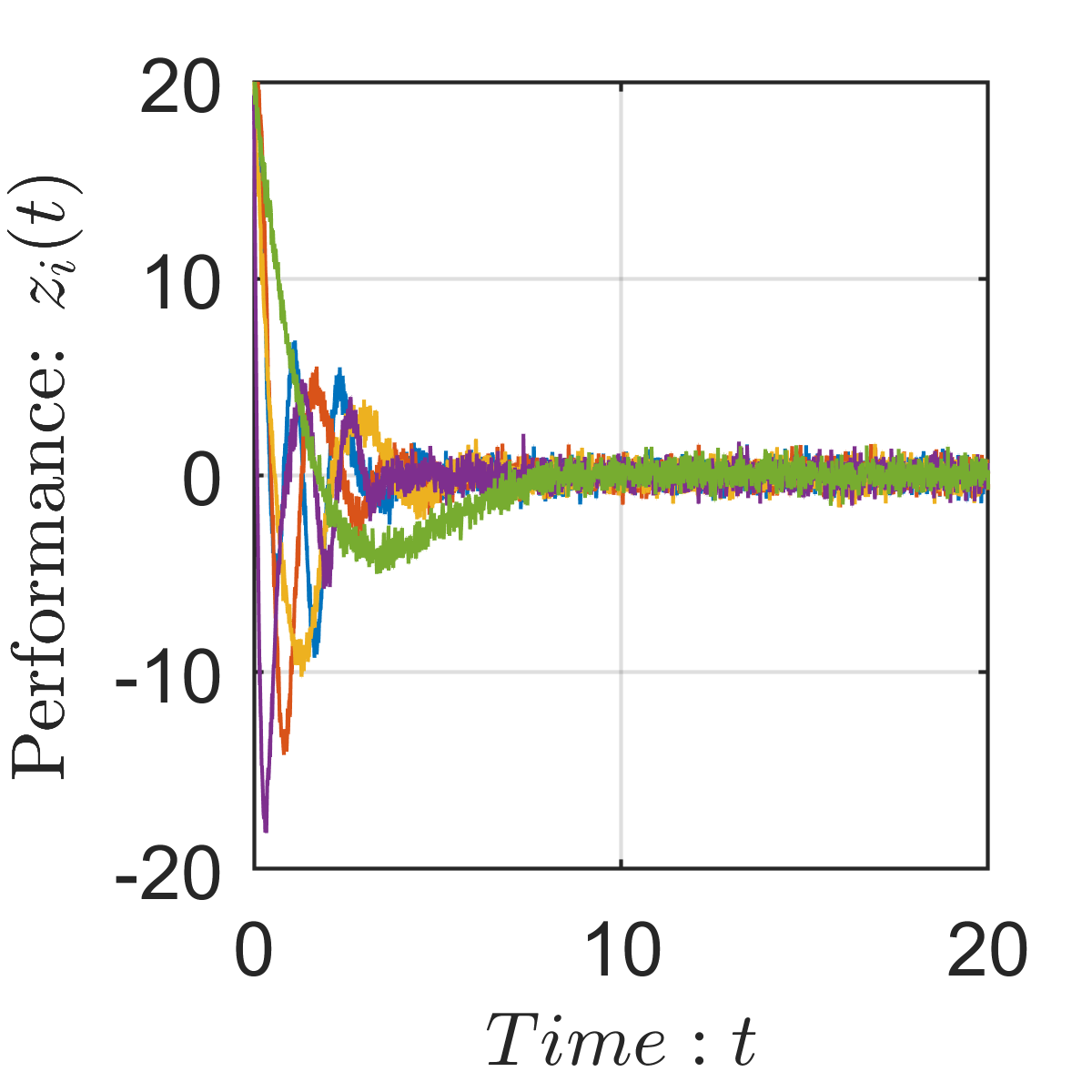}
        \caption{SOFC (derived: \\ centrally to dissipativate) \\ \centering MAP = 1.246 (Impr.:+6.112\%)}
        \label{Fig:}
    \end{subfigure}
    \begin{subfigure}[b]{0.48 \columnwidth}
        \centering
        \includegraphics[width=\textwidth]{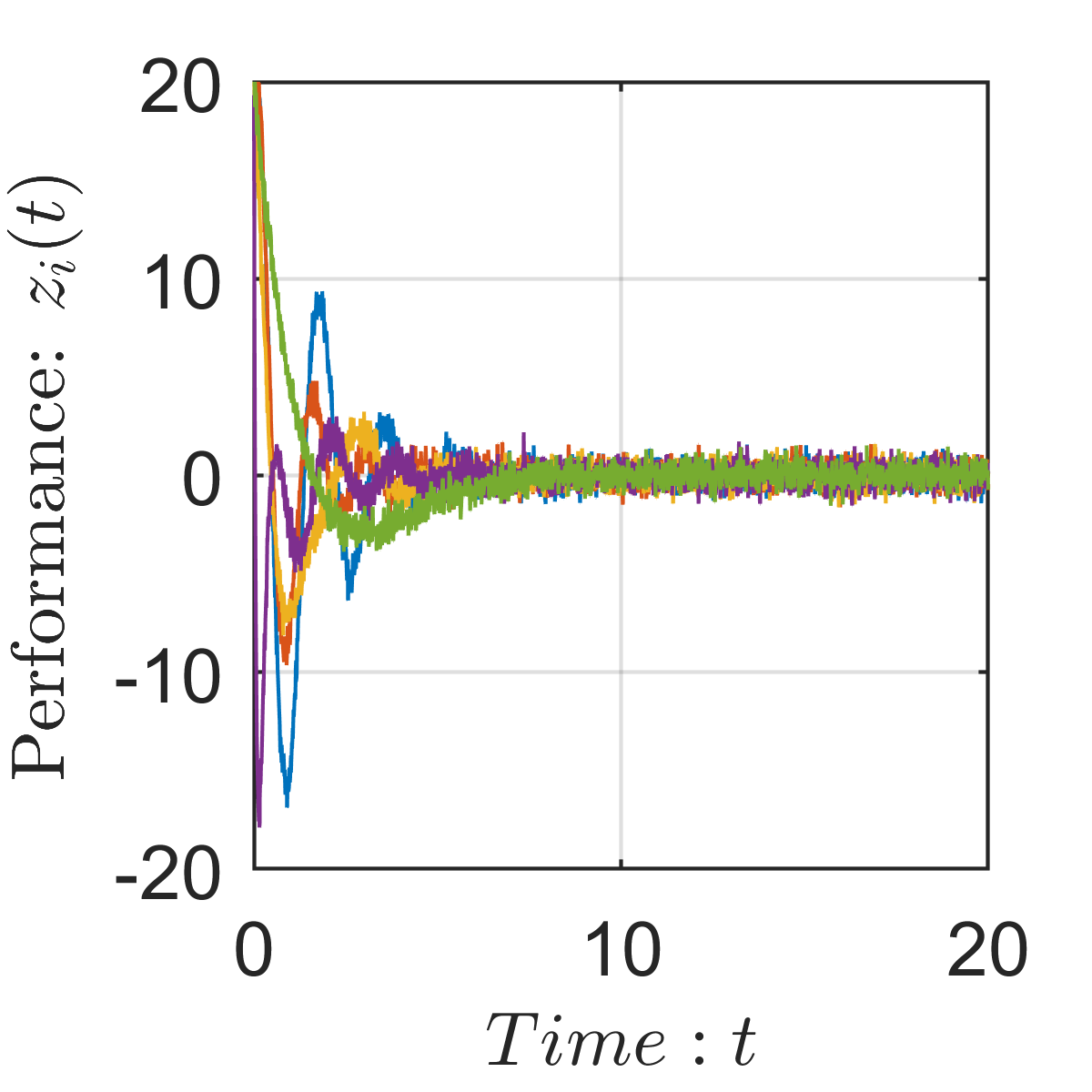}
        \caption{SOFC (derived: \\ decentrally to stabilize) \\ \centering MAP = 1.155}
        \label{Fig:}
    \end{subfigure}
    \begin{subfigure}[b]{0.48 \columnwidth}
        \centering
        \includegraphics[width=\textwidth]{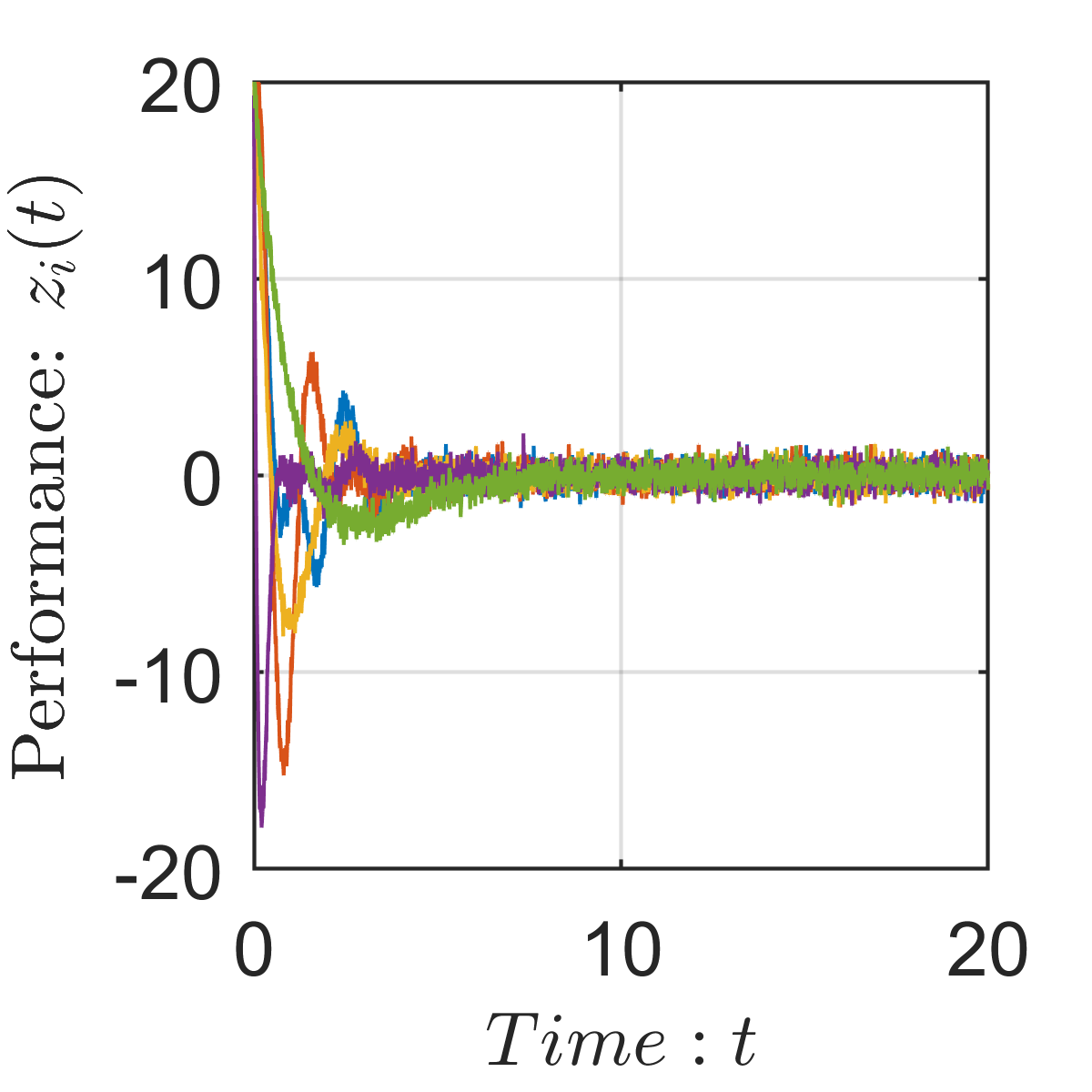}
        \caption{SOFC (derived: \\ decentrally to dissipativate) \\ \centering MAP = 1.015 (Impr.:+12.10\%)}
        \label{Fig:}
    \end{subfigure}
    \begin{subfigure}[b]{0.48\columnwidth}
        \centering
        \includegraphics[width=0.9\textwidth]{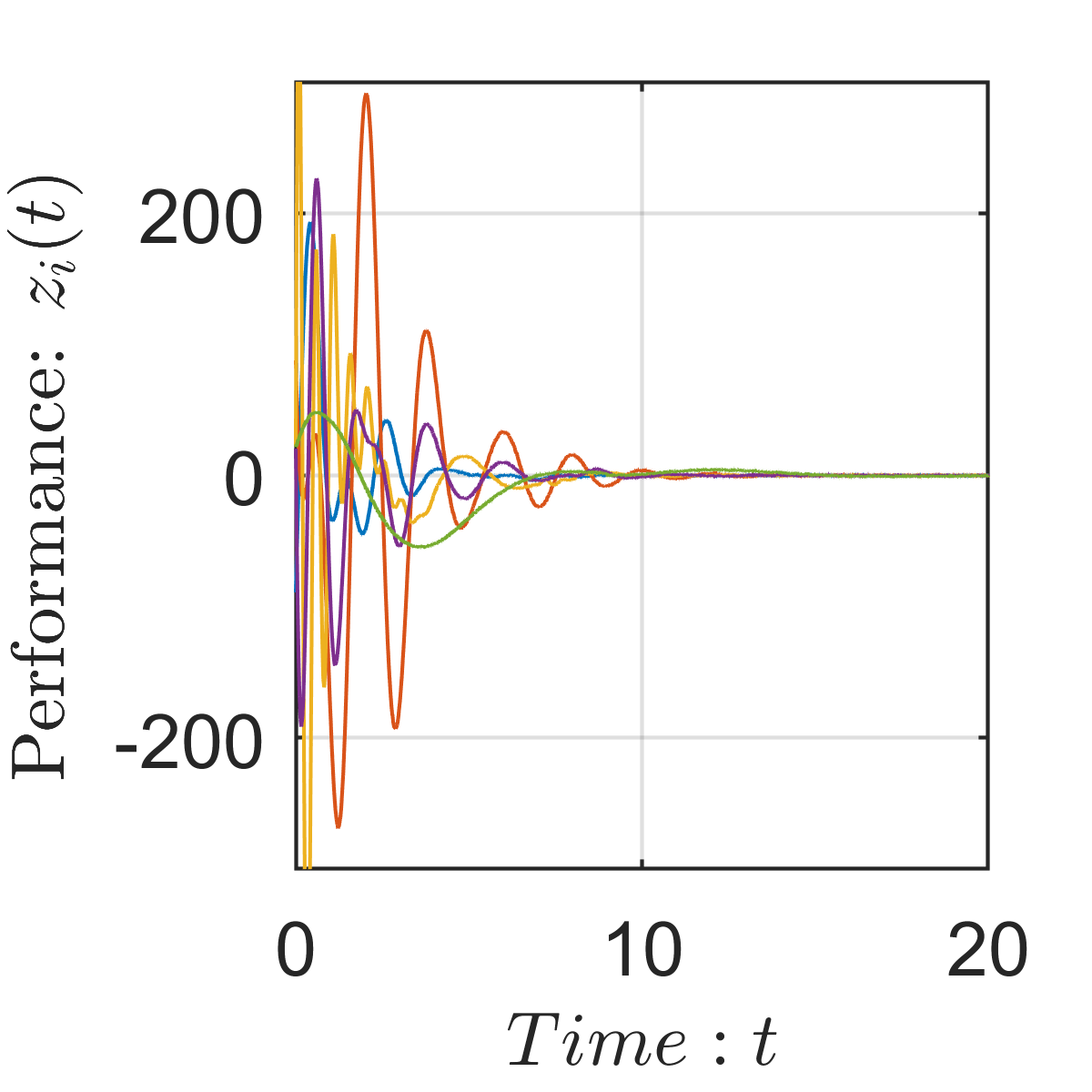}
        \caption{DOFC (derived: \\ centrally to stabilize) \\ \centering MAP = 16.01}
        \label{Fig:}    
    \end{subfigure}
    \begin{subfigure}[b]{0.48\columnwidth}
        \centering
        \includegraphics[width=0.9\textwidth]{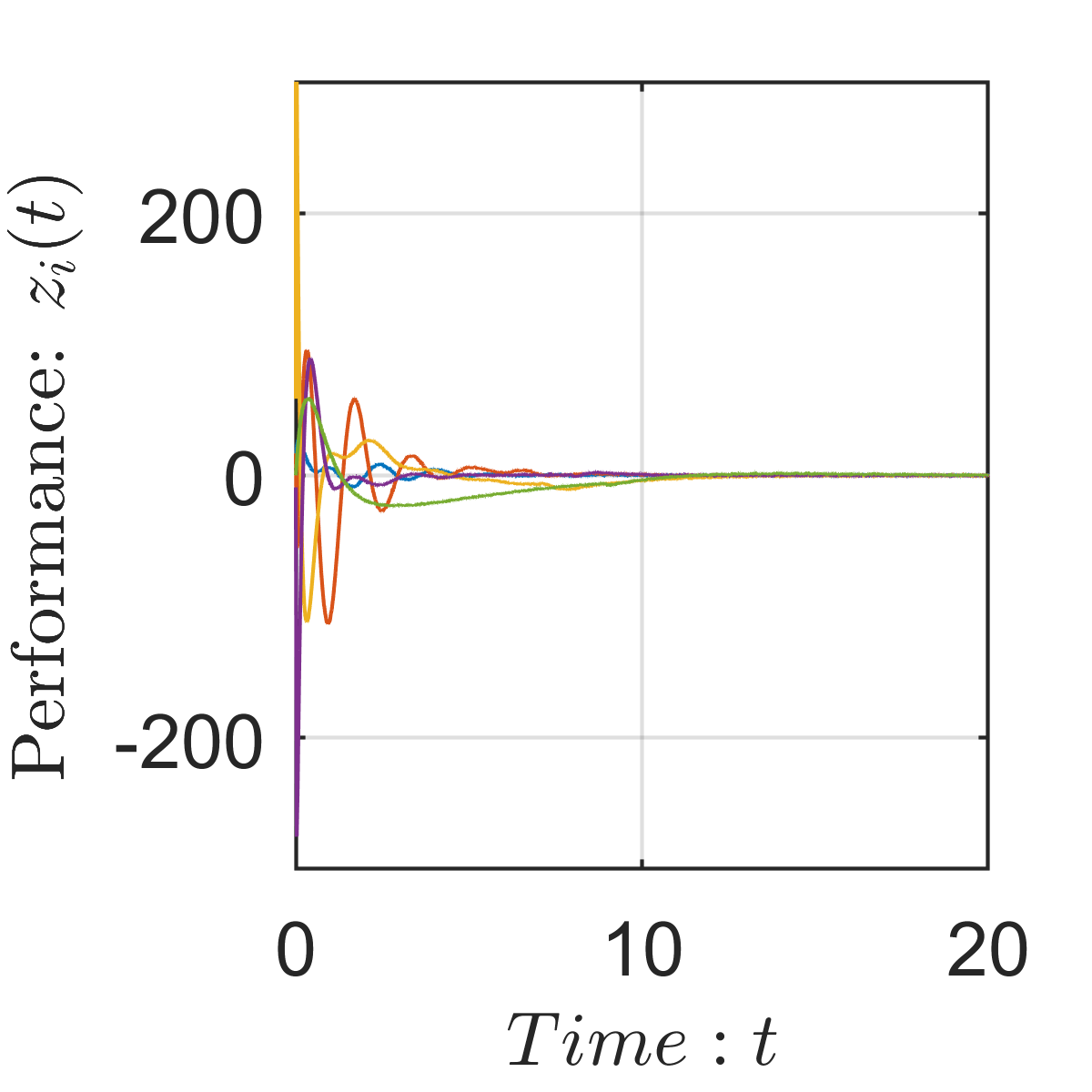}
        \caption{DOFC (derived: \\ centrally to dissipativate) \\ \centering MAP = 5.966 (Impr.:+62.74\%)}
        \label{Fig:}    
    \end{subfigure}
    \begin{subfigure}[b]{0.48\columnwidth}
        \centering
        \includegraphics[width=0.9\textwidth]{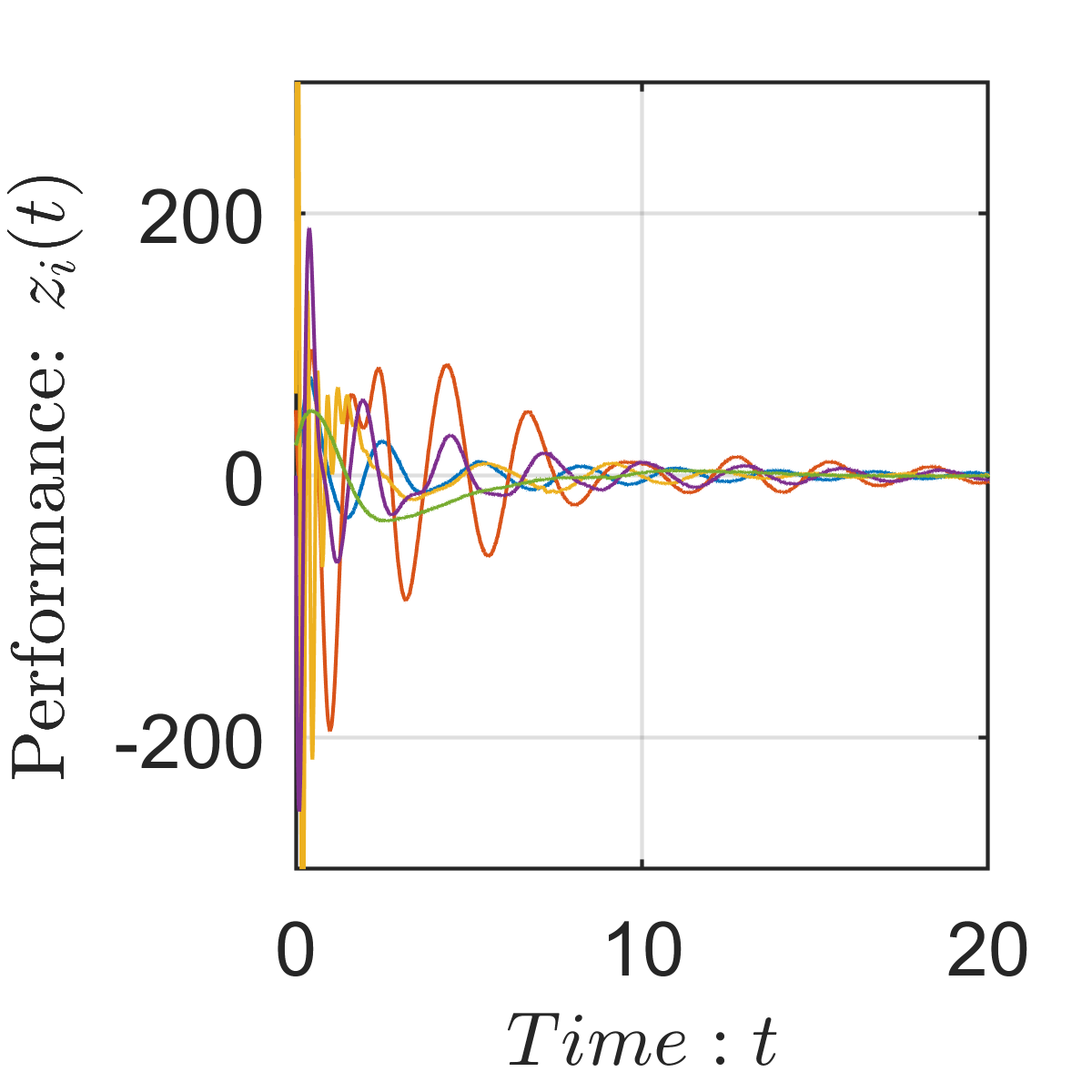}
        \caption{DOFC (derived: \\ decentrally to stabilize) \\ \centering MAP = 13.37}
        \label{Fig:}    
    \end{subfigure}
    \begin{subfigure}[b]{0.48\columnwidth}
        \centering
        \includegraphics[width=0.9\textwidth]{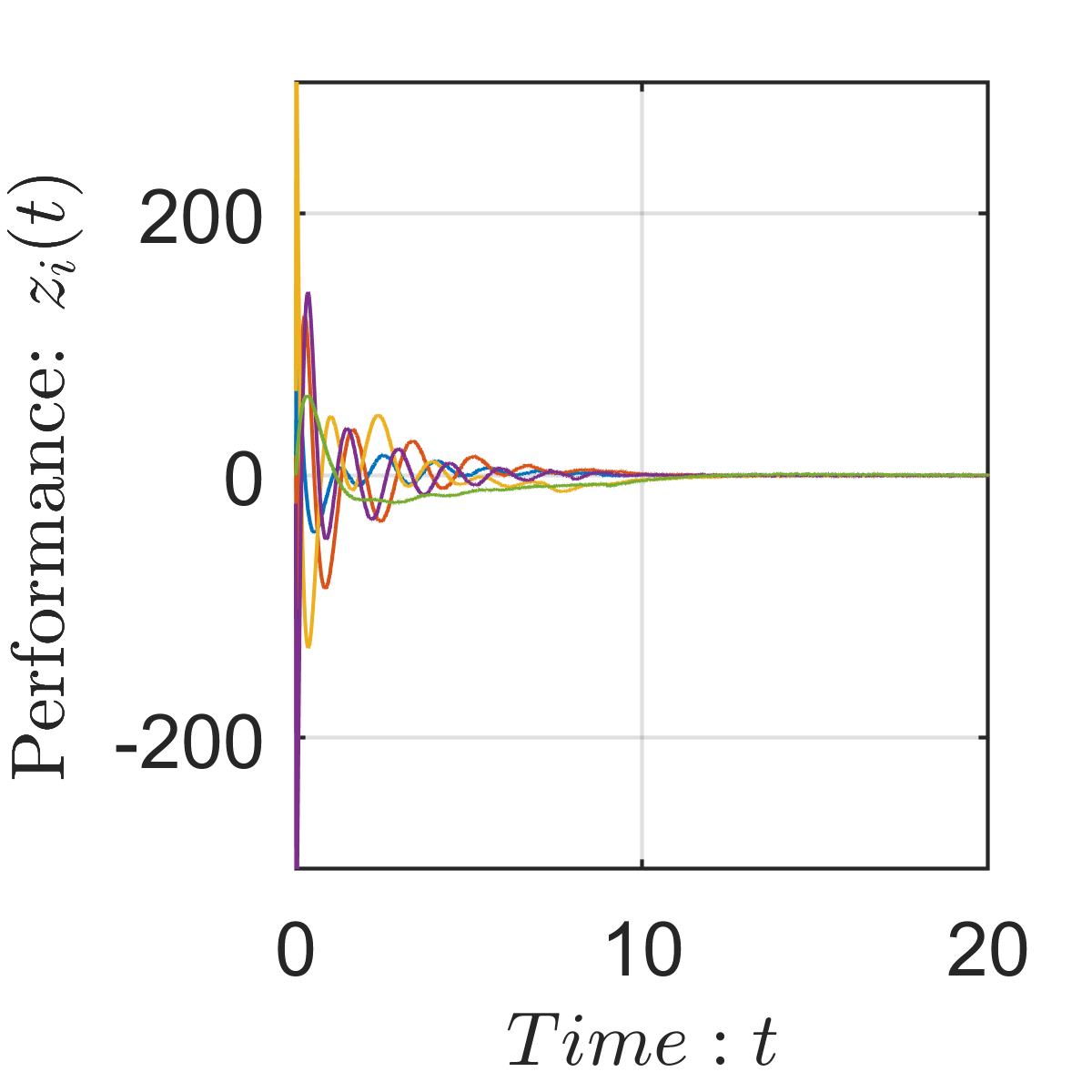}
        \caption{DOFC (derived: \\ decentrally to dissipativate) \\ \centering MAP = 7.323 (Impr.:+45.21\%)}
        \label{Fig:}    
    \end{subfigure}
    \caption{\textbf{The effect of dissipativation on performance}: Comparison of performance metric trajectories $[z_i(t)]_{i\in\N_N}$ obtained under: \textbf{stabilizing (left) vs. dissipativating (right)} controllers - synthesized both centrally (a,b,e,f) and decentrally (c,d,g,h) for SOFC (a,b,c,d) and DOFC (e,f,g,h). The observed mean absolute performance (MAP) values are given in subcaptions along with the corresponding percentage improvement values.}
    \label{Fig:SEDissipativation}
\end{figure}

\begin{figure}[!ht]
    \centering
    \begin{subfigure}[h]{0.48 \columnwidth}
        \centering
        \includegraphics[width=\textwidth]{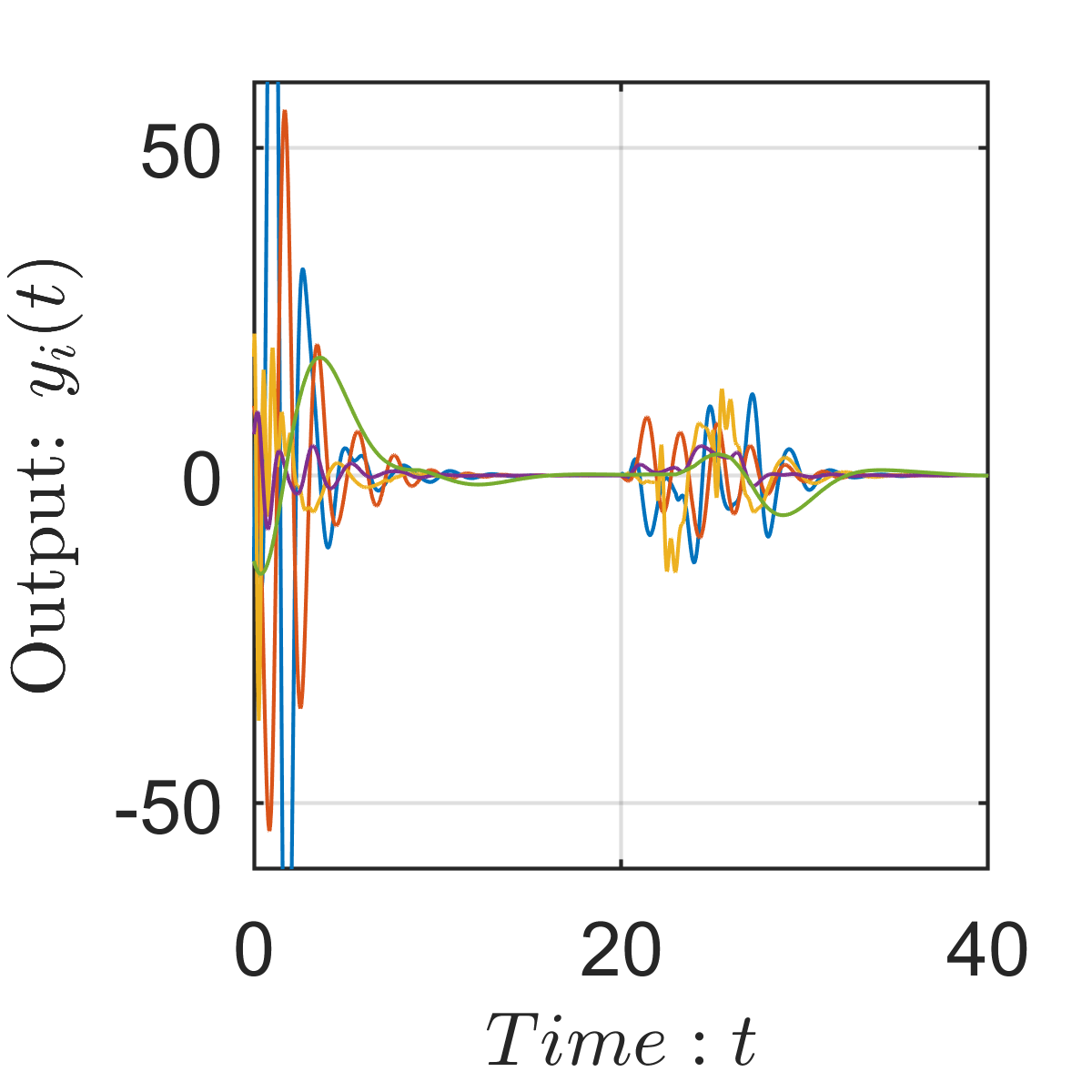}
        \caption{Output (under centrally \\ derived stabilizing DOFC) \\ \centering MAO = 2.978}
        \label{Fig:}
    \end{subfigure}
    \begin{subfigure}[h]{0.48 \columnwidth}
        \centering
        \includegraphics[width=\textwidth]{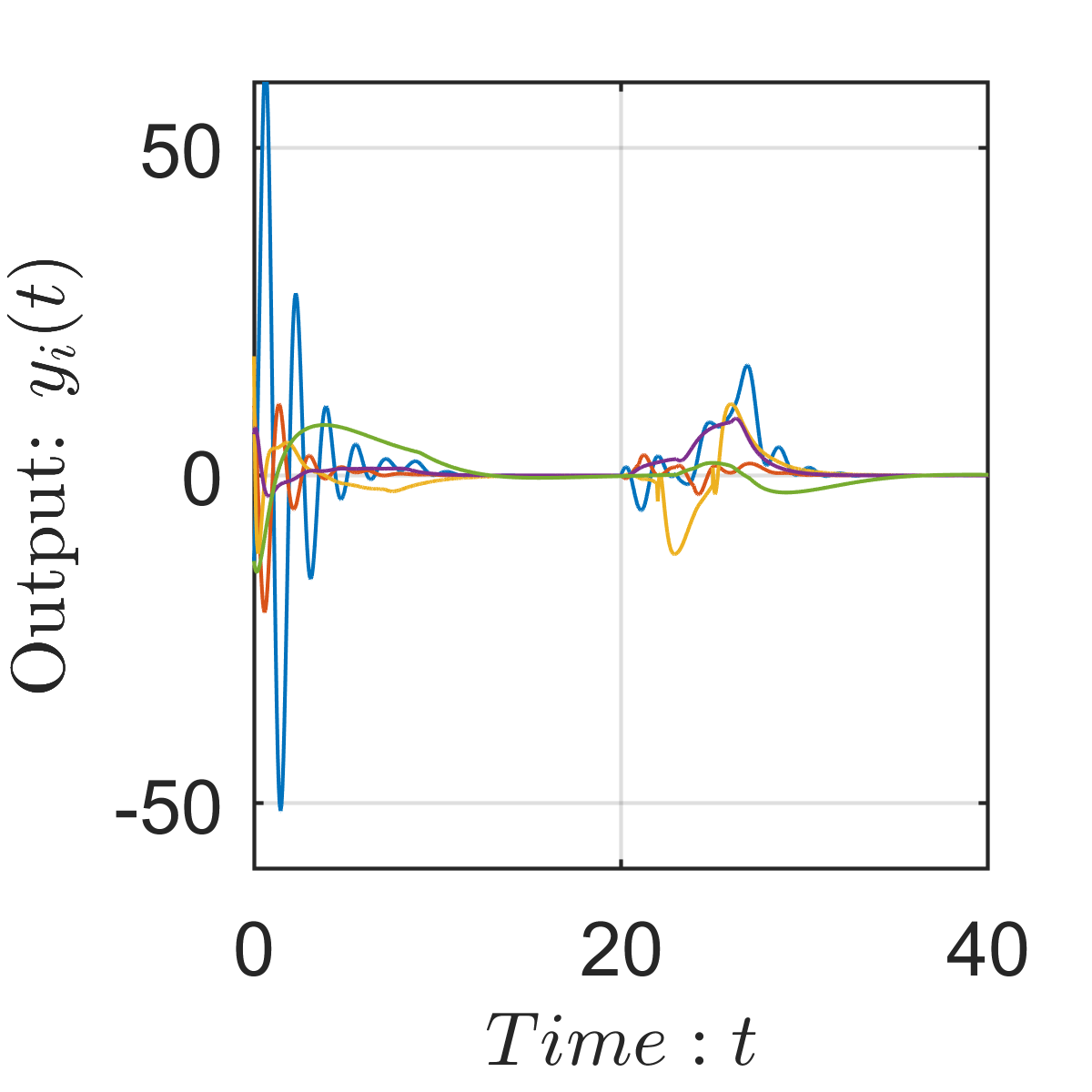}
        \caption{Output (under centrally \\ derived dissipativating DOFC) \\ \centering MAO = 1.890 (Impr.:+36.53\%)}
        \label{Fig:}
    \end{subfigure}
    \begin{subfigure}[h]{0.48 \columnwidth}
        \centering
        \includegraphics[width=\textwidth]{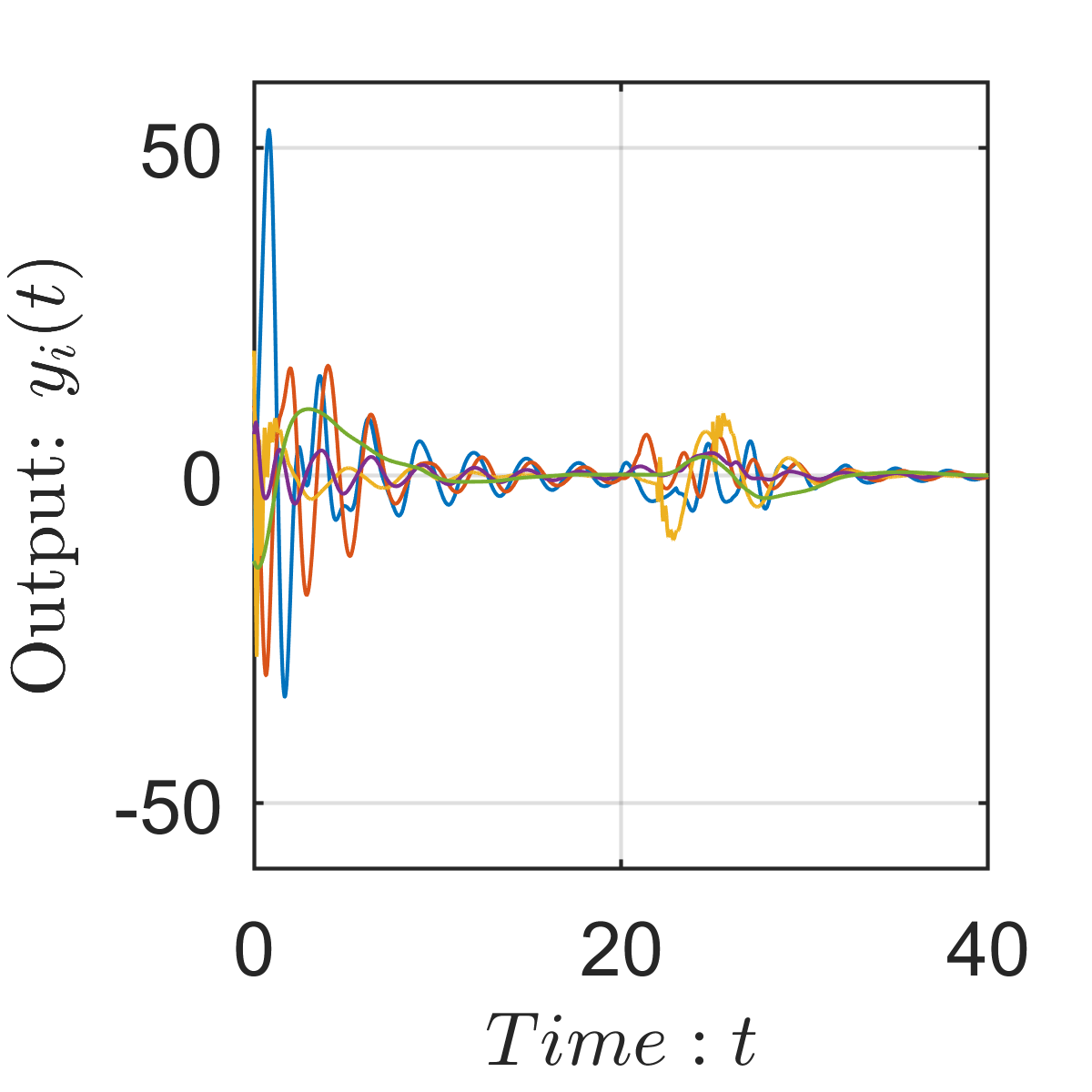}
        \caption{Output (under decentrally \\ derived stabilizing DOFC) \\ \centering MAO = 2.219}
        \label{Fig:}
    \end{subfigure}
    \begin{subfigure}[h]{0.48 \columnwidth}
        \centering
        \includegraphics[width=\textwidth]{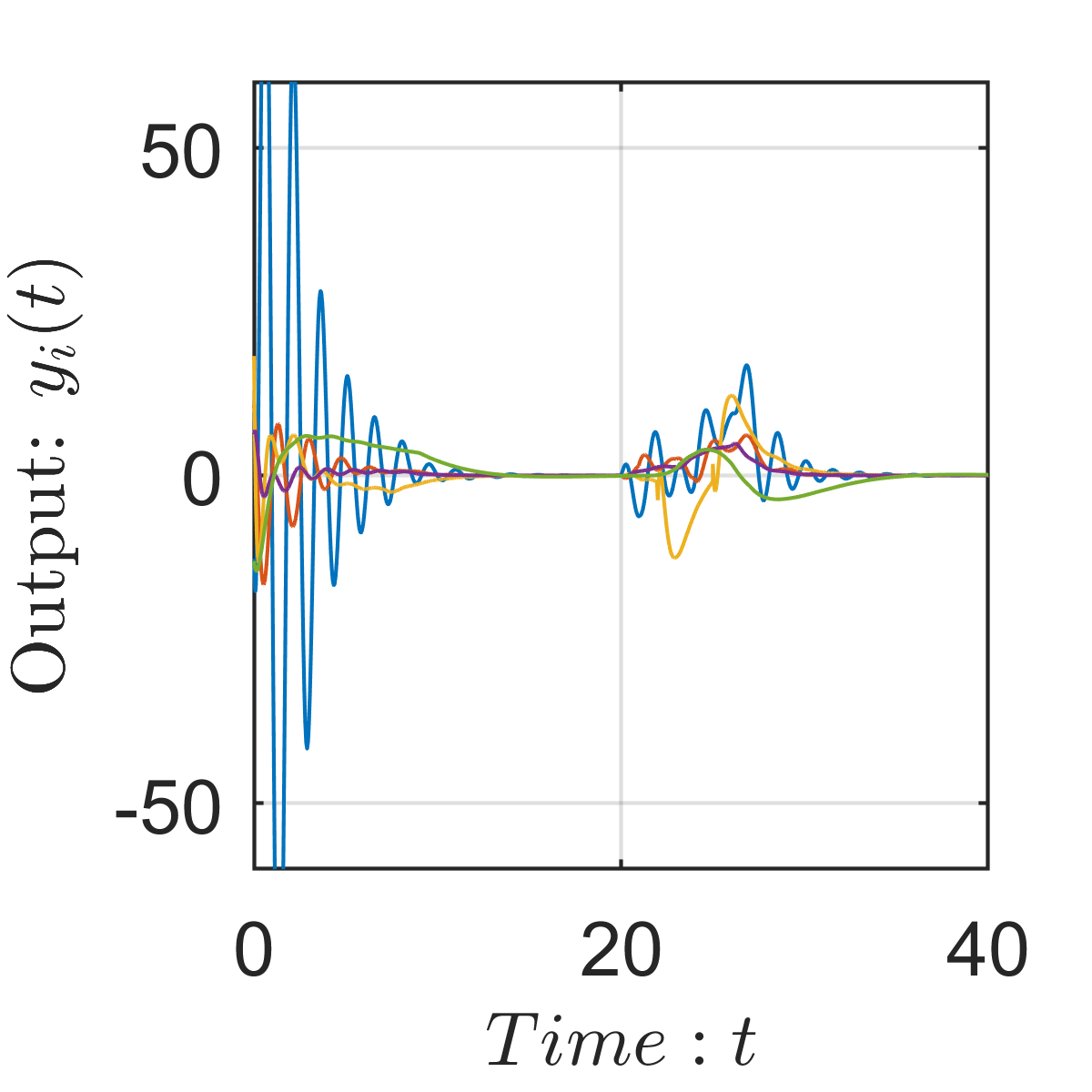}
        \caption{Output (under decentrally \\ derived dissipativating DOFC) \\ \centering MAO = 2.370 (Impr.:-6.791\%)}
        \label{Fig:}
    \end{subfigure}
    \begin{subfigure}[h]{0.48 \columnwidth}
        \centering
        \includegraphics[width=\textwidth]{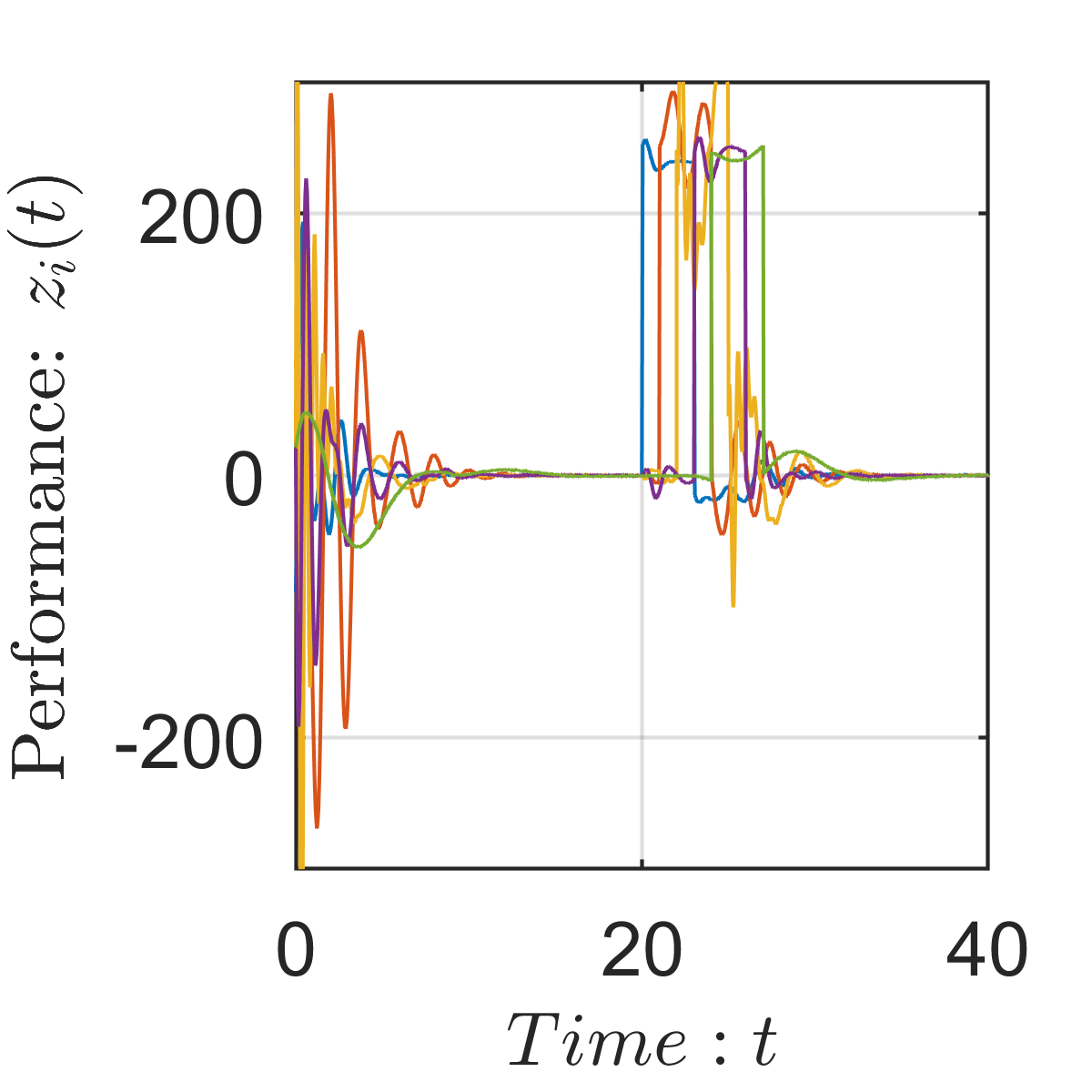}
        \caption{Performance (under centrally \\ derived stabilizing DOFC) \\ \centering MAP = 28.98}
        \label{Fig:}
    \end{subfigure}
    \begin{subfigure}[h]{0.48 \columnwidth}
        \centering
        \includegraphics[width=\textwidth]{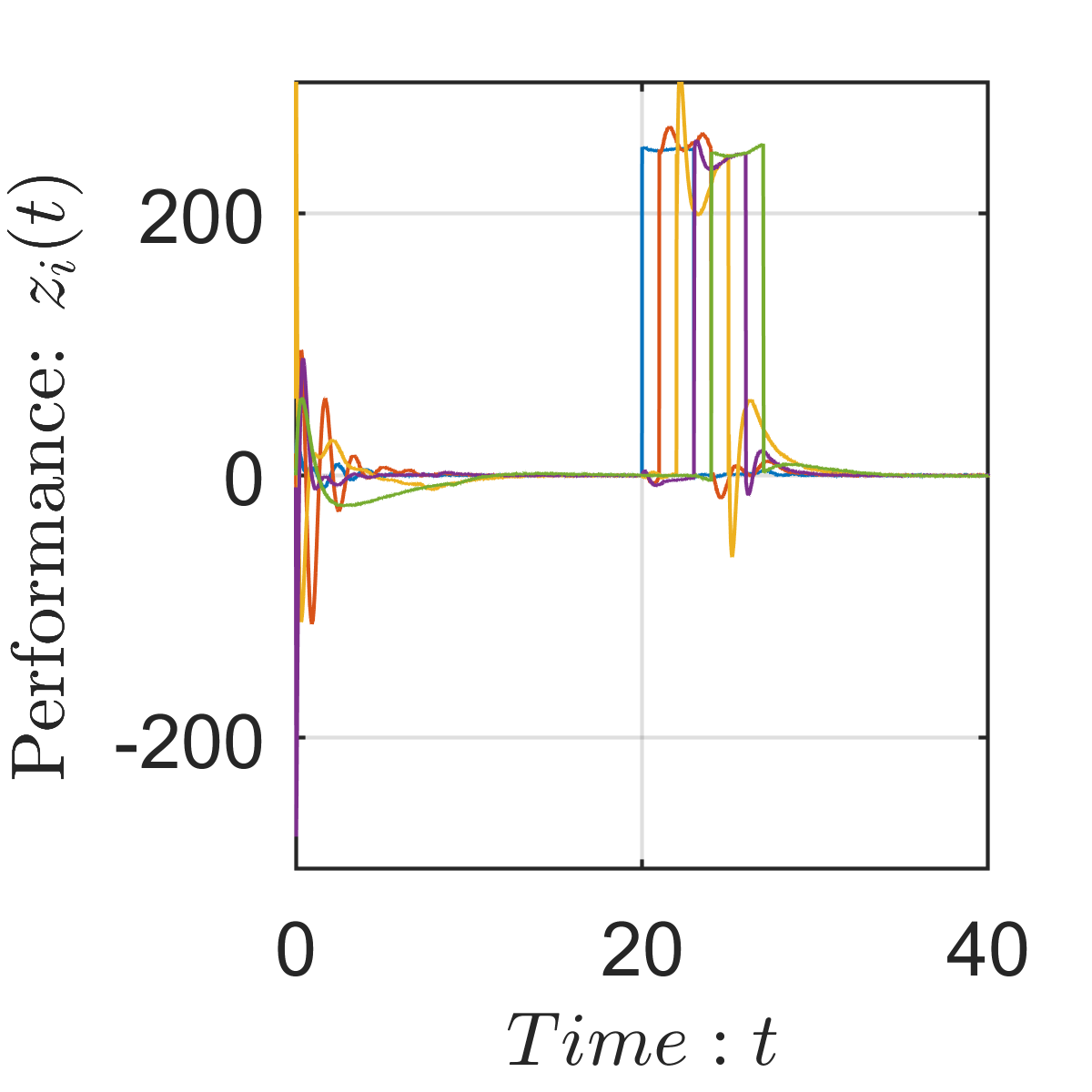}
        \caption{Performance (under centrally \\ derived dissipativating DOFC) \\ \centering MAP = 22.70 (Impr.:+21.66\%)}
        \label{Fig:}
    \end{subfigure}
    \begin{subfigure}[h]{0.48 \columnwidth}
        \centering
        \includegraphics[width=\textwidth]{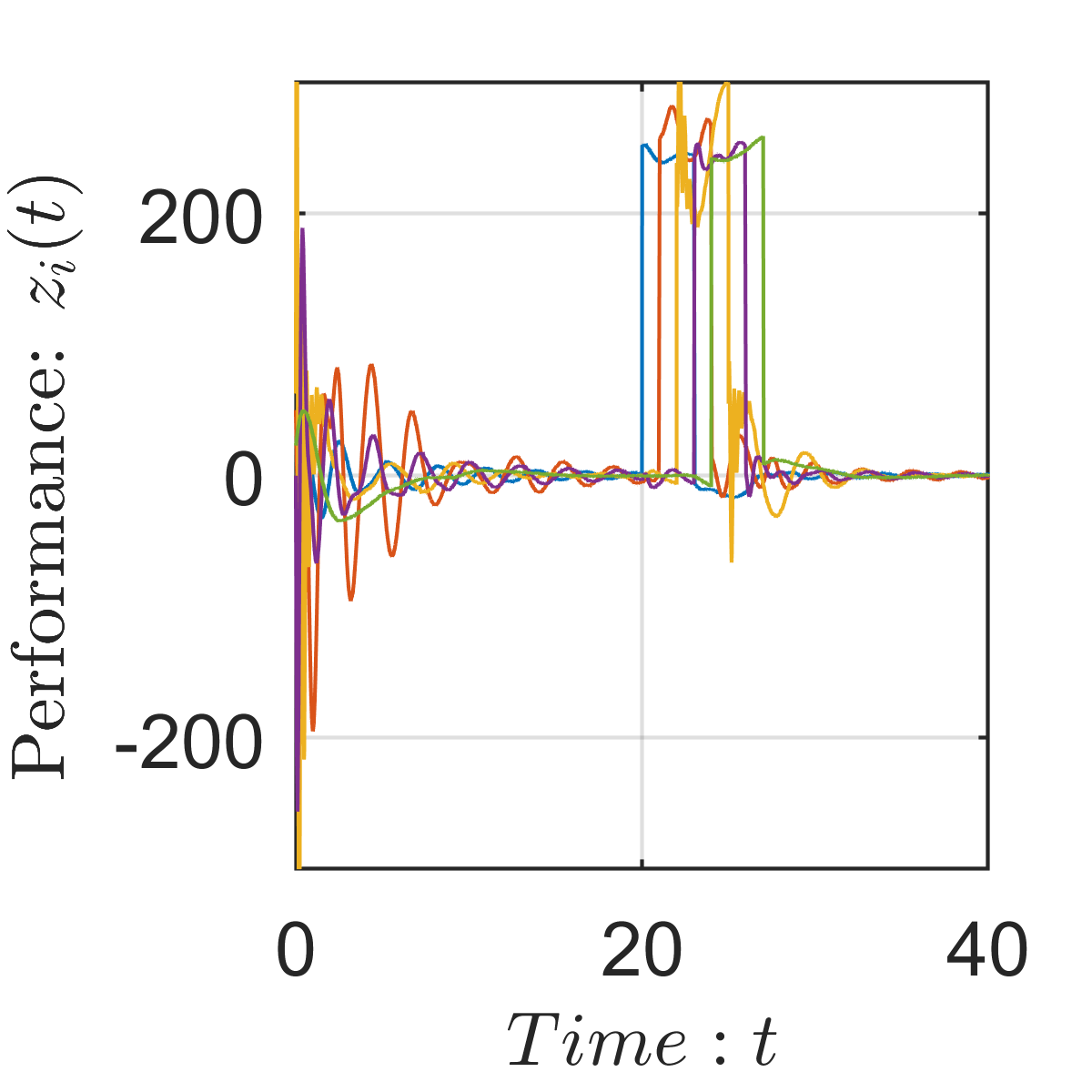}
        \caption{Perform. (under decentrally \\ derived stabilizing DOFC) \\ \centering MAP = 27.26}
        \label{Fig:}
    \end{subfigure}
    \begin{subfigure}[h]{0.48 \columnwidth}
        \centering
        \includegraphics[width=\textwidth]{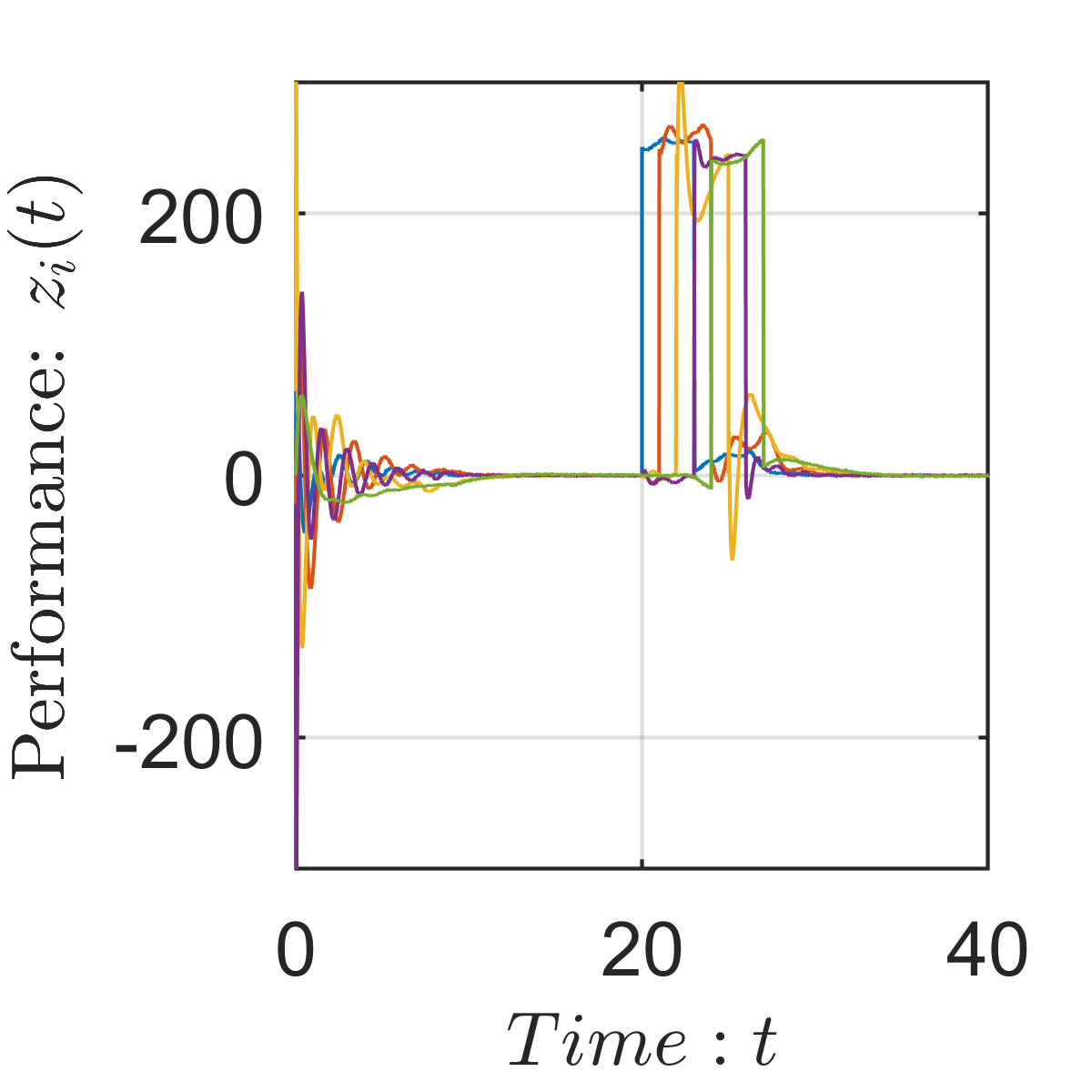}
        \caption{Perform. (under decentrally \\ derived dissipativating DOFC) \\ \centering MAP = 24.05 (Impr.:+11.81\%)}
        \label{Fig:}
    \end{subfigure}
    \caption{\textbf{The effect of dissipativation on disturbance rejection}: Comparison of output $[y_i(t)]_{i\in\N_N}$ and performance metric $[z_i(t)]_{i\in\N_N}$ trajectories obtained under: \textbf{stabilizing (left) vs. dissipativating (right)} controllers - synthesized both centrally (a,b,e,f) and decentrally (c,d,g,h) for DOFC.}
    \label{Fig:SEDisturbanceRejection}
\end{figure}

In this section, we provide a collection of numerical results obtained from a randomly generated networked system, so as to illustrate the applicability of the proposed: 
(1) decentralized analysis and control synthesis techniques (in particular, Theorems. \ref{Th:CTLTIStability}, \ref{Th:StabilizationUnderFSF}, \ref{Th:Observer} and \ref{Th:StabilizationUnderDOF}), and 
(2) $(Q,S,R)$-dissipativity based centralized (or decentralized) distributed Luenberger observer design and DOF controller design techniques, i.e., Props. \ref{Pr:DissipativeObserver} and \ref{Pr:DissipativationUsingDOF} (or Theorems \ref{Th:DissipativeObserver} and \ref{Th:DissipativationUsingDOF}), respectively.

Consider the networked dynamical system $\mathcal{G}_5$ comprised of five subsystems $\{\Sigma_i:i\in\N_5\}$ described in \eqref{Eq:ExampleNetworkedSystem} that are interconnected according to the directed network topology shown in Fig. \ref{Fig:SESetup1} (top-right). Note that, this network topology and each of the subsystem dynamic models have been generated randomly (using random geometric graphs \cite{Dall2002} and MATLAB ``rss($\cdot$)'' command, respectively). In particular, using MATLAB and YALMIP \cite{Lofberg2004}, by implementing the said random networked system generator together with many of the previously proposed centralized and decentralized  analysis and control synthesis techniques (Props. \ref{Pr:CTLTIStability}, \ref{Pr:CTLTIQSRDissipativity}, \ref{Pr:StabilizationUnderFSF}, \ref{Pr:DissipativationUnderFSF}, \ref{Pr:Observer}, \ref{Pr:DissipativeObserver}, \ref{Pr:StabilizationUnderDOF}, \ref{Pr:DissipativationUsingDOF} (centralized) and Theorems \ref{Th:CTLTIStability}, \ref{Th:CTLTIQSRDissipativity}, \ref{Th:StabilizationUnderFSF}, \ref{Th:DissipativationUnderFSF}, \ref{Th:Observer}, \ref{Th:DissipativeObserver}, \ref{Th:StabilizationUnderDOF}, \ref{Th:DissipativationUsingDOF} (decentralized)), we have developed a general software framework for analysis and control synthesis of arbitrary networked dynamical systems of the form \eqref{Eq:CTNSDynamics} (available at \url{https://github.com/shiran27/NetworkedSystemAnalysisAndControl}).  


When executing different proposed decentralized processes for the considered networked system $\mathcal{G}_5$, here we limit ourselves to the default subsystem indexing scheme $1-2-3-4-5$. The decentralized stability analysis process proposed in Th. \ref{Th:CTLTIStability} returns infeasible for $\mathcal{G}_5$ - which implies the possibility of $\mathcal{G}_5$ being unstable. This calls for stabilizing control synthesis, and as shown in Fig. \ref{Fig:SESetup2}, three distributed controller configurations are considered: (1) using local full-state feedback control \eqref{Eq:CTLocalFSFController} (labeled FSFC), (2) using local state observer \eqref{Eq:CTLocalObserver} based local state feedback control (labeled SOFC), and (3) using local dynamic output feedback control \eqref{Eq:CTLocalDOFController} (labeled DOFC). Upon synthesizing such local controllers (either centrally or decentrally, aiming to stabilize or $(Q,S,R)$-dissipativate), we use MATLAB Simulink (see Fig. \ref{Fig:SESetup}) to simulate/assess the closed-loop networked system's behavior under certain input and noise (disturbance) processes shown in Figs. \ref{Fig:SEOpenLoopu} and \ref{Fig:SEOpenLoopw}, respectively. Note also that Figs. \ref{Fig:SEOpenLoopy} and \ref{Fig:SEOpenLoopz} show the unstable open-loop output \eqref{Eq:CTNSDynamics} and performance \eqref{Eq:CTLocalDOFControllerPerf} trajectories of the networked system (i.e., of the subsystems).


\paragraph*{\textbf{Decentralization}} 
Corresponding to the aforementioned three distributed control system configurations: FSFC, SOFC and DOFC, Fig. \ref{Fig:SEDecentralization} shows the observed subsystem output profiles $[y_i(t)]_{i\in\N_N}$ under stabilizing controllers/observers derived: (1) centrally (via Props. \ref{Pr:StabilizationUnderFSF}, \ref{Pr:Observer} and \ref{Pr:StabilizationUnderDOF}, see Figs. \ref{Fig:SEDecentralization}(a,c,e)) and (2) decentrally (via Theorems \ref{Th:StabilizationUnderFSF}, \ref{Th:Observer} and \ref{Th:StabilizationUnderDOF}, Figs. \ref{Fig:SEDecentralization}(b,d,f)). Based on these observations, it is clear that decentrally designed controllers performs similar to their centrally designed counter parts. In fact, interestingly, based on the observed mean absolute output values (MAO, reported in subcaptions in Fig. \ref{Fig:SEDecentralization}), it can even be concluded that decentrally derived controllers result in smoother (less fluctuations) and faster output trajectories. A probable reason behind this observation may be the emphasis that decentralized control synthesis has on individual (characteristic) agent dynamics components (compared to that in centralized control synthesis). Before moving on, note that, in FSFC and SOFC, based on Theorems \ref{Th:StabilizationUnderFSF} and \ref{Th:Observer}, decentrally derived stabilizing controller and observer gains are given in \eqref{Eq:StabilizingControllerGains} and \eqref{Eq:StabilizingObserverGains}, respectively.

\paragraph*{\textbf{Dissipativation}}
In the sequel, by dissipativation, we simply refer to the $(Q,S,R)$-dissipativation with $Q=-0.2 \I, S=\frac{1}{2}\I, R=-0.2 \I$ (i.e., based on Rm. \ref{Rm:QSRDissipativityVariants}, strict passivation with input feedforward and output feedback passivity indices as $\nu = 0.2$ and $\rho=0.2$, respectively). As pointed out earlier, unlike stabilizing control synthesis, dissipativating control synthesis takes into account the disturbances $w_i(t), i\in\N_N$ as well as underlying/interested performance metrics $z_i(t)$ (e.g., see \eqref{Eq:CTLocalObserverPerf} and \eqref{Eq:CTLocalDOFControllerPerf}) while also ensuring stability. Therefore, it is reasonable to expect (hypothesize) dissipativating controllers/observers to provide: (1) better (lower) performance metric trajectories and (2) better robustness to disturbances (smoother output and performance profiles), compared to stabilizing controllers/observers.

In Fig. \ref{Fig:SEDissipativation}, limiting to distributed control system configurations: SOFC and DOFC, we test the aforementioned first hypothesis by comparing performance metric trajectories observed under: (1) a stabilizing controller/observer (derived centrally using Props. \ref{Pr:Observer} and \ref{Pr:StabilizationUnderDOF} or decentrally using Theorems  \ref{Th:Observer} and  \ref{Th:StabilizationUnderDOF}), and (2) a dissipativating controller/observer (derived centrally using Props. \ref{Pr:DissipativeObserver} and \ref{Pr:DissipativationUsingDOF} or decentrally using Theorems  \ref{Th:DissipativeObserver} and \ref{Th:DissipativationUsingDOF}). The fact that dissipativating controllers/observers provide superior performance metric trajectories than stabilizing controllers/observers is evident from the reported observations in Fig. \ref{Fig:SEDissipativation} - particularly from the provided mean absolute performance (MAP) values in the subcaptions. 
We point out that deriving dissipative distributed full-state feedback controllers (i.e., FSFC), both centrally and decentrally, turned out to be infeasible for the considered networked system. 
We also highlight that performance improvements due to dissipativation is more prominent when used for distributed dynamic output feedback control (i.e., DOFC) rather than distributed state observer based feedback control (i.e., SOFC).

In Fig. \ref{Fig:SEDisturbanceRejection}, limiting to the distributed control system configuration: DOFC, we test the previously mentioned second hypothesis: dissipativating controllers have superior disturbance rejection qualities as opposed to stabilizing controllers. For this purpose, we first superimpose each subsystem disturbance signal $w_i(t), i\in\N_N$ shown in Fig. \ref{Fig:SEOpenLoopw} with a single square pulse (of width $3\, s$) occurring between $t \in[20,27]$ as indicated in Fig. \ref{Fig:SESetup1} (the exact form of this pulse is indicative from Figs. \ref{Fig:SEDisturbanceRejection}(e,f,g,h)). The resulting output and performance trajectories obtained under centrally and decentrally derived: (1) stabilizing and (2) dissipativating DOF controllers are shown in Fig. \ref{Fig:SEDisturbanceRejection}. According to these observations (also using the reported MAO and MAP values in Fig. \ref{Fig:SEDisturbanceRejection}), it is clear that dissipativating controllers, compared to stabilizing controllers, render smoother and faster output and performance trajectories in the face of significant disturbances.


\section{Conclusion}
\label{Sec:Conclusion}
Starting from reviewing existing and new LMI-based control solutions for LTI systems, we presented several decentralized analysis and control synthesis techniques to verify and ensure properties like stability and dissipativity of large-scale networked systems. We considered a substantially more general problem setup than state of the art and developed decentralized processes covering a broader range of properties of interest. The synthesized control laws are distributed, and the proposed analysis and control synthesis processes themselves are decentralized, compositional and resilient to subsystem removals. We also have shown that optimizing the indexing scheme used in such distributed processes can substantially reduce the required information-sharing sessions between subsystems and, in some cases, even make the overall process distributed. Moreover, we have derived novel centralized LMI-based solutions for dissipative local observer design and dissipative local dynamic output feedback controller design problems along with their decentralized counterparts. Subsequently, we specialized all the derived results for discrete-time networked systems and provided several simulation examples to demonstrate the proposed novel decentralized analysis and control synthesis processes and dissipativity-based results. Future work aims to study the effect of erroneous and failed information sharing sessions among subsystems and develop robust distributed analysis and control synthesis approaches for such scenarios.



\bibliographystyle{IEEEtran}
\bibliography{References}

\end{document}